\documentclass[a4paper,english,cleveref, autoref, thm-restate,authorcolumns]{lipics-v2019}

\bibliographystyle{plainurl}%

\title{Single-use automata and transducers for infinite alphabets} %

\author{Miko\l aj Boja\' nczyk}{Institute of Informatics,\\ University of
Warsaw, Poland}{bojan@mimuw.edu.pl}{}{}  %

\author{Rafa\l{} Stefa\'nski}{Institute of Informatics,\\ University of
Warsaw, Poland}{rafal.stefanski@mimuw.edu.pl}{}{}

\authorrunning{M. Boja\'nczyk and R. Stefa\'nski}%

\Copyright{Miko{\l}aj Boja\'nczyk and Rafa\l\  Stefa\'nski}%

\ccsdesc[100]{Theory of computation~Formal languages and automata theory}%

\keywords{Automata, semigroups, data words, orbit-finite sets}%

\category{} %

\relatedversion{} %

\supplement{}%

\acknowledgements{Supported by the European Research Council under the European Unions Horizon 2020 research and innovation programme (ERC consolidator grant LIPA, agreement no. 683080).}%

\nolinenumbers %

\hideLIPIcs  %

\EventEditors{John Q. Open and Joan R. Access}
\EventNoEds{2}
\EventLongTitle{42nd Conference on Very Important Topics (CVIT 2016)}
\EventShortTitle{CVIT 2016}
\EventAcronym{CVIT}
\EventYear{2016}
\EventDate{December 24--27, 2016}
\EventLocation{Little Whinging, United Kingdom}
\EventLogo{}
\SeriesVolume{42}
\ArticleNo{23}
\usepackage{url}
\usepackage[all]{xy}
\usepackage{amsmath}
\usepackage{amssymb}
\usepackage{amsthm}
\usepackage{color}
\usepackage{graphicx}
\usepackage{xspace}
\usepackage{enumerate}
\usepackage{alltt}
\usepackage{wrapfig}
\usepackage{scalerel,stackengine}
\usepackage{pifont}
\usepackage{proof}
\usepackage{stmaryrd}
\usepackage{float}
\usepackage{hyperref}
\setcounter{MaxMatrixCols}{20}

\theoremstyle{plain}

\theoremstyle{plain}

\begin{document}

\maketitle

\begin{abstract}
Our starting point are register  automata for data words, in the style of Kaminski and Francez. We study the effects of the  single-use restriction,  which says that a  register  is emptied immediately after being used. We show that under the single-use restriction,  the theory of automata for data words becomes much more robust. The main results are: (a) five different machine models are equivalent as language acceptors, including one-way and two-way single-use register automata; (b) one can recover some of the algebraic theory of languages over finite alphabets, including a version of the Krohn-Rhodes Theorem; (c) there is also a robust theory of transducers, with four equivalent models, including two-way single use transducers and a variant of streaming string transducers for data words. These results are in contrast with automata for data words without the single-use restriction, where essentially all models are pairwise non-equivalent.
\end{abstract}
\vfill
\pagebreak

\newcommand{\smallatoms}{{\scriptstyle\atoms}}
\newcommand{\onesurt}{\xspace{\footnotesize \textnormal{1}}{\sc det}$\smallatoms$\xspace}
\newcommand{\twosurt}{\xspace{\footnotesize \textnormal{2}}{\sc det}$\smallatoms$\xspace}
\newcommand{\sst}{{\sc sst}\xspace}
\newcommand{\atomlesssst}{{\sc sst}\xspace}
\newcommand{\onedet}{{\xspace{\footnotesize \textnormal{1}}{\sc det}\xspace }}
\newcommand{\twodet}{{\xspace{\footnotesize \textnormal{2}}{\sc det}\xspace }}
\newcommand{\twosst}{\xspace{\footnotesize \textnormal{2}}{\sc sst}\xspace }
\newcommand{\reglist}{{\sc reglist}$\smallatoms$\xspace}
\newcommand{\rwrite}{\overset{\downarrow}{\text{\resizebox{0.02\hsize}{!}{$\square$}}}}
\newcommand{\rread}{\overset{\uparrow}{\text{\resizebox{0.02\hsize}{!}{$\square$}}}}
\newcommand{\hideproof}[1]{#1}

\newcommand{\mso}{{\sc mso}\xspace}
\newcommand{\rgmso}{rigidly guarded {\sc mso}$^\sim$\xspace}
\newcommand{\dmso}{{\sc mso}$^\sim$\xspace}
\newcommand{\fo}{{\sc fo}\xspace}

\newcommand{\smallparagraph}[1]{\smallskip \noindent {\bf #1. }}

\newcommand{\clone}{\mathsf C}
\newcommand{\clonereg}{\mathsf C^{\mathrm{reg}}}
\newcommand{\rank}[1]{\mathrm{rank}(#1)}
\newcommand{\diva}{{\sc{(da)}}}
\newcommand{\homf}{{\sc{(hf)}}}
\newcommand{\fing}{{\sc{(fg)}}}
\newcommand{\eqdef}{\stackrel{\text{\tiny def}}=}
\newcommand{\trees}{{\mathsf{trees}}}
\newcommand{\powerset}{{\mathsf P}}
\newcommand{\muddles}{{\mathsf M}}
\newcommand{\unit}{\mathsf{unit}}
\newcommand{\flatt}{\mathsf{flat}}
\newcommand{\aalg}{\mathbf{A}}
\newcommand{\balg}{\mathbf{B}}
\newcommand{\Nat}{\mathbb N}
\newcommand{\hsp}{{\sc hsp \xspace}}

\newcommand{\hs}{{\sc hs \xspace}}
\newcommand{\aut}{\mathcal A}
\newcommand{\nodes}{\mathsf{nodes}}
\newcommand{\set}[1]{\{#1\}}
\newcommand{\dom}{\mathsf{dom}}
\newcommand{\game}{\mathsf G}
\newcommand{\mult}{\mathsf{mult}}
\newcommand{\parfun}{\rightharpoonup}
\newcommand{\monad}{\mathsf M}
\newcommand{\facto}{\mathsf F}
\newcommand{\poly}[2]{\mathsf{pol}_{#1}#2}
\newcommand{\Aa}{\mathcal A}
\newcommand{\map}{\mathsf{map}}
\newcommand{\sub}{\mathsf{sub}}
\newcommand{\SplitLemma}{\hyperref[lem:split-lemma]{Split Lemma}\xspace}

\newcounter{ourexamplecounter}
\newenvironment{myexample}{
\medskip

\refstepcounter{ourexamplecounter}
\smallskip\noindent{\textbf{{Example \arabic{ourexamplecounter}. }}}}{
$\Box$ \smallskip 
}

\newcounter{runningcounter}
\newenvironment{running}{
\medskip

\refstepcounter{runningcounter}
\smallskip\noindent{\textbf{{Running Example \arabic{runningcounter}. }}}}{
$\Box$ \smallskip 
}

\newcommand{\atoms}{\mathbb A}

\newcommand{\Jj}{\mathcal J}
\newcommand{\Rr}{\mathcal R}
\newcommand{\Ll}{\mathcal L}
\newcommand{\Hh}{\mathcal H}

\newcommand{\mypic}[1]{
	\begin{center}
		\includegraphics[page=#1,scale=0.4]{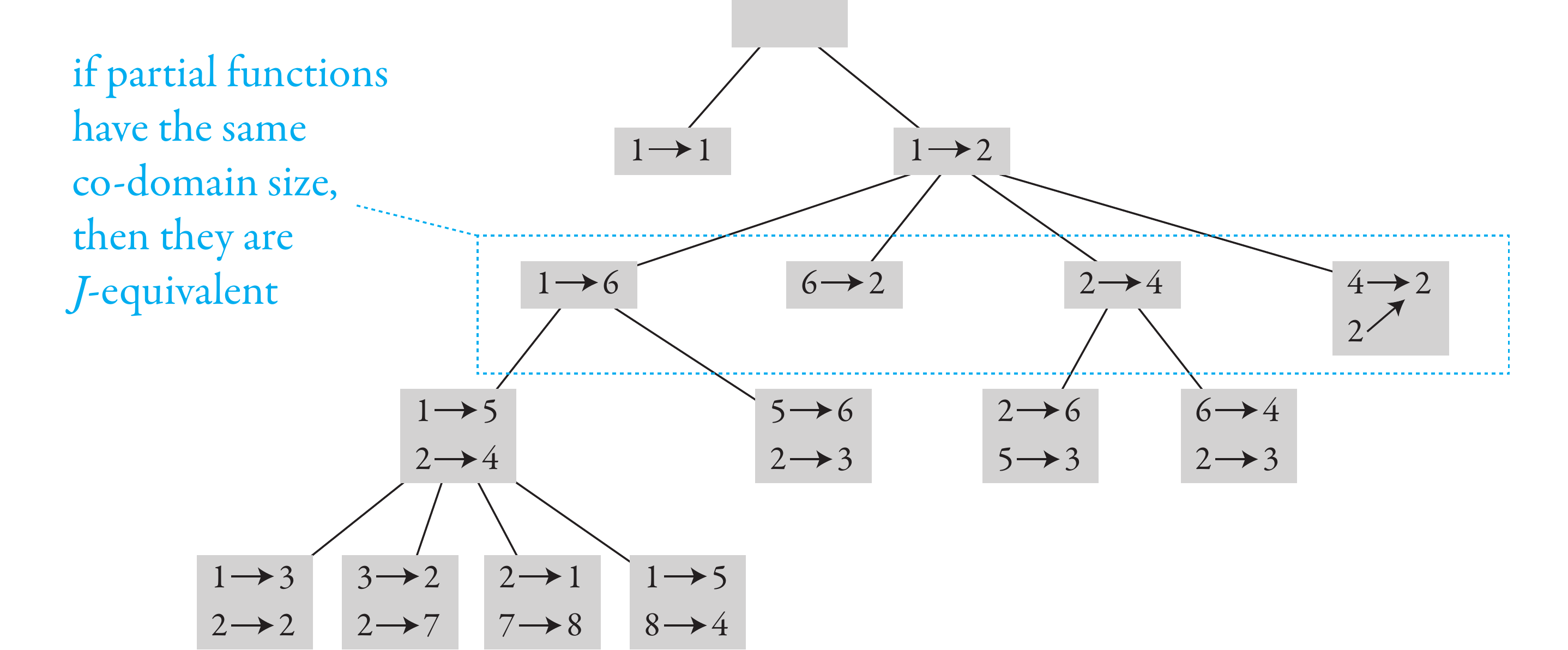}
	\end{center}
}

\newcommand{\picfromfile}[2] {
  \begin{center}
    \includegraphics[width=#2\columnwidth]{#1}
  \end{center}
}

\newcommand{\finsupfun}{\stackrel {\mathrm{fs}}\to }
\newcommand{\red}[1]{{\color{red}#1}}
\newcommand{\blue}[1]{{\color{blue}#1}}

\stackMath
\newcommand\reallywidehat[1]{%
\savestack{\tmpbox}{\stretchto{%
  \scaleto{%
    \scalerel*[\widthof{\ensuremath{#1}}]{\kern-.6pt\bigwedge\kern-.6pt}%
    {\rule[-\textheight/2]{1ex}{\textheight}}%
  }{\textheight}%
}{0.5ex}}%
\stackon[1pt]{#1}{\tmpbox}%
}
\parskip 1ex

\newcommand{\cmark}{\text{\ding{51}}}
\newcommand{\xmark}{\text{\ding{55}}}

\newcommand{\applyendmarkers}[1]{\vdash\!\! #1 \!\! \dashv}

\newcommand{\rep}[1]{\textrm{rep}(#1)}
\newcommand{\supp}[1]{\textrm{supp}(#1)}
\newcommand{\run}{\mathrm{run}}

\newcommand{\threeinclusion}[3]
{\small
    \text{#1} \circ \text{#2} \ \subseteq \ \text{#3}
}

\newcommand{\twoinclusion}[2]
{\small
    \text{#1}  \ \subseteq \ \text{#2}
}

\newcommand{\twoequality}[2]
{\small
    \text{#1}  \ =   \ \text{#2}
}

\newcommand{\boolset}{\set{\text{``yes'',``no''}}}
\newcommand{\alphabet}{\mathbb B}
\newcommand{\deatomise}[1]{\underline{#1}}

\section{Introduction}
\label{sec:introduction}

One of the appealing features of regular languages for finite alphabets is the robustness of the notion: it can be characterised by many equivalent models of automata (one-way, two-way, deterministic, nondeterministic, alternating, etc.), regular expressions, finite semigroups, or monadic second-order logic. A similar robustness appears for transducers, see~\cite{filiotR16} for a survey; particularly for the class of {regular string-to-string functions}, which can be characterised using deterministic two-way transducers, streaming string transducers, or \mso transductions.

This robustness vanishes for infinite alphabets. We consider infinite alphabets that are constructed using an infinite set $\atoms$ of atoms, also called data values. Atoms can only be compared for equality. The literature for  infinite alphabets is full of depressing diagrams like~\cite[Figure~1]{nevenFiniteStateMachines2004} or~\cite[p.~24]{bojanczyk_slightly2018}, which describe countless models that satisfy only trivial relationships such as deterministic $\subseteq$ nondeterministic, one-way $\subseteq$ two-way, etc. 

This lack of robustness has caused several authors to ask if there is a notion of ``regular language'' for infinite alphabets; see~\cite[p. 703]{bjorklundNotionsRegularityData2010} or~\cite[p.~2]{bojanczykAutomataDataWords2010}. This question was probably rhetorical, with the assumed answer being ``no''.
In this paper, we postulate a ``yes'' answer.  The main theme is register automata, as introduced by Kaminski and Francez~\cite{kaminskiFiniteMemoryAutomata1994}, but with the single-use restriction, which says that immediately after a register is used, its value is destroyed. As we show in this paper, many automata constructions, which fail for unrestricted register automata, start to work again in the presence of the single-use restriction. 

 Before describing the  results in the paper, we illustrate the single-use restriction.

\begin{myexample}\label{ex:three-letters}
     Consider the language ``there are at most three distinct letters in the input word, not counting repetitions'', over  alphabet $\atoms$. There is a natural  register automaton which recognises this language: use three registers to store the distinct atoms that have been seen so far, and if a fourth atom comes up, then reject.  
     This automaton, however, violates the single-use restriction, because each new input letter is compared to all the registers.
 
\begin{wrapfigure}{r}{0.4\textwidth}
        \mypic{13}
    \end{wrapfigure}
 Here is a solution that respects the single-use restriction. The idea is that once the automaton has seen three distinct letters $a,b,c$, it stores them in six registers as explained in the picture on the right.  Assume that a new input letter $d$ is read. The behaviour of the automaton (when it already has three atoms in its registers) is explained in the  flowchart in Figure~\ref{fig:flowchart}.
    
 A similar flowchart is used for the corner cases when the automaton has seen less than three letters so far.    
    \begin{figure}
        \mypic{14}
        \caption{Updating the six  registers.}
        \label{fig:flowchart}
    \end{figure}
\end{myexample}    

Our first main result, Theorem~\ref{thm:one-way-two-way} (in Section
\ref{sec:languages}), says  that the following models recognise the same languages over infinite alphabets:
\begin{enumerate}
    \item deterministic one-way single-use automata;
    \item deterministic two-way single-use automata;
    \item  orbit-finite monoids \cite{bojanczykNominalMonoids2013}; 
    \item  \rgmso \cite{DBLP:journals/corr/ColcombetLP14}; 
    \item string-to-boolean regular list functions with atoms.
\end{enumerate}
The equivalence of the models in items 3 and 4 was shown in~\cite{DBLP:journals/corr/ColcombetLP14}; the remaining models and their equivalences are new (item 5 is an extension of the regular list functions from~\cite{bojanczykRegularFirstOrderList2018}).

Just like their classical versions, one-way and two-way single-use automata are equivalent as language acceptors, but they are no longer equivalent as transducers. For example, a two-way single-use transducer can reverse the input string, which is impossible for a one-way single-use transducer.  In Sections \ref{sec:su-sequential}
and \ref{sec:two-way-transducers} we develop the theory of single-use transducers:

In Section~\ref{sec:su-sequential}, we  investigate single-use one-way transducers. For finite alphabets, one of the most important results about one-way transducers is the Krohn-Rhodes Theorem~\cite{Krohn1965}, which says that every Mealy machine (which is a length preserving one-way transducer) can be decomposed into certain ``prime'' Mealy machines.  We show that the same can be done for infinite alphabets, using  a single-use extension of Mealy machines. The underlying prime machines are the machines from the original Krohn-Rhodes theorem, plus one additional register machine which moves atoms to later positions.

In Section~\ref{sec:two-way-transducers}, we investigate single-use two-way transducers, and show that the corresponding class of string-to-string functions enjoys similar robustness properties as the languages discussed in Theorem~\ref{thm:one-way-two-way}, with four models being equivalent:
\begin{enumerate}
    \item single-use two-way transducers;
    \item an atom extension of streaming string transducers~\cite{alurExpressivenessStreamingString2010};
    \item string-to-string regular list functions with atoms;
    \item compositions of certain ``prime two-way machines''
    (Krohn \& Rhodes style).
\end{enumerate}
We also show other good properties of the string-to-string functions in the above items, including closure under composition (which follows from item 4) and decidable equivalence. 

Summing up, the single-use restriction allows us to identify    languages and string-to-string functions with infinite alphabets, which share the robustness and good mathematical theory  usually associated with regularity for finite alphabets.

Due to space constraints, and a large number of results, virtually all of the proofs are in an appendix. We use the available space to explain and justify the many new models that are introduced.
\vfill 
\section{Automata and transducers with atoms}
\label{sec:the-model}
For the rest of the paper, fix an infinite set $\atoms$, whose elements are called \emph{atoms}.  Atoms will be used to construct infinite alphabets. Intuitively speaking, atoms can only be compared for equality. It would be interesting enough to consider alphabets of the form $\atoms \times \Sigma$, for some finite
$\Sigma$, as is typically done in the literature on data words~\cite[p.~1]{bojanczykAutomataDataWords2010}. However, in the proofs, we use more complicated sets, such as the set $\atoms^2$ of pairs of atoms, the set $\atoms + \set{\vdash,\dashv}$ obtained by adding two endmarkers to the atoms, or the co-product (i.e.~disjoint union)   $\atoms^2 + \atoms^3$. This motivates the following definition.

\begin{definition}
    A \emph{polynomial orbit-finite set\footnote{The name ``orbit-finite'' is used because the above definition is a special case of orbit-finite sets discussed later in the paper, and the name ``polynomial'' is used  to underline that the sets are closed under products and co-products.
    }} is any set that can be obtained from $\atoms$ and singleton sets by means of finite products and co-products (i.e.~disjoint unions).
\end{definition}

We only care about properties of such sets that are stable under atom automorphisms, as described below. 
Define an \emph{atom automorphism} to be any bijection $\atoms \to \atoms$. (This notion of automorphism formalises the intuition that atoms can only be compared for equality). Atom automorphisms form a group. There is a natural action of this group on polynomial orbit-finite sets: for elements of  $\atoms$ we  apply the atom automorphism, for singleton sets the action is trivial, and for other polynomial orbit-finite sets the action is lifted inductively along $+$ and $\times$ in the natural way.
  Let $\Sigma$ and $\Gamma$ be sets equipped with an action of the group of atom automorphisms -- in particular, these could be  polynomial orbit-finite sets. A  function $f : \Sigma \to \Gamma$ is called \emph{equivariant} if $f(\pi(x)) = \pi(f(x))$ holds for every $x \in \Sigma$ and every atom automorphism $\pi$. \label{page:equivariant} The general idea is that equivariant functions can only talk about equality of atoms. In the case of polynomial orbit-finite sets, equivariant functions can also be finitely represented using quantifier-free formulas~\cite[Lemma 1.3]{bojanczyk_slightly2018}. 

\smallparagraph{The model} We now describe the single-use machine models discussed in this paper. There are four variants: machines can be one-way or two-way, and they can recognise languages or compute string-to-string functions.  We begin with  the most general form -- two-way string-to-string functions --  and define  the other models as special cases.

 The machine reads the input string, extended with left and right endmarkers $\vdash,\dashv$. It uses registers to store atoms that appear in the input string. A register can store either an atom, or the undefined value $\bot$.  The single-use restriction, which is  written in red below, says that a register is set to $\bot$ immediately after being used. 

\newcommand{\sigmaendmarker}{\Sigma +\set{\vdash,\dashv}}
\begin{definition}\label{def:the-transducer-model}
    The syntax of a {\em two-way single-use transducer\footnote{Unless otherwise noted, all transducers and automata considered in this paper are deterministic. The theory of nondeterministic single-use models seems to be less appealing. }} consists of
    \begin{itemize}
        \item input and output alphabets $\Sigma$ and $\Gamma$, both  polynomial orbit-finite sets;
        \item a finite set of states $Q$, with a distinguished initial state $q_0 \in Q$;
        \item a finite set $R$ of register names;
        \item a \emph{transition function} which maps each state $q \in Q$ to an element of:
        \begin{align*}
             \underbrace{\text{questions}}_{\text{question that is asked}} \times \underbrace{(Q \times \text{actions})}_{\substack{\text{what to do if the}\\ \text{question has a yes answer}}} \times \underbrace{(Q \times \text{actions})}_{\substack{\text{what to do if the}\\ \text{question has a no answer}}}
        \end{align*}
        where the allowed questions and actions  are taken from the following toolkit:
        \begin{enumerate}
            \item \emph{Questions.}
            \begin{enumerate}
                \item Apply  an equivariant function $f : \sigmaendmarker \to \set{\text{yes, no}}$ to the letter under the head, and return the answer.
                \item Are the atoms stored in registers $r_1, r_2$  equal and defined?  If any of these registers is undefined, then the run immediately stops and rejects\footnote{By remembering in the state which registers are defined, one can modify an automaton so that this never happens.}.  \red{This question has the side
                effect of setting the values of $r_1$ and $r_2$ to $\bot$}.
            \end{enumerate}
            \item \emph{Actions.}
        \begin{enumerate}  
            \item Apply an   equivariant function $f : \sigmaendmarker \to \atoms + \bot$ to the letter under the head, and store the result in  register $r \in R$.
            \item \label{action:output} Apply an  equivariant function $f : \atoms^k \to \Gamma$ to the contents of distinct registers $r_1,\ldots,r_k \in R$, and append the result to the output string. If any of the registers is undefined, stop and reject.   \red{This
                  action has the side effect of setting the values of $r_1, r_2, \ldots, r_k$ to~$\bot$.}
                  \item  \label{action:move-head} Move the head to the previous/next input position.
            \item \label{acction:accept} Accept/reject and finish the run.
        \end{enumerate}
        \end{enumerate}
    \end{itemize}
\end{definition}
The semantics of the transducer is a partial function from strings over the input alphabet to strings over the output alphabet. 
Consider a string of the form  $\applyendmarkers w$ where $w \in \Sigma^*$. A \emph{configuration} over such a string consists of (a) a position in the  string; (b)
   a state; (c) a register valuation, which is a function of type $R \to \atoms + \bot$; (d) an output string, which is a string over the output alphabet. 
A \emph{run} of the transducer is defined to be a sequence of configurations, where consecutive configurations are related by applying the transition function in the natural way. The \emph{output} of a run is defined to be the contents of the output string in the last configuration.   An \emph{accepting configuration} is one which executes the accept action from item~\ref{acction:accept} -- accepting configurations have no successors. 
The \emph{initial configuration} is a configuration where the head is over the left endmarker $\vdash$, the state is the initial state, the register valuation maps all registers to the undefined value, and the output string is empty.  An \emph{accepting run} is a run that begins in the initial configuration and ends in an accepting one.  By determinism, there is at most one accepting run. The semantics of the transducer is defined to be the partial function $\Sigma^* \to \Gamma^*$, which inputs $w \in \Sigma^*$ and returns the output of   the accepting run over $\applyendmarkers  w$. If there is no accepting run, $f(w)$ has no value.

\smallparagraph{Special cases} A \emph{one-way single-use transducer} is the special case of Definition~\ref{def:the-transducer-model} which does not use the ``previous'' action from item~\ref{action:move-head}.
A \emph{two-way single-use automaton} is the special case  which does not use the output actions from item~\ref{action:output}. The \emph{language} recognised by such an automaton is defined to be the set of words which admit an accepting run. A \emph{one-way single-use automaton} is the special case of a two-way single-use automaton, which does not use the ``previous'' action from item~\ref{action:move-head}.
\section{Languages recognised by single-use automata}
\label{sec:languages}
In this section we discuss languages recognised by single-use automata. The main result is that  one-way and two-way single-use automata recognise the same languages, and furthermore these are the same languages that are recognised by orbit-finite monoids~\cite{bojanczykNominalMonoids2013}, the logic \rgmso~\cite{DBLP:journals/corr/ColcombetLP14}, and a new model called regular list functions with atoms, that will be defined in Section~\ref{sec:two-way-transducers}.

\smallparagraph{Orbit-finite monoids}  We begin by defining orbit-finite sets and orbit-finite monoids, which play an important technical role in this paper. For more on orbit-finite sets, see the lecture notes~\cite{bojanczyk_slightly2018}.
For a tuple $\bar a \in \atoms^*$, an $\bar a$-automorphism is defined to be any atom automorphism that maps $\bar a$ to itself. 
Consider set $X$ equipped with an action of the group of atom automorphisms. We say that $x \in X$ is \emph{supported} by a tuple of atoms $\bar a \in \atoms^*$ if $\pi(x)=x$ holds for every $\bar a$-automorphism $\pi$. We say that a subset of  $X$ is $\bar a$-supported if it is an $\bar a$-supported element of the powerset of $X$; similarly we define supports of relations and functions. We say that $x$ is finitely supported if it is supported by some tuple $\bar a \in \atoms^*$. Define the  \emph{$\bar a$-orbit of $x$} to be its orbit under the action of the group of $\bar a$-automorphisms. 

\begin{definition}[Orbit-finite sets] \label{def:orbit-finite}
    Let $X$ be a set equipped with an action of atom automorphisms. A subset  $Y \subseteq X$ is called \emph{orbit-finite} if (a) every element of $Y$ is finitely supported; and (b) there exists some $\bar a \in \atoms^*$ such that $Y$ is a union of finitely many $\bar a$-orbits.
\end{definition}

An equivariant orbit-finite set is the special case where the tuple $\bar a$ in item (b) is empty. The polynomial orbit-finite sets from Section~\ref{sec:the-model} are a special case of equivariant orbit-finite sets\footnote{The converse does not hold -- there exist sets that are equivariant orbit finite but not polynomial orbit finite e. g. the set of unordered pairs of atoms:
$\set{\set{a,b} \;\;|\;\; a, b\in \atoms, \; a \neq b}$.}.
The following notion was introduced in~\cite[Section 3]{bojanczykNominalMonoids2013}.
\begin{definition}[Orbit-finite monoid]
    An \emph{orbit-finite monoid} is a monoid where the underlying set is  orbit-finite, and the monoid operation is finitely supported. Let $\Sigma$ be an orbit-finite set. We say that a language $L \subseteq \Sigma^*$ is \emph{recognised} by an orbit-finite monoid $M$ if there is a   finitely supported  monoid
    morphism $h : \Sigma^* \to M$ and a finitely supported accepting set $F \subseteq M$ such that $L$ contains exactly the words whose image under $h$ belongs to $F$. 
\end{definition}

In this paper, we are mainly interested in the case where both the morphism and the accepting set are equivariant. In this case, it follows that the alphabet $\Sigma$, the image of the morphism, and the recognised language all also have to be equivariant.

The structural theory of orbit-finite monoids was first developed in~\cite{bojanczykNominalMonoids2013}, where it was shown how the classical results about Green's relations for finite monoids extend to the orbit-finite setting. This theory was further investigated in~\cite{DBLP:journals/corr/ColcombetLP14}, including a lemma stating  that every orbit-finite group is necessarily finite. In the appendix of this paper we build on these results, to prove an orbit-finite version of the Factorisation Forest Theorem of Simon~\cite[Theorem 6.1]{simonFactorizationForestsFinite1990}, which is used in proofs of Theorems~\ref{thm:one-way-two-way} and~\ref{thm:mealy-machine-krohn-rhodes}.

\smallparagraph{Main theorem about languages} We are now ready to state Theorem~\ref{thm:one-way-two-way}, which is our main result about languages.

\begin{theorem}
    \label{thm:one-way-two-way} Let $\Sigma$ be a polynomial orbit-finite set. The following conditions are equivalent for every language $L \subseteq \Sigma^*$:
    \begin{enumerate}
        \item \label{languages:one-way} $L$ is recognised by a single-use one-way automaton;
        \item \label{languages:two-way} $L$ is recognised by a single-use two-way automaton;
        \item \label{languages:monoids} $L$ is  recognised by an orbit-finite monoid, with an equivariant morphism and an equivariant accepting set;
        \item \label{languages:rgmso} $L$ can be defined in the \rgmso logic;
        \item  \label{languages:list-function}
        $L$'s characteristic function $\Sigma^* \to \set{yes,no}$ is an orbit-finite regular list function.
    \end{enumerate}
\end{theorem}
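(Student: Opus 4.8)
The plan is to establish the cycle of implications $\ref{languages:one-way} \Rightarrow \ref{languages:two-way} \Rightarrow \ref{languages:monoids} \Rightarrow \ref{languages:rgmso} \Rightarrow \ref{languages:list-function} \Rightarrow \ref{languages:one-way}$, reusing the known equivalence of items~\ref{languages:monoids} and~\ref{languages:rgmso} from~\cite{DBLP:journals/corr/ColcombetLP14}, so that the genuinely new content is the arc connecting single-use automata (one-way and two-way) and orbit-finite monoids, plus the arc connecting orbit-finite regular list functions back to one-way automata. The implication $\ref{languages:one-way} \Rightarrow \ref{languages:two-way}$ is trivial since a one-way automaton is syntactically a two-way automaton that never uses the ``previous'' action. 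For $\ref{languages:two-way} \Rightarrow \ref{languages:monoids}$ I would run the classical Shepherdson-style crossing-sequence construction: associate to each input word the function describing, for each entry state and each register valuation visible at a boundary, the corresponding exit state and register valuation; the single-use restriction is what keeps this transition profile orbit-finite, because a register that has been used is destroyed, so only boundedly many atoms can be ``live'' at any boundary, and one shows the set of such profiles forms an orbit-finite monoid under composition. The accepting set and morphism are equivariant because the transition function of the automaton is. For $\ref{languages:monoids} \Rightarrow \ref{languages:rgmso}$ we simply cite~\cite{DBLP:journals/corr/ColcombetLP14}.

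The implication $\ref{languages:monoids} \Rightarrow \ref{languages:one-way}$ (or, equivalently, routing through item~\ref{languages:rgmso} and then to list functions) is where the orbit-finite Factorisation Forest Theorem mentioned in the excerpt does the heavy lifting. Given an equivariant morphism $h : \Sigma^* \to M$ into an orbit-finite monoid, the idea is that a one-way single-use automaton cannot store the full prefix value in $M$ (since an element of $M$ may need arbitrarily many atoms to support it, exceeding the fixed register budget), so instead the automaton maintains, using the factorisation forest of bounded height, a bounded-height stack of partial products together with only the atoms actually needed to continue them. At each new letter the automaton updates this stack, using Ramsey/idempotent structure to collapse long $\mathcal{J}$-equivalent runs, and each such collapse only requires comparing a bounded number of atoms, each used once. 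Alternatively, one obtains $\ref{languages:rgmso} \Rightarrow \ref{languages:list-function}$ and $\ref{languages:list-function} \Rightarrow \ref{languages:one-way}$ separately: the first by compiling rigidly guarded \dmso formulas into list-function combinators (the rigid guards are exactly what keeps the intermediate data single-use), and the second by observing that the basic list-function combinators are each implementable by a one-way single-use automaton and that one-way single-use automata are closed under the relevant composition operations when the output is boolean.

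The main obstacle I expect is the orbit-finite Factorisation Forest Theorem and its use in the monoid-to-automaton direction. In the finite-alphabet setting, Simon's theorem gives a forest of bounded height labelled by monoid elements, and a one-way automaton can track the rightmost branch of this forest because each monoid element is a bounded amount of information; in the orbit-finite setting a monoid element is \emph{not} bounded information — it carries a support tuple of unbounded length — so the subtle point is to argue that along the rightmost branch only boundedly many atoms are simultaneously relevant, and that reading one more letter touches each of them at most once. This requires a careful analysis of how supports propagate under the monoid multiplication and under the idempotent-collapsing steps of the factorisation forest, and it is exactly where the single-use restriction must be shown to be not merely preserved but actually necessary. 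Everything else — the crossing-sequence argument, the trivial inclusions, and the list-function bookkeeping — is routine by comparison, though voluminous, which is presumably why the paper defers it all to the appendix.
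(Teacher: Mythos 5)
Your treatment of items \ref{languages:one-way}--\ref{languages:rgmso} matches the paper's: the inclusion of one-way in two-way is syntactic; the implication \ref{languages:two-way} $\Rightarrow$ \ref{languages:monoids} is the Shepherdson construction, with the single-use restriction supplying orbit-finiteness of the profile monoid (the paper makes your ``boundedly many live atoms'' gloss precise via \emph{important transitions}: at most $k$ transitions of a run can touch a fresh register, so every profile is supported by at most $2|Q|2^{k+1}$ atoms); the equivalence with \rgmso is cited; and the implication \ref{languages:monoids} $\Rightarrow$ \ref{languages:one-way} goes through an orbit-finite Factorisation Forest (the Split Lemma), proved by induction on $\Jj$-height, where the difficulty you single out --- that a monoid element carries an unbounded support --- is exactly the one the paper resolves using letter representations, uniformisation, and placeholders.

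The gap is in how you connect item \ref{languages:list-function}. Your cycle needs \ref{languages:rgmso} $\Rightarrow$ \ref{languages:list-function} and \ref{languages:list-function} $\Rightarrow$ \ref{languages:one-way}. The first, compiling \rgmso into list-function combinators, is not carried out anywhere in the paper (which never uses the logic except to cite its equivalence with monoids) and would be a substantial new argument, not a routine one. The second rests on the claim that each basic list-function combinator is implementable by a one-way single-use automaton, which is false as stated: the primes include {\tt reverse} and {\tt concat} on nested lists and the combinators include {\tt map}, none of which are one-way string transformations; and the closure property you then invoke (``closed under the relevant composition operations when the output is boolean'') is precisely the statement that the language class of items \ref{languages:one-way}--\ref{languages:monoids} is closed under inverse images of regular list functions --- in the paper this is a \emph{corollary} of the full two-way equivalence (Theorem~\ref{thm:two-way-models}), established via run graphs, loops and sweeps, and streaming string transducers, so assuming it here is close to circular. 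The paper instead connects item \ref{languages:list-function} to item \ref{languages:two-way} by proving that regular list functions with atoms coincide with two-way single-use transducers, restricted to yes/no outputs; to close your cycle you need either that theorem or a genuinely new direct compilation, and neither is supplied by your sketch.
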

The equivalence of items~\ref{languages:rgmso} and~\ref{languages:monoids} has been proved in~\cite[Theorems 4.2 and 5.1]{DBLP:journals/corr/ColcombetLP14}, and since we do not use \rgmso outside of the this theorem, we do not give a definition here
(see~\cite[Section 3]{DBLP:journals/corr/ColcombetLP14}). The orbit-finite regular list functions from item~\ref{languages:list-function} will be defined in Section~\ref{sec:two-way-transducers}.
The proof outline for Theorem \ref{thm:one-way-two-way} is given in the following diagram
\newcommand{\txtlabel}[1]{\scriptsize{\txt{#1}}}
\begin{center}
    \small
    \xymatrix@C=3.5cm{
        \txt{regular \\list functions} \ar@{<->}[d]_{\txtlabel{Section~\ref{sec:two-way-transducers}}}
         & \txt{one-way\\ single-use} \ar[dl]_{\txtlabel{special\\ case}}  &
         \txt{rigidly guarded \\ \mso$\!\!^\sim$} \ar@{<->}[d]^{\txtlabel{Theorems 4.2 \\ and 5.1 in~\cite{DBLP:journals/corr/ColcombetLP14}}} \\
     \txt{two-way\\ single-use} \ar[rr]_{\txtlabel{Section~\ref{sec:automata-to-semigroups}}} & & 
     \txt{orbit-finite\\ monoid}  \ar[ul]_{\txtlabel{in the appendix,\\ using \\ factorisation \\forests}}
    }
\end{center}
All equivalences in the theorem are effective, i.e.~there are algorithms implementing the  conversions between any of the models.

The single-use restriction is crucial in the  theorem. Automata without the single-use restriction -- call them multiple-use -- only satisfy the trivial inclusions:
\begin{align*}
\text{single-use} \!\!\!\! \stackrel{\txt{\scriptsize first letter appears again\\\scriptsize \cite[Exercise 91]{bojanczyk_slightly2018}}}\subsetneq  \!\!\!\! \text{one-way multiple-use} \!\!\!\!  \stackrel{\txt{\scriptsize some letter appears twice\\ \scriptsize \cite[Example 11]{kaminskiFiniteMemoryAutomata1994}}}\subsetneq \!\!\!\!  \text{two-way multiple-use}.
\end{align*}
Two-way multiple-use automata have an undecidable emptiness problem~\cite[Theorem 5.3]{nevenFiniteStateMachines2004}. For one-way (even multiple-use) automata,  emptiness  is decidable and even tractable  in a suitable parametrised understanding~\cite[Corollary 9.12]{bojanczyk_slightly2018}. We leave open the following question: given a one-way  multiple-use automaton, can one decide if there is an equivalent automaton that is single-use (by Theorem~\ref{thm:one-way-two-way}, it does not matter whether one-way or two-way)?

\subsection{From two-way automata to orbit-finite monoids}
 \label{sec:automata-to-semigroups} 
 In this section, we show   the implication \ref{languages:two-way} $\Rightarrow$ \ref{languages:monoids} of Theorem~\ref{thm:one-way-two-way}. (This  is the only proof of the paper that is not relegated to the appendix -- we chose it, because it illustrates the importance of the single-use restriction).
   The implication states that the language of every single-use two-way automaton can also be recognised by an equivariant homomorphism into an orbit-finite monoid. In the proof, we use the  Shepherdson construction
   for two-way
   automata~\cite{shepherdson1959reduction} and show that, thanks to the single-use restriction, it produces monoids which are orbit-finite.

Consider a two-way single-use automaton, with $k$ registers and let $Q$ be the set of its states.
For a string over the input alphabet (extended with endmarkers), define its \emph{Shepherdson profile} to be the function of the type
\begin{align*}
     \overbrace{Q \times (\atoms+\bot)^k}^{\substack{\text{state and register}\\ \text{valuation at the}\\ \text{start of the run}}}
     \times \overbrace{\set{\leftarrow, \rightarrow}}^{\substack{\text{does the run}\\ \text{enter from the}\\ \text{left or right}}}  
     \qquad \to  \qquad \set{\text{accept, loop}} +  (\overbrace{Q \times (\atoms+\bot)^k}^{\substack{\text{state and register}\\ \text{valuation at the}\\ \text{end of the run}}} \times \overbrace{\set{\leftarrow, \rightarrow}}^{\substack{\text{does the run}\\ \text{exit from the}\\ \text{left or right}}})
\end{align*}
 that describes runs of the automaton in the natural way (see~\cite[Proof of Theorem 2]{shepherdson1959reduction}).  The run is taken until the automaton either exits the string from either side, accepts, or enters an infinite loop. 
  By the same reasoning as in Shepherdson's proof, one can equip the set of Shepherdson profiles with a monoid structure   so that the function which maps a word to its Shepherdson profile becomes  a monoid homomorphism. We use the name Shepherdson monoid for the resulting monoid (it only contains the `achievable' profiles -- the image of $\Sigma^*$).  It is easy to see that whether a word is accepted  depends only on an equivariant property of its Shepherdson profile, and therefore the language recognised by the automaton is also recognised by the Shepherdson monoid.

  It remains to show that the  Shepherdson monoid is orbit-finite, which is the main part of the proof. Unlike the arguments so far, this part of the proof relies on the single-use restriction.  To illustrate  this, we give an example of a  one-way automaton that is not single-use and whose Shepherdson monoid  is not orbit-finite. 
  
  \begin{myexample} \label{ex:appears-again} Consider the language over $\atoms$ of words whose first letter appears again. This language is not recognised by any orbit-finite monoid~\cite[Exercise 91]{bojanczyk_slightly2018}, but it is  recognised by a multiple-use one-way automaton, which stores the first letter in a register, and then compares this register with all remaining letters of the input word. For this automaton, the Shepherdson profile  needs to remember all of the distinct letters that appear in the word. In particular, if two words have different numbers of distinct letters, then  their Shepherdson profiles  cannot be in the same orbit. Since input strings can contain arbitrarily many distinct letters, the Shepherdson monoid
  of this automaton is not orbit-finite. 
  \end{myexample}

\begin{lemma}\label{lem:small-support}
    For every single-use two-way automaton
    there is some $N \in \Nat$ such that every Shepherdson profile  is supported by at most $N$ atoms. 
\end{lemma}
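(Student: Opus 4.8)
The plan is to follow a run of the automaton on a factor, keep track of how the $k$ registers are used, and exploit the single-use restriction to bound the number of atoms of the factor that the run can observe or carry out. Fix the automaton, with state set $Q$ and with $k=|R|$ registers. For a factor $w\in\Sigma^*$, an entry side $d\in\set{\leftarrow,\rightarrow}$, a state $q\in Q$ and a register valuation $\bar v=(v_1,\dots,v_k)\in(\atoms+\bot)^k$, let $\mathrm{out}_w(q,\bar v,d)$ denote the outcome that the Shepherdson profile $\beta^w$ records for this input. The key point is that the toolkit of Definition~\ref{def:the-transducer-model} contains no register-to-register copy: an atom can enter a register only by being loaded from the letter under the head, and once a register takes part in an equality test it holds $\bot$. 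Hence each of the $k$ registers can take part in at most one equality test while it still holds the value it was given in $\bar v$; summing over registers, \emph{along the whole run at most $k$ equality tests involve a register still holding one of the initial values $v_1,\dots,v_k$}. This is the only place where the single-use restriction is used, and it is exactly the property that fails for multiple-use automata (Example~\ref{ex:appears-again}).

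Using this, I would show that for a fixed factor $w$ and fixed $(q,d)$ the map $\bar v\mapsto\mathrm{out}_w(q,\bar v,d)$ is computed by a decision tree of depth at most $k$ whose shape depends only on $w$ (and $q,d$). One proves by induction on the number of steps that, up to the first equality test that involves a register still holding an initial value, the run is entirely determined by $w$ and the automaton: which register is ``still initial'', which $w$-atom sits in each of the others, the head position and the state are all functions of $w$, not of $\bar v$; letter questions are answered by $w$, a load places a specific $w$-atom into a register, and an equality test between two registers that both already hold loaded $w$-atoms is answered by $w$. The first test that does touch an ``initial'' register is therefore of the form ``is $v_m=v_{m'}$?'' or ``is $v_m=a_p$?'' for a position $p$ of $w$ determined by $w$ alone; its one-bit answer (the only thing depending on $\bar v$ at this point) selects a branch, the two tested registers are emptied, and the same analysis applies to the continuation. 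By the observation above every root-to-leaf path has at most $k$ such tests, so the tree has at most $2^k$ leaves and fewer than $2^k$ internal nodes; each internal node queries at most one position of $w$, and each leaf produces an outcome whose final register valuation contains atoms from at most $k$ positions of $w$ (all other entries being $\bot$ or some initial value $v_m$).

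Finally I would read off the support bound. Let $a_\Sigma$ bound $|\supp{\sigma}|$ over $\sigma\in\Sigma$. For each of the $2|Q|$ pairs $(q,d)$ the previous paragraph exhibits a set $P_{q,d}$ of at most $(k+1)2^k$ positions of $w$ such that $\beta^w$, restricted to inputs with first component $q$ and last component $d$, depends on $w$ only through the letters at positions in $P_{q,d}$: any atom automorphism fixing all atoms occurring in those letters leaves every letter question and every register-vs-register test between two loaded $w$-atoms with the same answers (it preserves the equality pattern among the untouched atoms), and leaves all output register valuations as they were, hence leaves the profile unchanged. Therefore $\beta^w$ is supported by $\bigcup_{q,d}\bigcup_{p\in P_{q,d}}\supp{w[p]}$, a set of at most $N:=2|Q|\,(k+1)\,2^k\,a_\Sigma$ atoms, a bound depending only on the automaton; entry from the right is symmetric.

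The step I expect to be the real work is the inductive claim in the second paragraph — that before the first equality test touching an ``initial'' register everything is a function of $w$ alone, so that the decision tree is genuinely well defined — together with the bookkeeping in the third paragraph showing that atoms of $w$ queried only through $w$-vs-$w$ tests or through letter questions never enter the support. The rest (counting leaves and internal nodes, the final union bound) is routine, and it is precisely the failure of the first-paragraph observation that makes the Shepherdson monoid infinite without the single-use restriction.
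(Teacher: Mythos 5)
Your proof is correct and follows essentially the same route as the paper: your ``equality tests touching a register still holding an initial value'' are the paper's \emph{important transitions}, the single-use restriction bounds their number by $k$ in both arguments, and your depth-$k$ decision tree with at most $2^k$ leaves plays exactly the role of the paper's functions $\alpha_i$ (which take at most $2^i$ values and are supported by at most $2^{i+1}$ atoms), yielding the same kind of bound $N = 2|Q|\cdot 2^{O(k)}\cdot O(1)$. The only cosmetic differences are that you branch only at equality tests while the paper's important transitions also include first loads and that you count supporting atoms via queried positions of $w$ rather than by induction on $i$; neither affects correctness.
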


Before proving the lemma, we use it to show that the Shepherdson monoid is orbit-finite. In Section \ref{sec:functions-with-limited-support} of the appendix, we show that if an equivariant set consists of functions from one orbit-finite set to another orbit-finite set (as is the case for the underling set in the Shepherdson monoid)  and all functions in the set have supports of bounded size (as is the case thanks to Lemma \ref{lem:small-support}), then the set is orbit-finite. This leaves us with proving Lemma~\ref{lem:small-support}.

\begin{proof} Define a \emph{transition} in a run to be a pair of consecutive configurations. Each transition has a corresponding question and action. 
      A transition  in  a run  is called \emph{important} if its question or action  involves a  register that has  not appeared in any action or question of the run. The number of important transitions is bounded by $k$ -- the number of registers. The crucial observation, which relies on the single-use restriction, is that  if the input word, head position, and state are fixed
    (but not the register valuation), 
      then the sequence of actions in  the corresponding run depends only on the answers to the questions in the important transitions. This is described in more detail below.

      Fix a choice of the following parameters: (a) a string over the input alphabet that might contain endmarkers; (b) an entry point of the automaton -- either the left or the right end of the word; (c) a state of the automaton. We do not fix the register valuation. 
For a register valuation $\eta$, define $\rho(\eta)$ to be the run which begins in the configuration described by the parameters (abc) together with  $\eta$,  and which is maximal, i.e.~it ends when the automaton either accepts, rejects, or tries to leave the fixed string. For $i \in \set{0,1,\ldots,k}$ define $\alpha_i(\eta)$ to be the sequence of actions that are performed in the maximal prefix of the run $\rho(\eta)$ which uses at most $i$ important transitions. 
The  crucial observation that was stated at the beginning of this proof  is that once the parameters (abc) are fixed, then the sequence of actions $\alpha_i(\eta)$ depends only on the answers to the questions asked in   the first $i$ important transitions. In particular, the function $\alpha_i$ has at most $2^i$ possible values.  Furthermore, by a simple induction on $i$, one can show the following claim.

\begin{claim}
    The function  $\alpha_i$ is supported by at most  $2^{i+1}$ atoms.
\end{claim}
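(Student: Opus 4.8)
The plan is to prove the claim by induction on $i$, tracking simultaneously two quantities: the finite set of possible values of the function $\alpha_i$, and a support for $\alpha_i$ itself. I would first set up the base case $i = 0$: the prefix of $\rho(\eta)$ that uses no important transitions performs a sequence of actions that never touches a register for the first time, hence every register it reads or compares has already been written by an earlier action in this prefix. Consequently this prefix, together with the answers to all (non-important) questions along it, is entirely determined by the fixed parameters (abc) and does not depend on $\eta$ at all; so $\alpha_0$ is a constant function, and a constant function is supported by whatever supports its single value. Since a run of the automaton on a fixed string uses only atoms occurring in that string and atoms freshly produced by equivariant functions applied to letters, one checks that the action sequence produced in this register-free-initialization phase is supported by at most $2$ atoms (e.g. the current letter and one more generated value) — here I would rely on the earlier setup that each individual question/action uses a bounded amount of atom data.

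For the inductive step, I would decompose the run: $\alpha_{i+1}(\eta)$ consists of $\alpha_i(\eta)$ followed by the segment of $\rho(\eta)$ that begins at the $i$-th important transition and runs until either the $(i{+}1)$-st important transition or the end of the run. By the crucial observation stated just before the claim, $\alpha_i(\eta)$ is determined by the answers to the first $i$ important questions, so it takes one of at most $2^i$ values $v_1, \dots, v_{2^i}$, each of which is a fixed finite object supported by the inductive bound. The answer to the $i$-th important question is a single bit, and conditioned on which value $\alpha_i(\eta)$ takes and on that bit, the subsequent segment is again determined by the parameters (abc) and does not depend further on $\eta$ (the register it first touches is set to $\bot$ immediately by the single-use restriction, so its old content cannot leak into later behaviour — this is exactly where single-use is used). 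Thus $\alpha_{i+1}$ is a function of $i{+}1$ bits, i.e. it has at most $2^{i+1}$ values.

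It then remains to bound the support of $\alpha_{i+1}$. A finitely-supported function from a finite domain (the $i{+}1$ bits) into any set is supported by the union of supports of its finitely many values together with a support of the domain; the domain here is a set of bit-tuples, which is equivariant. So a support of $\alpha_{i+1}$ is contained in the union of supports of its at most $2^{i+1}$ values. Each such value is a sequence $\alpha_i(\eta')$ extended by one more segment; by induction $\alpha_i$'s values are supported by at most $2^{i+1}$ atoms, but I actually want the bound on the values of $\alpha_{i+1}$ to close the recursion. The cleanest bookkeeping is: let $s(i)$ be a bound on the support of each \emph{value} of $\alpha_i$ (a single action-sequence), and note the support of the function $\alpha_i$ is at most $2^i \cdot s(i)$. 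A value of $\alpha_{i+1}$ is a value of $\alpha_i$ followed by one extra register-bounded segment, so $s(i+1) \le s(i) + c$ for a constant $c$ depending only on the automaton; since the total is ultimately bounded by the length-style counting in the lemma statement, I would choose constants so that $2^i \cdot s(i) \le 2^{i+1}$, i.e. $s(i) \le 2$, which holds because the segment between two consecutive important transitions introduces no genuinely new atoms beyond a bounded few (single-use forces touched registers to be cleared, so nothing accumulates). The main obstacle — and the place to be careful — is precisely this accounting: making sure that single-use really prevents the number of ``live'' atoms carried by $\alpha_i(\eta)$ from growing with $i$, so that the per-value support stays constant and only the exponential blowup comes from the branching on the $i$ question-answers; the inductive invariant has to be phrased to separate these two sources of growth.
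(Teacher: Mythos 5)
Your overall skeleton is the intended one (the paper itself only says ``by a simple induction on $i$''): induct on $i$, use the single-use restriction to argue that everything between important transitions is deterministic, and branch on the yes/no answers to the important questions so that $\alpha_{i+1}$ has at most $2^{i+1}$ values. The gap is in the support accounting, which is the entire content of the claim. You bound $\supp{\alpha_{i+1}}$ by ``the union of supports of its finitely many values together with a support of the domain'', treating the domain as a set of bit-tuples. But the domain of $\alpha_{i+1}$ is the (orbit-infinite) set of register valuations $\eta$; the function merely \emph{factors through} the bit-tuples, and the classifier $\eta \mapsto (\text{answers to the first } i{+}1 \text{ important questions})$ is exactly where the atoms live. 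The values themselves are syntactic action sequences and contribute essentially nothing, while your stated principle is false in general: the function $\eta \mapsto [\eta(r)=c]$ has an equivariant domain and atomless values, yet is supported by $\set{c}$. This misattribution is why your recursion does not close --- as you yourself observe, $s(i+1)\le s(i)+c$ forces the per-value support to grow linearly, and the requirement $s(i)\le 2$ is imposed rather than proved.

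The accounting that works is the following. Fix a branch, i.e.\ one of the at most $2^i$ possible answer-histories. Along that branch the $(i{+}1)$-st important question compares a register that has not yet appeared (whose content comes from $\eta$) with either another such register --- in which case the answer function is equivariant --- or with a register that has already appeared, whose current content is, by the crucial observation, a single atom determined by the branch and the fixed parameters (abc). Hence the answer function on that branch is supported by at most two atoms, and all other questions on the branch have answers that are constant in $\eta$. Summing over branches gives $|\supp{\alpha_{i+1}}| \le |\supp{\alpha_i}| + 2\cdot 2^i$, and with $|\supp{\alpha_0}|=0$ (the prefix with no important transitions touches no register at all, so $\alpha_0$ is a constant, equivariant value --- your base case reasoning about registers ``already written earlier in this prefix'' is vacuous, since any such earlier write would itself be important) this resolves to $|\supp{\alpha_i}| \le 2^{i+1}-2 \le 2^{i+1}$.
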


Since there are at most $k$ important transitions in a run, the above claim implies  that, for every fixed choice of parameters  (abc), at most $2^{k+1}$ atoms are needed to support the function which maps $\eta$ to the sequence of actions in the run $\rho(\eta)$. In the arguments for the  Shepherdson profile for a  fixed word $w$, parameter (b) can have  two values (first or last position) and parameter (c) can have at most $|Q|$ values.  Therefore, at most $2|Q|2^{k+1}$ atoms are needed to support the function which takes an argument as in the Shepherdson profile, and returns the sequence of actions in the corresponding run. The lemma follows.  
\end{proof}

\section{A Krohn-Rhodes decomposition of one-way transducers with atoms}
\label{sec:su-sequential}
In this section, we present a decomposition result for single-use one-way transducers, which is a version of the celebrated Krohn-Rhodes Theorem~\cite[p.~454]{Krohn1965}. We think that this result gives further evidence for the good structure of single-use models. In the next section, we  give a similar decomposition result for two-way single-use transducers which will be used to prove the equivalence of several other characterisations of the two-way model. 

We begin by describing the classical Krohn-Rhodes Theorem. 
A \emph{Mealy machine} is a deterministic one-way length-preserving transducer, which is obtained from a deterministic finite automaton by labelling transitions with output letters and ignoring accepting states.
The  Krohn-Rhodes Theorem says that every function computed by a  Mealy machine is a composition of functions computed by certain prime  Mealy machines (which are called \emph{reversible} and \emph{reset} in~\cite[Chapter 6]{abrahamAlgebraicTheoryAutomata1968}). In this section, we prove a version of this theorem for orbit-finite alphabets; this version relies crucially on the single-use restriction. To distinguish the original model of Mealy machines from the single-use model described below, we will use the name \emph{classical Mealy machine} for the Mealy machines in the original Krohn-Rhodes Theorem, i.e.~the alphabets and state spaces are finite.

Define a \emph{single-use Mealy machine}  to have the same syntax as in Definition~\ref{def:the-transducer-model}, with the following differences: there are no ``next/previous'' actions from item~\ref{action:move-head}, but the output action from item~\ref{action:output} has the side effect of moving the head to the next position. A consequence is that a Mealy machine is length-preserving, i.e.~it outputs exactly one letter for each input position. Furthermore, there are no endmarkers and no ``accept'' or ``reject'' actions from item~\ref{acction:accept}; the automaton begins in the first input position and accepts immediately once its head leaves the input word from the right.

\newcommand{\propignore}{\epsilon}
\newcommand{\propload}{\downarrow}
\begin{myexample} \label{ex:atom-propagation} Define  \emph{atom propagation} to be the following length-preserving function. The input alphabet is $\atoms + \set{\propignore, \propload}$ and the output alphabet is  $\atoms + \bot$.   If a  position $i$ in the input string has label $\propload$ and there is some (necessarily unique) position $j < i$ with an atom label such that all positions strictly between $j$ and $i$ have label $\propignore$, then the output label of position $i$ is the atom in input position $j$. For all other input positions, the output label is $\bot$.  
Here is an example of atom propagation:
    \begin{align*}
    \begin{array}{rccccccccccccccccccc}
        \text{input} & 1&2&\propignore & \propignore & \propload & \propload & 3 & \propignore & \propignore & \propload & \propignore & \propload \\
        \text{output} & \bot & \bot & \bot & \bot & 2 &  \bot & \bot & \bot & \bot & 3 & \bot & \bot
    \end{array}
    \end{align*}
    Atom propagation is computed by a single-use Mealy machine, which stores the most recently seen atom in a register, and outputs the register at the nearest appearance of   $\propload$.
\end{myexample}

The following example illustrates some of the technical
difficulties with single-use Mealy machines:
It is often useful to consider a Mealy machine that
computes the run of another Mealy machine --
it decorates every input position with the state and
the register valuation that the Mealy machine
will have after reading the input up to
(but not including) that position. The following example shows that the single-use restriction makes
this construction impossible.
\begin{example}
        \label{ex:no-compute-state}
        Consider the single-use Mealy machine  that implements the atom propagation function from Example~\ref{ex:atom-propagation}.  This machine  has only one register. Every time it sees an atom value, it stores the value in the register and every time it  sees $\downarrow$, and the register is non-empty,
        the machine outputs the register's content.
        We claim that the run of this machine cannot be computed by a Mealy machine. If it could, we would be able to use it to construct a Mealy machine that
        given a word over $\atoms$,
        equips every position with the atom from the first position. This would easily lead to a construction of a single-use automaton
        for the language ``the first letter appears again'' (from Example \ref{ex:appears-again})
        which, as we already know, is impossible.  
\end{example}

The Krohn-Rhodes Theorem, both in its original version and in our orbit-finite version below, says that every Mealy machine can be decomposed into prime functions  using two types of composition:
\begin{align*}
    \infer[\text{sequential}]
{ \Sigma^* \stackrel {g \circ f }\longrightarrow \Delta^*}
{ \Sigma^* \stackrel f \longrightarrow \Gamma^* &  \Gamma^* \stackrel g \longrightarrow \Delta^*}
\qquad 
\infer[\text{parallel}]
{ (\Sigma_1 \times \Sigma_2)^* \stackrel {f_1|f_2}\longrightarrow (\Gamma_1 \times \Gamma_2)^*}
{ \Sigma_1^* \stackrel {f_1} \longrightarrow \Gamma_1^* &  \Sigma_2^* \stackrel {f_2} \longrightarrow \Gamma_2^*}
\end{align*}
The sequential composition is simply function composition. The parallel composition -- which only makes sense for length preserving functions  -- applies the function $f_i$ to the $i$-th projection of the input string. 
\begin{theorem}\label{thm:mealy-machine-krohn-rhodes}
    Every total function computed by a single-use Mealy machine can be obtained, using sequential and parallel composition, from the following \emph{prime functions}:
    \begin{enumerate}
        \item \emph{Length-preserving homomorphisms.} Any function of type $\Sigma^* \to \Gamma^*$, where $\Sigma$ and $\Gamma$ are polynomial orbit-finite, obtained by lifting to strings an equivariant function of type $\Sigma \to \Gamma$.
        \item \emph{Classic Mealy machines.} Any function computed by a classical Mealy machine.
        \item \emph{Atom propagation.} The atom propagation function from Example~\ref{ex:atom-propagation}.
    \end{enumerate}
\end{theorem}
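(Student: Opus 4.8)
The plan is to mirror the classical proof of the Krohn--Rhodes Theorem that goes through the Factorisation Forest Theorem, replacing Simon's theorem by its orbit-finite counterpart (proved in the appendix and already invoked for Theorem~\ref{thm:one-way-two-way}), and using atom propagation as the one genuinely new ``prime'' needed to handle the part of the computation that carries a single atom forward across the word. First I would attach an algebraic structure to a single-use Mealy machine $\mathcal M$ with state set $Q$ and registers $R$, $|R|=k$. To each input word $w$ associate the transformation $\tau_w$ of the configuration space $Q\times(\atoms+\bot)^R$ induced by reading $w$. Composition of words corresponds to composition of transformations, so the $\tau_w$ generate a finitely generated monoid $M$, and $w\mapsto\tau_w$ is an equivariant monoid morphism $h:\Sigma^*\to M$. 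The key point --- proved by exactly the argument of Lemma~\ref{lem:small-support}, now much easier because the machine is one-way --- is that every $\tau_w$ is supported by a bounded number of atoms, so by the appendix result on sets of bounded-support functions between orbit-finite sets, $M$ is orbit-finite. (The output produced along the way is not stored in $M$; it is reconstructed during the decomposition below, which is legitimate because, by the single-use restriction, each output letter is an equivariant function of boundedly many register contents, which trace back either to a letter of $w$ or to the initial register valuation.)

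Next I would run the classical cascade/wreath-product argument over $M$. By the orbit-finite Factorisation Forest Theorem there is a height bound $H$ such that every word over $\Sigma$ admits a Ramseyan factorisation tree of height at most $H$ over $h$. I decompose $\mathcal M$ by induction on $H$, following the classical proof level by level: the group part of $M$ is a \emph{finite} group (orbit-finite groups are finite), hence covered by the classical ``reversible'' primes, which are particular classical Mealy machines; the aperiodic/idempotent part is handled by iterated ``reset''-style classical Mealy machines as in the finite-alphabet proof; and each peeling of one level of the factorisation forest is realised, at the level of transducers, by: (a) a length-preserving homomorphism extracting the finite ``type'' of each input letter and of each already-computed sub-factor; (b) a classical Mealy machine advancing the finite bookkeeping; and (c) --- the new ingredient --- boundedly many copies of the atom propagation function running in parallel on separate tracks, transporting the atoms that $M$ ``remembers'' from the position at which they enter a register to the position at which they are consumed. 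Orbit-finiteness of $M$ guarantees that each of its elements remembers boundedly many atoms, so boundedly many tracks suffice, and the total composition depth is bounded in terms of $H$ and $k$.

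The main obstacle is exactly point~(c), and this is where the single-use restriction is essential. I must show that the \emph{layout} of each atom-propagation track --- which positions carry the load marker $\propload$, which carry $\propignore$, and which carry the atom to be transported --- is itself produced by the already-available primes, i.e.\ that it does not depend on the very atoms being transported. This is false for unrestricted register automata, as Example~\ref{ex:no-compute-state} shows; but under the single-use restriction an atom placed in a register is consumed exactly once, and until the moment of consumption it cannot influence the control flow, so the track layout depends only on the finite type of the input and on the equality-comparison outcomes of sub-factors lying \emph{strictly lower} in the factorisation forest --- and those have already been computed at the previous stage of the induction. Making this precise --- verifying that the dependency of one register's lifetime on the comparison outcomes of other registers respects the factorisation-forest stratification and creates no cycle, and that a single classical Mealy machine (fed the already-computed lower-level comparison bits) can output the layout --- is the technical heart of the argument.

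The remaining work is routine bookkeeping that I would relegate to the appendix: normalising $\mathcal M$ so that it performs a bounded number of steps per input position (which is possible for a total function, since overwriting a register by the current letter is idempotent and destroyed registers cannot be revisited); reducing the polynomial orbit-finite input and output alphabets to a standard shape (a finite coproduct of powers of $\atoms$) by pre- and post-composing with length-preserving homomorphisms; and assembling the sequential and parallel compositions so that all intermediate alphabets are themselves polynomial orbit-finite. Combining these with the inductive decomposition over the factorisation forest yields the required expression of $\mathcal M$ as a composition of length-preserving homomorphisms, classical Mealy machines, and atom propagation.
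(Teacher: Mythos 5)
Your overall architecture coincides with the paper's: you build the state transformation monoid $M$ of configuration transformations, argue it is orbit-finite via the bounded-support argument behind Lemma~\ref{lem:small-support}, apply the orbit-finite factorisation forest theorem (the paper's Split Lemma), and then decompose by induction on the height of the factorisation, using classical Mealy machines for the finite bookkeeping and atom propagation to carry atoms forward. Up to that point the two proofs agree, including the observation that orbit-finite groups are finite so the group part reduces to classical primes.

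The gap is in what you yourself call the technical heart. The paper's resolution is the notion of \emph{masked states} (Definition~\ref{def:mased-states}) and the run-reconstruction Lemma~\ref{lem:reconstuct-run}: one cannot decorate every position with the full extended state, since that would output each register's atom unboundedly often (Example~\ref{ex:no-compute-state}), so one instead computes at each position a masking of the true state --- in the smooth case the retained registers are exactly those supported by the least support of a common idempotent of the $\Jj$-class --- together with a compatibility guarantee that the masking still suffices to simulate the next transition. Your substitute, namely that the track layout ``depends only on \dots{} the equality-comparison outcomes of sub-factors lying strictly lower in the factorisation forest,'' does not obviously hold and runs the induction in the opposite direction from the paper's: the comparison performed at a position consumes registers whose contents were loaded anywhere earlier in the run, so its outcome is a function of the whole prefix rather than of a lower-level sub-factor, and there is no evident reason the dependency graph of comparison outcomes should stratify along the forest. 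The paper instead inducts top-down on split height: it first handles the positions of maximal height as a smooth sequence (using placeholders to turn the relevant $\Hh$-class products into atomless computations for a classical Mealy machine), then seeds each lower block with its computed initial masked state, and finally performs an explicit repair of the last block, where the masking may have discarded registers that are still needed. None of this compatibility and repair bookkeeping appears in your sketch, and without some version of it the decomposition does not go through.
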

By the original Krohn-Rhodes theorem, classical Mealy machines can be further decomposed.

Define a \emph{composition of primes} to be any function that can be obtained from the prime functions by using a parallel and sequential composition. In this terminology, Theorem~\ref{thm:mealy-machine-krohn-rhodes} says that every function computed by a single-use Mealy machine is a composition of primes. The converse is also true: every prime function is  computed by a single-use Mealy machine, and single-use Mealy machines are closed under both kinds of composition (for details see Section~\ref{ap:mealy-primes} of the appendix).

\smallparagraph{Decomposition of single-use one-way transducers} A Mealy machine is the special case of a single-use one-way transducer which is  length preserving, and does not see an endmarker. A corollary of Theorem~\ref{thm:mealy-machine-krohn-rhodes} is that, in order to generate all total functions computed by single-use one-way transducers, it is enough to add two items to the list of prime functions from Theorem~\ref{thm:mealy-machine-krohn-rhodes}: (a) a function $w \mapsto w\!\! \dashv$ which appends an endmarker\footnote{This function accounts for the fact that a one-way transducer (contrary to a Mealy machine) may perform some computation and produce some output at the end of the input word.},
 and (b) equivariant homomorphisms over polynomial orbit-finite alphabets that are not necessarily length-preserving. 
\section{Two-way single-use transducers}
\label{sec:two-way-transducers}

In this section, we turn to two-way single-use transducers. For them, we show three  other equivalent models: (a) compositions of certain two-way prime functions; (b)  an atom variant of the streaming string transducer (\sst) model of Alur and {\v C}ern{\'y} from~\cite{alurExpressivenessStreamingString2010}; and (c) an atom  variant of the regular string functions from~\cite{bojanczykRegularFirstOrderList2018}. We believe that the atom variants of items (b) and (c), as described in this section,  are the natural atom extensions of the original models; and the fact that these extensions are all equivalent to single-use two-way transducers is a further validation of the single-use restriction.

We  illustrate the  transducer models using the functions from the following example.
\begin{myexample}\label{ex:map-reverse-duplicate} Consider some polynomial orbit-finite alphabet $\Sigma$. 
    The input and output alphabets are the same, namely  $\Sigma$ extended with a separator symbol $|$. Define  \emph{map reverse} (respectively, \emph{map duplicate}) to be the function which reverses (respectively, duplicates) every string between consecutive separators, as in the following examples:
    \begin{align*}
    \underbrace{12||345|678|9 \ \mapsto\  21||543|876|9}_{\text{map reverse}}
    \qquad
    \underbrace{12||345|678|9 \ \mapsto\  1212||345345|678678|99}_{\text{map duplicate}}
    \end{align*}
    Both functions can be computed by single-use two-way transducers. These functions will be included in the prime functions for two-way single-use transducers, as discussed in item (a) at the beginning of this section.
\end{myexample}

\smallparagraph{Streaming string transducers with atoms} 
A streaming string transducer with atoms has two types of registers: atom registers $r,s,\ldots$ which are the same as in Definition~\ref{def:the-transducer-model}, and  string registers $A,B,C,\ldots$ which are used to store strings over the output alphabet. Both kinds of registers are subject to the single-use restriction, which is coloured red in the following definition. 

\begin{definition}
    [Streaming string transducer with atoms] \label{def:sst} Define the syntax of a \emph{streaming string transducer (\sst) with atoms} in the same way as a one-way single-use transducer (variant of Definition~\ref{def:the-transducer-model}), except that the model is additionally equipped with a finite set of \emph{string registers}, with a designated \emph{output string register}. The actions are the same as for one-way single-use transducers except that the output action is replaced by two kinds of actions:
    \begin{enumerate}
        \item \label{action:output-register-set} Apply  an equivariant function $f : \atoms^k \to \Gamma$ to the contents of distinct registers $r_1,\ldots,r_k \in R$, and put the result into  string register $A$ (overwriting its previous contents).  If any of these registers is undefined, then the run immediately stops and rejects. \red{This
        action has the side effect of setting the values of $r_1, r_2, \ldots, r_{k_j}$ to~$\bot$.}
        \item \label{action:output-register-concatenate} Concatenate string registers $A$ and $B$, and put the result into string register $C$. \red{This action has the side effect of setting $A$ and $B$ to the empty string.}
    \end{enumerate} 
\end{definition}

The output of a streaming string transducer is defined to be the contents of the designated output register when the ``accept'' action is performed.
In the atomless case, when  no atom registers are allowed  and the input and output alphabets are finite, the above definition is equivalent to the original definition of streaming string transducers from~\cite{alurExpressivenessStreamingString2010}.

\begin{myexample}\label{ex:iterated-reverse-sst} Consider the map reverse function from Example~\ref{ex:map-reverse-duplicate}, with alphabet $\atoms$.  To compute it, we use two string registers $A$ and $B$, with the output register being $B$. 
     When reading an atom $a \in \atoms$, the transducer executes an action $A:=aA$. (This action needs to be broken into simpler actions as in Definition~\ref{def:sst} and requires auxiliary registers). When reading a separator symbol, the automaton executes action $B:=B|A$, which erases the content of register $A$.  Similar idea works for map duplicate -- it uses two copies of register $A$.
\end{myexample}

\smallparagraph{Regular list functions with atoms} Our last model is based on the regular list functions from~\cite{bojanczykRegularFirstOrderList2018}. Originally, the regular list functions were introduced to  characterise  two-way transducers (over finite alphabets), in terms of simple prime functions and combinators~\cite[Theorem 6.1]{bojanczykRegularFirstOrderList2018}. The following definition extends the original definition\footnote{In~\cite{bojanczykRegularFirstOrderList2018}, the group product operation has output type $G^*$, while this paper uses   $(G \times \sigma)^*$. This difference is due to an error in~\cite{bojanczykRegularFirstOrderList2018}.} in only two ways:  we add an extra datatype $\atoms$ and an  equality test $\mathtt{eq} : \atoms^2 \to \set{\text{yes,no}}$.

\begin{definition}[Regular list functions with atoms] \label{def:reglist-fun} Define the  \emph{datatypes} to be sets which can be obtained from $\atoms$ and  singleton sets,  by applying constructors for products $\tau \times \sigma$, co-products $\tau + \sigma$ and  lists $\tau^*$.  The class of  \emph{regular list functions with atoms} is the least class  which:
    \begin{enumerate}
        \item \label{regfun:constant} contains all equivariant constant functions;
        \item  \label{regfun:primes} contains all  functions from Figure~\ref{fig:atomic-combinators-1}, and an equality test $\mathtt{eq} : \atoms^2 \to \set{\text{yes,no}}$;
        \item \label{regfun:combinators} is closed under applying the following combinators:
        \begin{enumerate}
            \item {\tt comp} function composition $(f,g) \mapsto f \circ g$;
            \item {\tt pair} function pairing $(f_0,f_1) \mapsto (x \mapsto (f_0(x),f_1(x)))$;
            \item {\tt cases} \label{it:cases} function co-pairing $(f_0,f_1) \mapsto ((i,a) \mapsto f_i(a))$;
            \item {\tt map} lifting functions to lists $f \mapsto ([a_1,\ldots,a_n] \mapsto [f(a_1),\ldots,f(a_n)])$.
        \end{enumerate}
    \end{enumerate}
\end{definition}

Every polynomial orbit-finite set is a datatype (actually, polynomial orbit-finite sets are exactly the datatypes that do not use lists), and therefore it makes sense to talk about regular list functions with atoms that describe string-to-string functions with input and output alphabets that are polynomial orbit-finite sets. Also, one can consider string-to-boolean functions -- they describe languages, and are the model mentioned in item~\ref{languages:list-function} of Theorem~\ref{thm:one-way-two-way}.

\newcommand{\atomicfunction}[3]{
\mathtt{#1} &:& #2\\
&& \red{\text{\footnotesize #3}}\\
}

\begin{figure}\footnotesize
    \begin{eqnarray*}
    \atomicfunction{project_i}{(\sigma_0 \times \sigma_1) \to \sigma_i}{projection $(a_0,a_1) \mapsto a_i$}
    \atomicfunction{coproject_i}{\sigma_i \to (\sigma_0 + \sigma_1)}{coprojection  $ a_i \mapsto (i,a_i)$}
    \atomicfunction{distr}{(\sigma_1 + \sigma_2) \times \tau \to (\sigma_1 \times \tau) + (\sigma_2 \times \tau)}{  distribution $((i,a),b) \mapsto (i,(a,b))$}
    \atomicfunction{reverse}{\sigma^* \to \sigma^*}{list reverse $[a_1,\ldots,a_n] \mapsto [a_n,\ldots,a_1]$}
    \atomicfunction{concat}{(\sigma^*)^* \to \sigma^*}{list concatenation, defined by $[] \mapsto []$ and   $[a] \cdot l  \mapsto a \cdot concat(l)$}
    \atomicfunction{append}{(\sigma \times \sigma^*) \to \sigma^*}{append, defined by $(a,l) \mapsto [a] \cdot l$}
    \atomicfunction{coappend}{\sigma \to (\sigma \times \sigma^*)  + \bot}{the opposite of append, defined by $[] \mapsto (1,\bot)$ and $[a] \cdot l \mapsto (0,(a,l))$}
    \atomicfunction{block}{(\sigma+\tau)^* \to (\sigma^* + \tau^*)^*}{group the list into maximal connected blocks from $\sigma^*$ or $\tau^*$}
    \atomicfunction{group}{(G \times \sigma)^* \to (G \times \sigma)^*}{$[(g_1,a_1),\ldots,(g_n,a_n)] \mapsto [(1,a_1),(g_1,a_2),(g_1g_2,a_3),\ldots,(g_1 \cdots g_{n-1},a_n)]$ }
    \end{eqnarray*}
      \caption{\label{fig:atomic-combinators-1} For every datatypes $\tau, \tau_0, \tau_1, \sigma$, every finite group $G$, and every $i \in \set{0,1}$  the above functions are regular list functions with atoms.}
    \end{figure}

\begin{example}\label{ex:map-reverse}
    We show that map reverse  from Example~\ref{ex:map-reverse-duplicate} is a regular list function with atoms. Consider an input string, say
    \begin{align*}
        [1,2,|,|,3,4,5,|,6,7,8,|,9] \in (\atoms + |)^*.
    \end{align*} 
    Apply the prime {\tt block} function, yielding 
    \begin{align*}
        [[1,2],[|,|],[3,4,5],[|],[6,7,8],[|],[9]]  \in (\atoms^* + |^*)^*.
    \end{align*} 
    Using the {\tt cases} and {\tt map} combinators, apply {\tt reverse} to all list items, yielding 
    \begin{align*}
        [[2,1],[|,|],[5,4,3],[|],[8,7,6],[|],[9]]  \in (\atoms^* + |^*)^*.
    \end{align*} 
    To get the final output, apply {\tt concat}. A similar idea works for map duplicate, except we need to derive the string duplication function:
    \begin{align*}
    w  \stackrel {{\tt pair}(\ldots)}\mapsto
    (w,[w]) \stackrel{\tt append}{\mapsto}
    [w,w] \stackrel{\tt concat}{\mapsto}
    ww
    \end{align*}
\end{example}

\smallparagraph{Equivalence of the models} The main result of this section is that all models described above are equivalent, and furthermore admit a decomposition into prime functions in the spirit of the Krohn-Rhodes theorem. Since the functions discussed in this section are no longer length-preserving, the Krohn-Rhodes decomposition uses only sequential composition.

\begin{theorem}\label{thm:two-way-models} The following conditions are equivalent for every total function $f : \Sigma^* \to \Gamma^*$, where $\Sigma$ and $\Gamma$ are polynomial orbit-finite sets:
    \begin{enumerate}
        \item \label{two-way:two-way} $f$ is computed by a two-way single-use transducer;
        \item \label{two-way:sst} $f$ is computed by a streaming string transducer with atoms;
        \item \label{two-way:regular-list-function} $f$ is a regular list function with atoms;
        \item \label{two-way:krohn-rhodes} $f$ is a sequential composition of  functions of the following kinds:
        \begin{enumerate}
            \item single-use Mealy machines;
            \item equivariant homomorphisms that are not necessarily length-preserving;
            \item map reverse and map duplicate functions from Example~\ref{ex:map-reverse-duplicate}.
        \end{enumerate}
    \end{enumerate}
\end{theorem}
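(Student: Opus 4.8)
The plan is to establish the four-way equivalence by a cycle of implications, exploiting that each direction is a "local" simulation argument. The natural order is:
\[
\ref{two-way:krohn-rhodes} \Rightarrow \ref{two-way:two-way} \Rightarrow \ref{two-way:sst} \Rightarrow \ref{two-way:regular-list-function} \Rightarrow \ref{two-way:krohn-rhodes}.
\]
The first implication, \ref{two-way:krohn-rhodes} $\Rightarrow$ \ref{two-way:two-way}, is the easy warm-up: one checks that each prime (single-use Mealy machines, non-length-preserving equivariant homomorphisms, map reverse, map duplicate) is computed by a two-way single-use transducer -- map reverse and map duplicate need the two-way head motion, the rest are already one-way -- and then that two-way single-use transducers are closed under composition. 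Closure under composition is the standard back-and-forth product construction for two-way transducers (cf.\ the classical finite-alphabet case); the only thing to verify is that single-useness is preserved, which it is because the outer machine only ever inspects registers it has just written, so its register reads can be interleaved with the inner simulation without re-reading.

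The implication \ref{two-way:two-way} $\Rightarrow$ \ref{two-way:sst} is the heart of the proof and the step I expect to be the main obstacle. For finite alphabets this is the Shepherdson-style "crossing sequence" simulation of a two-way transducer by an \sst: the \sst\ scans left to right, and maintains for each state a string register holding the output contributed by the portion of the two-way run confined to the prefix read so far, entering in that state. The difficulty here is exactly the one flagged in Example~\ref{ex:no-compute-state}: a single-use \sst\ cannot simply remember the register valuation of the two-way machine at each crossing point, because that would require re-reading atoms. The fix is to invoke Lemma~\ref{lem:small-support} (and the boundedness-of-support argument behind it): the relevant summary of a two-way run segment -- what I will call its behaviour profile, in the style of the Shepherdson profile from Section~\ref{sec:automata-to-semigroups} -- is a finitely-supported object with support of bounded size, hence lives in an orbit-finite set, hence can be tracked in the finite control together with finitely many atom registers holding the atoms in its support. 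The outputs produced along each crossing are assembled into the string registers using the set/concatenate actions of Definition~\ref{def:sst}; single-useness of the string registers is maintained because each crossing-sequence update consumes old registers and produces new ones, exactly as in Example~\ref{ex:iterated-reverse-sst}. One must be careful that the number of crossings at any boundary is bounded (again by single-useness bounding the number of "important" transitions, as in the proof of Lemma~\ref{lem:small-support}), so that the \sst\ needs only finitely many string registers.

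For \ref{two-way:sst} $\Rightarrow$ \ref{two-way:regular-list-function}, I would follow the finite-alphabet template of~\cite{bojanczykRegularFirstOrderList2018}: an \sst\ run is a morphism from $\Sigma^*$ into the (orbit-finite, by Theorem~\ref{thm:one-way-two-way} applied to the underlying automaton) transition monoid of register-update operations, and the regular list functions can (i) compute this run annotation using the {\tt group} prime together with the orbit-finiteness/finiteness of the relevant groups, (ii) read off at each position which string-register substitution is applied, and (iii) evaluate the resulting copyless register assignment -- a bounded-height expression over concatenation -- using {\tt concat}, {\tt map}, {\tt cases}, and the equality test {\tt eq} for the atom-register bookkeeping. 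The single-use/copyless restriction is what keeps the substitution evaluation of bounded depth, which is essential for expressing it with a fixed regular list function. Finally, \ref{two-way:regular-list-function} $\Rightarrow$ \ref{two-way:krohn-rhodes} is proved by structural induction on the regular list function: constants and the Figure~\ref{fig:atomic-combinators-1} primes plus {\tt eq} are each shown to be compositions of primes from item~\ref{two-way:krohn-rhodes} (using Theorem~\ref{thm:mealy-machine-krohn-rhodes} to handle the classical-Mealy content, {\tt block} and {\tt concat} via map-reverse/map-duplicate-style manipulations of separators, and {\tt group} via a single-use Mealy machine computing running products in a finite group), and the combinators {\tt comp}, {\tt pair}, {\tt cases}, {\tt map} are shown to preserve "is a composition of primes" -- {\tt map} being the delicate one, handled by the {\tt block} trick as in Example~\ref{ex:map-reverse}, reducing {\tt map}~$f$ on a list to map-reverse/map-duplicate-style block processing plus $f$ applied blockwise. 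As usual with Krohn--Rhodes arguments, the bookkeeping for {\tt map} and {\tt group} is where all the work hides; the rest is routine.
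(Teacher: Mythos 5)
Your architecture is a single cycle $\ref{two-way:krohn-rhodes}\Rightarrow\ref{two-way:two-way}\Rightarrow\ref{two-way:sst}\Rightarrow\ref{two-way:regular-list-function}\Rightarrow\ref{two-way:krohn-rhodes}$, whereas the paper uses the compositions of primes as a hub: the one hard direction is $\ref{two-way:two-way}\Rightarrow\ref{two-way:krohn-rhodes}$ (via run graphs of bounded width, a decomposition into loops and sweeps, the Split Lemma, and an Elgot--Mezei-style argument), and every other model is related to the hub by showing closure under pre- or post-composition with each individual prime. The paper deliberately never constructs a direct translation between two-way transducers and \sst{}s, and it explicitly remarks that it knows no direct proof of composition closure for these two models. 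Your proposal puts exactly such a direct translation, $\ref{two-way:two-way}\Rightarrow\ref{two-way:sst}$, at the heart of the argument, and this is where the genuine gap lies. In the classical crossing-sequence simulation, the \sst{} keeps one string register per entry state $q$, holding the output of the portion of the two-way run confined to the current prefix when entered in state $q$. With atoms, that output depends on the register valuation $\eta$ with which the two-way machine enters the prefix from the right, and the atoms of $\eta$ are only determined by input positions that have not yet been read. Lemma~\ref{lem:small-support} tells you the behaviour \emph{profile} of the prefix has bounded support and so can be tracked in orbit-finite control plus finitely many atom registers, but it does not tell you how to store, in a string register, an output string some of whose atoms will only arrive later; the \sst{} model has no mechanism for placeholders inside string registers. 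You would have to invent and verify such a mechanism (bounding, via single-use, how many foreign atoms can occur, and parameterising string registers accordingly), and nothing in your sketch does this. The paper sidesteps the issue entirely: to get $\ref{two-way:krohn-rhodes}\Rightarrow\ref{two-way:sst}$ it only needs to post-compose an \sst{} with one prime at a time (map reverse, map duplicate, atom propagation, group, flip-flop), each a concrete local register-splitting construction.

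Two further points. First, your $\ref{two-way:krohn-rhodes}\Rightarrow\ref{two-way:two-way}$ step appeals to ``the standard back-and-forth product construction'' for composing two two-way single-use transducers; this is precisely the construction the authors state they do not know how to carry out while preserving single-useness (they obtain composition closure only \emph{after} the equivalence with primes is established). What you actually need there is the weaker statement that two-way single-use transducers are closed under pre-composition with each prime, which the paper proves using the reversibility of the primes; your justification as written assumes too much. Second, in $\ref{two-way:sst}\Rightarrow\ref{two-way:regular-list-function}$ you claim the copyless restriction ``keeps the substitution evaluation of bounded depth''; it does not --- copylessness bounds the output size linearly, but the composed register substitution over an input of length $n$ is a register forest of unbounded depth, and flattening it is a nontrivial step (the paper handles the analogous task by a depth-first traversal performed by a two-way transducer, relying on the already-established composition closure). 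So even the legs of your cycle that are in principle workable need substantially more than the routine simulations you describe.
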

In the future, we plan to extend the above theorem with one more item, namely a variant of \mso transductions based on  \rgmso. 
The models in items~\ref{two-way:regular-list-function} and~\ref{two-way:krohn-rhodes} are closed under sequential composition, and therefore the same is true for the models in items~\ref{two-way:two-way} and~\ref{two-way:sst}; we do not know any direct proof of composition closure for items~\ref{two-way:two-way} and~\ref{two-way:sst}, which contrasts the classical case without atoms~\cite[Theorem 2]{chytilSerialComposition2Way1977}.
 The Krohn-Rhodes decomposition from item~\ref{two-way:krohn-rhodes}, in the case without atoms, was present implicitly in~\cite{bojanczykRegularFirstOrderList2018}; in this paper we make the decomposition explicit, extend it to atoms, and  leverage it to get a relatively simple proof of Theorem~\ref{thm:two-way-models}. Even  for the reader interested in transducers but not  atoms, our Krohn-Rhodes-based proof of Theorem~\ref{thm:two-way-models} might be of some independent interest.

 Here are some immediate  corollaries of Theorem~\ref{thm:two-way-models}:
\begin{enumerate}
    \item Every function in item~\ref{two-way:krohn-rhodes} is computed by a two-way single-use transducer  which is  reversible in the sense of~\cite[p.~2]{DartoisFJL17}; hence two-way single-use transducers can be translated into reversible ones.
    \item Since the equivalence in Theorem~\ref{thm:two-way-models} also works for functions with yes/no outputs, it follows that items~\ref{languages:two-way} and~\ref{languages:list-function} in Theorem~\ref{thm:one-way-two-way} are equivalent.
    \item If $f$ is a transducer from the class described in  Theorem~\ref{thm:two-way-models}, then the language class described  in Theorem~\ref{thm:one-way-two-way} is preserved under inverse images of $f$. 
\end{enumerate}

All conversions between the models in Theorem~\ref{thm:two-way-models} are effective. 
Our last result concerns the equivalence problem for these models, which is checking if two transducers compute the same function. Using a reduction to the equivalence problem for \emph{copyful} streaming string transducers without atoms~\cite[p.~81]{filiotCopyfulStreamingString2017}, we prove the following result:
\begin{theorem}\label{thm:equivalence}
    Equivalence  is decidable for streaming string transducers with atoms (and therefore also for every other of the equivalent models from Theorem~\ref{thm:two-way-models}).
\end{theorem}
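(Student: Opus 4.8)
The plan is to reduce the equivalence problem for streaming string transducers with atoms to the equivalence problem for copyful streaming string transducers over finite alphabets, which is decidable by~\cite[p.~81]{filiotCopyfulStreamingString2017}. The key conceptual point is that an \sst with atoms processes atoms in a highly restricted, single-use fashion, and the finitely many atoms relevant to any run can be tracked symbolically. First I would observe that, thanks to the single-use restriction and the fact that the transition function is finitely described by equivariant (hence quantifier-free definable) functions, the behaviour of an \sst with atoms depends only on the equality type of the atoms read so far, not on their concrete identities. More precisely, at any point of a run the atom registers hold a bounded number of atoms, each of which originated from some input position, and the only information the machine ever extracts from them is a sequence of equality tests. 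This means the ``atom-flow'' of the machine can be abstracted: I would design a finite alphabet $\widehat\Sigma$ that encodes, together with each input position, a bounded amount of combinatorial data recording which earlier position (if any) carries the same atom, so that a word over $\widehat\Sigma$ is a faithful symbolic representative of a $\sim$-class of words over $\Sigma$.

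Next I would build, from a given \sst with atoms $\mathcal T$, a copyful atomless \sst $\widehat{\mathcal T}$ over $\widehat\Sigma$ (and a correspondingly enlarged finite output alphabet $\widehat\Gamma$) that simulates $\mathcal T$ on the symbolic representatives: the atom registers of $\mathcal T$ become part of the finite control of $\widehat{\mathcal T}$ (they store, instead of an atom, a pointer to the input position that produced it, but since only boundedly many are alive and only their mutual equalities matter, this is finite-state information), while the string registers of $\mathcal T$ map directly to string registers of $\widehat{\mathcal T}$ — note that copyfulness is needed here because \sst with atoms as in Definition~\ref{def:sst} can in principle be copyful in the classical sense (the paper's definition does not impose the copyless restriction). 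The output action of type $\atoms^k \to \Gamma$ becomes, in the symbolic world, an action that outputs a letter of $\widehat\Gamma$ recording the abstract provenance of the atoms involved. One has to argue a simulation lemma: $\mathcal T$ and $\mathcal T'$ are equivalent as functions $\Sigma^* \to \Gamma^*$ if and only if $\widehat{\mathcal T}$ and $\widehat{\mathcal T'}$ are equivalent as functions $\widehat\Sigma^* \to \widehat\Gamma^*$ — for one direction using that every word over $\Sigma$ has a symbolic representative over $\widehat\Sigma$, and for the other direction using that the construction is equivariant, so outputs on representatives determine outputs on all words by applying an atom automorphism. Since equivalence of copyful atomless \sst is decidable, decidability for \sst with atoms follows, and by Theorem~\ref{thm:two-way-models} for all the equivalent models.

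The main obstacle I expect is making the symbolic abstraction precise and actually finite. Two subtleties need care. First, while only boundedly many atoms are ``alive'' in atom registers at any moment, one must confirm that the output letters ever produced also involve only boundedly many atoms at a time (this is immediate from the arity bound $k$ in action~\ref{action:output-register-set}), and that string registers — which can accumulate unboundedly many atoms — never need to be inspected for equality, only concatenated; this is exactly guaranteed by Definition~\ref{def:sst}, where string registers are write-and-concatenate only. Hence the finite symbolic output alphabet $\widehat\Gamma$ can consist of (a finite description of an equivariant function together with) a bounded tuple of provenance pointers, and decoding a $\widehat\Gamma$-word back to a $\Gamma$-word is a well-defined equivariant operation once we fix how provenance pointers are resolved against the input word. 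Second, one must verify that equality of the decoded outputs is equivalent to equality of the symbolic outputs; this requires that distinct symbolic output letters decode to distinct concrete letters for a ``generic'' choice of input atoms, which can be arranged by choosing the representative words over $\widehat\Sigma$ to use a sufficiently rich equality pattern. Once these points are nailed down, the reduction is routine and the theorem follows.
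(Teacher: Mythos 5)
You are aiming at the right target -- a reduction to equivalence of copyful/multiple-use \sst over finite alphabets, which is also what the paper does -- but two steps of your reduction are genuinely broken, and they are exactly the two obstacles that force the paper into a different route. First, the finite alphabet $\widehat\Sigma$ you posit does not exist: the number of equality types ($\sim$-classes) of words of length $n$ over $\atoms$ is the $n$-th Bell number, which grows superexponentially, whereas a length-preserving encoding over a finite alphabet can distinguish only $|\widehat\Sigma|^n$ words; so ``a bounded amount of combinatorial data per position recording which earlier position carries the same atom'' cannot be a faithful representation. You are forced into variable-length encodings of atoms (the paper's deatomisation $\atoms \to \diamond^*\circ$), but then atoms live in \emph{string} registers of the atomless machine and the equality tests of the original transducer can no longer be simulated. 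The paper escapes this by factoring each transducer through its register forest: a one-way single-use transducer $g$ performs \emph{all} the equality tests, and the remaining maps $g_1,g_2$ from register forests to outputs perform none, so they commute with every deatomisation (Lemma~\ref{lem:equality-oblivious}); for $g$ itself one never simulates it functionally but only captures the set of deatomisations of its range by a \emph{nondeterministic} multiple-use atomless \sst (Lemma~\ref{lem:deatomise-image}) -- this range computation, not any copyfulness of Definition~\ref{def:sst} (which is in fact copyless on string registers), is where multiple-use \sst enter.

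Second, your simulation lemma is false in the direction you need for completeness. Encoding output atoms by provenance pointers makes symbolic equality strictly finer than concrete equality: take $f_1(a_1a_2)=f_2(a_1a_2)=a_1a_2$ when $a_1\neq a_2$, while on the class $a_1=a_2$ the first transducer outputs the register originating from position $1$ and the second the register from position $2$. Then $f_1=f_2$ as functions, but the symbolic outputs on the representative of the degenerate class carry different provenances, so your algorithm would wrongly report inequivalence. Your proposed repair -- choosing representatives with a ``sufficiently rich equality pattern'' -- cannot work, because equivalence must be verified on \emph{every} $\sim$-class, including the degenerate ones where distinct provenances are forced to resolve to equal atoms. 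The paper sidesteps this by deatomising output atoms by \emph{value} rather than by position, applying the same deatomisation $\alpha$ to input and output; non-injective $\alpha$'s are deliberately included so that the deatomised ranges line up, and injectivity is invoked only in the one direction of the final correctness lemma where a genuine counterexample to $f_1=f_2$ must be transported to the atomless world. Without the register-forest factorisation and the value-based deatomisation, the reduction does not go through.
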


\bibstyle{alpha}
\bibliography{bib}
 
\vfill
\pagebreak
\appendix
\section{Omitted Lemma from Section~\ref{sec:automata-to-semigroups}}
\label{sec:functions-with-limited-support}
In this section we prove the following lemma that was used in
Section \ref{sec:automata-to-semigroups}:
\begin{lemma}

    Let $X, Y$ be orbit-finite sets and let $F \subseteq X \to Y$ be a equivariant set of
    finitely supported functions. Then $F$ is orbit finite if and only if there is a limit $k \in \mathbb{N}$,
    such that every function in $F$ is supported by at most $k$ atoms.
\end{lemma}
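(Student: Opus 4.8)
The ``only if'' direction is the easy one: if $F$ is orbit-finite, then by Definition~\ref{def:orbit-finite} there is a tuple $\bar a$ such that $F$ is a union of finitely many $\bar a$-orbits; pick a representative $f_1,\ldots,f_m$ of each orbit, let $\bar a_i$ be a support of $f_i$, and set $k$ to be the maximal length among $\bar a, \bar a_1 \bar a, \ldots, \bar a_m \bar a$. Every $f\in F$ is of the form $\pi(f_i)$ for some $\bar a$-automorphism $\pi$, hence supported by $\pi(\bar a_i\bar a)$, which has length at most $k$. So the real content is the ``if'' direction, and that is where I expect the main work to lie.

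For the ``if'' direction, assume every function in $F$ has a support of size at most $k$. The first reduction is to get a \emph{uniform} bound on the size of the domain contributions that matter: since $X$ is orbit-finite, fix a support $\bar b$ of $X$ (and of $Y$, and of the action data), so that $X$ is a finite union of $\bar b$-orbits with orbit representatives $x_1,\ldots,x_n$; each $x_j$ has a support $\bar c_j$. The plan is to show that a function $f\in F$ is determined by a bounded amount of information, namely: its support $\bar d_f$ (a tuple of length $\le k$, chosen canonically, e.g.\ least in some fixed well-order of finite tuples of atoms modulo the relevant symmetry), together with, for each orbit representative $x_j$, the value $f(x_j)$. Each $f(x_j)$ lies in $Y$, and since $f$ is supported by $\bar d_f$ and $x_j$ is supported by $\bar c_j$, the value $f(x_j)$ is supported by $\bar d_f \bar c_j$, whose length is bounded by $k + \max_j|\bar c_j|$. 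Now $Y$ is orbit-finite, so the set of elements of $Y$ with support of bounded size is a union of finitely many $\bar b$-orbits — but I want to be careful here, because this last claim is essentially the lemma again for the special degenerate case where ``$X$'' is a point; fortunately for $Y$ orbit-finite it is elementary, since an element of $Y$ with a support of size $\le \ell$ lies in one of the finitely many $\bar b$-orbits, and within a fixed orbit the element is determined by where a canonical support tuple of that orbit goes, which is an injection from a fixed finite set into $\atoms$ with bounded image-size, so up to $\bar b$-automorphism there are finitely many.

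Granting that, the argument concludes as follows. Consider the map sending $f\in F$ to the tuple $\big(\bar d_f,\, (f(x_1),\ldots,f(x_n))\big)$. The set of possible first coordinates is contained in $\atoms^{\le k}$, which is itself a polynomial orbit-finite set (a finite co-product of $\atoms^i$ for $i\le k$), hence orbit-finite; the set of possible values of the $j$-th component of the second coordinate is an orbit-finite subset of $Y$ by the previous paragraph. So the image of this map is contained in a finite product of orbit-finite sets, which is orbit-finite. Finally, the map is injective: any $f\in F$ is finitely supported and hence determined by its restriction to orbit representatives together with its support (to extend a value $f(x_j)$ to the whole $\bar b$-orbit of $x_j$, one uses equivariance of $f$ relative to $\bar d_f$), and it is equivariant in the sense that $\pi(f)$ maps to $\pi$ applied to the tuple, because $\bar d_{\pi(f)} = \pi(\bar d_f)$ if the canonical choice is itself equivariant and $\pi(f)(x_j) = \pi(f(\pi^{-1}(x_j)))$. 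Since $F$ is equivariant and injects equivariantly into an orbit-finite set, $F$ is orbit-finite (a finitely-supported-injective image, hence a quotient-free copy, of a subset of an orbit-finite set is orbit-finite).

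\textbf{Main obstacle.} The delicate point is making the ``canonical support'' choice $f\mapsto\bar d_f$ behave well: a support of minimal size need not be unique, and an arbitrary choice need not be equivariant, which would break the injectivity-into-an-orbit-finite-set step. I expect to handle this either by working with the (unique) least support when it exists — invoking the known fact that over atoms with only equality, finitely supported elements of sets built from orbit-finite sets have least supports — or, more robustly, by replacing ``the support $\bar d_f$'' with ``the orbit of $f$ under all atom automorphisms together with the finite data of how $f$ sits inside it'', i.e.\ by directly exhibiting $F$ as covered by finitely many orbits: partition $F$ according to which $\bar b$-orbit each $f(x_j)$ lies in and according to the pattern of equalities among the atoms appearing in the supports; show each block is a single $\bar b'$-orbit for a suitable extension $\bar b'$ of $\bar b$, using that within a block any two functions agree up to an automorphism fixing $\bar b$. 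This combinatorial bookkeeping — finitely many ``types'' because all the supports involved have size bounded by $k$ — is the heart of the proof.
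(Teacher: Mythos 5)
Your ``only if'' direction is fine and is essentially the paper's. The ``if'' direction, however, has a genuine gap at its central step: a finitely supported function $f$ is \emph{not} determined by its (least) support $\bar d_f$ together with its values on fixed representatives $x_1,\ldots,x_n$ of the $\bar b$-orbits of $X$. The extension argument you invoke --- ``to extend $f(x_j)$ to the whole $\bar b$-orbit of $x_j$, use equivariance of $f$ relative to $\bar d_f$'' --- only determines $f$ on the $\bar b\bar d_f$-orbit of $x_j$, i.e.\ on the orbit under automorphisms fixing \emph{both} $\bar b$ and $\bar d_f$; this is in general a proper subset of the $\bar b$-orbit. Concretely, take $X=\atoms$, $Y=\set{0,1}\times\set{0,1}$, fix an atom $a$ and a representative $x_1\neq a$ of the single orbit of $X$, and let $f(x)=(\chi_{\set{a}}(x),0)$ and $g(x)=(0,\chi_{\set{a}}(x))$. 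Both have least support $(a)$ and $f(x_1)=g(x_1)=(0,0)$, yet $f\neq g$; worse, $f$ and $g$ are not even in the same equivariant orbit, so the same example defeats your fallback plan of partitioning $F$ by the orbits of the values $f(x_j)$ and the equality pattern of the supports and claiming each block is a single orbit. The difficulty is therefore not (only) the canonicity of the support choice, which is where your ``main obstacle'' paragraph locates it, but the fact that the finite data you record about $f$ does not pin $f$ down. A repair along your lines would have to record the values of $f$ on representatives of the $\bar b\bar d_f$-orbits of $X$ (finitely many, with count bounded uniformly since $|\bar d_f|\le k$), which makes the bookkeeping noticeably heavier.

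The paper's proof of this direction is much shorter and avoids all of this: fix a single tuple $\bar a\in\atoms^k$. A function $X\to Y$ supported by $\bar a$ is, as a subset of $X\times Y$, a union of $\bar a$-orbits of the orbit-finite set $X\times Y$, of which there are only finitely many; hence there are finitely many $\bar a$-supported functions. Every $f\in F$ has a support of size at most $k$ and is therefore the image under some atom automorphism of an $\bar a$-supported function, so $F$ is covered by the (finitely many) orbits of these functions, and is orbit-finite. You may want to compare this with your argument: the key move is to view the function as its graph and use orbit-finiteness of $X\times Y$, rather than to reconstruct the function from sampled values.
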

\begin{proof}
    For the left-to right implication,  observe that all the functions in one orbit can be supported
    by the same number of atoms. To finish the proof of the implication, set $k$ to be the largest of those numbers.\\
    Consider now the right-to-left implication. Choose some $\bar a \in \atoms^k$. There are finitely many
    functions supported by $\bar a$, because every such function is a union of $\bar a$-orbits
    of $X \times Y$ and there are only finitely many such orbits (\cite[Theorem 3.16]{bojanczyk_slightly2018}).
    Every function $f : X \to Y$ with a support of size at most $k$,
    can be obtained by applying a suitable atom automorphism to function $f': X \to Y$ that is
    supported by $\bar a$.
\end{proof}

\section{Factorisation forests}
\label{sec:factorisation-forests}
In this part of the appendix, we prove a variant of the Factorisation Forest Theorem of Imre Simon~\cite[Theorem 6.1]{simonFactorizationForestsFinite1990}.
This result will be used three times in this paper: (1) to construct the Krohn-Rhodes decomposition for single-use Mealy machines from Theorem~\ref{thm:mealy-machine-krohn-rhodes}; (2) to prove the implication from orbit-finite monoids to  single-use automata in Theorem~\ref{thm:one-way-two-way};
and (3) to prove the Krohn-Rhodes decomposition for two-way single-use
transducers in Theorem~\ref{thm:two-way-models}.

The general idea  is  the same as in the Factorisation Forest Theorem: given a monoid homomorphism, one associates to each string a tree structure on the string positions, so that (a)  sibling subtrees are similar  with respect to the homomorphism; and (b) the depth of the trees is bounded.  
 To represent tree decompositions, we use Colcombet's {splits}~\cite[Section 2.3]{colcombetCombinatorialTheoremTrees2007}, because they are easily handled by one-way deterministic transducers.

Define a \emph{split} of a string to be a function   which assigns to each string position  a  number from $\set{1,2,\ldots}$, called the \emph{height} of the position. 
 A split induces the following tree structure on string positions. Let $i,j$ be two positions.  We say that $i$ is  a \emph{descendant} of  $j$  (with respect to a given split) if $i \le j$ and  all positions in the interval between $i$ and $j$ (including $i$ but not including $j$) have heights strictly smaller than
 the height of $j$. The descendant relation defined this way is a tree ordering: it is transitive, reflexive, anti-symmetric, and the ancestors  (i.e.~the opposite of descendants) of a string position  form a chain.  We say that $i$ is a \emph{sibling} of $j$ if both positions have the same height (say $k$) and all positions in the interval between $i$ and $j$ (it does not matter if the endpoints are included) have heights not greater than $k$. The sibling relation -- unlike the descendant relation -- is symmetric. Here is an example split:
\mypic{11}
The tree structure induced by a split will
usually be a forest, with several roots, rather than a tree. Note also that the sibling relation as defined above is not the same thing as having the same parent (nearest ancestor): for example in the above picture $a_{10}$ and $a_{11}$ have the same parent but are not siblings, because they have different heights.

Suppose that $h : \Sigma^* \to M$ is a monoid homomorphism, and $\sigma$ is a split of a string $w \in \Sigma^*$. For a position $i$ in $w$, define its \emph{split value}  to be the value under $h$ of the interval in $w$ consisting of the  descendants of $i$. Simon's  original Factorisation Forest Theorem,
when expressed in the terminology of splits, says that if $M$ is finite (not just orbit-finite), then there is a split with  height bounded by a function of $M$ such that: ($\dagger$) in every set of at least two siblings, the split values are all equal and furthermore idempotent
\footnote{A monoid element $e$ is an idempotent if $ee = e$.}. Still  assuming that $M$ is finite, Colcombet shows that splits can be produced by Mealy machines, under a certain relaxation of condition ($\dagger$). In this paper, we show that if the monoid is orbit-finite, and condition ($\dagger$) is further relaxed,  then splits can be produced by compositions of prime functions, as in Theorem~\ref{thm:mealy-machine-krohn-rhodes}.

Our relaxation of ($\dagger$) is defined in terms of smooth sequences~\cite[Section 3.1]{kirsten2004}.
We say that a sequence $x_1, \ldots, x_n$ of monoid
elements is  \emph{smooth},  if
for every $i \in \set{1,\ldots,n}$ there exist monoid elements $y,z$ such that the product $y x_1 \cdots x_n z$ is equal to $x_i$. In the terminology of Green's relations -- see Section~\ref{sec:greens-relations} -- this means that all of the monoid elements $x_1,\ldots,x_n$ are $\Jj$-equivalent, and they are also $\Jj$-equivalent to the product $x_1 \cdots x_n$.   Thanks the local theory of finite monoids, smooth sequences have a strong structure, related to the Rees-Shushkevich decomposition, see~\cite[Proposition 4.36]{PinMPRI}.
This structure carries over to orbit-finite monoids, and will be used in the proofs.

We say that a split of a string in $\Sigma^*$ is  \emph{smooth} with respect to a homomorphism $h : \Sigma^* \to M$  if for every set of consecutive siblings $i_1 < \ldots < i_k$, the corresponding sequence of split values is smooth.
The goal of this section is to prove the Split Lemma, which says that for every orbit-finite monoid $M$, every polynomial orbit-finite alphabet $\Sigma$
and every equivariant monoid morphism $h: \Sigma^* \to M$, there is a composition of primes which maps each string in $\Sigma^*$ to a smooth split. 

In the Split Lemma, we use a technical assumption about least supports. 
A non-repeating tuple of atoms $\bar a \in \atoms^*$ is called a {\em least support} of a finitely-supported $x$
    if (a) $\bar a$~supports $x$ and (b) every atom from $\bar a$ appears in every other tuple of atoms that supports $x$. The least support is unique, up to reordering the atoms in the tuple.  Every finitely-supported element has a least
    support~\cite[Theorem 6.1]{bojanczyk_slightly2018}.

\begin{lemma}[Split Lemma]\label{lem:split-lemma}
    Let $\Sigma$ be a polynomial orbit-finite set,
    $M$ an orbit-finite monoid, and let $h : \Sigma^* \to M$ be an equivariant monoid homomorphism such that 
    \begin{itemize}
        \item[(*)] for every $x \in M$ there is some letter $a \in \Sigma$, called the \emph{letter representation of $x$}, such that $h(a)=x$ and $a$ has the same least support as $x$. 
    \end{itemize}
There exists a bound on the height  $n \in \set{1,2,\ldots}$ and a composition of primes
\begin{align*}
f : \Sigma^* \to 
(
    \overbrace{\set{1,\ldots,n}}^{\substack{
        \text{Position's} \\ 
        \text{height}
        }}\times 
    \overbrace{\Sigma}^{\substack{
        \text
        {Letter representation of the $h$-image}\\
        \text{of the subword of position's descendants}
    }}
)^*
\end{align*}
such that for every input string, the output of  $f$ represents a smooth split.
    \mypic{12}
\end{lemma}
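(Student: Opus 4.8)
The plan is to lift Colcombet's one-way-transducer construction of bounded-height splits from finite monoids to orbit-finite monoids, by iterating a single ``smoothing pass'' implemented as a composition of primes and controlling its termination via the depth of the $\Jj$-order of $M$. First I would recall that, since $M$ is orbit-finite, Green's relations on $M$ have a bounded number of $\Jj$-classes (this follows from the structural theory of orbit-finite monoids of~\cite{bojanczykNominalMonoids2013,DBLP:journals/corr/ColcombetLP14}, in particular that every orbit-finite group is finite), so there is a well-defined notion of $\Jj$-depth, and a smooth sequence is exactly one whose entries and whose total product all lie in a single $\Jj$-class. The construction proceeds in rounds, indexed by decreasing $\Jj$-depth: in round $d$, every maximal block of consecutive positions whose running-product stays within $\Jj$-classes of depth $\ge d$ is processed, and within such a block the Rees--Shushkevich structure of the ambient $\Jj$-class (which, by orbit-finiteness, decomposes as a finite Rees matrix semigroup over a finite group $G$, a left-factor coordinate, and a right-factor coordinate) lets one recognise, by a single-use Mealy machine, a ``$\mathcal H$-synchronising'' sub-factorisation; the group coordinate of this sub-factorisation is exactly what the {\tt group} prime function (present in Figure~\ref{fig:atomic-combinators-1}) is designed to accumulate. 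Each round increments the height of the positions it groups by one, and because $M$ has boundedly many $\Jj$-classes, the total height is bounded by a function of $M$.

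Concretely, the core of one round is: (1) use {\tt block} together with an equivariant homomorphism (computing, via $h$, the $\Jj$-depth of each single-letter image and the running $\Jj$-class) to cut the string into the depth-$\ge d$ maximal blocks; (2) inside each block, which is a string all of whose prefixes project into one fixed $\Jj$-class $J$, read off the left-coordinate, group-coordinate and right-coordinate of each prefix — the left-coordinate is eventually constant and the right-coordinate is determined locally, so a single-use Mealy machine suffices, and the group-coordinate running product is handled by {\tt group} and a classical Mealy machine over the finite $G$; (3) from this data, a single-use Mealy machine marks the positions where the group-coordinate ``resets'' (i.e.\ the $\mathcal H$-class repeats), and these marked positions delimit a coarser factorisation each of whose factors has an idempotent-like, hence smooth, image; (4) raise the height of exactly the positions internal to these new factors by one, and emit, for each position, the letter representation of the $h$-image of its current descendant-interval, which is available because hypothesis (*) gives a letter with the prescribed least support and the smoothing step only ever produces $h$-images whose support is contained in the union of the supports of the letters involved (so no spurious atoms are introduced and least supports are preserved along the recursion). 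All of this is a composition of {\tt block}, {\tt map}, {\tt cases}, equivariant homomorphisms, single-use Mealy machines, classical Mealy machines, {\tt group}, and atom propagation, hence a composition of primes in the sense of Theorem~\ref{thm:mealy-machine-krohn-rhodes}; composing the rounds (boundedly many of them) stays within compositions of primes.

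The main obstacle I expect is step (2)--(3): faithfully carrying the Rees--Shushkevich / local-structure analysis of $\Jj$-classes through the orbit-finite setting while keeping every bookkeeping function \emph{single-use}. In the finite-alphabet case Colcombet can freely re-inspect the running monoid element; here each register holding an atom is destroyed on use, so one must argue that the left-coordinate stabilises after a bounded prefix (so it can be stored once in the finite state), that the right-coordinate and the $\mathcal H$-coordinate transition depend only on the current letter plus the finite data already in the state, and that the group product can be threaded through {\tt group} without ever needing two simultaneous copies of an atom. The second obstacle, more routine but still delicate, is the least-support bookkeeping needed for hypothesis (*): one must check, by induction on the round, that the letter representations produced as output have least supports no larger than allowed, so that (*) can be re-invoked at the next level of the recursion; this uses the general fact that $h$-images of intervals are supported by the atoms occurring in those intervals, together with the uniqueness of least supports from~\cite[Theorem 6.1]{bojanczyk_slightly2018}. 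Once these two points are settled, the bound on the height and the smoothness of the resulting split follow exactly as in Simon's theorem, the only change being that ``idempotent'' is weakened to ``smooth'', which is precisely what the bounded-depth-of-$\Jj$ argument delivers.
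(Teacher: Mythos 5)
Your high-level strategy --- induction along the $\Jj$-structure of $M$, with the local Rees/group analysis of a $\Jj$-class driving each round --- is indeed the paper's strategy, but two of your steps fail as stated. The first is a circularity: you allow yourself ``single-use Mealy machines'' (for reading off the Rees coordinates and for marking the reset positions) among your building blocks, and you also mix in the regular-list-function primitives ({\tt block}, {\tt map}, {\tt cases}, {\tt group}), which are not the primes relevant here. The primes in the sense of Theorem~\ref{thm:mealy-machine-krohn-rhodes} are only length-preserving (finitely supported) homomorphisms, \emph{classical} Mealy machines (finite alphabets, no registers) and atom propagation, and the Split Lemma is itself the main ingredient of the proof that single-use Mealy machines decompose into these primes --- so a general single-use Mealy machine cannot be invoked inside its proof. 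The paper's way around this is the one idea your plan is missing: inside a smooth block one chooses (by a uniformisation lemma with controlled supports) an idempotent $e$ and factorisations $h(a_i)=h(x_i)h(y_i)$ so that the group elements $g_i = y_{i-1}x_i$ all lie in the $\Hh$-class of $e$ and hence all have the \emph{same least support}; substituting these atoms by placeholders makes the bookkeeping values atomless, so the prefix products can be computed by a classical Mealy machine, and the atoms are restored afterwards by a homomorphism. Without this common-least-support/placeholder step, ``threading the group product without two simultaneous copies of an atom'' --- which you correctly flag as the main obstacle --- remains unresolved, and it is precisely the content of the proof rather than a routine check.

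Second, your termination argument rests on the claim that an orbit-finite monoid has boundedly many $\Jj$-classes. That is false: the set of $\Jj$-classes is only orbit-finite, hence typically infinite (e.g.\ $\atoms$ with a zero and an identity adjoined has one $\Jj$-class per atom). What is true, and what the induction actually needs, is that the $\Jj$-\emph{height} --- the length of the longest strictly increasing chain for the infix order --- is finite; this requires the separate observation that two elements in the same orbit are either $\Jj$-equivalent or infix-incomparable. With that correction your round count is bounded, and the remaining shape of your argument (cut into maximal smooth blocks using locality of smoothness, compute a letter representation of each block product, recurse on the sequence of block products, which lies at strictly smaller $\Jj$-height) matches the paper's.
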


The assumption (*) in the lemma is technical, and is used to overcome the fact that compositions of primes have output alphabets which are polynomial orbit-finite sets, while the monoid $M$ might be orbit-finite but not polynomial. The following lemma shows that any homomorphism can be extended to satisfy (*).

\begin{lemma}
    \label{lem:extend-to-star} Let $h : \Sigma^* \to M$ be an equivariant monoid homomorphism, with $\Sigma$ polynomial orbit-finite and $M$ orbit-finite. There is a polynomial orbit-finite set $\Gamma \supseteq \Sigma$ and an extension $g : \Gamma^* \to M$ of $h$ such that $g$ satisfies condition (*) in the \SplitLemma.
\end{lemma}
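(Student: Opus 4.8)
The plan is to take $\Gamma$ to be $\Sigma$ together with one fresh "letter representation'' for every element of $M$ that does not already arise as $h(a)$ for a letter $a\in\Sigma$ with matching least support. The difficulty is that $M$ need not be polynomial, so I cannot simply throw all of $M$ into the alphabet and stay inside the datatype hierarchy of Definition~\ref{def:the-transducer-model}. The key observation is that least supports have bounded size: since $M$ is orbit-finite there is a uniform bound $N$ such that every $x\in M$ has a least support of length at most $N$. I would therefore represent each $x\in M$ by a pair $(\bar a, o)$, where $\bar a\in\atoms^{\le N}$ is (a tuple enumeration of) its least support, and $o$ is a code for the orbit of $x$ among the finitely many $\emptyset$-orbits of $M$ refined by the position of $\bar a$; concretely, $o$ ranges over a finite index set $O$, so the new letters live inside the polynomial orbit-finite set $\bigl(\sum_{i\le N}\atoms^i\bigr)\times O$. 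Setting $\Gamma \eqdef \Sigma + \bigl(\sum_{i\le N}\atoms^i\bigr)\times O$ (or rather the sub-polynomial-orbit-finite set of those $(\bar a,o)$ that actually code an element of $M$, which is itself polynomial orbit-finite because it is a finite union of orbits cut out by an equivariant — indeed quantifier-free — condition) gives a polynomial orbit-finite $\Gamma\supseteq\Sigma$.

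Next I would define $g:\Gamma^*\to M$ as the unique monoid homomorphism extending $h$ whose value on a new letter $(\bar a,o)$ is the unique element of $M$ in the orbit coded by $(\bar a,o)$; this is well-defined precisely because the pair $(\text{least support},\text{orbit})$ determines an element of an orbit-finite set uniquely (two elements of $M$ with the same least support $\bar a$ and lying in the same $\bar a$-orbit are equal). Equivariance of $g$ is immediate from equivariance of $h$ and the equivariance of the decoding map $(\bar a,o)\mapsto x$. To see that $g$ satisfies $(*)$: given $x\in M$, if $x$ already has a letter representation in $\Sigma$ we are done; otherwise let $\bar a$ be the least support of $x$ and $o$ its orbit code, and take the new letter $a\eqdef(\bar a, o)$. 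Then $g(a)=x$ by construction, and the least support of the letter $(\bar a,o)$ inside the polynomial orbit-finite set $\Gamma$ is exactly the set of atoms appearing in $\bar a$ — which is the least support of $x$ — since the component $o$ carries no atoms and $\bar a$ was chosen non-repeating. Hence $a$ and $x$ have the same least support, as required.

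The main obstacle I anticipate is the bookkeeping needed to pin down the finite index set $O$ and to check that the chosen subset of $\bigl(\sum_{i\le N}\atoms^i\bigr)\times O$ really is polynomial orbit-finite rather than merely equivariant orbit-finite. Concretely one must argue: (i) the existence of the uniform bound $N$ on least-support size (this follows from orbit-finiteness of $M$ together with the bounded-support characterisation underlying Definition~\ref{def:orbit-finite}); (ii) that fixing a non-repeating tuple $\bar a$ and an $\bar a$-orbit of $M$ picks out a single element — i.e.\ that the least support is indeed a support, so that the $\bar a$-stabiliser acts trivially on that element; and (iii) that the resulting set of valid codes is a finite union of $\emptyset$-orbits each of which is polynomial, which reduces to noting that $\sum_{i\le N}\atoms^i$ is polynomial and that cutting out a sub-union-of-orbits by an equivariant predicate keeps us polynomial. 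None of these steps is deep, but getting the coding scheme stated cleanly is where the care goes; once the alphabet $\Gamma$ is correctly set up, the homomorphism $g$ and the verification of $(*)$ are routine.
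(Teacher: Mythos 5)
Your construction is, in essence, an inline re-derivation of the representation theorem that the paper simply cites: the paper's whole proof is one line, taking a support-preserving-and-reflecting surjection $f : \Delta \to M$ from \cite[Theorem 6.3]{bojanczyk_slightly2018} and setting $\Gamma = \Sigma + \Delta$. Your coding of each $x \in M$ by a non-repeating enumeration $\bar a$ of its least support together with a finite orbit code $o$ is exactly the mechanism behind that theorem, and steps (i) and (ii) of your checklist are sound: the bound $N$ exists because least-support size is constant on each of the finitely many orbits of $M$, and decoding is well-defined because an $\bar a$-supported element is alone in its $\bar a$-orbit, so the pair consisting of $\bar a$ and the equivariant orbit of $(\bar a, x)$ pins down $x$ uniquely, equivariantly, and with the right least support.

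Step (iii), however, rests on a false claim: cutting a polynomial orbit-finite set down to an equivariant union of orbits does \emph{not} in general keep it polynomial. The set of non-repeating pairs $\set{(a,b) \in \atoms^2 : a \neq b}$ --- which is precisely the kind of set your ``valid codes'' form --- is a single equivariant orbit inside $\atoms^2$, but it is not isomorphic to any polynomial orbit-finite set: a one-orbit polynomial orbit-finite set must be a singleton or a copy of $\atoms$, and neither has elements with two-atom least supports (compare the paper's own footnote giving unordered pairs as an equivariant orbit-finite set that is not polynomial). So your parenthetical ``or rather the sub-polynomial-orbit-finite set of those $(\bar a, o)$ that actually code an element of $M$'' does not yield a legitimate alphabet. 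The repair is immediate: take your first option, $\Gamma \eqdef \Sigma + \bigl(\sum_{i\le N}\atoms^i\bigr)\times O$ in full, and extend the decoding map to the invalid codes by sending them all to the monoid identity $1 \in M$, which is an equivariant element (it equals $h(\varepsilon)$). Condition (*) only demands that every $x \in M$ \emph{have} some letter representation with matching least support, so the junk letters are harmless, and with this patch your argument goes through.
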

\begin{proof}
    By~\cite[Theorem 6.3]{bojanczyk_slightly2018}, there is a set $\Delta$ and a surjective function $f : \Delta \to M$ such that $f$ preserves and reflects supports. Define $\Gamma = \Sigma + \Delta$ and define $g$ to be the unique extension of $h$ which coincides with $f$ on elements of $\Delta$.
\end{proof}
The rest of Appendix~\ref{sec:factorisation-forests} is dedicated to proving the Split Lemma. In Section~\ref{sec:uniformisation}, we discuss questions of choice and uniformisation for  polynomial orbit-finite sets, as well as elimination of spurious atom constants. In Section~\ref{sec:greens-relations},  we recall basic results about   Green's relations for orbit-finite monoids that were proved in~\cite{bojanczykNominalMonoids2013}. In Section~\ref{sec:closure-properties}, we prove basic  closure
properties of compositions  of primes.  We conclude by proving the 
\SplitLemma.

\subsection{Uniformisation and elimination of spurious constants}
\label{sec:uniformisation}
In the proof of \SplitLemma, we will use
some choice constructions, e.g.~choosing an element from each $\Jj$-class (see below). This could be problematic, 
because one of the difficulties with orbit-finite sets is that axiom of choice can fail, as illustrated by the following example.
\begin{example}\label{ex:choice}
    The surjective function $(a,b) \mapsto \set{a,b}$, which maps an ordered pair of atoms to the corresponding unordered pair, has no finitely supported one-sided inverse
    (\cite[Example 9]{bojanczyk_slightly2018}). In other words,  there is no  finitely supported way of choosing one atom from an unordered pair of atoms. 
\end{example}
The difficulties with choice illustrated in Example~\ref{ex:choice} stem from symmetries like $\set{a,b}= \set{b,a}$.
We do not encounter such problems with polynomial orbit-finite sets, because they only use ordered tuples of atoms 
-- we know which atom stands on the first position, which atom stands on the second position, and so on.
Lemma~\ref{lem:write-support} below uses this to extract canonical supports of elements of polynomial orbit-finite sets.

\begin{lemma}
    \label{lem:write-support}
    Let $\Sigma$ be a polynomial orbit-finite set and let  $k \in \set{0,1,\ldots}$ be the maximal size of least supports in $\Sigma$ (we call this the \emph{dimension} of $\Sigma$). There is an equivariant function $\textrm{supp}: \Sigma \to \atoms^{\leq k}$, which maps every element of $\Sigma$ to a tuple that is its least support.
\end{lemma}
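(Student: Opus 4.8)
The plan is to proceed by structural induction on the construction of the polynomial orbit-finite set $\Sigma$ from $\atoms$ and singletons using $+$ and $\times$, building the function $\textrm{supp}$ together with a proof that its value on $x$ is always the least support of $x$. The base cases are immediate: for $\Sigma = \atoms$, set $\textrm{supp}(a) = a$ (the one-element tuple), which is manifestly equivariant and is the least support of $a$; for a singleton $\Sigma = \set{\star}$, set $\textrm{supp}(\star)$ to be the empty tuple, since $\star$ is supported by $\emptyset$. The two inductive steps are where the argument has content, and I would handle them as follows.

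For a co-product $\Sigma = \Sigma_0 + \Sigma_1$, an element is a pair $(i,a)$ with $i \in \set{0,1}$ and $a \in \Sigma_i$; I would define $\textrm{supp}(i,a) = \textrm{supp}_{\Sigma_i}(a)$ using the inductively given functions, noting that the least support of $(i,a)$ in the co-product coincides with the least support of $a$ in $\Sigma_i$ (the tag $i$ carries no atoms and is fixed by every automorphism). Equivariance is inherited from the components. For a product $\Sigma = \Sigma_0 \times \Sigma_1$, an element is a pair $(a_0,a_1)$; here the natural candidate is to take $\textrm{supp}(a_0,a_1)$ to be some merge of $\textrm{supp}_{\Sigma_0}(a_0)$ and $\textrm{supp}_{\Sigma_1}(a_1)$. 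The key fact I would invoke is that in the nominal setting the least support of a pair is the union of the least supports of the coordinates (supports of pairs are intersected when you vary, unioned for the element itself -- this is standard, e.g. from the support calculus in \cite[Chapter 6]{bojanczyk_slightly2018}). The subtlety is that a plain concatenation of the two tuples is not equivariant-invariant in the right way when the two coordinate supports overlap: I need to output a \emph{non-repeating} tuple representing the \emph{set} of atoms in the union, in a canonical order. The device for making this canonical is to fix, once and for all, a linear order on the finitely many possible "overlap patterns" of the positions of the two tuples -- equivalently, to define the merged tuple by a fixed quantifier-free rule that, given which coordinates of $\textrm{supp}_{\Sigma_0}(a_0)$ equal which coordinates of $\textrm{supp}_{\Sigma_1}(a_1)$, lists the distinct atoms in the order "first all atoms of the first tuple in their order, then the atoms of the second tuple not already listed, in their order". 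This rule is equivariant because it only tests equalities of atoms, and it produces exactly the least support because by induction the two input tuples are the least supports of the coordinates. Since the number of atoms in the output is at most the sum of the two dimensions and is in any case at most $k = \dim(\Sigma)$ by definition of dimension, the codomain $\atoms^{\le k}$ is respected.

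The main obstacle, and the place where I would be most careful, is exactly this canonical-merge step in the product case: one must check that the proposed rule (a) genuinely lands in $\atoms^{\le k}$ with $k$ the dimension of the \emph{whole} product (not a larger number), (b) is equivariant, and (c) returns a \emph{least} support, not merely \emph{a} support -- in particular that no atom is dropped and none is spuriously added. Point (a) follows because the dimension of a product is the sum of the dimensions of the factors \emph{minus possible overlaps}, and our output has length equal to the size of the union of the two least supports, which is at most that sum; a cleaner way is simply to observe the output length is always $\le \dim(\Sigma)$ since it enumerates a supporting set and the least support has size $\le \dim(\Sigma)$. Point (b) is routine since the merge rule is quantifier-free. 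Point (c) is the heart: I would argue that any automorphism fixing all atoms in the merged tuple fixes both coordinates (each coordinate's least support is contained in the merged tuple), hence fixes the pair; and conversely every support of the pair contains the least support of each coordinate, hence contains every atom appearing in the merged tuple. This establishes that the merged tuple is a least support, and uniqueness up to reordering (already noted in the paragraph on least supports) then makes $\textrm{supp}$ well-defined as a function into $\atoms^{\le k}$.
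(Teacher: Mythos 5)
Your proposal is correct and follows exactly the same route as the paper, whose entire proof is the single line ``Induction on the construction of $\Sigma$''; you have simply filled in the details that the paper leaves implicit. The one step with real content---the canonical, equality-test-based merge of the two coordinate least supports in the product case, justified by the fact that the least support of a pair is the union of the least supports of its coordinates---is handled correctly.
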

\begin{proof}
Induction on the construction of $\Sigma$.
\end{proof}

A key  benefit of polynomial orbit-finite sets is that some problems with choice can be avoided, namely from each binary relation one can extract a function, which is called its uniformisation. Uniformisation will play a crucial role in our construction. From this perspective, the decision to use polynomial orbit-finite sets, instead of general orbit-finite sets, is important.

\begin{lemma}[Uniformisation]\label{lem:straight-choice} 
    Let $R \subseteq  X \times Y$ be a finitely supported
    binary relation on polynomial orbit-finite sets, such that for every 
    $x \in X$, its image under $R$: 
    \[x R = \set{y~:~R(x, y)}\] 
    is nonempty. Then $R$ can be uniformised, i.e.~there is a finitely supported function
    \[f~:~X~\to~Y\] 
    such that all  $x \in  X$ satisfy $x \, R \, f(x)$.
\end{lemma}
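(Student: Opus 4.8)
The plan is to reduce uniformisation of an arbitrary finitely supported relation $R \subseteq X \times Y$ between polynomial orbit-finite sets to a simple canonical choice problem, using the "canonical supports" available in polynomial orbit-finite sets via Lemma~\ref{lem:write-support}. First I would fix a tuple $\bar a$ that supports $R$ and both sets, and argue that it suffices to choose, $\bar a$-equivariantly, one element from each nonempty $\bar a$-orbit of the form $xR$. Since $X$ is orbit-finite, it is a finite union of $\bar a$-orbits, so I can work orbit by orbit and glue the finitely many partial functions together by cases (co-pairing on orbits). Thus the crux is: given a single $\bar a$-orbit $O \subseteq X$ and the associated nonempty images $xR \subseteq Y$, produce a finitely supported $f : O \to Y$ with $x\,R\,f(x)$.

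For a single orbit $O$, I would pick a representative $x_0 \in O$ and any element $y_0 \in x_0 R$; let $\bar b = \supp{(x_0,y_0)}$ be its least support (which extends $\bar a$, using Lemma~\ref{lem:write-support} applied to the polynomial set $X\times Y$). Every $x \in O$ is $\pi(x_0)$ for some $\bar a$-automorphism $\pi$, so $\pi(y_0) \in xR$; the only issue is that $\pi$ is not uniquely determined by $x$ — the stabiliser of $x_0$ may act nontrivially. The key observation is that $\supp{x}$, as extracted by Lemma~\ref{lem:write-support}, is a canonical tuple, so the map $x \mapsto \supp{x}$ is equivariant and the positions of the atoms of $\bar b$ relative to $\bar a \cup \supp{x_0}$ can be "transported" along the unique order-preserving renaming of the canonical support. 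Concretely, I would define $f(x)$ by: compute the canonical support $\supp{x}$; this determines a canonical atom automorphism $\sigma_x$ sending the canonical support of $x_0$ to that of $x$ while fixing $\bar a$; set $f(x) = \sigma_x(y_0)$. Equivariance of $x \mapsto \supp{x}$ makes $x \mapsto \sigma_x$ finitely supported (by $\bar b$), hence $f$ is finitely supported; and $f(x) \in xR$ because $\sigma_x$ is an $\bar a$-automorphism mapping $x_0$ into $O$ appropriately — one checks $\sigma_x(x_0) = x$ since both have the same canonical support and lie in the same $\bar a$-orbit, and an $\bar a$-automorphism mapping $x_0 \mapsto x$ also maps $x_0 R \mapsto xR$, so $\sigma_x(y_0) \in xR$.

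The main obstacle I anticipate is the last verification that $\sigma_x(x_0) = x$: in a general polynomial orbit-finite set, two elements of the same $\bar a$-orbit with the same canonical support need not be equal if the set has "diagonal" structure (e.g.\ $\atoms$ versus the diagonal of $\atoms^2$), so I would need to be slightly careful and possibly refine the representative/automorphism choice — for instance, choosing one representative per $\bar a$-orbit and defining $\sigma_x$ only after verifying it is well-defined, or equivalently routing the whole argument through a faithful coordinatisation of $O$ as a quotient of a tuple of atoms. An alternative, cleaner route that avoids this is to proceed by induction on the structure of $X$ (singleton, $\atoms$, $+$, $\times$): for singletons and $\atoms$ the claim is immediate or follows from a direct atom-choice argument using canonical supports, for co-products one uniformises each summand and combines by {\tt cases}, and for products $X = X_1 \times X_2$ one curries $R$ and applies the inductive hypothesis twice. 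I would likely present the inductive version as the main proof, as it sidesteps the subtlety above and matches the constructive flavour of the rest of Appendix~\ref{sec:uniformisation}.
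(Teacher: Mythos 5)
Your overall strategy---exploit the rigidity of polynomial orbit-finite sets through canonical tuple-valued supports and a fixed tuple of fresh constants---points in the right direction, but both routes you sketch have a concrete gap, and it is not the one you flag. The obstacle you worry about ($\sigma_x(x_0)=x$) is in fact resolvable: if $x=\pi(x_0)$ for an $\bar a$-automorphism $\pi$, then by equivariance of the canonical support function $\pi$ maps the tuple $\supp{x_0}$ componentwise to $\supp{x}$, so any $\bar a$-automorphism $\sigma_x$ agreeing with $\pi$ on $\supp{x_0}$ also sends $x_0$ to $x$. The real gap is that $y_0$ may contain atoms lying outside $\supp{x_0}\cup\bar a$, and you never specify where $\sigma_x$ sends them. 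Whatever choice you make must be canonical uniformly in $x$ (else $f$ is not finitely supported) and must avoid collisions with $\supp{x}$, which varies over the orbit and can hit any atoms you fixed in advance; this is exactly the delicate point of the whole lemma. The inductive alternative has an analogous problem in the product case: currying $R\subseteq (X_1\times X_2)\times Y$ asks you to choose, finitely-supportedly in $x_1$, a uniformiser of the slice relation on $X_2\times Y$---but the space of such uniformisers is not a polynomial orbit-finite set, so the inductive hypothesis does not apply to it and the induction does not close.

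The paper's proof avoids automorphisms and orbit-by-orbit case splits altogether. It fixes one tuple $\bar c$ containing the least support $\bar a$ of $R$ plus $2n$ fresh atoms, where $n$ is the dimension of $Y$, and first proves that every $xR$ contains an element $y$ whose least support lies inside $\supp{x}$ together with $\bar c$: since $xR$ is supported by $\bar a$ concatenated with $\supp{x}$, any atom of a chosen $y\in xR$ outside that tuple can be renamed, staying inside $xR$, to a fresh atom of $\bar c$, and $\bar c$ is large enough to make room for all of them. It then defines $f(x)$ as the \emph{least} such element under the canonical linear order on elements of $Y$ supported by the tuple $\bar c$ followed by $\supp{x}$ (atoms ordered by their position in that tuple, tuples lexicographically, coproducts left-before-right). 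This single global definition is manifestly supported by $\bar c$. If you want to salvage your construction, the cleanest repair is to drop the transported element $\sigma_x(y_0)$ and instead take this least element of $xR$ supported by $\bar c$ and $\supp{x}$; that removes both the extension problem and the need to choose orbit representatives.
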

\begin{proof}
    Let $\bar a$ be the least support of $R$. 
    Choose a tuple $\bar c$ of atoms which contains $\bar a$ plus $2n$ fresh atoms, where $n$ is the dimension of $Y$. This tuple will support $f$.
\begin{claim} 
    For every $x \in X$, there is some $y \in x R$ such that every atom from the least support of $y$
    appears in the least support of $x$ or in $\bar c$.
\end{claim}
\begin{proof}
        Let $\bar b = \supp{x}$. 
        The function $x \mapsto x R$ is supported by $\bar a$. 
        It follows that the set $x R$ is supported by the tuple $\bar {a b}$.
        Choose some  $y \in x R$. Because $x R$ is supported by $\bar {ab}$,
        every atom from $y$ that does not appear in $\bar {a b}$ can be replaced by another
        atom that does not appear in $\bar {a b}$, as long as the equality type is preserved.
        Since $\bar c$ was chosen to be big enough, we can assume without loss of generality that every atom in $y$ appears either in the least support of $x$ or in $\bar c$.  
\end{proof}
For $x \in X$ (and $\bar b = \supp{x}$),  consider the following order on the elements of $x R$ that are supported by $\bar {cb}$:
(a) atoms are ordered by their position in $\bar {cb}$ (b) tuples are ordered lexicographically
(c) elements of a co-product are ordered ``left before right''.
The function which maps $x$ to the least element from $x R$ according to the above ordering is easily seen to be supported by $\bar c$, and it is the uniformisation required by the claim. 
\end{proof}

\begin{myexample}\label{ex:uniformisation}
    Let $X= \atoms$, $Y = \atoms^2$, and let $R \subseteq X \times Y$ be the inverse of the projection $(a,b)  \mapsto a$. A uniformisation of $R$ is the function $a \mapsto (a,1)$, where $1$ is an atom constant. This uniformisation is not equivariant, because it uses the constant $1 \in \atoms$, and there is no uniformisation that is equivariant.
\end{myexample}

The choice of the tuple $\bar c$ in the proof of  Lemma \ref{lem:straight-choice} raises its own problems, because it makes the construction non-equivariant in the following sense: even if the relation $R$ is equivariant, the choice  function $f$ might not be equivariant (this is necessary as witnessed by Example~\ref{ex:uniformisation}). 
Because of that, even when decomposing an equivariant function $f$
into primes, it will be useful to use homomorphisms that are not equivariant.
This motivates the following definition. 
\begin{definition}
    For a tuple of atoms $\bar a \in \atoms^*$, we say that a function $f$ is a composition of $\bar a$-primes if it can be obtained using parallel and sequential  compositions from
    classical Mealy machines, atom propagation, and $\bar a$-supported homomorphisms.
\end{definition}
To solve the problem with using non-equivariant functions that arises from an application of Lemma~\ref{lem:straight-choice}, we prove below  that when defining equivariant functions,
compositions of (equivariant) primes
have exactly the same expressive power as compositions of $\bar a$-primes (for every $\bar a$).
\begin{lemma}[Elimination of spurious constants]
\label{lem:superflous-atom-elimination}
If an equivariant function is a composition of $\bar a$-primes, for some tuple of atoms $\bar a$, then  it is also a composition of (equivariant) primes.
\end{lemma}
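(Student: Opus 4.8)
The plan is to eliminate the atom constants from the tuple $\bar a$ one at a time, reducing to the case where $\bar a$ is empty. So it suffices to show: if an equivariant function $f : \Sigma^* \to \Gamma^*$ is a composition of $(\bar a, c)$-primes, where $c$ is a single atom not occurring in $\bar a$, then $f$ is a composition of $\bar a$-primes. Iterating this from the front of the tuple and ending with the empty tuple gives the result (the empty tuple yields exactly the equivariant primes). The key idea is that since $f$ itself is equivariant, it cannot ``depend'' on the value of $c$; so we should be able to run the given decomposition while carrying around a guessed-and-maintained copy of the atom $c$ as part of the data, rather than hard-coding it.

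\textbf{Key steps.} First I would set up a ``relativised'' datatype: given a composition $g$ of $(\bar a,c)$-primes with input type $X$ and output type $Y$, I want to produce a composition $g'$ of $\bar a$-primes of type $\atoms \times X^* \to \atoms \times Y^*$ (thinking of the first coordinate as the ``current guess for $c$'', threaded length-preservingly, or more simply a function $(\atoms \times X)^* \to (\atoms \times Y)^*$ that copies the $\atoms$-component through unchanged and applies $g$ in the presence of that atom as a constant) which is equivariant and which, when the first coordinate is instantiated to the actual constant $c$, computes $g$. This is done by structural induction on the composition-of-primes derivation of $g$: classical Mealy machines and atom propagation do not mention $c$ at all, so they are handled by composing with projections/pairings; for an $(\bar a,c)$-supported homomorphism $h : X \to Y$, lifted letterwise, the relativised version is the equivariant homomorphism $(\atoms \times X) \to (\atoms \times Y)$ sending $(d,x) \mapsto (d, h_d(x))$, where $h_d$ is the homomorphism obtained from $h$ by substituting $d$ for $c$ — this is equivariant and $\bar a$-supported because $h$ was $(\bar a,c)$-supported, and it recovers $h$ when $d = c$. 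Sequential and parallel composition are threaded through in the obvious way (parallel composition needs the two factors to share the same $\atoms$-coordinate, which is arranged by a pairing/diagonal on the $\atoms$-component).

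Second, I would use this relativised decomposition together with a way of \emph{supplying} the constant $c$ from inside an equivariant composition of $\bar a$-primes. Here is where the hypothesis that the \emph{overall} function $f$ is equivariant is used: because $f$ is equivariant and is a composition of $(\bar a,c)$-primes, in fact $f$ is computed by the relativised decomposition $f'$ \emph{for every} choice of the $\atoms$-coordinate, not just $c$ — the value of $f$ does not change under automorphisms fixing $\bar a$ that move $c$, so $f'( d, w) = f(w)$ for all $d$ (at least all $d$ outside the finitely many ``bad'' atoms appearing in the input $w$; and since $f$ is equivariant and $\bar a$-supported with $c$ not in $\bar a$, one checks the exceptional atoms cause no problem by a fresh-atom/renaming argument). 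Therefore it is enough to feed $f'$ \emph{some} atom in its first coordinate — e.g. pick any atom occurring in the input word, which can be extracted by an $\bar a$-prime (using Lemma~\ref{lem:write-support} and the {\tt group}/prime toolkit; if the input word is empty the output is the fixed equivariant value $f(\varepsilon)$, handled by a {\tt cases} on {\tt coappend}). Prepending this ``read off a witness atom'' prime to $f'$ yields an equivariant composition of $\bar a$-primes computing $f$.

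\textbf{Main obstacle.} The delicate point is the second step: the relativised function $f'$ genuinely computes $f$ only when the guessed atom is ``generic'' relative to the input, and one has to argue that the handful of atoms actually appearing in a given input word are harmless — i.e.\ that choosing the witness atom \emph{from within} the input word (so it is certainly not generic) still gives the right answer. I expect this to require a careful support/automorphism argument: $f$ equivariant and $\bar a$-supported means $f(w)$ is determined by the $\bar a$-orbit of $w$, and $f'(d,w)$, being a composition of $(\bar a)$-primes evaluated with $c := d$, equals $f(w)$ precisely when $d$ plays the role $c$ played, which one justifies by renaming $c$ to $d$ by an $\bar a$-automorphism and using that such renaming fixes $f$. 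Getting this bookkeeping exactly right — especially interleaving it with the inductive construction of $f'$ so that the ``bad atom'' sets match up — is the technical heart of the proof; everything else is routine threading of an extra $\atoms$-coordinate through the prime/combinator toolkit.
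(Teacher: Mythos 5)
Your overall strategy --- thread a ``stand-in'' atom $d$ through the computation in place of the constant $c$, and argue via an $\bar a$-automorphism $\sigma_d$ sending $c$ to $d$ that the relativised composition computes $\sigma_d(f)=f$ --- is a genuinely different route from the paper's, and the relativisation step itself is sound. But the proposal goes wrong exactly where you have to \emph{produce} the atom $d$. First, the obstacle you flag is not the real one: for any $d\notin\bar a$ the relativised run with first coordinate $d$ computes $\sigma_d(g_n)\circ\cdots\circ\sigma_d(g_1)=\sigma_d(f)=f$ by equivariance of $f$, \emph{regardless} of whether $d$ occurs in the input word, so no genericity or fresh-atom renaming argument is needed there. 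Second --- and this is the genuine gap --- your only source for $d$ is the input word, yet the lemma must hold on inputs containing no atom outside $\bar a$: atomless inputs (e.g.\ over the finite summands of a polynomial orbit-finite $\Sigma$) and inputs all of whose atoms lie in $\bar a$. On such inputs an equivariant composition of $\bar a$-primes has no fresh atom available anywhere (equivariance forbids conjuring one), $\sigma_d$ does not exist for $d\in\bar a$, and the intermediate stages of the given derivation really do use $c$ even though the final output is $\supp{w}$-supported. Handling these inputs is not a routine corner case: one would need a separate finite-alphabet simulation on the $\bar a$-supported sub-alphabet, spliced into your main construction at the first position where a usable atom appears, and that splice (reconstructing intermediate values that mention $c$) is precisely the difficulty the lemma is meant to dispose of.

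The paper sidesteps this entirely with a different device: after showing (by applying an automorphism to the whole derivation) that $\bar a$ may be replaced by any tuple of the same size, it enlarges the alphabet to $\Sigma(P)$ by adjoining a finite set $P$ of \emph{placeholders} --- fresh symbols that behave like atoms syntactically but are not atoms, so that being ``supported by'' a tuple of placeholders means being equivariant. Running the derivation with placeholders playing the role of $\bar a$, bracketed by the casts $\mathsf{cast}_P$ and $\mathsf{cast}_P^{-1}$ (the latter well-defined because $f$ is equivariant, so no placeholder leaks into the output), yields a composition of equivariant primes with no case analysis on the input. If you want to salvage your approach, the cleanest fix is to borrow exactly this idea: supply the stand-in for $c$ as a non-atom constant in an enlarged alphabet rather than harvesting it from the input.
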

\begin{proof}
    Let $f$ be an equivariant function. 
    \begin{claim}
        If $f$ is a composition of $\bar a$-primes, then it is also a composition
        of $\bar b$-primes, for every tuple $\bar b$ of the same size as $\bar a$
        (assuming that neither $\bar a$, nor $\bar b$ contains repeating atoms).
    \end{claim}
    \begin{proof}
        Take an atom automorphism $\pi$ that transforms $\bar a $ to $\bar b$. 
        Both the parallel and the sequential compositions are equivariant (higher order) functions, so we can just apply
        $\pi$ to every function in a derivation of $f$ as composition of $\bar a$-primes, to obtain a
        derivation of $\pi(f)$ as a composition $\pi(\bar a)$-primes. To finish the proof,
        notice that $\pi(f) = f$ by equivariance of $f$ and  $\pi(\bar a) = \bar b$ by definition of $\pi$.
    \end{proof}
    Thanks to the claim above, it would suffice to choose such $\bar b$ that would only contain
    equivariant atoms (fresh for everything). Since there are no such atoms, we introduce the concept
    of {\em placeholders} instead\footnote{The concept of placeholders is a special case of the  {\em name abstraction}
    from \cite[Section 4]{pitts2013nominal}.}.
    \begin{definition}
        \label{definition:placeholders}
        Fix a finite set $P$, not containing any atoms, whose elements will be called  placeholders.
        For every polynomial orbit-finite set $\Sigma$, define the set $\Sigma(P)$ inductively
        on the structure of $\Sigma$:
        \begin{align*}
        \overbrace{\atoms(P)= \atoms + P}^{\text{atoms}} \ \ 
        \overbrace{\set{x}(P) = \set{x}}^{\text{singletons}}  \ \ 
        \overbrace{(\Sigma \times \Gamma)(P)=\Sigma(P) \times \Gamma(P)}^{\text{products}} \ \ 
        \overbrace{(\Sigma+\Gamma)(P)=\Sigma(P) + \Gamma(P)}^{\text{co-products}}
        \end{align*}
    \end{definition}
    
       There is a natural equivariant embedding $\mathsf{cast}_P : \Sigma \to \Sigma(P)$.
        It has a one-sided inverse
        $\mathsf{cast}_P^{-1} : \Sigma(P) \to \Sigma + \bot$, which only
        works for values that do not contain elements~from~$P$.

        Intuitively, the set $\Sigma(P)$ is the set $\Sigma$, in which values from $P$
        behave like atoms. This means that $\Sigma(P)$ is equipped with an action
        of atom-and-placeholder automorphisms ($\atoms + P \to \atoms + P$) (in addition to the 
        natural action of atom automorphisms $\atoms \to \atoms$). The atom-and-placeholder action can
        be used, for example, to substitute a particular atom with a placeholder. This action 
        extends to functions $\Sigma(P) \to \Gamma (P)$.
        Moreover, every finitely supported function in $\Sigma \to \Gamma$ can be lifted
        to $\Sigma(P) \to \Gamma(P)$:
        \begin{claim}
        \label{claim:function-extend-to-placeholders}
        For every polynomial orbit-finite sets $\Sigma$ and $\Gamma$ and
        every finitely-supported function $f : \Sigma \to \Gamma$, there
        is a function $f' : \Sigma(P)  \to \Gamma(P)$ such that 
        the following diagram commutes
        \begin{eqnarray*}
        \xymatrix{
        \Sigma \ar[d]_{\pi \circ \mathsf{cast}_P}  \ar[r]^{f} & \Gamma  \\
        \Sigma(P) \ar[r]_{f'} & \Gamma(P) \ar[u]_{\mathsf{cast}_P^{-1} \circ \pi^{-1}}
        }
        \end{eqnarray*}
        for every atom-and-placeholder automorphism $\pi$ that does
        not touch the atoms
        in the least support of $\bar f$.
        \end{claim}
        \begin{proof}
            We obtain $f'$ by treating $f$ as a subset of $\Sigma \times \Gamma$, casting it
            to $\Sigma(P) \times \Gamma(P)$, and closing it under atom-and-placeholder
            $\supp f$-automorphisms.
        \end{proof}
    
    To finish the proof of Lemma~\ref{lem:superflous-atom-elimination}, choose $P$ of size
    at least $|\bar a|$, and construct $f$ as the following composition:
    (a) cast all the input letters into their placeholder versions (equivariant homomorphism -- $\mathsf{cast}_P$),
    (b) apply $f'$ constructed as composition of $\bar b$-primes, where $\bar b \in P^*$
    (equivariant primes) and (c) cast all the letters back to
    the original alphabet (equivariant homomorphism -- $\mathsf{cast}_P^{-1}$).
    Inputs of $f'$ contain no placeholders and $f$ is equivariant. 
    It follows that outputs of $f'$ also contain no placeholders, which proves that the transformation
    (c) is always well defined.
\end{proof}
From now on, we  will freely use finitely supported
homomorphisms (rather then only equivariant ones)
when proving that an equivariant function is a
composition of primes.

\subsection{Green's relations}
\label{sec:greens-relations}
In this subsection we recall some basic facts about
Green's relations for orbit-finite monoids. These will be used in the proof of the  \SplitLemma, and later on
in the paper. 

\begin{definition}
    [Green's relations]
    For elements $x$ and $y$ in  a monoid, say that  $x$ is an {\em infix} of $y$, if $y=axb$ for some $a$, $b$ from the monoid. Similarly, we say that $x$ is a prefix of $y$ if $y = xa$ for some $a$, and that $x$ is a suffix of $y$
if $y = ax$ for some $a$. We say that $x$ and $y$ are $\Jj$-equivalent if they are each other's infixes. Equivalence classes of $\Jj$-equivalence are called  $\Jj$-classes. In the same way we define $\Rr$-classes for prefixes
and $\Ll$-classes for suffixes.
\end{definition}

The following orbit-finite adaptation of the classical Eggbox Lemma for finite monoids was proved in~\cite[Lemma 7.1 (first item) and Theorem 5.1]{bojanczykNominalMonoids2013}.
\begin{lemma} [Eggbox Lemma]
    \label{lem:green-eggbox} 
    Let $x$ and $y$ be elements of an  orbit-finite monoid. If $x$ and $xy$ are in the same $\Jj$-class, then they are also in the same
    $\Rr$-class (prefix class). Similarly if $y$ and $xy$ are in the same $\Jj$-class, then they are also in the same
    $\Ll$-class (suffix class). 
\end{lemma}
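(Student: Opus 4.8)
The plan is to mimic the classical finite-monoid proof, since the only role of orbit-finiteness here is to guarantee that the monoid has no infinite chains of certain kinds (or more precisely, that the relevant quotients behave well). I would prove the first statement — that $x$ and $xy$ being $\Jj$-equivalent forces them to be $\Rr$-equivalent — and then invoke the left-right dual for the second. So suppose $x \;\Jj\; xy$. Since $xy$ is a prefix of $x$ being what we want, one direction is free: $xy$ is trivially a prefix-descendant candidate, but the content is the reverse — we must show $x$ is a suffix-multiple... rather, that $x = xy \cdot c$ for some $c$, i.e. $x$ is a prefix of $xy$. From $\Jj$-equivalence we get $a, b$ with $x = a \cdot xy \cdot b$. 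Setting $u = ax$ (so $x = u y b$) we get $x = u\, y b = u (yb)$, hence $x = u^n (yb)^n$ for all $n$ by iterating; the point is to find an idempotent power. In a finite monoid one picks $n$ so that $u^n$ is idempotent; in an orbit-finite monoid the analogous fact is exactly what the cited structural results of~\cite{bojanczykNominalMonoids2013} provide, so I would cite the orbit-finite version of ``every element has an idempotent power'' (Theorem 5.1 / the results on Green's relations in that paper, as the lemma statement already does).

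The key steps, in order: (1) from $x \;\Jj\; xy$ extract $a,b$ with $x = a\,xy\,b$, and rewrite as $x = u z$ where $u = ax$ and $z = yb$; (2) iterate to get $x = u^n z^n$ for all $n\ge 1$, and also note $xy = a\,xy\,b\,y$ so the whole orbit-finite ``local'' structure (the $\mathcal J$-class of $x$, which contains $xy$) is a union of finitely many orbits and hence the semigroup it generates is well-behaved; (3) choose $n$ with $u^n$ idempotent (orbit-finite version of the standard fact), so $x = u^n z^n = u^{2n} z^n = u^n (u^n z^n) = u^n x$; (4) now observe $x = u^n x = (ax)^n x$, and in particular $x = a x \cdot (a x)^{n-1} x = (ax)\, w$ where $w=(ax)^{n-1}x$, so $ax$ is a prefix of $x$; (5) combined with $x = ax\cdot (yb)$ (i.e. $ax$ is a prefix of... wait, this says $ax$ is a prefix of $x$ only if — let me instead directly chase) — cleaner: from $x = a(xy)b$ we have $xy$ is an infix of $x$, and conversely $x$ is trivially an infix of $xy \cdot$(nothing)… the genuinely needed implication is $xy$ prefix-equivalent to $x$, i.e. $x = xy\cdot c$ for some $c$. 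Using $x = u^n x$ with $u = ax$: expand one factor of $u$ as $ax = a x$, so $x = u^{n-1}(ax)x = u^{n-1} a (xx)$ — this is getting tangled; the robust route is the standard one: show $x\,\Rr\,xy$ by exhibiting $s$ with $xy\,s = x$. From $x = a\,xy\,b$ and iterating $xy = (xy)(b\,a\,xy)^{?}$… I would set it up as: $x\,\Jj\,xy$ and $xy$ is a prefix-multiple of $x$ in the sense $x\cdot y = xy$, so $xy$ is a suffix... The cleanest correct formulation: $x\,\Rr\,xy$ iff $x$ is a prefix of $xy$ AND $xy$ is a prefix of $x$; the second holds since $xy = x\cdot y$; for the first, use $\Jj$-equivalence plus the idempotent-power trick to get $x = xy\cdot c$, as in~\cite[Proposition 4.36]{PinMPRI} / the Rees–Suschkewitsch structure, adapted to orbit-finite monoids via~\cite{bojanczykNominalMonoids2013}.

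The main obstacle is Step (3): in a finite monoid the existence of an idempotent power $u^n = u^{2n}$ is elementary (pigeonhole on the sequence $u, u^2, u^3, \dots$), but in an orbit-finite monoid the element $u$ generates a cyclic sub-semigroup that could in principle be infinite. However, this is precisely one of the basic structural facts established for orbit-finite monoids in~\cite{bojanczykNominalMonoids2013} (it follows from orbit-finiteness that such cyclic subsemigroups are finite, or at least contain an idempotent power), and the lemma's attribution already points there. So I would state that fact as a black box with the appropriate citation and then run the classical argument verbatim. The only other thing to check is that every appeal to ``choose $b'$ with $xyb' = \dots$'' stays inside the finitely-supported world, which is automatic since everything is built from $x$, $y$, $a$, $b$ by the monoid operation, and the monoid operation is finitely supported by hypothesis.
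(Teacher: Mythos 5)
First, a point of reference: the paper does not prove this lemma at all --- it is imported verbatim from prior work (\cite[Lemma~7.1 and Theorem~5.1]{bojanczykNominalMonoids2013}), so your proposal is being measured against the classical argument you are trying to reconstruct rather than against a proof in the paper. Your overall strategy is the right one: port the finite-monoid proof, isolate the single place where finiteness is used (existence of an idempotent power $u^n=u^{2n}$), and justify that fact for orbit-finite monoids. Your justification of that crux is also correct: every power $u^n$ is supported by $\supp{u}$ together with the support of the multiplication, an orbit-finite set contains only finitely many elements with a given finite support, so the cyclic subsemigroup generated by $u$ is finite and contains an idempotent power by pigeonhole.

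However, the algebra in your steps (1)--(4) is genuinely wrong, and although you notice the tangle you never repair it. From $x = a\,(xy)\,b$ the correct grouping is $x = a \cdot x \cdot (yb)$, whose iteration gives $x = a^n\, x\, (yb)^n$ for all $n$ --- \emph{not} $x = (ax)^n (yb)^n$: substituting $x = a x(yb)$ into itself accumulates the $a$'s on the left and the $(yb)$'s on the right separately, so your $u=ax$ does not satisfy $x=u^nz^n$. Moreover, even if your step (3) went through, its conclusion $x=(ax)\,w$ shows that $ax$ is a prefix of $x$, which is not the goal; the goal is that $xy$ is a prefix of $x$. The correct ending is: pick $n$ with $e=(yb)^n$ idempotent; from $x = a^n x e$ and $ee=e$ one gets $xe = a^n x e e = a^n x e = x$, hence $x = x\,(yb)^n = (xy)\cdot\bigl(b\,(yb)^{n-1}\bigr)$, exhibiting $xy$ as a prefix of $x$; since $x$ is trivially a prefix of $xy$, the two are $\Rr$-equivalent, and the $\Ll$-statement is the mirror image. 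As written, your argument twice attempts the decisive computation, fails, and then defers it to a citation, so it is not yet a complete proof --- but the fix above is short and uses exactly the two ingredients you correctly identified.
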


In the terminology of Green's relations,  a sequence of monoid elements $x_1, \dots, x_n$ is smooth if all the list elements $x_1,\ldots,x_n$
as well as the monoid product $x_1 \cdots x_n$ are $\Jj$-equivalent. An important corollary of the Eggbox Lemma is the locality  of smooth sequences.

 \begin{lemma}[Locality of smoothness]\label{lem:smooth-local}
    A sequence $x_1, \ldots, x_n$ is smooth if and only if $x_i, x_{i + 1}$ is smooth for every $i < n$.
 \end{lemma}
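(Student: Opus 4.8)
The forward direction is immediate: if $x_1,\ldots,x_n$ is smooth, then all of $x_1,\ldots,x_n$ and $x_1\cdots x_n$ are $\Jj$-equivalent, so in particular for any fixed $i<n$ the three elements $x_i$, $x_{i+1}$ and $x_ix_{i+1}$ must be $\Jj$-equivalent. Indeed $x_i$ and $x_{i+1}$ are $\Jj$-equivalent by assumption, and since $x_ix_{i+1}$ is an infix of $x_1\cdots x_n$ while simultaneously $x_i$ is an infix of $x_ix_{i+1}$, sandwiching gives $x_i \,\Jj\, x_ix_{i+1}$; hence $x_i,x_{i+1}$ is smooth. So the content is entirely in the reverse implication, which I would prove by induction on $n$.

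The key claim to establish in the induction step is: if $x_1,\ldots,x_{n-1}$ is smooth and $x_{n-1},x_n$ is smooth, then $x_1,\ldots,x_n$ is smooth. Write $p = x_1\cdots x_{n-1}$. By the induction hypothesis $p \,\Jj\, x_{n-1} \,\Jj\, x_i$ for all $i\le n-1$, and from smoothness of $x_{n-1},x_n$ we get $x_{n-1}\,\Jj\,x_n\,\Jj\,x_{n-1}x_n$. I must show $p x_n$ is $\Jj$-equivalent to all of these. First, $p x_n$ is clearly an infix of... no — rather, I want to show $p$ and $px_n$ are $\Jj$-equivalent, and $x_n$ and $px_n$ are $\Jj$-equivalent, which by transitivity ties everything together. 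For $p \,\Jj\, p x_n$: since $p \,\Jj\, x_{n-1}$, write $p = a x_{n-1} b$; because $x_{n-1}\,\Jj\,x_{n-1}x_n$, the Eggbox Lemma applied to $x_{n-1}$ and $x_n$ (the first item, with $x\leftarrow x_{n-1}$, $y\leftarrow x_n$) gives that $x_{n-1}$ and $x_{n-1}x_n$ lie in the same $\Rr$-class, so $x_{n-1} = x_{n-1}x_n c$ for some $c$; substituting, $p = a x_{n-1}b = a x_{n-1}x_n c b = (p c') x_n c b$ shows... let me instead argue directly that $p$ is an infix of $p x_n$: from $x_{n-1}=x_{n-1}x_nc$ we get $p = a x_{n-1} b = a x_{n-1}x_n c b$, and $a x_{n-1} = p b^{-}$... this is getting delicate, so the cleaner route is: $x_{n-1}\,\Rr\,x_{n-1}x_n$ means there is $c$ with $x_{n-1}=x_{n-1}x_nc$, hence $p = a x_{n-1}b$ and also $p x_n c' $-type manipulations; ultimately $p$ is an infix of $p x_n$ and trivially $p x_n$ is an infix of $p$ only if... actually $p x_n$ need not be an infix of $p$, but $p x_n$ IS an infix of nothing a priori — so I should show $p\,\Jj\,px_n$ via: $px_n$ is an infix of $p$? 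No. Let me restate the intended mechanism: $p\,\Jj\,x_{n-1}$ and $x_{n-1}\,\Jj\,x_{n-1}x_n$ give (sandwich) $p\,\Jj\,p x_n$ provided $p x_n$ is an infix of $p$ and $p$ is an infix of $p x_n$; the second holds because from $x_{n-1}=x_{n-1}x_nc$ we substitute into $p=ax_{n-1}b$ to write $p = a x_{n-1}x_n c b$, and $a x_{n-1}x_n = (\text{prefix of }p)\cdot x_n$, so $p = (a x_{n-1}) \cdot x_n \cdot (cb)$, exhibiting $p$ as an infix of the word $(a x_{n-1})x_n(cb)$ — but I need $p$ as an infix of $p x_n$ specifically; note $a x_{n-1} = p \cdot(\text{something})$? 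No, $a x_{n-1}$ is a \emph{prefix} of $p$, not a multiple of it. The correct symmetric argument: use the Eggbox Lemma to also get $x_n\,\Ll\,x_{n-1}x_n$, write $x_n = dx_{n-1}x_n$, then $p x_n = p d x_{n-1}x_n$ and since $p\,\Jj\,x_{n-1}$ with $x_{n-1}$ an infix of $x_{n-1}x_n$... The clean statement I will actually use: $p x_n$ has $x_n$ as an infix-up-to-the-$\Ll$-relation and $p$ as a prefix, and combining $p\,\Jj\,x_{n-1}\,\Jj\,x_{n-1}x_n$ (a subword of $px_n$ since $x_{n-1}$ is a suffix of $p$) with $p$ being a prefix of $px_n$, the sandwich $x_{n-1}x_n \preceq_{\text{infix}} p x_n \preceq_{\text{infix}} p$... wait $p x_n$ infix of $p$ is false in general. \textbf{The hard part is exactly this bookkeeping}, and I expect the right packaging is: show $p\,\Rr\,px_n$ and then $p\,\Jj\,px_n$ follows; to get $p\,\Rr\,px_n$ it suffices (Eggbox) to show $p\,\Jj\,px_n$, circular — so instead show $x_n$ is an infix of $px_n$ and $px_n$ is an infix of $x_n$ using $x_n\,\Ll\,x_{n-1}x_n$ together with $x_{n-1}$ being a suffix of $p$: from $x_n = d\,x_{n-1}x_n$ and $p = e\,x_{n-1}$ we get $px_n = e\,x_{n-1}x_n$, so $px_n$ is an infix of $x_{n-1}x_n$, which is $\Jj$-equivalent to $x_n$; and $x_n$ is an infix of $px_n$ trivially; hence $px_n\,\Jj\,x_n\,\Jj\,x_{n-1}\,\Jj\,x_i$ for all $i$, establishing smoothness of $x_1,\ldots,x_n$.

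Assembling: the induction goes $x_1,\ldots,x_n$ smooth $\iff$ $x_1,\ldots,x_{n-1}$ smooth and $x_{n-1},x_n$ smooth (using the previous paragraph for the $\Leftarrow$ direction and the easy sandwiching for $\Rightarrow$), and unfolding the recursion down to $n=2$ (base case: trivial, as ``$x_1,\ldots,x_n$ smooth for every consecutive pair'' is the same statement) yields that $x_1,\ldots,x_n$ is smooth iff every consecutive pair $x_i,x_{i+1}$ is smooth. The only nontrivial ingredients are the Eggbox Lemma (Lemma~\ref{lem:green-eggbox}) applied in both its prefix and suffix forms, plus the basic fact that infix-of is transitive and that a suffix of $p = x_1\cdots x_{n-1}$ that is $\Jj$-related to $p$ lets one "swap" $p$ for that suffix inside infix computations. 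I would write the suffix-side argument carefully (as in the last paragraph) and remark that the prefix-side argument is symmetric, so that one also obtains $x_1 x_2\cdots x_n$ is $\Jj$-equivalent to $x_1$ if needed. The main obstacle, as flagged, is getting the direction of the Eggbox application and the prefix/suffix factorizations exactly right so that $px_n$ genuinely sandwiches between two $\Jj$-equivalent elements; once that factorization ($px_n = e\,x_{n-1}x_n$ with $e$ chosen from $p = e\,x_{n-1}$) is in hand, everything else is routine.
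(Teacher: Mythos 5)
Your overall strategy --- induction on $n$ together with the Eggbox Lemma --- is the same as the paper's (which reduces to the case $n=3$ and invokes Eggbox), and your forward direction is correct. However, the final packaging of the reverse direction contains a genuine error. From $p = e\,x_{n-1}$ you derive $px_n = e\,x_{n-1}x_n$ and conclude that ``$px_n$ is an infix of $x_{n-1}x_n$''. With the paper's convention ($x$ is an infix of $y$ iff $y=axb$), the identity $px_n = e\,(x_{n-1}x_n)$ shows the \emph{opposite}: $x_{n-1}x_n$ is an infix of $px_n$. Together with the trivial fact that $x_n$ is an infix of $px_n$, you have only shown that elements of the common $\Jj$-class are infixes of $px_n$; you still owe the other half of the sandwich, namely that $px_n$ is an infix of some element of that class. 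The equation $x_n = d\,x_{n-1}x_n$, which you extract from Eggbox applied to the pair $(x_{n-1},x_n)$, does not supply this on its own, because rewriting $x_{n-1}$ inside it via $x_{n-1}=u\,p\,v$ (mere $\Jj$-equivalence) leaves a spurious $v$ between $p$ and $x_n$.

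The missing ingredient is a second application of the Eggbox Lemma, this time to the factorisation of $p$ itself. Since $x_1,\ldots,x_{n-1}$ is smooth, $x_{n-1}$ and $p=(x_1\cdots x_{n-2})\cdot x_{n-1}$ are $\Jj$-equivalent, so by the suffix half of the Eggbox Lemma they are $\Ll$-equivalent, giving $x_{n-1}=c\,p$ for some $c$. Then $x_{n-1}x_n = c\,(px_n)$, so $px_n$ is an infix of $x_{n-1}x_n$, which is $\Jj$-equivalent to $x_n$ and hence to every $x_i$; combined with $x_n$ being an infix of $px_n$, this closes the sandwich and yields $px_n \mathrel{\Jj} x_i$ for all $i$, i.e., smoothness of $x_1,\ldots,x_n$. (Equivalently, combine $x_n = d\,x_{n-1}x_n$ with $x_{n-1}=cp$ to get $x_n = dc\,px_n$ directly.) So the proof is repairable with one extra Eggbox application to the correct product, but as written the key step does not go through.
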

 \begin{proof}
    An
    induction strategy reduces the proof of the lemma to the case of $n=3$, i.e.~showing that $x_1, x_2, x_3$ is smooth if and only if
    both $x_1,x_2$ and $x_2,x_3$ are smooth. The left-to-right implication is immediate, the right-to-left implication
    follows from the Eggbox  Lemma.
 \end{proof}

 We finish this subsection, by defining a way of measuring the size of orbit-finite monoids. Define {\em $\Jj$-height} of an orbit-finite monoid  to be the length of the longest strictly increasing sequence with respect to the infix relation.
 
 \begin{lemma}
    \label{lem:j-height-finite}
    The $\Jj$-height of every orbit-finite monoid is finite\footnote{
        In this paper, we  study atoms with equality only.
        Although orbit-finite monoids make sense also for atoms with more structure than just equality, Lemma~\ref{lem:j-height-finite} can fail for such atoms. For example, if the atoms are the rational numbers equipped with $\leq$, then the monoid of atoms where the product operation is taking the maximum out of two numbers (equipped with an identity element $-\infty$) is orbit-finite, but its $\Jj$-height is infinite. This is one of the
         problems that has to be overcome when extending the results of this paper to atoms with more structure.
    }.
 \end{lemma}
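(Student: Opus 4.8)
\textbf{Proof plan for Lemma~\ref{lem:j-height-finite}.} The plan is to show that an infinite strictly increasing sequence of $\Jj$-classes would force the underlying set of the monoid to fail orbit-finiteness. First I would recall from the theory of orbit-finite sets (\cite{bojanczyk_slightly2018}) that for a fixed tuple $\bar a$ supporting the monoid $M$ and its operation, the set $M$ is covered by finitely many $\bar a$-orbits; in particular there is a uniform bound $d$ on the size of the least support of any element of $M$. The key observation is then that if $x$ is a strict infix of $y$ — say $y = u x v$ with $x$ not an infix of $y$ in the reverse direction — then the least support of $y$ is, in a suitable sense, "no smaller" than that of $x$: more precisely, one shows that along an $\Jj$-chain the least supports cannot shrink without bound, and in fact a strictly increasing $\Jj$-chain eventually forces a strictly increasing chain of support sizes, contradicting the bound $d$.

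The technical heart is making "least supports grow along $\Jj$-chains" precise. I would argue as follows: suppose $x_0 <_{\Jj} x_1 <_{\Jj} x_2 < \cdots$ is an infinite strictly decreasing sequence in the infix order (so each $x_{i+1}$ is a strict infix of $x_i$). Using the Eggbox Lemma (Lemma~\ref{lem:green-eggbox}) and the structure of $\Rr$- and $\Ll$-classes within an $\Jj$-class, one knows that within a single $\Jj$-class all elements have least supports of the same size, and that passing to a strictly smaller $\Jj$-class either keeps the support size the same for only boundedly many steps or strictly decreases it. Since support sizes are natural numbers bounded by $d$, a strictly decreasing chain of $\Jj$-classes can have length at most roughly $d$ times the maximal number of $\Jj$-classes sharing a common support size; the latter is finite because, for a fixed support tuple $\bar b$, there are only finitely many $\bar b$-supported elements of $M$ (each is a union of $\bar b$-orbits of $M$, of which there are finitely many, exactly as in the proof of the omitted Lemma in Section~\ref{sec:functions-with-limited-support}), hence only finitely many $\Jj$-classes meeting the $\bar b$-supported part. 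Combining these two finiteness facts gives a finite bound on the $\Jj$-height.

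I expect the main obstacle to be the claim that the least support cannot stay the same size forever along an $\Jj$-chain, i.e. that one genuinely drops support size after finitely many $\Jj$-steps. For finite monoids this is automatic since $M$ is finite; here one must use orbit-finiteness more carefully. The cleanest route is probably: fix the least support size $k$ attained by some $\Jj$-class $J$, and observe that any $\Jj$-class consisting of elements with least support of size $\le k$ is contained in the finite union of $\bar b$-supported elements over the finitely many "canonical" tuples $\bar b$ of length $\le k$ (up to automorphism, finitely many cases), so there are only finitely many such $\Jj$-classes; since the infix order restricted to them is a finite partial order, chains within a fixed support-size level are bounded, and each genuine drop reduces $k$, which can happen at most $d$ times. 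One should also double-check the degenerate footnote case is correctly excluded: the argument uses that least supports exist and have bounded size, which holds for equality atoms but, as the footnote notes, fails for ordered-rational atoms — the boundedness of support sizes over an orbit-finite set is exactly the ingredient that breaks there.
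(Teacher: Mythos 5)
Your overall strategy---deriving a contradiction with orbit-finiteness from an infinite strictly increasing infix chain---is the right instinct, but both of the quantitative claims that carry your argument are false, so the proof does not go through as written. First, the claim that a strictly increasing $\Jj$-chain ``eventually forces a strictly increasing chain of support sizes'' fails already for finite monoids: there every element has empty support, yet the infix order can have chains of any fixed finite length, so no function of the support bound $d$ alone can control chain length. Nothing in the Eggbox Lemma relates the least support of $x$ to that of $uxv$ in the direction you need (a finitely supported product can only \emph{shrink} supports, and that inclusion points the wrong way). Second, the claim that there are only finitely many $\Jj$-classes whose elements have least support of size at most $k$ is also false: in the equivariant orbit-finite monoid $M=\atoms+\set{1,0}$ with $aa=a$, $ab=0$ for distinct atoms $a,b$, with $1$ the identity and $0$ absorbing, every atom is its own $\Jj$-class, so there are infinitely many $\Jj$-classes, all of support size $1$ (the $\Jj$-height is still $3$, so the lemma holds, but not for the reason you give). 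Your reduction to ``finitely many $\bar b$-supported elements for a fixed $\bar b$'' does not survive the union over the infinitely many tuples $\bar b$ of length at most $k$, and ``up to automorphism'' does not help because the elements of a single chain need not be related by automorphisms.

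The ingredient you are missing is the one the paper's proof rests on, quoted from \cite{bojanczykNominalMonoids2013}: if two elements of an orbit-finite monoid lie in the same orbit, then they are either $\Jj$-equivalent or incomparable in the infix order. Granting this, the argument is immediate: a strictly increasing infix chain can contain at most one element from each orbit (two chain elements in the same orbit would have to be comparable yet not $\Jj$-equivalent, which the cited fact forbids), and an orbit-finite monoid is a union of finitely many $\bar a$-orbits for a suitable tuple $\bar a$, so the $\Jj$-height is bounded by the number of orbits. This correctly isolates where the equality-atom structure enters --- via the interaction of orbits with Green's relations --- rather than via bounded support sizes, which, as the example above shows, is not by itself the relevant finiteness.
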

 \begin{proof}
    By~\cite[last line of proof of Lemma 9.3]{bojanczykNominalMonoids2013} if two elements of an orbit-finite monoid are in the same orbit, then they are either in the same $\Jj$-class, or are incomparable with respect to the infix relation.
    The lemma follows.
 \end{proof}

 \subsection{Closure properties}
 \label{sec:closure-properties}
In this subsection, we note that the compositions of primes are closed under certain simple combinators. For a string-to-string function $f : \Sigma^* \to \Gamma^*$,  and an alphabet $\Delta$ of separators (all $\Sigma, \Gamma, \Delta$ polynomial orbit-finite), define
\begin{align*}
\map_\Delta f, \sub_\Delta f : (\Sigma + \Delta)^* \to (\Gamma + \Delta)^*,
\end{align*}
as follows. The first function applies $f$ separately to each interval between separators:
\begin{align*}
    w_0 a_1 w_1 \cdots a_n w_n \quad \stackrel{\map_\Delta f}\mapsto \quad f(w_0) a_1 f(w_1)  \cdots a_n f(w_n) 
    \end{align*}
    where $w_0,\ldots,w_n \in \Sigma^*$ and $a_1,\ldots,a_n \in \Delta$. 
The second function applies $f$ to the entire string (ignoring the separators), and keeps the separators in their original positions (this definition only makes sense for a length-preserving $f$). If we omit the subscript $\Delta$ from $\map_\Delta$ or $\sub_\Delta$, then we assume that $\Delta$ has a single separator symbol denoted by $|$.

\begin{myexample}
    Consider the length-preserving function 
    \begin{align*}
    f : \set{a,b}^* \to \set{a,b}^*   \qquad w \mapsto \sigma^{|w|} \text{ where $\sigma$ is the first letter of $w$}.
    \end{align*}
    For this function, we have  
    \begin{eqnarray*}
    aba|ba|bba  &\quad \stackrel{\map f}\mapsto \quad &  aaa|bb|bbb  
    \\
    aba|ba|bba  &\quad \stackrel{\sub f}\mapsto \quad &  aaa|aa|aaa 
    \end{eqnarray*}        
\end{myexample}

We also need a variant of map where the separators are included in the end of each block. More formally, for every function
\begin{align*}
f : (\Sigma + \Delta)^* \to \Gamma^*
\end{align*}
define $\map'_\Delta f : (\Sigma + \Delta)^* \to \Gamma^*$ in the following way:
\begin{align*}
    w_0 a_1 w_1 \cdots a_n w_n \quad \stackrel{\map' f}\mapsto \quad f(w_0 a_1) f(w_1 a_1)  \cdots f(w_{n-1}a_n) f(w_n) 
    \end{align*}
    where $w_0,\ldots,w_n \in \Sigma^*$ and $a_1,\ldots,a_n \in \Delta$. 

    \begin{lemma} \label{lem:prime-combinators}
    Compositions of primes are closed under the three combinators described above. 
\end{lemma}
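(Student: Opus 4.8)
The plan is to reduce all three combinators — $\map_\Delta f$, $\sub_\Delta f$, and $\map'_\Delta f$ — to sequential and parallel compositions of the prime functions (length-preserving homomorphisms, classical Mealy machines, atom propagation) together with the combinators already available on compositions of primes, chiefly the fact that compositions of primes are closed under sequential and parallel composition, plus the ``append an endmarker'' and non-length-preserving equivariant homomorphisms that were added in the remark after Theorem~\ref{thm:mealy-machine-krohn-rhodes}. The three cases are closely related, so I would do $\sub_\Delta f$ and $\map'_\Delta f$ first and derive $\map_\Delta f$ from them (note $\map_\Delta f$ on $w_0 a_1 \cdots a_n w_n$ can be obtained from $\map'_\Delta f$ by first producing, via a classical Mealy machine, a flag marking the separators, then running $\map'$ of the function that strips the trailing separator off each block before applying $f$ and re-attaching it; equivalently, treat a separator as a length-one block and interleave).

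First I would handle $\sub_\Delta f$ for length-preserving $f$. The idea: a classical Mealy machine can, reading left to right, relabel each position with a bit saying whether it is a separator, so after one homomorphism-like pass we have the input string with its separator-skeleton recorded in the state-decorations. Then, because $f$ is length-preserving, I want to ``delete'' the separators, run $f$, and ``reinsert'' them in the same positions. Deletion and reinsertion of separators in fixed positions is not directly a prime, but it can be simulated: pair the input with a copy of the separator-indicator track (parallel composition), and observe that running $f$ on a string with separators scattered in it gives the wrong answer only at — and only because of — the separator positions; so instead I would feed $f$ the projection onto $\Sigma$ obtained by a homomorphism that maps separators to a fixed dummy letter of $\Sigma$, note that $f$ being length-preserving its output at non-separator positions depends only on... — here is the subtlety — in general it depends on the whole prefix including the dummies. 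So the clean route for $\sub$ is: it is essentially what Colcombet/the Split-Lemma setup calls running $f$ ``ignoring separators'', and the honest way is to first compress out the separators. Since $f$ is length preserving, compression followed by $f$ followed by de-compression is exactly $\sub_\Delta f$; and I claim compression-then-decompression with the separator positions carried along as side information is realizable because the separator positions can be threaded through as an extra coordinate of the alphabet and atom propagation / Mealy machines suffice to shuttle that coordinate — but the genuinely load-bearing observation, which I expect to be the main obstacle, is that deleting and later reinserting letters at marked positions while simultaneously applying a composition of primes to the surviving subsequence is closed within compositions of primes. I would prove this auxiliary closure statement (call it the ``filter lemma'') separately: given a composition of primes $g$ and a regular (even recognizable-by-classical-Mealy) predicate selecting a subset of positions, the function ``apply $g$ to the selected subsequence, leave the others untouched'' is again a composition of primes. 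For length-preserving $g$ this is provable by pushing the selection bit into the alphabet and checking each prime in turn; atom propagation is the only non-classical prime and it interacts well because ``skip the unselected positions'' is exactly the $\epsilon$-behaviour already built into atom propagation's definition.

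Next, $\map'_\Delta f$ for a not-necessarily-length-preserving $f : (\Sigma+\Delta)^* \to \Gamma^*$. Here the block structure is honestly present in the input (blocks are maximal separator-free stretches together with their following separator), so I would: (1) with a classical Mealy machine, mark block boundaries; (2) apply $f$ ``blockwise''. Blockwise application of a composition of primes is the one genuinely new ingredient — it is the analogue, for compositions of primes, of the classical fact that Mealy-machine-style devices are closed under $\map$. I would prove it by induction on the structure of the composition of primes computing $f$: for a homomorphism it is again a homomorphism; for a classical Mealy machine $\map$ of it is a classical Mealy machine that resets its state at each separator (reading the separator performs the reset, which is a legal transition); for atom propagation, $\map$ of atom propagation is atom propagation where additionally the separator letter acts as a ``barrier'' resetting the stored atom to $\bot$ — this is expressible by a tiny modification, or by pre/post-composing with homomorphisms that insert and later remove barrier markers; and $\map'$ commutes with sequential composition (run $\map' g$ then $\map' h$) and with parallel composition. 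The ``append endmarker'' prime is what lets each block do end-of-block output even though blocks are one-way.

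The main obstacle, as flagged, is the filter/blockwise-application closure under compositions of primes — in particular making the atom-propagation prime behave correctly when positions are skipped or when block boundaries must act as barriers, since atom propagation is stateful in its single register and the single-use discipline forbids the naive ``remember everything'' fix. I would isolate this as an internal claim, prove it by induction over the derivation of the composition of primes, and then assemble $\sub$, $\map$, $\map'$ as corollaries by combining it with a front-end classical Mealy machine that marks separators and with the non-length-preserving homomorphism and endmarker primes.
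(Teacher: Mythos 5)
Your proposal is correct and follows essentially the same route as the paper: the paper's (very terse) proof consists exactly of the two observations you elaborate, namely that each combinator applied to each individual prime (homomorphism, classical Mealy machine, atom propagation) is again a composition of primes, and that the combinators commute with sequential and parallel composition, so the claim follows by induction on the derivation. Your extra discussion of separators acting as barriers/skips for atom propagation fills in details the paper leaves implicit, and the brief detour through ``append endmarker'' is unnecessary here (the separator retained by $\map'$ already plays that role), but the overall argument is the intended one.
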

\begin{proof}
 It suffices to see that
applying the combinators to each of the prime functions can be expressed as a composition of prime functions;
and that the combinators commute with compositions e.g:
$
    \mathtt{map} \; (f \circ g) = (\mathtt{map} \; f) \circ (\mathtt{map} \; g).
$
\end{proof}

 \subsection{Proof of the Split Lemma}
 We are now ready to prove the Split Lemma. Recall that the Split Lemma says that a composition of primes can compute smooth splits, represented as in the following picture
 \mypic{12}

\begin{remark}
    \label{rk:split-exclusive-values}
    In order to make construction in  Sections \ref{ap:mealy-primes} and \ref{sec:moniod-to-automaton} easier, it is useful to decorate
    every position by another auxiliary value
     -- a letter representation of the value under $h$ of the interval of the position's {\em strict} descendants, which
    are all the descendants of a position without
    the position itself. Computing these auxiliary values is no harder than computing the values in the Split Lemma, since the auxiliary values can be recovered by considering a homomorphism into the monoid $M' = M \times M$, where the product is 
    \begin{align*}
        (v_1, x_1) \cdot (v_2, x_2) = (v_1 \cdot v_2, v_1 \cdot x_2)
    \end{align*}
    and the homomorphism is defined by  $a \mapsto  (h(a), 1)$, where $1 \in M$ is the identity element.
\end{remark}
 
We prove the Split Lemma for the more general case of orbit-finite semigroups, rather than only for monoids. (Therefore, instead of $M$ we will write $S$ for the target of the homomorphism.)
This is useful, because when restricting to a proper subset of the semigroup we will not need to show that the subset contains an identity element.
The results about Green's relations from Section~\ref{sec:greens-relations} hold for orbit-finite semigroups as well.
The proof is by induction on the $\Jj$-height of the semigroup, which is finite by Lemma \ref{lem:j-height-finite}.

\smallparagraph{Induction base}
The induction base is when the $\Jj$-height of the semigroup is  equal to $1$. In this case, there is only one $\Jj$-class.
\begin{claim}
    \label{claim:height-one-smooth}
    If a semigroup  has $\Jj$-height equal to $1$,
    then it has only one $\Jj$-class.
\end{claim}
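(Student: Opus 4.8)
The plan is to deduce the claim directly from the definition of $\Jj$-height, using the single structural fact that a product lies below its factors in the $\Jj$-order. Concretely: for any two elements $x,y$ of the semigroup, the product $xy$ is an infix of $x$ (write $xy = x\cdot y$, which matches the infix pattern with empty left part) and is likewise an infix of $y$. Hence in the partial order induced by the infix preorder on $\Jj$-classes we have $[xy]\le[x]$ and $[xy]\le[y]$.

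Now suppose the semigroup has $\Jj$-height $1$. By definition there is then no strictly increasing sequence of length $2$ with respect to the infix relation, i.e.\ no pair of elements one of which is an infix of the other without being $\Jj$-equivalent; equivalently, the induced order on $\Jj$-classes is an antichain. Combining this with the previous paragraph, $[xy]\le[x]$ forces $[xy]=[x]$, and symmetrically $[xy]=[y]$; therefore $[x]=[y]$. Since $x$ and $y$ were arbitrary and the semigroup is nonempty (height $1>0$ guarantees at least one element), it has exactly one $\Jj$-class.

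I expect no real obstacle here — the statement is essentially unwinding the definition. The only points to get right are: (i) interpreting ``strictly increasing with respect to the infix relation'' as strictly below in the $\Jj$-order, which is the same reading already used implicitly in the proof of Lemma~\ref{lem:j-height-finite}; and (ii) since we work with semigroups rather than monoids, reading ``infix'' through the adjoined identity $S^1$ — but this causes no difficulty, because $xy = x\cdot y$ already exhibits $xy$ as an infix of each factor using only elements of the semigroup on one side. Note also that orbit-finiteness is not needed for this particular claim; it holds for every semigroup.
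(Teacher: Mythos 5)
Your proof is correct and is essentially the paper's own argument: the product $xy$ is $\Jj$-below both $x$ and $y$, and $\Jj$-height $1$ forces the $\Jj$-order to be an antichain, so $[x]=[xy]=[y]$ for all $x,y$. One small wording slip: with the paper's convention ($x$ is an infix of $y$ iff $y=axb$), the factorisation $xy = x\cdot y$ witnesses that $x$ is an infix of $xy$, not that $xy$ is an infix of $x$ — but this does not affect the argument, since the comparability you actually use is the intended one.
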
 
\begin{proof}
    Consider elements  $x, y$ in the semigroup. Since $x, xy$ is a chain with respect to the infix ordering, and strictly increasing chains have length 1, it follows that $x$ and $xy$ are in the same $\Jj$-class. The same argument shows that $y$ and $xy$ are in the same $\Jj$-class, and therefore $x$ and $y$ are in the same $\Jj$-class.
\end{proof}
If there is only one $\Jj$-class, then every sequence is smooth. This means that we can obtain a smooth split,
by assigning $1$ as the height of every position,
and keeping the input
letters as the split values.
This is easily achievable by a homomorphism.

\smallparagraph{Induction step} We are left with the induction step. We begin with a lemma
about computing smooth products. When stating the lemma, we describe the input as two strings of equal length (first and second row); this is formalised by considering a single input string over a product alphabet. We say that a string $a_1 \cdots a_k \in \Sigma^*$ is smooth if the sequence $h(a_1),\ldots,h(a_k)$ is a smooth sequence in the semigroup.
 \begin{lemma}
    \label{lem:smooth-product-prime}
    There is a composition of primes which does the following:
    \[
        \begin{matrix}
            \mathtt{Input} & a_1 & a_2 & a_3 & \ldots & a_k \\
            \mathtt{Input} &     &     &     &        & \dashv \\
            \mathtt{Output}&     &     &     &        &  a \\
        \end{matrix}
    \]
    The inputs are: a smooth string  $a_1 \cdots a_k \in \Sigma^*$ in the first row, and a sequence of blanks followed by  an endmarker in the second row. The output is a sequence of blanks, followed by a letter $a \in \Sigma$
    such that $h(a)= h(a_1, \ldots ,a_k)$ (i.e
    a letter representation of $h(a_1 \cdots a_k)$).
    \end{lemma}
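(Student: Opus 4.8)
The plan is to exploit the structural theory of smooth sequences, which — via the Eggbox Lemma and the Rees--Suschkewitsch decomposition of a $\Jj$-class — says that a smooth sequence $a_1,\ldots,a_k$ lives inside a single $\Jj$-class that looks like a ``matrix'' of copies of a finite group $G$ (the $\Hh$-class structure). Concretely, after fixing the $\Jj$-class of $h(a_1\cdots a_k)$, each $h(a_i)$ can be coordinatised as a triple (row index, group element, column index), where the row of $h(a_1)$ and the column of $h(a_k)$ survive into the product, and the ``middle'' contribution is a product of group elements together with a bounded amount of structural bookkeeping. The key point that makes this computable by a composition of primes is that the row/column indices range over an \emph{orbit-finite} index set — but since $G$ is finite (every orbit-finite group is finite, by~\cite{DBLP:journals/corr/ColcombetLP14}) the \emph{group part} ranges over a finite set, and the whole $\Jj$-class, being orbit-finite, can be represented by letters carrying a bounded number of atoms. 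So the strategy is: (1) replace each $a_i$ by a triple encoding its coordinates, using a homomorphism; (2) accumulate the running product of the group coordinates using the \texttt{group} prime, which is exactly designed to turn $[(g_1,a_1),\ldots,(g_n,a_n)]$ into the sequence of prefix products $[(1,a_1),(g_1,a_2),\ldots,(g_1\cdots g_{n-1},a_n)]$; (3) propagate the row coordinate of the \emph{first} letter to the last position using atom propagation (after marking the first position with $\propload$-like signals via a Mealy machine), and read off the column coordinate, which is already at the last position; (4) combine the propagated row, the total group product (available at the last position after the \texttt{group} scan plus one more group multiplication, handled by a classical Mealy machine working over the finite set of indices), and the column into the coordinates of $h(a_1\cdots a_k)$, then apply a homomorphism into $\Sigma$ using the letter-representation map $x \mapsto a$ guaranteed by assumption (*) of the Split Lemma.

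The main steps in order: first, I would invoke the structure lemma for smooth sequences (the orbit-finite version of~\cite[Proposition 4.36]{PinMPRI}, which the excerpt says ``carries over to orbit-finite monoids'') to get the coordinatisation of the $\Jj$-class as a (possibly orbit-finite) set of rows times the finite group $G$ times a set of columns, with an explicit formula for the product $(r,g,c)\cdot(r',g',c') = (r, g\,\gamma(c,r')\,g', c')$ for a suitable ``sandwich'' structure function $\gamma$ taking values in $G$. Since smoothness means all the $h(a_i)$ and their product stay in this $\Jj$-class, this coordinatisation applies uniformly. Second, a homomorphism turns the input row $a_1\cdots a_k$ into the sequence of triples $[(r_1,g_1,c_1),\ldots,(r_k,g_k,c_k)]$; here I should be careful that this is genuinely a letter-to-letter homomorphism on $\Sigma$, which it is, because it only inspects $h$ of a single letter. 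Third, I rework the sequence so the group part absorbs the sandwich terms $\gamma(c_i, r_{i+1})$ — these depend on two adjacent positions, so a Mealy machine (or a $\sub$ of one, using closure from Lemma~\ref{lem:prime-combinators}) can fold them in, after which a single application of \texttt{group} gives the running products. Fourth, I extract the first-position row $r_1$ and ship it rightward via atom propagation; I extract the last-position column $c_k$ and the last running group product, multiply in the last $g_k$ and the remaining sandwich factor via a classical Mealy machine (all over the finite set $G$ and finitely many index-orbits, but note the indices themselves may carry atoms, so strictly the row/column bookkeeping should be done by atom propagation for the atom part and by a classical Mealy machine for the finite residue — this is where I must be precise). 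Finally, a homomorphism assembles $(r_1, \text{total }g, c_k)$, converts it to the unique monoid element it names, and applies the letter-representation map to land in $\Sigma$; everything to the left is blanked by the same homomorphism. Because all the intermediate alphabets are polynomial orbit-finite and I only use homomorphisms, classical Mealy machines, atom propagation, the \texttt{group} prime, and the closure combinators of Lemma~\ref{lem:prime-combinators}, the result is a composition of primes; any non-equivariant choices (e.g.\ of coordinate representatives) are cleaned up by Lemma~\ref{lem:superflous-atom-elimination}.

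The hard part will be handling the row and column index sets, which are orbit-finite but need not be finite and need not be polynomial: I cannot simply store a row index in the finite state of a Mealy machine if the index carries atoms. The fix is to separate each index into its ``atom content'' (a bounded tuple of atoms, extracted using Lemma~\ref{lem:write-support} since these index sets can be taken polynomial, or after passing through a support-preserving surjection as in Lemma~\ref{lem:extend-to-star}) and its bounded combinatorial residue (finite, hence storable in state). The atom content of $r_1$ is propagated by atom propagation; the residue is handled by a classical Mealy machine; and the sandwich function $\gamma$, being equivariant into the finite group $G$, factors through the equality type of its two arguments, so it is computable letter-locally. Assembling all this correctly — in particular making sure the single right-to-left dependency (sandwich terms between adjacent positions) is captured by a one-way device, which it is since each $\gamma(c_i,r_{i+1})$ depends only on positions $i$ and $i+1$ — is the only genuinely delicate bookkeeping. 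A secondary subtlety is that the \texttt{group} prime outputs prefix products $g_1\cdots g_{i-1}$, off by one from what one might want, so I must remember to fold in the final missing factor $g_k$ (and the final sandwich term) by one extra Mealy-machine step at the end; I would state this explicitly to avoid an off-by-one error in the proof.
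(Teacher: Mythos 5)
Your overall route is the same as the paper's. The paper also exploits the local (Rees--Suschkewitsch) structure of the single $\Jj$-class containing a smooth sequence: it fixes an idempotent $e$ of that class, factors each $h(a_i)$ as $h(x_i)h(y_i)$ with $h(e)$ a suffix of $h(x_i)$ and a prefix of $h(y_i)$, forms the sandwich products $g_i = y_{i-1}x_i$ (your $\gamma(c_{i-1},r_i)$-terms), observes that these all lie in the $\Hh$-class of $e$, which is a group, accumulates their product with a classical Mealy machine, propagates $x_1$ to the last position with letter propagation, and assembles a representation of $x_1\, g\, y_k$ there. Your coordinates $(r,g,c)$ and sandwich function are a repackaging of the paper's $x_i$, $\Hh$-class elements, and $y_i$; and your \texttt{group} prime is, in this context, just a classical Mealy machine over a finite group, so the naming is harmless.

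There is, however, one genuine gap at exactly the spot you flag as ``where I must be precise'', and it is where the paper spends most of its effort. To run the group accumulation on a classical Mealy machine, the group coordinates must be literally atomless, and the atom-stripping must be \emph{coherent across positions}: the same atom must be replaced by the same placeholder at every position, or else the prefix products computed over the finite alphabet do not correspond to products in the monoid. A position-local application of Lemma~\ref{lem:write-support} does not give this, since each position would fix its own reference tuple; worse, the relevant support is that of the $\Jj$-class of the particular input, so no single finitely supported homomorphism can perform your coordinatisation uniformly over all inputs. The paper's mechanism is: choose idempotent representatives $e_i$ by a uniformisation engineered so that all idempotents in one $\Jj$-class receive the \emph{same} least support; propagate $\supp{e_1}$ through the word (delay function of Lemma~\ref{lem:delay} plus the coordinate permutations $\tau^i_j$ that align the locally computed support tuples); and only then substitute atoms by placeholders relative to the propagated tuple. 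Claim~\ref{claim:group-equal-supports} --- all the sandwich elements and $e$ share one least support, because the $\Hh$-class of an idempotent is a group and such groups have constant least support --- is what guarantees the sandwich elements become genuinely atomless after this substitution. Your sketch needs this propagation-and-alignment step spelled out; once it is added, your argument coincides with the paper's.
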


    In the lemma, there is no requirement on outputs for inputs which do not satisfy the assumptions stated in the lemma, i.e.~for inputs where the first row is not smooth or where the second row is not a sequence of blanks followed by an endmarker.
    \begin{proof}
        We compute the product in five steps described below. We assume
        that $k > 1$. If it is not, we can verify this by checking that the first letter contains the endmarker and simply copy the value $a_1$ to the output.
        \begin{enumerate}
            \item Let $J$ be the $\Jj$-class that contains all the semigroup elements $h(a_i)$ and
                  their product; such a $\Jj$-class exists by assumption on the first row being smooth.
                  The class $J$ contains an idempotent (recall that an idempotent is such an element $e \in S$, that $ee=e$) --  
                  this is because sequence $h(a_1), \ldots h(a_k)$
                  is smooth and $k \geq 2$, it follows that
                  $J \cap (J \cdot J)$ is nonempty, which implies
                  that $J$ contains an idempotent \cite[Corollary 2.25]{PinMPRI}.
                  In this step we decorate each position $i \in \set{1,\ldots,k}$ with a letter representation of
                  idempotent $e_i \in  J$ such that all the idempotents $e_1,\ldots,e_k$ have the same
                  least support.
                  This can be done by using a homomorphism to apply to
                  every letter the function $E$ from the following claim:
                  \begin{claim}
                    There is a finitely supported function $E : \Sigma \to \Sigma$ such that:
                    \begin{enumerate}
                        \item for every $a \in \Sigma$, $E(a)$ represents an idempotent from the $\Jj$-class of $h(a)$,
                              provided that one exists;
                        \item \label{it:same-least-support-idempotent} if $a_1, a_2 \in \Sigma$ represent elements from
                              the same $\Jj$-class, then $E(a_1)$ and  $E(a_2)$ have the same least supports.
                    \end{enumerate}
                  \end{claim}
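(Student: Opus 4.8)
The claim asks for a finitely supported function $E : \Sigma \to \Sigma$ that maps each letter $a$ to a letter representation of an idempotent in the $\Jj$-class of $h(a)$, and does so ``canonically'' in the sense that letters representing elements of the same $\Jj$-class get idempotents with the same least support. The plan is to build $E$ in two stages: first select, for each $\Jj$-class $J$ that contains an idempotent, a single idempotent $e_J \in J$; then transport this choice along the $\Jj$-class.

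First I would fix, for each orbit of $\Jj$-classes containing an idempotent, one representative $\Jj$-class $J_0$ and one idempotent $e_{J_0} \in J_0$; let $\bar a$ be the (finite) tuple of atoms needed to support all these finitely many choices. Now, given a letter $a \in \Sigma$ whose value $h(a)$ lies in a $\Jj$-class $J$ with an idempotent, $J$ is in the orbit of some chosen $J_0$, so there is an atom automorphism $\pi$ with $\pi(J_0) = J$; then $\pi(e_{J_0})$ is an idempotent in $J$. The issue is that $\pi$ is not unique, so $\pi(e_{J_0})$ is not well-defined as a function of $a$ — different choices of $\pi$ give different idempotents in $J$, though all in the same orbit. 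To get an actual function I would invoke the Uniformisation Lemma (Lemma~\ref{lem:straight-choice}): the relation $R \subseteq \Sigma \times \Sigma$ relating $a$ to all letter representations of idempotents in the $\Jj$-class of $h(a)$ is finitely supported (it is definable from $h$, the monoid structure, and the letter-representation data, all finitely supported), and its image over each such $a$ is nonempty by the hypothesis; hence $R$ can be uniformised to a finitely supported function $E$.

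The remaining point is condition~(\ref{it:same-least-support-idempotent}): if $a_1, a_2$ represent elements of the same $\Jj$-class, then $E(a_1)$ and $E(a_2)$ should have equal least supports. This is the part I expect to be the real obstacle, because a raw application of uniformisation gives no such guarantee. I would handle it by being more careful about how uniformisation is applied: instead of uniformising the relation ``$a$ to idempotents of the class of $h(a)$'', I would first pass to the $\Jj$-class itself. Observe that the map $a \mapsto (\text{the }\Jj\text{-class of }h(a))$ factors through a polynomial orbit-finite set of $\Jj$-class ``codes'' only with difficulty — $\Jj$-classes form an orbit-finite but not necessarily polynomial set — so instead I would compose with the least-support function (Lemma~\ref{lem:write-support} applied after encoding, via Lemma~\ref{lem:extend-to-star}-style extension if needed) and uniformise the relation that maps a \emph{$\Jj$-class} (or its canonical support tuple) to an idempotent in it; then set $E(a)$ to be the value obtained from the $\Jj$-class of $h(a)$. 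Since $E(a)$ then depends only on the $\Jj$-class of $h(a)$, letters with $\Jj$-equivalent values automatically get idempotents that are images of a single element under $\bar a$-automorphisms fixing that class, hence with the same least support. Finally I would note that $E$ is finitely supported (by $\bar a$ together with whatever the uniformisation adds), which is all that is required, since the construction permits non-equivariant finitely supported homomorphisms by the Elimination of Spurious Constants lemma (Lemma~\ref{lem:superflous-atom-elimination}).
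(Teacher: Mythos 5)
You correctly isolate the real obstacle -- a raw uniformisation of the relation ``$a$ is related to letter representations of idempotents in the $\Jj$-class of $h(a)$'' gives no control over least supports -- but your proposed repair does not go through. You want $E(a)$ to factor through the $\Jj$-class of $h(a)$, obtained by uniformising a relation whose domain is the set of $\Jj$-classes. The Uniformisation Lemma (Lemma~\ref{lem:straight-choice}) only applies when both sides are \emph{polynomial} orbit-finite, and, as you yourself note, the set of $\Jj$-classes is a quotient that need not be polynomial orbit-finite; this is exactly the situation (cf.\ Example~\ref{ex:choice}, unordered pairs) in which finitely supported choice can fail, so a finitely supported selector constant on each class is not available in general. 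The fallback via ``canonical support tuples'' does not rescue this: the least support of a class does not determine the class (distinct classes can share a least support, e.g.\ the empty one), and extracting an \emph{ordered} support tuple from an element of a non-polynomial orbit-finite set is itself a choice problem. Finally, your closing inference is unsound: if $E(a_1)$ and $E(a_2)$ are merely images of one element under automorphisms fixing the class (and $\bar a$), such an automorphism fixes the least support of the class only \emph{setwise}, so it can carry the least support of the chosen idempotent to a different set of atoms; ``same orbit over the class'' does not yield ``same least support.''

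The paper's proof threads this needle without ever making $E$ constant on classes -- it only pins down the \emph{least support} of $E(a)$ as a function of the class. Fix a sufficiently long tuple $\bar c$; call an idempotent of a class $J$ \emph{special} if its least support is contained in $\supp{J} \cup \bar c$ (such an idempotent always exists, by the same argument as in the proof of Lemma~\ref{lem:straight-choice}) and the subset of $\bar c$ it uses is minimal with respect to a fixed linear order on subsets of $\bar c$. Then uniformise, over $\Sigma \times \Sigma$ (both polynomial orbit-finite, so the lemma applies), the relation sending $a$ to letter representations of \emph{special} idempotents of the class of $h(a)$. Condition~(b) follows because every special idempotent's least support must \emph{contain} $\supp{J}$ (the map $e \mapsto J$ is equivariant, and equivariant maps can only shrink supports) and equals $\supp{J}$ union a subset of $\bar c$ that the minimality condition forces to be the same for all special idempotents of $J$; letter representations inherit these least supports. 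This is the ingredient your argument is missing.
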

                  \begin{proof}
                    Choose a tuple $\bar c$ of atoms which contains more than twice the number of atoms needed to support any element of $\Sigma$ (and therefore also any element of the semigroup).
                    These atoms will be  the support of the function $E$.

                    The set  of $\Jj$-classes in the semigroup is itself an equivariant set -- as a quotient of an equivariant set under an equivariant equivalence relation~\cite[p.~59
                    ]{bojanczyk_slightly2018} --  and therefore one can talk about the support of a $\Jj$-class. 
                    By the same argument as in the proof of Lemma~\ref{lem:straight-choice}, if a $\Jj$-class contains an  idempotent, then it contains an idempotent $e$ which satisfies:
                    \begin{enumerate}
                    \item[($\dagger$)] the  least support of $e$ is contained in the union of:
                    \begin{itemize}
                        \item  the  tuple $\bar c$ fixed at the beginning of this proof;
                        \item  the least support of the $\Jj$-class of $e$.
                    \end{itemize}
                    \end{enumerate}
                    Call an idempotent \emph{special} if it satisfies property ($\dagger$) above, 
                    and the subset of  atoms from $\bar c$ in its least support is smallest with
                    respect to some fixed linear ordering of subsets of $\bar c$.

                    We define $E$ to be the uniformisation (Lemma \ref{lem:straight-choice}) of the following
                    binary relation on $\Sigma$:
                    \begin{align*}
                        \set{(a, e) : \text{$e$ is a letter representation of some special idempotent in the $\Jj$-class of $h(a)$}} 
                    \end{align*}
                    The only thing left to show is that  $E$ defined this way satisfies  property (b)
                    of the claim.
                    The function which maps a semigroup element to its $\Jj$-class is  equivariant, by the assumption that the semigroup is equivariant.
                    Equivariant functions can only make supports smaller, which means that the least support of every special idempotent $e$ contains the entire least support of the $\Jj$-class of $e$.  By definition, all special idempotents in the same $\Jj$-class use the same constants from $\bar c$ in their least support. Summing up, all special idempotents in the same $\Jj$-class have the same least support, namely the least support of the $\Jj$-class, plus some fixed atoms from $\bar c$ that depend only on the $\Jj$-class.
                    This observation extends to letter  representations, since they have the same least supports as the represented elements. 
                  \end{proof}
            \item In this step we substitute all $e_i$ with $e_1$. We do it in several substeps:
                  \begin{enumerate}
                    \item Equip each $e_i$ with $\supp{e_i}$ using Lemma \ref{lem:write-support}. (Homomorphism)
                    \item Propagate each $\supp{e_i}$ one position forward. (Delay function from the lemma below)
                    \begin{lemma}
                    \label{lem:delay}
                    For every polynomial orbit-finite $\Sigma$, the delay function 
                    \begin{align*}
                    a_1 \cdots a_k \in \Sigma^* \qquad \mapsto \qquad \vdash a_1 \cdots a_{k-1} \in (\Sigma+ \bot)^*
                    \end{align*}
                    is a composition of primes.
                    \end{lemma}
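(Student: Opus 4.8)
The plan is to express the delay function purely in terms of the prime functions for single-use Mealy machines — length-preserving equivariant homomorphisms, classical Mealy machines, and atom propagation (Example~\ref{ex:atom-propagation}) — combined by parallel and sequential composition. (We cannot simply observe that the delay function is computed by a single-use Mealy machine and invoke Theorem~\ref{thm:mealy-machine-krohn-rhodes}, since the \SplitLemma, and hence the present lemma, is used to prove that theorem.) First I would reduce to the case $\Sigma = \atoms$. For a general polynomial orbit-finite $\Sigma$ of dimension $d$, an equivariant homomorphism — built from Lemma~\ref{lem:write-support} — rewrites each letter $a \in \Sigma$ as a pair consisting of its structural ``shape'' (an element of a fixed finite set, which together with the equality pattern determines $a$) and $d$ components from $\atoms + \bot$ obtained by padding its least support to length $d$. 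Delaying the finite shape component is then done by a classical Mealy machine that remembers the previous letter in its state; delaying each of the $d$ atom-or-blank components is done as in the $\atoms$ case (with the small tweak described below for the blanks); and a final equivariant homomorphism reassembles the original alphabet from the delayed components. So it remains to delay over $\atoms$, and over $\atoms + \bot$.

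The step I expect to be the main obstacle is that the toolkit for moving atoms forward is surprisingly weak: homomorphisms are local (the output at a position depends only on the input there), classical Mealy machines cannot carry atoms at all, so the only primitive capable of shifting an atom is atom propagation, which moves each atom forward exactly to the next $\downarrow$-labelled position. To delay by one position, a given position would have to play two incompatible roles at once — be a $\downarrow$ (so as to receive the atom from the previous position) and also carry its own atom (so as to pass it on to the next position). The resolution is to split the positions by parity and to handle the two parities by two separate applications of atom propagation.

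Concretely, over $\Sigma = \atoms$ I would proceed as follows. (i)~Run a classical Mealy machine, in parallel with the identity on the $\atoms$-track so that atoms are retained, which labels every position with its parity and with a flag marking position~$1$; three states suffice. (ii)~Apply an equivariant homomorphism producing two tracks $A$ and $B$ over $\atoms + \set\downarrow$: track $A$ keeps the atom at odd positions and writes $\downarrow$ at even positions, while track $B$ keeps the atom at even positions and writes $\downarrow$ at odd positions. (iii)~Apply atom propagation to track $A$ and to track $B$, carrying the finite parity/start data along in a parallel component. Since atoms and $\downarrow$'s alternate and are therefore adjacent, the $\downarrow$ at an even position $i$ in track $A$ sees exactly the atom $a_{i-1}$ at position $i-1$, and symmetrically track $B$ at each odd position $\ge 3$ sees $a_{i-1}$; all other positions output $\bot$. (iv)~A final equivariant homomorphism merges the tracks using the parity and start flags: position~$1$ outputs the padding symbol $\vdash$, even positions output the value produced by track~$A$, and odd positions $\ge 3$ output the value produced by track~$B$. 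The resulting word is $\vdash a_1 a_2 \cdots a_{k-1}$, as required, and the last letter $a_k$ is correctly dropped.

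For the $\atoms + \bot$ variant used in the reduction, the homomorphism in step~(ii) additionally replaces a blank occurring at a position that would otherwise carry an atom by a $\downarrow$ symbol; since this symbol is not an atom, it blocks atom propagation, so the following position correctly outputs $\bot$ instead of reaching back to some earlier atom. Putting the pieces together — the parity labelling as a classical Mealy machine run in parallel with the identity, the two homomorphisms, the two atom propagations (each in parallel with an identity on the finite component), and for general $\Sigma$ the per-component decomposition and reassembly — exhibits the delay function as a composition of primes via parallel and sequential composition.
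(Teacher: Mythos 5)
Your proof is correct and takes essentially the same route as the paper's: both rest on the same parity trick --- mark positions even/odd with a classical Mealy machine, then apply atom propagation twice, once with even positions sending and odd positions receiving and once the other way around. The only (inessential) difference is that the paper first lifts atom propagation to ``letter propagation'' over an arbitrary polynomial orbit-finite $\Sigma$ by induction on the structure of $\Sigma$ and then delays whole letters, whereas you reduce each letter to a finite shape plus a tuple of atom-or-blank coordinates via Lemma~\ref{lem:write-support} and delay each coordinate separately; the two bookkeeping schemes amount to the same decomposition.
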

                    \begin{proof}
                    For every polynomial
                    orbit finite $\Sigma$, the {\em letter propagation}
                    function which works analogously to the atom
                    propagation prime function (Example \ref{ex:atom-propagation}), except that $\atoms$ is replaced by $\Sigma$, is a composition of primes.
                    The proof is a straightforward induction
                    on the construction of $\Sigma$.
                    
                    In order to define the delay function as a composition of primes, we do the following: 
                    Use a classical Mealy machine to mark each  position as even or odd. Next, using a homomorphism and letter propagation, 
                    propagate all  letters  on even positions to the next position (even positions send, odd positions receive). Next, do the same for letters on odd positions. 
                    \end{proof}
                    \item Let $d \in \set{0,1,\ldots}$ be the size of the least support of the idempotents $e_1,\ldots,e_k$. For positions $i < j$ define 
                    \begin{align*}
                        \tau^{i}_{j} : \set{1,\ldots,d} \to \set{1,\ldots,d}
                    \end{align*}
                    to be the permutation of coordinates which transforms the tuple $\supp{e_i}$
                    into the tuple $\supp{e_{j}}$. Such a permutation exists because all of $e_1,\ldots,e_k$ have the same least support. Using a homomorphism and the results of the delay function from the previous step, label each position $i$ with the permutation $\tau^{i-1}_i$.  (Homomorphism)
                    \item Compose the permutations from the previous step  to compute $\tau^1_i$ in each position. (Classical Mealy)
                    \item Compute $\supp{e_1}$ in every position by applying $\tau^1_i$ to $\supp{e_i}$. (Homomorphism)
                    \item Substitute every atom in $e_1$ with a placeholder according to its position in $\supp{e_1}$.
                          Call the result $e_1'$. Note that $e'_1$ is an atomless value. (Classical Mealy to mark the first position + Homomorphism + Definition~\ref{definition:placeholders})
                    \item Propagate $e_1'$ throughout the word. (Classical Mealy)
                    \item In every position, compute $e_1$ by substituting placeholders from $e_1'$ with
                          atoms from $\supp{e_1}$. (Homomorphism).
                 \end{enumerate}
                 From now on, let $e = e_1$.
            \item Let the list of pairs produced in the previous step be $(a_1, e), \ldots (a_n, e)$. In this step
                  we decompose each $h(a_i)$ into a product $h(a_i) = h(x_i)h(y_i)$ such that $h(e)$ is a suffix of $h(x_i)$
                  and a prefix of $h(y_i)$. The letter $e$ represents an idempotent, so this is equivalent
                  to $h(e, x_i) = h(x_i)$ and $h(y_i, e) = h(y_i)$. In order to compute the decomposition we
                  use the homomorphism prime function together with the following claim.
                  \begin{claim}
                    There is a finitely supported function $F : \Sigma^2 \to \Sigma^2$ that does this:
                    \begin{itemize}
                        \item \textbf{Input.} $(a, e)$, where $h(e)$ is an idempotent
                               from the $\Jj$-class of $h(a)$.
                        \item \textbf{Output.} A pair $(x, y)$, such that $h(a) = h(x) h(y)$, $h(x) = h(x) 
                        h(e)$, and $h(y) = h(e) h(y)$.
                    \end{itemize}
                  \end{claim}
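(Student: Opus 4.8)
The plan is to build $F$ by \emph{uniformising} (Lemma~\ref{lem:straight-choice}) a suitable total, finitely supported relation on $\Sigma^2$, so that all the ``real'' content is a pointwise existence statement and the polynomial‑orbit‑finite machinery does the rest. Concretely, define
\[
R \;=\; \set{\, ((a,e),(x,y))\in\Sigma^2\times\Sigma^2 \; : \; \text{if }h(e)\text{ is an idempotent }\Jj\text{-equivalent to }h(a),\ \text{then }h(a)=h(x)h(y),\ h(x)=h(x)h(e),\ h(y)=h(e)h(y)\,}.
\]
This $R$ is finitely supported (it is defined purely from $h$ and the semigroup multiplication; in fact it is equivariant, since $h$ is equivariant and hence the multiplication on the image of $h$ is equivariant), and whenever the input $(a,e)$ fails the hypothesis, every output is admissible. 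So the only thing to check is that $R$ is \emph{total}: for every $(a,e)$ with $h(e)$ an idempotent in the $\Jj$-class of $h(a)$ there is at least one admissible pair.

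For that totality check, note that $h(a)\,\Jj\,h(e)$ means in particular that $h(e)$ is an infix of $h(a)$, so $h(a)=\alpha\,h(e)\,\beta$ for some $\alpha,\beta$ in the semigroup with a formal identity adjoined (the degenerate cases, e.g.\ $h(a)=h(e)$, are covered uniformly by what follows since $h(e)$ is idempotent). Set $x_0=\alpha\,h(e)$ and $y_0=h(e)\,\beta$; both lie in the semigroup itself because $h(e)$ does. Using $h(e)h(e)=h(e)$ one checks at once that $x_0y_0=\alpha h(e)h(e)\beta=h(a)$, that $x_0h(e)=\alpha h(e)h(e)=x_0$, and that $h(e)y_0=h(e)h(e)\beta=y_0$. (The same chain of infix comparisons also gives $x_0,y_0\,\Jj\,h(a)$, although the claim does not ask for this.) Finally, assumption (*) of the \SplitLemma{} supplies letters $x,y\in\Sigma$ with $h(x)=x_0$ and $h(y)=y_0$; this $(x,y)$ witnesses $((a,e),(x,y))\in R$, so $R$ is total.

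Having established that $R$ is total and finitely supported, Lemma~\ref{lem:straight-choice} (applicable since $\Sigma^2$ is polynomial orbit‑finite) yields a finitely supported function $F:\Sigma^2\to\Sigma^2$ with $((a,e),F(a,e))\in R$ for all inputs, which is exactly the function required. The main obstacle is the totality argument, and even there the mathematical point is just that an idempotent $\Jj$-equivalent to $h(a)$ can be ``inserted into the middle'' of $h(a)$; the two care points are handling the boundary cases where $\alpha$ or $\beta$ is the adjoined identity, and making sure the outputs land in $\Sigma$ rather than merely in the semigroup — the latter being precisely what assumption (*) is there to guarantee.
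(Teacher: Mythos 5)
Your proof is correct and follows essentially the same route as the paper's: both arguments observe that $h(e)=h(e)h(e)$ being an infix of $h(a)$ yields a factorisation $h(a)=\bigl(\alpha h(e)\bigr)\bigl(h(e)\beta\bigr)$ absorbing $h(e)$ on the appropriate sides, then invoke assumption (*) for letter representations and Lemma~\ref{lem:straight-choice} to uniformise the resulting total finitely supported relation. Your write-up merely makes explicit the totality computation and the role of (*), which the paper leaves implicit.
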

                  \begin{proof}
                    If $h(e)$ is an idempotent from the $\Jj$ class of $h(s)$ then $h(e) = h(e) h(e)$ is an infix
                    of $h(s)$ and therefore there exists at least one factorisation $(h(x), h(y))$ as required by the
                    claim. To produce $F$ apply Lemma \ref{lem:straight-choice}.
                  \end{proof}
            \item For all positions $i \in \set{2, \ldots k}$ compute
            $g_i$ -- a letter representation of $h(y_{i-1}, x_{i})$. (Delay~+~Homomorphism)\\
            Here, we need to calculate a letter representation for a product of two elements. This is not immediately obvious -- because there might be a need to choose between several letter representations -- but the problem can be solved thanks to the following lemma. 
    
            \begin{lemma}\label{lem:straight-representation} There is an equivariant function $f : \Sigma \times \Sigma \to \Sigma$ such that for every $x,y \in \Sigma$, $f(x,y)$ is a letter representation of $h(x, y)$. 
            \end{lemma}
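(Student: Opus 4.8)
The plan is to define $f(x,y)$ as a canonical letter representation of the product $h(x,y)=h(x)\cdot h(y)$, chosen in a way that is invariant under atom automorphisms. The crucial point is that assumption~(*) — the only hypothesis available here — guarantees not merely that some letter representation exists, but that every letter representation $a$ of $h(x,y)$ has least support contained in that of $(x,y)$; this is exactly what lets us make the choice equivariantly, avoiding the obstruction of Example~\ref{ex:choice} (and, unlike the general Uniformisation Lemma~\ref{lem:straight-choice}, without introducing fresh atom constants).

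First I would record the support bound. Since $h$ is an equivariant homomorphism and the operation of the target semigroup is equivariant, both maps are support-non-increasing, so $\supp{h(x,y)}\subseteq\supp x\cup\supp y=\supp{(x,y)}$; as a letter representation has, by definition, the same least support as the element it represents, every candidate letter has all of its atoms among those of $(x,y)$. Next I would fix a canonical order. Using Lemma~\ref{lem:write-support} on the polynomial orbit-finite set $\Sigma\times\Sigma$, obtain an equivariant map that enumerates the least support of $(x,y)$ as a non-repeating tuple $\bar b$. Let $\widehat\Sigma$ be the \emph{finite} set obtained from the construction of $\Sigma$ by replacing $\atoms$ with $\{1,\ldots,2k\}$, where $k$ is the dimension of $\Sigma$, and fix once and for all a linear order on $\widehat\Sigma$ (e.g.\ lexicographic on products, left-before-right on co-products). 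For a letter $a$ whose atoms all lie in $\bar b$, define its \emph{pattern} to be the element of $\widehat\Sigma$ obtained by replacing each atom of $a$ with its index in $\bar b$; note that $a$ is recovered from $\bar b$ and its pattern by the inverse substitution, so the pattern determines $a$ among $\bar b$-supported letters.

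I would then set $f(x,y)$ to be the letter representation of $h(x,y)$ whose pattern is least: the set of letter representations is nonempty by~(*), all of them have atoms inside $\bar b$ by the support bound, their patterns form a nonempty subset of the finite set $\widehat\Sigma$ and hence have a least element, and that least pattern is realised by exactly one representation by the reconstruction remark. Equivariance is then a direct check: an atom automorphism $\pi$ sends $\bar b$ to the tuple enumerating $\supp{(\pi x,\pi y)}$, sends the set of letter representations of $h(x,y)$ to that of $h(\pi x,\pi y)=\pi\,h(x,y)$, and preserves patterns (a pattern records only the equality type of a letter relative to $\bar b$, which $\pi$ respects), so it commutes with the pattern-minimal choice, giving $f(\pi x,\pi y)=\pi\,f(x,y)$.

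I expect the genuine content to be in getting the canonical order right, i.e.\ in making the selection depend only on equality types relative to $\supp{(x,y)}$ — this is precisely where~(*) is used, and it is what distinguishes this equivariant statement from the merely finitely supported Lemma~\ref{lem:straight-choice}. The support bound and the finiteness of $\widehat\Sigma$ are routine consequences of Lemma~\ref{lem:write-support} and of equivariance being support-non-increasing.
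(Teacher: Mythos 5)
Your proof is correct, and it rests on exactly the observation the paper singles out: a letter representation has the same least support as the element it represents, so every candidate for $f(x,y)$ has all of its atoms inside $\supp{(x,y)}$, and the obstruction of Example~\ref{ex:choice} disappears. The difference is one of packaging: the paper's proof is a one-line appeal to \cite[Claim 6.10]{bojanczyk_slightly2018} after stating that observation, whereas you unfold the argument — canonical enumeration of the support via Lemma~\ref{lem:write-support}, reduction of each candidate to an atomless ``pattern'' in a finite set, and selection of the least pattern under a fixed order. Your construction is essentially the selection device already used in the paper's proof of the Uniformisation Lemma~\ref{lem:straight-choice}, specialised to the case where no fresh tuple $\bar c$ is needed because the output's support is forced inside the input's; making that explicit is a reasonable trade: you gain a self-contained, checkable argument at the cost of a few lines, and your equivariance check (that $\pi$ commutes with taking supports, with the letter-representation set, and with patterns) is the part the cited claim would otherwise supply. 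One cosmetic point: the fact that \emph{every} letter representation of $h(x,y)$ has the right least support is part of the definition of ``letter representation'' introduced in condition (*), not an extra consequence of (*) — (*) itself only supplies existence — but this does not affect the argument.
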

            \begin{proof} 
                Here it is important that the letter representation  has the same least support as the represented element (i.e.~the letter representation reflects supports).  Thanks to this property, we can apply~\cite[Claim 6.10]{bojanczyk_slightly2018}. 
            \end{proof}
            \item Let $g$ be a letter representation of
                  $h(g_2 \ldots g_k)$. 
                  Note that $h(a_1 \ldots a_k) =
                  h(x_1,g,y_k)$.  This means that in
                  order to compute a representation of the product of the entire block, we just need to collect all the values : $x_1$, $y_k$ and $g$ in the last position
                  and apply Lemma~\ref{lem:straight-representation}. The value $y_k$ is already there and we can propagate the value
                  $x_1$ to the last position using letter propagation from the proof of Lemma~\ref{lem:delay}.
                  This leaves us with computing $g$ in the last position. The crucial observation we need for that
                  is the following:
                  \begin{claim}
                    \label{claim:group-equal-supports}
                    All $g_i$ and $e$ have the same least supports.
                  \end{claim}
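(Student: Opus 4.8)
The plan is, for each $i\in\set{2,\dots,k}$, to locate the element $g'_i := h(y_{i-1})h(x_i)$ represented by $g_i$ inside the $\Hh$-class of the idempotent $h(e)$ -- that is, inside the set $H$ obtained by intersecting the $\Rr$-class (prefix class) and the $\Ll$-class (suffix class) of $h(e)$, which is a group with identity $h(e)$ -- and then to deduce $\supp{g'_i}=\supp{h(e)}$. Since a letter representation has the same least support as the element it represents, this yields the claim $\supp{g_i}=\supp{e}$.

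\smallskip\noindent\emph{Step 1: $g'_i\in H$.} Recall from step~3 that $h(x_j)=h(x_j)h(e)$, $h(y_j)=h(e)h(y_j)$ and $h(a_j)=h(x_j)h(y_j)$; in particular $g'_i=h(e)h(y_{i-1})h(x_i)h(e)$, so $g'_i\in h(e)\,S\,h(e)$. By smoothness of the block, all the $h(a_j)$, their product, and $h(e)$ lie in one common $\Jj$-class $J$. Since $h(e)$ is an infix of each $h(x_j),h(y_j)$, and each $h(x_j),h(y_j)$ is an infix of $h(a_j)$, a squeeze between two elements of $J$ puts all the $h(x_j),h(y_j)$ in $J$. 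Writing the block product as
\begin{align*}
h(a_1)\cdots h(a_k)=h(x_1)\bigl(h(y_1)h(x_2)\bigr)\cdots\bigl(h(y_{k-1})h(x_k)\bigr)h(y_k)
\end{align*}
exhibits $g'_i$ as an infix of $h(a_1)\cdots h(a_k)\in J$; and $h(y_{i-1})\in J$ is an infix of $g'_i$, so $g'_i$ is squeezed between two elements of $J$ and hence $g'_i\in J$. Finally $h(e)\,S\,h(e)\cap J=H$ for the idempotent $h(e)$ -- the familiar consequence of stability, which holds in orbit-finite semigroups by the Eggbox Lemma (Lemma~\ref{lem:green-eggbox}) -- so $g'_i\in H$.

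\smallskip\noindent\emph{Step 2: $\supp{g'_i}=\supp{h(e)}$.} The set $H$ is $\supp{h(e)}$-supported, because Green's relations are equivariant; and, being a finitely supported subset of the orbit-finite semigroup $S$, it is itself orbit-finite. Relativising the fact that every orbit-finite group is finite~\cite{DBLP:journals/corr/ColcombetLP14} to the $\supp{h(e)}$-nominal setting, we get that $H$ is \emph{finite}. As $h(e)$ is the unique idempotent of the group $H$, it is an equivariant function of $H$, so $\supp{H}=\supp{h(e)}$. If some $g\in H$ had an atom in its least support lying outside $\supp{h(e)}$, then a suitable $\supp{h(e)}$-automorphism with an infinite orbit through that atom would map $g$ to infinitely many pairwise distinct elements of $H$, contradicting finiteness; hence $\supp{g'_i}\subseteq\supp{h(e)}$. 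For the reverse inclusion, let $\pi$ be any automorphism fixing $\supp{g'_i}$ pointwise, so $\pi(g'_i)=g'_i$. Applying $\pi$ to the identities $h(e)g'_i=g'_ih(e)=g'_i$ shows that $\pi(h(e))$ is an idempotent which is both a prefix and a suffix of $g'_i$; and applying $\pi$ to $g'_iz=h(e)$, where $z$ is the inverse of $g'_i$ in $H$, gives $\pi(h(e))=g'_i\pi(z)$, so $g'_i$ is an infix of $\pi(h(e))$ and therefore $\pi(h(e))$ is $\Jj$-equivalent to $g'_i$. By the Eggbox Lemma, $\pi(h(e))$ then lies in the same $\Rr$-class and the same $\Ll$-class as $g'_i$, i.e.\ in $H$; since a group has only one idempotent, $\pi(h(e))=h(e)$. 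Thus $\supp{g'_i}$ supports $h(e)$, giving $\supp{h(e)}\subseteq\supp{g'_i}$, so $\supp{g'_i}=\supp{h(e)}$; since $g_i$ and $e$ are letter representations of $g'_i$ and $h(e)$, this is the same as $\supp{g_i}=\supp{e}$.

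\smallskip The step I expect to be most delicate is Step~1 -- tracking Green's relations through the factorisations produced in step~3 and combining them with the smoothness hypothesis to pin $g'_i$ down inside the group $H$. Once that is in place, Step~2 is a routine exercise in nominal sets, using finiteness of orbit-finite groups and the uniqueness of the idempotent in a group.
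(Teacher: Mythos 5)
Your proof is correct and follows essentially the same route as the paper's: place $h(g_i)=h(y_{i-1})h(x_i)$ in the $\Hh$-class of the idempotent $h(e)$ (which is a group), and conclude that all elements of an orbit-finite group share the same least support, which transfers to letter representations. The only differences are matters of detail rather than strategy: the paper simply cites \cite[Lemma 2.14]{DBLP:journals/corr/ColcombetLP14} for the equal-least-supports fact, whereas your Step~2 re-derives it from the finiteness of orbit-finite groups, and your Step~1 spells out the $\Jj$-class membership of $h(y_{i-1})h(x_i)$ that the paper asserts without elaboration.
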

                  \begin{proof}
                    Define an $\Hh$-class to be any non-empty intersection of a prefix ($\Rr$-) class
                    and of a suffix ($\Ll$-) class. Each $h(g_i)$ begins with $h(e)$, ends in $h(e)$, and is in the
                    $\Jj$-class of $h(e)$. It follows that $h(g_i)$ is in the same $\Hh$-class as $h(e)$;
                    this is because $\Rr$- and $\Ll$- classes form an antichain in a given $\Jj$-class
                    \cite[Lemma 7.1]{bojanczykNominalMonoids2013}. Like for any idempotent, the
                    $\Hh$-class of $h(e)$ is a group \cite[Lemma 11]{colcombet2011green}. In an orbit-finite semigroup with an equivariant product operation,
                    all elements have equal least supports \cite[Lemma 2.14]{DBLP:journals/corr/ColcombetLP14}. This extends to their letter representations.
                  \end{proof}
                  Thanks to the above claim, we can compute $g$ by applying similar technique
                  as in step 2 -- replace every atom in each $g_i$ with a placeholder according to its
                  position in $\supp{e}$, obtaining an atomless value $g_i'$. Then, use a classical Mealy
                  machine to compute the atomless version of the product $g'$ in the last position (here we apply
                  Claim~\ref{claim:function-extend-to-placeholders} to the binary version of $h$,
                  so that we can use it on values containing placeholders).
                  Finally,
                  we replace all the placeholders in $g'$ with atoms from $\supp{e}$, computing the value $g$.
        \end{enumerate}
    \end{proof}
    Now, we are ready to finish the induction step
    for the Split Lemma. We
    start by partitioning the input into blocks of the following form:
    \begin{align}\label{eq:blocks}
        \underbrace { \overbrace{a_1 \cdots a_k}^{\txt{\scriptsize non-empty \\ \scriptsize smooth sequence}} {a_{k+1}}}_{\textrm{not a smooth sequence}}
    \end{align}
    The last block of the word might be unfinished -- it might not have the final element that breaks the smoothness. The partition is represented represented by distinguishing the last position ($a_{k+1}$) in each block with a symbol $\dashv$.
    We construct the partition in the following steps, using locality of the smooth product
    (Lemma \ref{lem:smooth-local}):
    \begin{enumerate}
        \item Apply the delay function (Lemma \ref{lem:delay})
        \item Mark all positions $i$, such that $a_{i-1} a_i$ is not smooth. (Homomorphism)
        \item By  locality of smooth sequences, every factor without marked positions (as in the previous item) is a smooth sequence. A small problem is that there might be consecutive marked positions, which would lead to empty smooth sequences between marked positions. To solve this problem, for every block of consecutive marked positions, use the marker $\dashv$ only for every second one  (the first, the third, and so on).
              (Classical Mealy).
         \end{enumerate}
    From now on, we use the name \emph{distinguished position} for the positions marked by $\dashv$ above. Define an \emph{inductive block} to be a block of positions where all positions (except possibly the last one) are not distinguished, and which is maximal for this property. The string positions are partitioned into inductive blocks, and each inductive block except possibly the last one  has shape as in~\eqref{eq:blocks}.

    We now  compute a representation of the product for every inductive  block and write it down in its last position -- if the last inductive block is unfinished (i.e.~it does not end with a distinguished position), this operation has no effect.
    By applying the $\mathsf{map'}$ combinator from Lemma \ref{lem:prime-combinators} we only need to show how to do it for
    one complete block $a_1, \ldots, a_k, a_{k+1}$:
    \begin{enumerate}
        \item Apply the delay function. (Lemma \ref{lem:delay})
        \item Now the $a_k$ is in the ($k+1$)-th position and ``sees'' the $\dashv$ endmarker. Thanks to that, we can apply
              Lemma \ref{lem:smooth-product-prime} to compute a letter representation of
              $h(a_1 \cdots a_k)$
              in the last position (call it $p$). (Lemma \ref{lem:smooth-product-prime})
        \item In the last position (marked with $\dashv$) calculate
              a letter representation of
              $h(p, a_{k+1})$
              to obtain a letter representation of the product of the entire block. (Homomorphism + Lemma \ref{lem:straight-representation})
    \end{enumerate}

    Now, we have a situation like this:
    \[
        \begin{matrix}
            a_1 & a_2 & \ldots & a_{k_1}    & a_{k_1+1} & \ldots     & a_{k_2}       &  \ldots & & a_{k_m} & a_{k_m + 1}& \ldots\\
                &     &        & \dashv &         &            & \dashv        &  \ldots & &\dashv   &            &       \\
                &     &        & a'_1   &         &            & a'_2          &  \ldots & &a'_m     &            &       \\
        \end{matrix}
    \]
    Note that all $h(a'_i)$ belong to a subsemigroup  that has a smaller $\Jj$-height.
    This is because all $h(a'_i)$ have proper infixes
    by construction.
    Apply the induction assumption with
    the subsequence combinator (Lemma \ref{lem:prime-combinators}) to compute a smooth split for the string restricted to distinguished positions. Let $n$ be the height of the split from the induction assumption. For each distinguished position, increment its height by 1. For each non-distinguished position $i$, define its height to be 1, and its split value to be $a_i$. 
    
\section{Mealy machines as compositions of primes}
\label{ap:mealy-primes}
In this section of the appendix, we prove Theorem \ref{thm:mealy-machine-krohn-rhodes}, which says that every total function
defined by a single-use Mealy machine is a composition  of primes.

Before proceeding with the proof, we argue why the converse of Theorem~\ref{thm:mealy-machine-krohn-rhodes} is also true, i.e.~every composition of primes is a single-use Mealy machine. This is because  all the prime functions are clearly single-use Mealy machines, and single-use Mealy machines are closed under the two kinds of composition. There is a slightly subtle point in preserving the single-use restriction for  the sequential composition, so we present this proof in more detail:

\begin{lemma}\label{lem:composition}
    Both single-use one-way transducers and single-use Mealy machines are closed under sequential composition.
    \end{lemma}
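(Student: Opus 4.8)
The plan is to prove the closure under sequential composition by simulating two single-use machines in sequence, where the crucial point is that a single-use Mealy machine (or one-way transducer) produces its output letter by letter as it scans left to right, so the second machine can consume those letters on the fly without the first machine's output ever being stored. First I would set up notation: let $\mathcal A$ be a single-use one-way transducer (resp.\ Mealy machine) computing $f : \Sigma^* \to \Gamma^*$ and $\mathcal B$ one computing $g : \Gamma^* \to \Delta^*$, and I would construct a machine $\mathcal C$ computing $g \circ f$. The state space of $\mathcal C$ is (roughly) $Q_{\mathcal A} \times Q_{\mathcal B}$, and the register set is the disjoint union $R_{\mathcal A} \sqcup R_{\mathcal B}$; the head of $\mathcal C$ tracks the head of $\mathcal A$, i.e.\ it moves over the actual input, while $\mathcal B$ is run ``virtually'' on the stream of letters that $\mathcal A$ emits.

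The key step is the interleaving discipline. Whenever $\mathcal A$ is about to execute a non-output step (an atom-load into one of its registers, an equality question on its registers, a head move, or accept/reject), $\mathcal C$ simply performs that step on the $R_{\mathcal A}$-part of its register valuation. Whenever $\mathcal A$ executes its output action --- apply an equivariant $f : \atoms^k \to \Gamma$ to registers $r_1,\dots,r_k$ and emit the letter --- $\mathcal C$ instead materialises that letter into a fresh register or a bounded buffer and then runs $\mathcal B$ for exactly one input-position's worth of computation on that single letter: $\mathcal B$ will ask a bounded number of questions to the letter and to its own registers, load atoms extracted from the letter into $R_{\mathcal B}$, possibly emit some letters of $\Delta$ (which $\mathcal C$ emits as its own output), and then request to move its head forward, at which point control returns to $\mathcal A$. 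The point I want to emphasise --- the ``slightly subtle point'' flagged in the text --- is that this respects the single-use restriction: each letter of $f(w)$ is handed to $\mathcal B$ exactly once and is immediately consumed; $\mathcal A$'s output registers $r_1,\dots,r_k$ are zeroed by $\mathcal A$'s own output action as prescribed, and the buffered $\Gamma$-letter, together with any atoms $\mathcal B$ reads out of it, is used single-use because $\mathcal B$ itself is single-use and $\mathcal B$ never revisits an input position. Some care is needed because the ``letter under $\mathcal B$'s head'' is not a position of the actual input but a transient value; I would handle this by letting $\mathcal C$ keep in its (finite) state the finite ``letter-skeleton'' part of the current $\Gamma$-letter and in its registers the (boundedly many) atoms occurring in it, so that $\mathcal B$'s questions of type $f : \Gamma + \{\vdash,\dashv\} \to \{\text{yes},\text{no}\}$ and its atom-extraction actions can be answered; since $\Gamma$ is polynomial orbit-finite, a letter of $\Gamma$ is exactly a finite skeleton plus a bounded tuple of atoms, so this is a bounded amount of information.

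I would then treat the endmarkers and termination: $\mathcal C$ starts with $\mathcal A$'s head on $\vdash$ and $\mathcal B$ in its initial state poised to read the first letter of $f(w)$; when $\mathcal A$ reaches $\dashv$ and would accept (one-way transducer case) or when $\mathcal A$'s head leaves the input from the right (Mealy case), $\mathcal C$ must feed $\mathcal B$ the endmarker $\dashv$ (one-way case) and run $\mathcal B$ to its acceptance, or simply let $\mathcal B$ finish (Mealy case, where lengths are preserved so $\mathcal B$ leaves its virtual input exactly when $\mathcal A$ does). For Mealy machines one additionally checks length preservation: $\mathcal A$ emits one $\Gamma$-letter per input position and $\mathcal B$ emits one $\Delta$-letter per $\Gamma$-letter, so $\mathcal C$ emits one $\Delta$-letter per input position, and the head-advance of $\mathcal C$ is tied to $\mathcal A$'s (forced) head-advance after each output. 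I expect the main obstacle to be purely bookkeeping rather than conceptual: making precise the ``one input-position's worth of $\mathcal B$-computation'' sub-routine and verifying that across the whole simulation every register of $R_{\mathcal A}$ and $R_{\mathcal B}$, as well as the transient letter-buffer, is genuinely touched at most once between successive writes --- i.e.\ that the natural product construction does not accidentally reread a register. Since neither $\mathcal A$ nor $\mathcal B$ ever moves left and $\mathcal B$'s virtual input stream is produced strictly left-to-right and consumed strictly left-to-right, no rereading can occur, and the single-use invariant is preserved; this gives the lemma.
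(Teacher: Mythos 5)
There is a genuine gap, and it sits exactly at the ``slightly subtle point'' this lemma exists to address. Your construction buffers each output letter of the first machine (its skeleton in the state, its atoms in registers) and then lets the second machine do ``one input-position's worth of computation'' on it. But in the model of Definition~\ref{def:the-transducer-model} only \emph{registers} are single-use; the letter under the head is not. So while the second machine sits on a single virtual position it may legitimately interrogate the current letter several times --- ask several different yes/no questions about it, or extract the same atom of it into several of its own registers. Each such interrogation must be answered from your buffer, so the buffered atoms get read more than once, and the combined machine violates the single-use restriction. Your closing argument --- that no rereading can occur because both machines are one-way and the stream is produced and consumed strictly left-to-right --- is therefore false: the rereading happens \emph{within} a single position, not by revisiting an earlier one. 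You half-notice the issue when you say the second machine asks ``a bounded number of questions'' to the letter, but you neither justify that bound nor explain how a single-use buffer can serve more than one query.

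The paper's proof supplies precisely the two missing ingredients. First, it proves that a single-use machine that stays at one position for more than $k$ steps (where $k$ is computable from the number of states, registers, and atoms per letter) must revisit a configuration and hence loop forever; so on any terminating run the current letter is interrogated at most $k$ times. Second, it runs $k$ parallel copies of the \emph{producing} machine, so that each output letter is emitted in $k$ independent single-use copies, one to be consumed by each potential query of the consuming machine. Everything else in your write-up (the product of state spaces, the disjoint union of register sets, the handling of endmarkers and length preservation) matches the intended construction; what is missing is the one idea that makes it respect the single-use discipline.
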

    \begin{proof}
    The only problem with the classical product construction for $\Aa \circ \mathcal{B}$ is
    that $\mathcal{A}$ might ask for multiple copies of $\mathcal{B}$'s output,
    whereas $\mathcal{B}$ emits every output letter only once. In order to solve this problem,
    we use the following claim:
    \begin{claim}\label{claim:stay-in-place-bound}
        For every  single-use Mealy machine,
        there is a bound $k \in \set{0,1,\ldots}$, such that if the machine stays in one position for more than $k$ steps,
        then it will loop and stay there forever. The same is true for single-use one way transducers.
    \end{claim}
    \begin{proof}
        As long as a single-use register machine stays in one place, each of its register may either: (a) store the atom that was present in the register when the machine entered its current position; (b) be undefined; (c) store one of the atoms taken from the input letter under the head.
        It follows that there are at most 
        \begin{align*}
            \text{(number of states)} \cdot (2 + \text{(maximal number of atoms in an input letter)})^{\text{number of registers}}
        \end{align*}
         possible values for the state and register
        valuation. If the machine stays
        in one place for more than that, it will visit some state and register
        valuation for the second time and start to loop. 
    \end{proof}
    This means that if the product construction keeps $k$ copies of $\mathcal{B}$, it will never
    run out of values to feed to $\mathcal{A}$. 
    \end{proof}

The rest of Appendix~\ref{ap:mealy-primes} is devoted to proving Theorem~\ref{thm:mealy-machine-krohn-rhodes}. 
Fix a single-use Mealy machine  that defines a total function
$f: \Sigma^* \to \Gamma^*$. Both alphabets $\Sigma$ and $\Gamma$ are polynomial orbit finite sets.
To show that $f$ is a product of
primes, we will (a) apply the Split Lemma from~\ref{sec:factorisation-forests} to the input string for a suitably
defined homomorphism; and then (b) use the smooth split to produce the output string.
One advantage of this strategy is that a similar one will also work for two-way single-use transducers,
as we will see in Section~\ref{ap:two-way-to-composition}.

\subsection{State transformation monoid}
\label{sec:Mealy-transformation-monoid}
In this section we describe the monoid homomorphism that will be used in the smooth split.
Roughly speaking, this homomorphism is similar to the one of Shepherdson functions discussed in Section~\ref{sec:automata-to-semigroups}, restricted for the one-way model of single-use Mealy machines. As was the case for Shepherdson functions, orbit-finiteness
of the resulting monoid crucially depends on the single-use restriction.

\begin{definition}
    Define the {\em extended states} of the fixed single-use Mealy machine to be 
    \begin{align*}
    \bar Q  \eqdef Q \times \text{(register valuations)} + \lightning .
    \end{align*}
    The element $\lightning$ represents computational error, 
    resulting from accessing undefined registers.
\end{definition}
The set  $\bar Q$ of extended states is a polynomial orbit-finite set.
There is a natural right action of input strings on extended configurations: for an extended configuration $\bar q \in \bar Q$ and an input string $w \in \Sigma^*$, we write $\bar qw \in \bar Q$ for the extended configuration after reading $w$ when starting in $\bar q$. If $\bar q$ is $\lightning$, then $\bar q w$ is also $\lightning$. Define 
\begin{align*}
h: \Sigma^* \to M
\end{align*}
to be the monoid homomorphism that maps a word $w$ to the transformation $\bar q \mapsto \bar q w$. 
By the same reasoning as for Shepherdson functions
(Section \ref{sec:automata-to-semigroups}), $M$ is an orbit-finite monoid, which we call the \emph{state transformation monoid}. 
By definition of $h$, there is also a right action of $M$ on $\bar Q$ (function application) which  commutes with $h$:
\begin{align*}
\bar q \, h(w) \stackrel{\textrm{def}}{=}  h(w)(q) = qw.
\end{align*}
We say that an extended state $\bar q$ is \emph{compatible} with  a monoid element $m$ if $\bar q m \neq \lightning$. 

We end this subsection with some results about  Green's relations for the state transformation monoid.

\begin{lemma}
    \label{lem:prefix-equicompatible}
    If $x \in M$ is a prefix of $y \in M$, then every $\bar q \in \bar Q$ that is compatible with $y$
    is also compatible with $x$. 
\end{lemma}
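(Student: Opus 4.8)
The statement to prove is Lemma~\ref{lem:prefix-equicompatible}: if $x$ is a prefix of $y$ in the state transformation monoid $M$ (meaning $y = xa$ for some $a \in M$), then every extended state $\bar q$ compatible with $y$ is also compatible with $x$. Recall ``compatible'' means $\bar q \cdot y \neq \lightning$, i.e.~the computation starting from $\bar q$ and applying the transformation $y$ does not access an undefined register. The plan is a direct argument using the right action of $M$ on $\bar Q$ and the fact that $\lightning$ is absorbing.

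First I would unfold the definitions. Since $x$ is a prefix of $y$, write $y = x a$ for some $a \in M$. By the commuting right action, $\bar q \cdot y = \bar q \cdot (xa) = (\bar q \cdot x) \cdot a$. Now suppose toward a contradiction that $\bar q$ is compatible with $y$ but not with $x$, so $\bar q \cdot x = \lightning$. The key observation is that $\lightning$ is absorbing for the right action: by the definition of extended states just above, ``if $\bar q$ is $\lightning$, then $\bar q w$ is also $\lightning$'', and this lifts to the action of $M$ since $M$ is the image of $\Sigma^*$ under $h$. Hence $\bar q \cdot y = (\bar q \cdot x)\cdot a = \lightning \cdot a = \lightning$, contradicting compatibility of $\bar q$ with $y$. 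Therefore $\bar q \cdot x \neq \lightning$, i.e.~$\bar q$ is compatible with $x$, as required.

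The only point that needs a word of care — and it is the closest thing to an obstacle here, though it is minor — is that ``prefix'' in Green's-relation terminology is defined for monoid elements, so $a$ ranges over $M$ itself rather than over $\Sigma^*$; I should make sure the absorbing property of $\lightning$ is stated at the level of the $M$-action and not just the $\Sigma^*$-action. This is immediate because every element of $M$ is of the form $h(w)$ for some $w \in \Sigma^*$ (recall $M$ only contains the achievable transformations), and $h(w)$ acts on $\bar Q$ by $\bar q \mapsto \bar q w$, which sends $\lightning$ to $\lightning$ by definition. With that remark in place the argument is a one-line computation, and no factorisation-forest or Green's-relations machinery beyond the definitions is needed.
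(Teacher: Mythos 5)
Your proof is correct and is essentially the same as the paper's: both write $y = xx'$ and use the absorbing property of $\lightning$ to compute $\bar q y = (\bar q x)x' = \lightning$ when $\bar q x = \lightning$. The extra remark about lifting the absorbing property from the $\Sigma^*$-action to the $M$-action is a fine (if minor) point of care that the paper leaves implicit.
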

\begin{proof}
    There exists an $x' \in M$, such that $y = xx'$. Therefore, if $x$ is not compatible with $\bar q$,
    then 
    \[\bar q y = \bar q (x x') = (\bar q x) x' = \lightning x' = \lightning \]
    which means that $y$ is also not compatible with $\bar q$.
\end{proof}
The following is an important corollary of Lemma \ref{lem:prefix-equicompatible} and Green's Eggbox Lemma
(Lemma~\ref{lem:green-eggbox}).
\begin{corollary}
    \label{cor:smooth-same-registers}
    For every smooth sequence $x_1,  \ldots, x_k$ in the state transformation monoid, an  extended state
    is compatible with $x_1$ if and only if it is compatible with $x_1 \cdots x_k$. 
\end{corollary}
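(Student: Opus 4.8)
The plan is to deduce Corollary~\ref{cor:smooth-same-registers} from Lemma~\ref{lem:prefix-equicompatible} and the Eggbox Lemma (Lemma~\ref{lem:green-eggbox}), using the definition of smoothness. First I would recall what smoothness gives us: since $x_1, \ldots, x_k$ is smooth, all of $x_1, \ldots, x_k$ and the product $x_1 \cdots x_k$ lie in a single $\Jj$-class. In particular $x_1$ and $x_1 \cdots x_k$ are $\Jj$-equivalent. Writing $y = x_1 \cdots x_k = x_1 \cdot (x_2 \cdots x_k)$, we see that $x_1$ and $x_1 y'$ (where $y' = x_2 \cdots x_k$) are in the same $\Jj$-class, so by the Eggbox Lemma they are in the same $\Rr$-class, i.e. $x_1$ is a prefix of $x_1 \cdots x_k$ and vice versa.

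Next, I would invoke Lemma~\ref{lem:prefix-equicompatible} in both directions. Since $x_1$ is a prefix of $x_1 \cdots x_k$, every extended state compatible with $x_1 \cdots x_k$ is compatible with $x_1$. Conversely, since $x_1 \cdots x_k$ is a prefix of $x_1$ (they are $\Rr$-equivalent, hence mutual prefixes), every extended state compatible with $x_1$ is compatible with $x_1 \cdots x_k$. Combining the two inclusions gives the claimed equivalence.

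I do not expect any serious obstacle here; the only point requiring a little care is the observation that $\Rr$-equivalence means mutual prefixhood (which is immediate from the definition of $\Rr$-class as the equivalence relation generated by the prefix preorder), so that Lemma~\ref{lem:prefix-equicompatible} can indeed be applied in both directions. The argument is essentially a two-line chain once smoothness has been unpacked to yield $\Jj$-equivalence of $x_1$ and $x_1\cdots x_k$ and the Eggbox Lemma has upgraded this to $\Rr$-equivalence.
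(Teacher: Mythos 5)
Your proof is correct and follows exactly the route the paper intends: the paper states this as "an important corollary of Lemma~\ref{lem:prefix-equicompatible} and Green's Eggbox Lemma" without writing out the details, and your argument (smoothness gives $\Jj$-equivalence of $x_1$ and $x_1\cdots x_k$, the Eggbox Lemma upgrades this to $\Rr$-equivalence, i.e.\ mutual prefixhood, and Lemma~\ref{lem:prefix-equicompatible} applied in both directions finishes) is precisely the intended derivation. Note only that in this paper $\Rr$-equivalence is \emph{defined} as mutual prefixhood, so the final point you flag as needing care is immediate.
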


\subsection{Masked states}
In order to prove Theorem~\ref{thm:mealy-machine-krohn-rhodes} it would be useful to  compute the run of the single-use Mealy machine on the input word, i.e.~decorate  every input position with  the extended state of the machine after reading the input string up to (but not including) that position.
Unfortunately, for similar reasons as the ones pointed out in Example \ref{ex:no-compute-state} this is not
always possible -- outputing the content of every register in ever position
can violate the single-use restriction.
To overcome this problem, we will ``mask'' some registers in the run, so that there is no need to output them multiple times. 
\begin{definition}[Masked states]
    \label{def:mased-states}
    An extended state $\bar q_1$ is
    said to be
    a {\em masking} of
    another extended state $\bar q_2$ if $\bar q_1$ can be obtained from $\bar q_2$ by setting a (possibly empty) subset of the  registers to the undefined value ($\bot$).
    In particular $\lightning$ is the only masking of $\lightning$.
    For an extended state $\bar q$ we write $\bar q \downarrow$ for the set of all extended states that can be obtained by  masking $\bar q$.
    Observe that $(\bar q m) \downarrow = (\bar q \downarrow) m - \set{\lightning}$, for every $m \in M$ which is compatible with $\bar q$.
\end{definition}

Now we are ready to state the main lemma of this section:
\begin{lemma}
    \label{lem:reconstuct-run}
    The following function is a composition of primes:
    \begin{itemize}
        \item \textbf{Input:} A word over $\Sigma \times \bar Q + \Sigma$, such that only the first letter contains an extended configuration:
        \begin{align*}
            \begin{matrix}
            a_1& a_2& \ldots & a_n \\
            \bar q_1&    &        &
            \end{matrix}
        \end{align*}
        \item \textbf{Output:} A sequence of extended configurations
       \begin{align*}
        \begin{matrix}
        \bar q_1& \bar q_2 & \ldots & \bar q_{n}
        \end{matrix} 
    \end{align*}
such that for every $i \in \set{1,\ldots,n}$, $\bar q_i$ is a masking of the extended configuration
\begin{align*}
\bar p_i =  \bar q_1 a_1 \cdots a_{i-1}
\end{align*}
and furthermore, if $\bar p_i$ is compatible
with $a_i$, then so is $\bar q_i$.
\end{itemize}
\end{lemma}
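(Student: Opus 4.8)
The plan is to use the \SplitLemma applied to the state transformation monoid $h : \Sigma^* \to M$ from Section~\ref{sec:Mealy-transformation-monoid}, and then to traverse the resulting smooth split tree (from root to leaf along the ancestor chain of each position) while carrying a carefully masked extended state. First I would observe that it suffices, via the $\mathsf{map'}_\Delta$ combinator (Lemma~\ref{lem:prime-combinators}), to solve the problem on a single inductive block and compose; and via the homomorphism that extends $h$ to satisfy condition~(*) (Lemma~\ref{lem:extend-to-star}), we may assume $h$ has letter representations. Then I would apply the \SplitLemma together with Remark~\ref{rk:split-exclusive-values}, so that every input position $i$ carries: its height, a letter representation of $h(a_i \cdots a_j)$ over its descendants, and a letter representation of $h$ over its \emph{strict} descendants. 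This is the data I will consume.

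The core construction is a single top-down pass over the forest structure: at the root level I know the incoming extended state $\bar q_1$ (placed on the first letter), and I want to push, into the first position of each maximal block of consecutive siblings, the extended state obtained by applying the product of the split values of the \emph{preceding} siblings to the state entering that block. Within a block of siblings $i_1 < \cdots < i_k$, the sequence of split values is smooth by definition of a smooth split, so Corollary~\ref{cor:smooth-same-registers} tells us that an extended state compatible with the first of them is compatible with their product; this is exactly the invariant that lets masking be consistent. Concretely, I would: (1) propagate the incoming state to the start of each sibling block using a classical Mealy machine plus letter propagation (the split values of earlier siblings can be pre-aggregated because a Mealy machine can run a left-to-right scan over one level); (2) \emph{mask} — before recursing into the subtree rooted at a position, strip from the state every register that is not ``needed'' locally, so that registers are not emitted twice down different branches; (3) recurse into each subtree, shifting attention from split values to the strict-descendant values of the children. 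The masking bookkeeping is what guarantees the single-use restriction survives: a register is output in exactly one position along the traversal, and Definition~\ref{def:mased-states}'s identity $(\bar q m)\downarrow = (\bar q\downarrow) m - \set{\lightning}$ makes the masking commute with the monoid action so the claimed compatibility statement ``if $\bar p_i$ is compatible with $a_i$ then so is $\bar q_i$'' is preserved.

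The induction is on the height bound $n$ of the split. At height $1$ every position is its own block of size one with split value $a_i$, so the run is just the fold $\bar q_{i+1} = \bar q_i a_i$, computable by a classical Mealy machine over the polynomial orbit-finite set $\bar Q$ together with the delay function (Lemma~\ref{lem:delay}). For the inductive step I would use $\mathsf{map'}$ over the top-level blocks, compute per-block products of split values using Lemma~\ref{lem:smooth-product-prime} (applicable since each block of siblings is smooth), thread the masked state through the block boundaries with one more Mealy pass, and then invoke the induction hypothesis on each subtree — whose split has height $n-1$ — after re-labelling each position with the correct ``local incoming state'' and switching from split values to strict-descendant split values. Finally I would verify that the states produced are indeed maskings of the true $\bar p_i = \bar q_1 a_1 \cdots a_{i-1}$: this follows because the true state decomposes along the ancestor chain of $i$ as a product of (products of split values of left-siblings at each level), and at each level we applied exactly that product, only possibly after masking.

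The main obstacle I expect is the masking invariant: making sure that across the whole traversal each register of the original run is surfaced in at most one output position, so that composing this function with a downstream single-use consumer (as needed in the proof of Theorem~\ref{thm:mealy-machine-krohn-rhodes}) does not break the single-use restriction, while \emph{also} keeping enough information that the claimed compatibility guarantee holds. The delicate point is that two sibling subtrees may both ``want'' a register that the automaton will in fact only consult in one of them; deciding which subtree keeps it requires a local, finitely-supported choice, for which I would appeal to the uniformisation lemma (Lemma~\ref{lem:straight-choice}) and eliminate the resulting spurious atom constants via Lemma~\ref{lem:superflous-atom-elimination}. Everything else — the Mealy passes, delay, letter propagation, $\mathsf{map'}$ — is routine bookkeeping with the combinators of Section~\ref{sec:closure-properties}.
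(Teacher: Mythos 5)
Your high-level skeleton matches the paper's: extend $h$ to satisfy condition (*), apply the \SplitLemma with Remark~\ref{rk:split-exclusive-values}, induct on split height, use smoothness of sibling blocks together with Corollary~\ref{cor:smooth-same-registers} to justify masking, and recurse into subtrees via $\map'$. But there is a genuine gap exactly where the paper does all of its work: the base case. You claim that at height $1$ "the run is just the fold $\bar q_{i+1} = \bar q_i a_i$, computable by a classical Mealy machine over the polynomial orbit-finite set $\bar Q$." A classical Mealy machine has a finite state space and finite alphabets; it cannot carry an element of $\bar Q$, which contains register valuations over $\atoms$. Nor can a single-use Mealy machine compute this fold in general --- that is precisely Example~\ref{ex:no-compute-state}, and it is the reason masking exists at all. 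At height $1$ the positions are not "blocks of size one"; they form one sibling block whose split values are a smooth sequence, and computing the (masked) run along a smooth block is the technical heart of the proof (Claim~\ref{claim:reconstruct-run-aux-smooth}). The mechanism you are missing is the Green's-relations decomposition from Lemma~\ref{lem:smooth-product-prime}: factor each $h(a_i)=x_iy_i$ around a common idempotent $e$, observe that the middle elements $g_i = y_{i-1}x_i$ all lie in the group $\Hh$-class of $e$ and hence share the least support $\supp{e}$ (Claim~\ref{claim:group-equal-supports}), mask the initial state down to registers whose atoms lie in $\supp{e}$, replace those atoms by placeholders so that the prefix products of the $g_i$ become genuinely atomless and \emph{then} computable by a classical Mealy machine, and finally reconstitute $\bar q_{i+1} = \bar p_1 \overrightarrow{g_i}\, y_i$ per position. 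Without this, the step you call "routine bookkeeping" is exactly the step that fails.

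Two smaller points. First, your appeal to uniformisation to decide "which subtree keeps" a contested register is not how the correctness works: the masking is determined canonically by $\supp{e}$, and compatibility of the masked state follows from $\bar q_1 x_1 e = \bar q_1 x_1$ (the idempotent cannot consume a register it cannot restore), combined with Corollary~\ref{cor:smooth-same-registers} and Lemma~\ref{lem:prefix-equicompatible}. Second, you assert the compatibility guarantee is "preserved" across the recursion, but the paper shows this argument only covers every block except the last one; the last block needs a separate repair pass (marking the first incompatible position, propagating the most recent register updates forward with atom propagation, and restoring the over-masked registers before re-invoking the induction hypothesis). Your proposal would produce a function that can wrongly report $\lightning$ on the final block.
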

We now argue how  Theorem \ref{thm:mealy-machine-krohn-rhodes} is a consequence of the lemma above. In order to compute machine's output
for an input word $a_1 \ldots a_n \in \Sigma^*$, it suffices to (a) 
set $\bar q_1$
to be the initial configuration of the Mealy machine, (b) apply Lemma \ref{lem:reconstuct-run}, and (c)
apply a homomorphism that simulates the run of each $\bar q_i$ on $a_i$ and returns
the output letter. Note that, thanks to the compatibility condition in Lemma~\ref{lem:reconstuct-run}, in order to perform step (c)
it is enough to have the masking $\bar q_i$ (and not the real extended state $\bar p_i$).
The rest of this section is dedicated 
to proving Lemma~\ref{lem:reconstuct-run}.

\subsubsection{Proof of Lemma \ref{lem:reconstuct-run}}
By Lemma~\ref{lem:extend-to-star}, we can assume without loss of generality that the homomorphism $h : \Sigma^* \to M$ satisfies  assumption (*) in the Split Lemma.  Apply the \SplitLemma to $h$. 
This gives us a function (that is a composition of primes) which maps each input string  to a smooth split of bounded height, with respect to the state transformation monoid. Whenever we talk about \emph{the smooth split} of some input string in $\Sigma^*$, we mean the split produced by this function. The proof of Lemma~\ref{lem:reconstuct-run} is by induction on  the height of the smooth split for the input string (we call this the \emph{split height} of the input string).

\smallparagraph{Split height 1}
If the split height of the input string is equal to $1$, then the entire input is a smooth sequence and we can use the following claim (which we will also use 
in the induction step):
\begin{claim}
    \label{claim:reconstruct-run-aux-smooth}
    Lemma \ref{lem:reconstuct-run} is true with extra assumption that $a_1 \cdots a_n$ is smooth, i.e.~the sequence~$h(a_1),\ldots,h(a_n)$ is smooth. 
\end{claim}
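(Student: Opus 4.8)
The plan is to prove Claim~\ref{claim:reconstruct-run-aux-smooth} by first clearing away the degenerate cases, then combining the Rees--Sushkevich structure of smooth sequences (already set up in the proof of Lemma~\ref{lem:smooth-product-prime}) with a second use of the single-use restriction. First I would handle the positions where $\bar p_i=\lightning$. Whether $\bar q_1$ is compatible with $h(a_1)$ depends only on the state component of $\bar q_1$, the set of its defined registers, and the letter $a_1$, so a homomorphism on the first letter detects it; if $\bar q_1=\lightning$ output $\lightning$ everywhere, and if $\bar q_1\neq\lightning$ but is incompatible with $h(a_1)$ output $\bar q_1$ on the first position and $\lightning$ everywhere else (both homomorphisms; since $\lightning$ is absorbing all requirements hold vacuously). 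So assume $\bar q_1$ is compatible with $h(a_1)$. As $a_1\cdots a_n$ is smooth, every subinterval is smooth (Lemma~\ref{lem:smooth-local}), so each $h(a_1\cdots a_i)$ is $\Jj$-equivalent to every $h(a_j)$ and hence to $h(a_1\cdots a_n)=h(a_1\cdots a_i)\,h(a_{i+1}\cdots a_n)$; by the Eggbox Lemma all prefixes $h(a_1\cdots a_i)$ lie in one $\Rr$-class, and by Corollary~\ref{cor:smooth-same-registers} (and Lemma~\ref{lem:prefix-equicompatible}) $\bar q_1$ is compatible with each of them, so every $\bar p_i\neq\lightning$. I would also record, for later, that consecutive prefixes have equal kernels (being $\Rr$-equivalent), which forces each single-letter step $\bar p_i\mapsto\bar p_{i+1}$ to act injectively on the set of extended states that can occur at position $i$, i.e.\ on the image of $h(a_1\cdots a_{i-1})$.

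Next I would invoke the structure of smooth strings from the proof of Lemma~\ref{lem:smooth-product-prime}: fix an idempotent $e\in\Sigma$ in the common $\Jj$-class, with least support $\bar d$; write each $a_i=x_iy_i$ with $h(e)$ a suffix of $h(x_i)$ and a prefix of $h(y_i)$; and let $g_i$ be a letter representation of $h(y_{i-1})h(x_i)$, so that all $h(g_i)$ lie in the $\Hh$-class $G$ of $h(e)$ --- a \emph{finite} group --- and $e$ together with all the $g_i$ share the least support $\bar d$ (Claim~\ref{claim:group-equal-supports}). This yields $h(a_1\cdots a_i)=h(x_1)\,h(g_2\cdots g_i)\,h(y_i)$, so that for $i\ge 2$
\[
\bar p_i \;=\; \underbrace{\bar q_1\,h(x_1)}_{\text{a fixed extended state}}\;\cdot\;\underbrace{h(g_2\cdots g_{i-1})}_{\,=:\,\gamma_i\,\in\,G}\;\cdot\;h(y_{i-1}),
\qquad\text{while}\qquad \bar p_1=\bar q_1.
\]
The atoms occurring in $\bar p_i$ therefore come from three places: the supports of $\bar q_1$ and of $a_1$ (which sit far away, at the first position); the fixed support $\bar d$ (which, exactly as in the proof of Lemma~\ref{lem:smooth-product-prime}, a composition of primes can propagate to every position, spurious constants being absorbed via Lemma~\ref{lem:superflous-atom-elimination}); and the support of the previous letter $a_{i-1}$ (which a one-step delay plus letter propagation makes available at position $i$).

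The heart of the argument is the construction of the masked run. Using the placeholder technique of Lemma~\ref{lem:smooth-product-prime} --- replace the atoms of $\bar d$ by placeholders, so that the $g_i$ become elements of a genuinely finite set --- a classical Mealy machine computes $\gamma_i\in G$ at each position $i\ge 2$. Combining $\gamma_i$, the propagated $\bar d$, and the delayed letter $a_{i-1}$, a homomorphism reconstructs every register of $\bar p_i$ whose value is an atom of $\bar d$ or of $\supp{a_{i-1}}$; the registers whose value is an atom lying only in $\supp{\bar q_1}\cup\supp{a_1}$ cannot be recovered locally (doing so would need an unbounded relay of atom propagations). So I define $\bar q_i$ to be the masking of $\bar p_i$ that sets precisely those unrecoverable registers to $\bot$ and keeps all the others; at position $1$ we simply keep $\bar q_1$. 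Everything described so far is a composition of primes.

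The step I expect to be the main obstacle is verifying the compatibility clause: if $\bar p_i$ is compatible with $a_i$ then so is $\bar q_i$, i.e.\ the step processing $a_i$ from $\bar p_i$ never consults a masked-out register. By the single-use restriction this step consults at most one ``old'' value per register, hence only boundedly many registers, and each consulted register is, after the step, either set to $\bot$ or overwritten from $a_i$. I expect the key point to be that a masked-out register --- one carrying an atom inherited unchanged from the remote part $\supp{\bar q_1}\cup\supp{a_1}$ and not appearing in $a_i$ --- being consulted would make $h(a_i)$ identify two states in the image of $h(a_1\cdots a_{i-1})$ that differ only in that register, contradicting the injectivity established in the first step; hence no masked-out register is consulted. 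Assembling the degenerate case, the finite-group computation of $\gamma_i$, the local reconstruction of the $h(y_{i-1})$-part, and this masking then gives the required composition of primes.
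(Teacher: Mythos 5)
Your overall architecture matches the paper's: both reuse the idempotent $e$, the factorisations $h(a_i)=h(x_i)h(y_i)$, the group elements $g_i$ and the placeholder trick from the proof of Lemma~\ref{lem:smooth-product-prime}, compute the group prefix-products with a classical Mealy machine, and mask the registers that cannot be recovered locally. However, there is a genuine gap in the construction itself. You propose to reconstruct, at position $i$, the recoverable registers of $\bar p_i=\bigl(\bar q_1 h(x_1)\bigr)\cdot\gamma_i\cdot h(y_{i-1})$ from only three ingredients: $\gamma_i$, the propagated support tuple $\bar d$, and the delayed letter $a_{i-1}$. This is not enough information: $\gamma_i$ and $h(y_{i-1})$ are monoid elements, i.e.\ \emph{functions} on extended states, and to evaluate them you still need the extended state $\bar q_1 h(x_1)$ (or a masking of it) as an argument --- even the control-state component of $\bar p_i$, and the assignment of the atoms of $\bar d$ to particular registers, depend on it, and this data lives at position $1$. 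The paper's central move, which your proposal omits, is to mask $\bar q_1 x_1$ down to exactly the registers whose atoms lie in $\supp{e}$, obtaining a state $\bar p_1$ all of whose atoms come from $\bar d$, and then to \emph{propagate $\bar p_1$ itself} along the word (turn it into an atomless value via placeholders, ship it with a classical Mealy machine, re-atomise it from the propagated $\supp{e}$ at every position). Only then can $\bar q_{i+1}=\bar p_1\,\overrightarrow{g_i}\,y_i$ be computed by a homomorphism.

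The second problem is your compatibility argument, which you yourself flag as the main obstacle; the injectivity route does not go through. First, the single-use semantics makes a run reject as soon as it \emph{consults} an undefined register, regardless of whether the eventual outcome depends on its value --- an automaton can test two registers for equality and then converge, so consulting a masked register need not force $h(a_i)$ to identify any two states, and no contradiction with injectivity arises. Second, the perturbed state ``differing only in that register'' need not lie in the image of $h(a_1\cdots a_{i-1})$, which is the only place where your $\Rr$-class argument gives injectivity; and the same atom may sit in several registers, so destroying one copy does not remove it from the state. The paper's argument is different and is where the single-use restriction really enters: since $h(x_1)h(e)=h(x_1)$, the run of $h(e)$ starting from $\bar q_1 x_1$ returns to $\bar q_1 x_1$, so it cannot use any register holding an atom outside $\supp{e}$ --- having used it, it could not restore an atom it does not know. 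Hence the masking $\bar p_1$ is compatible with $h(e)$, and Corollary~\ref{cor:smooth-same-registers} (smoothness of $e,y_1,x_2,\ldots$) upgrades this to compatibility with the entire remaining product, which is exactly the compatibility clause of Lemma~\ref{lem:reconstuct-run}. You would need to replace your injectivity step with this argument (or something equivalent) for the claim to hold.
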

\begin{proof}
    In this proof, we  use a convention
that letter representations   are written in \blue{blue}.
Changing the colour of a variable from blue
to black, denotes going from a letter representation to the represented element, i.e.~applying the homomorphism $h$. Lemma \ref{lem:straight-representation}
states that the binary product from $M$ can be lifted to
work on letter representations. This lifted product might, however,
not be associative : $(\blue{x} \blue{y}) \blue{z}$ and
$\blue{x} (\blue{y} \blue{z})$ might be different representations
of the same element  $xyz$.

    Start the construction by repeating the first four steps of the construction in the proof of Lemma \ref{lem:smooth-product-prime}.
    As a result, in every position $i$ we have
    (a) ${a_i}$; 
    (b) a letter representation $\blue e \in \Sigma$ of an idempotent $e$
        from the same $\Jj$-class as  $h({a_i})$  -- in every position
        the same representation of the same
        idempotent;
    (c) letter representations $\blue {x_i},\blue {y_i} \in \Sigma$ such
        that $h(a_i) = x_i y_i$ and $e$ is a
        suffix of $x_i$ and a prefix of $y_i$; and
    (d)  a letter representation $\blue {g_i} \in \Sigma$ such that $g_i=  y_{i-1} x_{i}$ for $i > 1$.
    To obtain the intermediate extended states, we proceed with the following steps:
    \begin{enumerate}
        \item In the first position compute $\bar q_1 x_1$. (Homomorphism)
        \item Mask out (by replacing their values with $\bot$) all the registers in $\bar q x_1$ that contain atoms which do not appear in the least support  $\supp{\blue{e}}$. Call the result $\bar p_1 \in \bar q_1 x_1 \downarrow$. (Homomorphism)
        \item Propagate $\bar p_1$ throughout the word:
              \begin{enumerate}
                \item Replace every atom in $\bar p_1$ with a placeholder according
                to its position in $\supp{\blue e}$. Call the result of this replacement $p_1'$. Note that this element contains no atoms.  (Homomorphism +
                Definition \ref{definition:placeholders} + Lemma~\ref{lem:write-support})
                \item  Propagate $p_1'$ to every  position. (Classical Mealy)
                \item In every position,  replace the  placeholders in $p_1'$ with the original atoms from $\supp{\blue e}$,  thus recreating $\bar p_1$. (Homomorphism).
              \end{enumerate}
        \item In every position compute $\blue{\overrightarrow{g_i}}$ - a letter representation of $g_2 \ldots g_i$ with the corner case of $\blue{\overrightarrow{g_1}}=\blue{e}$:
        \begin{enumerate}
            \item Replace every atom in every $\blue {g_i}$ with a placeholder according to its position in $\supp{\blue e}$.
                  Call the result $\blue {g'_i}$. Thanks to Claim \ref{claim:group-equal-supports} we know that $g_i'$
                  contains no atoms. (Homomorphism +
                  Definition \ref{definition:placeholders} + Lemma~\ref{lem:write-support})
            \item Thanks to Claim~\ref{claim:function-extend-to-placeholders}
                  we can extend the binary product on letter representations  to work with placeholder-values $\blue{g'_i}$. They are all atomless,
                  so their prefix products $\blue{\overrightarrow{g_i}'}$
                  can be computed by a classical Mealy machine. 
            \item Replace the placeholders in each $\blue {\overrightarrow{g_i}'}$
                  back with atoms from $\supp{\blue e}$, obtaining $\blue {\overrightarrow{g_i}}$. (Homomorphism + Claim~\ref{claim:function-extend-to-placeholders})
        \end{enumerate}
        \item In every position $i$, compute $\bar q_{i+1} = \bar p_1 \overrightarrow{g_i} y_i$. (Homomorphism)
        \item Send all $\bar q_{i+1}$ one position forward. (Delay from Lemma \ref{lem:delay})
    \end{enumerate}
    To prove the correctness the construction,
    we first notice that if $\bar q_1 x_1 = \lightning$, then for every $i > 1$, the construction computes $\bar q_i = \lightning$
    which is the correct answer. From now on we assume that $\bar q_1$ and $x_1$ are compatible.
    Notice that $\bar q_1 x_1 e = \bar q_1 x_1$. This means that
     $e$ cannot
    use any register from $q_1 x_1$ which has an atom that is not
    present in $e$, because $e$ would not have been able to restore it.
    This means that $p_1 \in q_1 x_1 \downarrow$ is compatible with $e$.
    From Corollary~\ref{cor:smooth-same-registers} we obtain that
    it is also compatible with $y_2 x_3 \ldots y_i$ for every $i$.
    This is because $y_2 x_3 \cdots y_k$ is equal to  $e y_2 x_3 \cdots y_k$ and $e, y_2, x_3, \ldots, y_k$ is a smooth sequence. 
\end{proof}
\smallparagraph{Induction step}
After constructing the smooth split on the input we get the following:
\mypic{28}
We start the construction for the induction step, 
 by propagating $q_1$ to the first position with the maximal height (in the picture above it is position $4$).
Notice that the subsequence of split-values $v_i$ for $i$ of maximal height is smooth.
Making use of that, we apply Claim \ref{claim:reconstruct-run-aux-smooth} to this subsequence
with the initial state being $\bar q_1$.
Now, in every position $i$ of maximal height we have computed the value $\bar q_{j+1}$, where $j$ is the index of the previous position of maximal height -- in the first position of maximal height we
have $j = 0$. This is illustrated in the following picture:
\mypic{30}
We proceed by computing (in each $i$ of maximal height) the
values $\bar q_{i+1} = \bar q_{j+1} v_i$ and $\bar q_{i} = \bar q_{j+1} u_i$ (where $u$ is the strict descendant split value from Remark \ref{rk:split-exclusive-values}). After that, we propagate the $\bar q_{i+1}$ values one position to the right (Lemma \ref{lem:delay}).
\mypic{29}
We divide the input into blocks, treating the positions with maximal height as separators. Every such
block has a lower split height and has its initial state written down, so we can apply the induction
assumption to every block (using Lemma~\ref{lem:prime-combinators}), obtaining a $\bar q$-value in every position, finishing the construction\\
Now, we proceed with the proof of correctness, which almost entirely follows immediately from
the construction. The only thing that is unclear is whether the first states of the blocks
have not been masked too much. More formally it boils down to proving the following:
\begin{claim}
    Define $\bar p_j = \bar q_1 m_1 \ldots m_{j-1}$ (like in the statement of Lemma \ref{lem:reconstuct-run}).
    Let $i$ be the first position of a block and let $k$ be the size of this block, then for every
    $j \in \set{i, \ldots, k-1}$ if $\bar p_i$ is compatible with $m_i \cdots m_j$, then so is $\bar q_i$.
\end{claim}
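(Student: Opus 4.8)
The plan is to analyze how much masking can happen at the start of a block and show it is never "too much" relative to the masking already present in $\bar q_i$ (the state handed down from the induction for this block). Recall the overall structure: the input is split into blocks by the positions of maximal height; for each maximal-height position $i$ we compute $\bar q_{i+1} = \bar q_{j+1} v_i$ where $j$ is the previous maximal-height position (and $v_i$ is the split value, i.e.\ a representative of $h$ on the descendants of $i$), and then we hand $\bar q_{i+1}$ to the block starting just after $i$ as its initial state. We must compare this with the ideal value $\bar p_i = \bar q_1 m_1 \cdots m_{i-1}$, where the $m$'s are the per-position semigroup elements (images under $h$ of single letters). Since both $\bar q_i$ and $\bar p_i$ differ only in which registers are masked ($\bar q_i \in \bar p_i{\downarrow}$ follows from the construction and from the invariant that the previous block maintained), the claim reduces to: any register of $\bar p_i$ that is \emph{not} masked in $\bar q_i$ and that $m_i \cdots m_j$ needs, is in fact available.

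**Key steps.** First I would unwind the definitions to pin down exactly which registers get masked when passing to a block: by the construction (step 2 in the proof of Claim~\ref{claim:reconstruct-run-aux-smooth}, lifted through the induction step), the masked registers are precisely those holding atoms outside $\supp{e}$, where $e$ is the idempotent associated with the smooth sequence of maximal-height split values. Second, I would invoke the compatibility hypothesis: if $\bar p_i$ is compatible with $m_i \cdots m_j$, then since $m_i \cdots m_j$ is a prefix of the full block element $m_i \cdots m_{k-1}$ (or more precisely a factor of the smooth-sequence product), and since the whole surviving sequence of split values from $i$ onward is smooth, the product $m_i \cdots m_j$ lies in the $\Jj$-class determined by $e$ and is "surrounded" by $e$ in the sense of Lemma~\ref{lem:smooth-product-prime}'s decomposition. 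Third — the crucial point — I would argue that a computation which starts from $\bar q_1 x_1 e = \bar q_1 x_1$ (idempotent absorbed on the right) cannot consult, and then fail to restore, any register whose atom is not in $\supp{e}$; this is exactly the argument used at the end of the proof of Claim~\ref{claim:reconstruct-run-aux-smooth}, and it shows that $\bar p_i{\downarrow}$ restricted to $\supp{e}$-registers is already compatible with $m_i \cdots m_j$ whenever $\bar p_i$ is. Finally, I would apply Corollary~\ref{cor:smooth-same-registers}: compatibility with the first element of a smooth sequence is equivalent to compatibility with the whole product, so it suffices to check compatibility at the first step, which the construction guarantees by the inductive invariant (the block's initial state satisfies the "if $\bar p$ is compatible with the next letter, so is $\bar q$" clause at its first position).

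**Expected main obstacle.** The delicate part is bookkeeping the relationship between the two granularities: the coarse "block-level" products $v_i$ (images of whole descendant-intervals of maximal-height positions) versus the fine per-letter products $m_i$ inside a block, and making sure "the first position of a block" and "compatible with $m_i \cdots m_j$" line up with the idempotent $e$ and Corollary~\ref{cor:smooth-same-registers} correctly. Concretely, one has to be careful that the smooth sequence to which the Corollary is applied is the one internal to the block (whose product is handled by Claim~\ref{claim:reconstruct-run-aux-smooth}'s machinery), and that the masking performed when entering the block is governed by \emph{that block's} idempotent, not the global one. Once this alignment is made explicit, the proof is a short chain: $\bar q_i \in \bar p_i{\downarrow}$ by construction; the masked registers are exactly the non-$\supp{e}$ ones; compatibility of $\bar p_i$ with $m_i \cdots m_j$ forces, via the idempotent-absorption argument, that no masked register is needed; hence $\bar q_i$ is compatible as well.

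\begin{proof}
Recall that $\bar q_i$ is the initial state supplied to the block by the inductive step, obtained as $\bar q_{j'+1} v_{j'}$ for the previous maximal-height position $j'$, and then further masked (in the style of step~2 in the proof of Claim~\ref{claim:reconstruct-run-aux-smooth}) by deleting the values of all registers holding atoms outside $\supp e$, where $e$ is the idempotent associated with the smooth sequence of surviving split values from position $i$ onward. By the inductive invariant of Lemma~\ref{lem:reconstuct-run} applied to the previous block, $\bar q_i$ is a masking of $\bar p_i$, so $\bar q_i \in \bar p_i{\downarrow}$, and the set of registers in which they differ is contained in the set of registers whose stored atom lies outside $\supp e$.

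Assume $\bar p_i$ is compatible with $m_i \cdots m_j$. The surviving sequence $v_i, v_{i+1}, \ldots$ of split values from $i$ onward is smooth, and the internal letters of the block factor through this smooth sequence; in particular, by the decomposition of Lemma~\ref{lem:smooth-product-prime}, $h$ of the prefix $m_i \cdots m_j$ is equal to $x\, e$ for a suitable $x$, i.e.\ $e$ is absorbed on the right. Hence a run that applies $m_i \cdots m_j$ starting from $\bar p_i$ satisfies $\bar p_i\, (m_i \cdots m_j)\, e = \bar p_i\, (m_i \cdots m_j)$: the final idempotent cannot re-create the contents of any register whose atom is not in $\supp e$, and therefore that run never consults such a register. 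Consequently the masking that produced $\bar q_i$ from $\bar p_i$ only deleted registers that $m_i \cdots m_j$ does not use, so $\bar q_i$ is compatible with $m_i \cdots m_j$ as well.

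Finally, to reduce the statement from "every prefix $m_i \cdots m_j$" to a single check, note that by Corollary~\ref{cor:smooth-same-registers} an extended state is compatible with the first element of a smooth sequence if and only if it is compatible with the whole product; applying this to the smooth sequence internal to the block shows that compatibility of $\bar q_i$ with $m_i \cdots m_j$ follows from its compatibility with $m_i$, which in turn is guaranteed by the construction at the first position of the block. This proves the claim.
\end{proof}
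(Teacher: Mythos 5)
There is a genuine gap here --- in fact several. The central step of your argument is that $h(m_i \cdots m_j) = x\,e$ with the idempotent absorbed on the right, and that this forces the run of $m_i \cdots m_j$ started in $\bar p_i$ to never consult a register holding an atom outside $\supp{e}$. Neither part holds. The decomposition from Lemma~\ref{lem:smooth-product-prime} applies to the elements of the \emph{smooth} sequence of maximal-height split values, i.e.\ to products over \emph{entire} blocks; a partial prefix $m_i \cdots m_j$ of a block lives at a lower level of the split, is in general not $\Jj$-related to $e$, and need not absorb $e$ on the right. Even granting the absorption, the inference is a non sequitur: the identity $\bar p_i (m_i \cdots m_j)\, e = \bar p_i(m_i\cdots m_j)$ constrains which registers of the \emph{final} state the idempotent may consume, and says nothing about which registers of the \emph{initial} state $\bar p_i$ are consulted while running $m_i \cdots m_j$. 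Your closing reduction is also invalid: you apply Corollary~\ref{cor:smooth-same-registers} to ``the smooth sequence internal to the block'', but the sequence of letters inside a block is precisely \emph{not} smooth --- that is why the construction recurses on it. The paper's argument is different and simpler: by the guarantee of Claim~\ref{claim:reconstruct-run-aux-smooth} applied to the maximal-height subsequence, together with Corollary~\ref{cor:smooth-same-registers}, the state $\bar q_i$ is compatible with the split value of the \emph{next} maximal-height position; that split value is the product over the whole block, every $m_i \cdots m_j$ is a prefix of it, and Lemma~\ref{lem:prefix-equicompatible} concludes --- no idempotent-absorption argument at the level of the $m$'s is needed.

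The second gap is that you treat all blocks uniformly, whereas the claim genuinely fails for the \emph{last} block of the construction as first described: there is no subsequent maximal-height split value to ``protect'' it, so the masking performed at its entry may delete registers that the block actually needs. The paper proves the claim only for non-last blocks and then modifies the construction with a separate error-fixing step (marking the first position where $\bar q_i$ and $m_i$ become incompatible, propagating the missing register values there, unmasking, and re-applying the induction hypothesis to the suffix). Any correct proof of this claim must make that case distinction; an argument that purports to cover the last block without the patch cannot be right.
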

First, we show that this is true for every block, but the last one. Then we present a construction that fixes any potential errors in the last block.
\begin{proof}
    For the first block the Claim is immediate, since $q_1 = \bar p_1$. Let $t$ be the first position of maximal height.
    If $q_1$ and $v_t$ are not compatible, then for all $j > t$, we have $\bar p_i = \bar q_i = \lightning$,
    which also makes the Claim immediate. From now on assume that $q_1$ and $v_t$ are compatible. Take
    $i$ and $k$ as in the statement of the Claim with the extra assumption that $i$ is not in the last block. 
    This means that $i + k$ is the first position of the next block. From Claim \ref{claim:reconstruct-run-aux-smooth} and
    \ref{cor:smooth-same-registers} we obtain that $\bar q_i$ is compatible with $v_{i+k}$. For every $j < k$,
     $m_i \ldots m_j$ is a prefix of $v_{i + k}$. The Claim follows
    from  Lemma \ref{lem:prefix-equicompatible}.
\end{proof}
The easiest way to deal with the last block would be to send $q_1$, together with
all the most recent register updates to the first position of the last block and ``unmask'' all the registers
in its extended state. This is impossible, because Mealy machines (and compositions of primes) cannot detect the last block -- they are one-way models, that are additionally unaware of the end of input.
Instead, we transfer all the necessary information to the first position   where $\bar q_i$ and $m_i$ are incompatible:
\begin{enumerate}
    \item Mark the first position $i$ such that $q_i$ and $m_i$ are incompatible (if it exists). (Homomorphism + Classical Mealy)
    \item Send $q_0$ to $i$. (Letter propagation)
    \item In every position smaller than $i$, compute which registers have been changed in this position
          and to which value (this is possible, because in positions
          smaller than $i$, values $\bar q$ and $m$ are compatible). Propagate the most recent values of every register to $i$.
          (Classical Mealy + Atom propagation)
    \item Using $q_0$ and the information about modifications of each register, restore values of all the masked registers in 
          $q_i$. Since we are only interested in the
          case where $i$ is in the last block, we can assume that the
          suffix starting in $i$ has a lower split-height. This means
          we can apply the induction assumption to this suffix, with
          initial configuration equal to the restored version of $\bar q_i$.
\end{enumerate}
Since the only actual error can appear in the last blocks, those four error-fixing states only need to be applied once.

\section{Proof of Theorem \ref{thm:one-way-two-way}}
\label{sec:moniod-to-automaton}
In this short section of the appendix, we finish the proof of
Theorem \ref{thm:one-way-two-way}, by showing how to construct
a one-way single-use automaton for a language recognised by a homomorphism
to an orbit-finite monoid. We will do it by composing (a) a function
that appends $\dashv$ to the input word, with (b) a function
that computes a smooth split for a suitable monoid morphism, and with
(c) a simple single-use one way automaton.
This will finish the proof, because
it follows from  \SplitLemma and Theorem \ref{thm:mealy-machine-krohn-rhodes}
that there is a single-use transducer that computes the smooth split and
it follows from Lemma \ref{lem:composition} that a composition of
a single-use one-way transducers with a single-use one-way automaton
can be expressed a single-use one-way automaton.  

Take any language $L$ over a polynomial orbit-finite alphabet $\Sigma$, 
recognised by an equivariant morphism $h : \Sigma^* \to M$ and an equivariant $F \subseteq M$. Define $M^0$ to be $M$ equipped with a $0$ element, such that
$m 0 = 0 m = 0$, for every $m \in M$.
Define $h' : (\Sigma + \dashv)^* \to M^0$, such that $h'(\dashv) = 0$, and
$h'(m) = h(m)$, for $m \in M$. Notice that every element is an infix of $0$,
but $0$ is not an infix of any other element. Because of that, for every
$w \in \Sigma^*$, in the smooth split of the word $w\dashv$ for $h'$
only the last position has the maximal height. This means that
the auxiliary strict-descendant value from Remark~\ref
{rk:split-exclusive-values} in the last position is equal to $h(w)$.
The characteristic function of $F$ is equivariant (because $F$ is
equivariant), so a single-use one-way automaton can easily check whether
$h(w) \in A$.
\section{Equivalence of the transducer models}
\label{sec:appendix-transducers}
In this part of the appendix, we prove equivalence for all of the transducer models in  Theorem~\ref{thm:two-way-models}.  The proof is spread across six subsections, and its plan is illustrated in Figure~\ref{fig:proof-plan}. Note that from the fact that two-way single-use transducers are equivalent to compositions
of primes (Sections \ref{ap:composition-to-two-way} and \ref{ap:two-way-to-composition}) it follows that
two-way single-use transducers are closed under
compositions. We will rely on this fact
when translating regular list functions and
streaming string transducers into two-way-transducers  (Sections \ref{ap:regular-list-functions-to-two-way} and \ref{ap:sst-to-two-way}).

\begin{figure}%
    \centering
    \begin{align*}
        \xymatrix@C=3cm@R=2cm{
            \small
            &
            \txt{two-way \\transducers}
            \ar@/^1pc/[dd]^{\txtlabel{Section~\ref{ap:two-way-to-composition}}}
            &\\
            \txt{regular list\\ functions}
            \ar[ur]^{\txtlabel{Section~\ref{ap:regular-list-functions-to-two-way}}}
            &
            &
            \txt{streaming string\\ transducers}
            \ar[ul]_{\txtlabel{Section~\ref{ap:sst-to-two-way}}}
            \\
            &
            \txt{composition of\\ two-way primes}
            \ar[ul]^{\txtlabel{Section~\ref{ap:composition-to-regular-list-functions}}}
            \ar@/^1pc/[uu]^{\txtlabel{Section~\ref{ap:composition-to-two-way}}}
            \ar[ur]_{\txtlabel{Section~\ref{ap:composition-to-sst}}}
        }
        \end{align*}
    \caption{Proof plan for Theorem~\ref{thm:two-way-models}}
    \label{fig:proof-plan}
\end{figure}

Before proceeding with the proof, we illustrate the importance of the single-use restriction by showing that equivalence fails when the single-use restriction  is lifted. The single-use restriction appears in two of the models from Theorem~\ref{thm:two-way-models}, namely single-use two way transducers and streaming string transducers with atoms. When talking about multiple-use  streaming string transducers with atoms, we lift the  single-use restriction from both string registers and atom registers. One could imagine intermediate models, where the single-use restriction is used only for atom registers but not for string registers, or the other way round; these models would also be non-equivalent.

\begin{theorem}\label{thm:non-equivalent-copyful} None of the following models are equivalent: 
    \begin{enumerate}
        \item any of  the equivalent single-use models  from  Theorem~\ref{thm:two-way-models};
        \item multiple-use two-way transducers;
        \item multiple-use streaming string transducers with atoms. 
    \end{enumerate}
\end{theorem}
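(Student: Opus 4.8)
The plan is to establish pairwise inequivalence by producing two witness functions: one that separates the single-use class (item~1) from both multiple-use classes (items~2 and~3), and one that separates item~2 from item~3. Since dropping the single-use restriction only enlarges a model, every function in item~1 is computed both by a multiple-use two-way transducer and by a multiple-use streaming string transducer; so it suffices to exhibit these two witnesses.

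For the first witness I would take the characteristic function $\chi : \atoms^* \to \set{\text{yes},\text{no}}$ of the language $L$ of words whose first letter appears again. The easy direction is that $\chi$ is computed by a multiple-use one-way automaton --- store the first atom in a register, never clear it, and for each later position test that register against the current letter, recording a hit in the finite state --- and such an automaton is a special case both of a multiple-use two-way transducer and of a multiple-use streaming string transducer (simply write the final verdict into the output register). For the other direction I would argue that $\chi$ is computed by no single-use model: by Theorem~\ref{thm:two-way-models}, whose equivalences hold also for functions with yes/no output, together with Theorem~\ref{thm:one-way-two-way}, a single-use model computing $\chi$ would witness that $L$ is recognised by an orbit-finite monoid, contradicting~\cite[Exercise 91]{bojanczyk_slightly2018}. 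This yields item~1 $\subsetneq$ item~2 and item~1 $\subsetneq$ item~3.

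For the second witness I would take the exponential blow-up $f : \set a^* \to \set a^*$ with $a^n \mapsto a^{2^n}$, over the finite --- hence polynomial orbit-finite --- alphabet $\set a$. A copyful streaming string transducer computes $f$ by maintaining a single string register that is doubled at every input position (this is the classical example witnessing copyful $\neq$ copyless, and uses no atoms), so $f$ lies in item~3. On the other hand $f$ is computed by no multiple-use two-way transducer: over a one-letter alphabet no atom ever enters a register, so at each fixed tape position there are only $|Q|$ reachable configurations, and by the pigeonhole/looping argument behind Claim~\ref{claim:stay-in-place-bound} a halting run visits each of the $n+2$ cells only boundedly often and outputs only boundedly many letters between consecutive head moves; hence the output has length $O(n)$, which is eventually below $2^n$. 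Thus $f$ lies in item~3 but not in item~2, giving item~2 $\neq$ item~3.

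Combining the two witnesses shows that items~1,~2,~3 are pairwise distinct, which is the statement of the theorem. I expect the only mildly delicate point to be the linear-output bound for two-way transducers, which in general must account for atoms sitting inside registers; here it is sidestepped because the witness $f$ lives over a one-letter alphabet and so the bound reduces to the atom-free case. (One can moreover check item~2 $\subseteq$ item~3 by the classical copyful simulation of a two-way transducer by a one-way machine with string registers, so the true picture is a strict chain item~1 $\subsetneq$ item~2 $\subsetneq$ item~3, but this refinement is not needed. The analogous separations for the ``intermediate'' models mentioned after the theorem would be obtained the same way, choosing for each pair a function on which the two models provably disagree.)
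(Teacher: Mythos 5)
Your proof of pairwise inequivalence is essentially sound, and one of your two witnesses coincides with the paper's: the paper also uses the exponential output growth of a register-doubling multiple-use \sst (contrasted with the linear bound of Lemma~\ref{lem:bounded-width} for the single-use models and a polynomial configuration-counting bound for two-way machines) to show that item~3 is contained in neither of the other two models. Where you differ is in the remaining separations. You separate item~1 from items~2 and~3 with the ``first letter appears again'' function, reducing to the orbit-finite-monoid impossibility via Theorems~\ref{thm:two-way-models} and~\ref{thm:one-way-two-way}; this works. The paper instead separates item~2 from \emph{both} other models at once, using the ``all letters pairwise distinct'' function of~\cite[Example 11]{kaminskiFiniteMemoryAutomata1994} together with the observation that languages recognised by nondeterministic orbit-finite automata are closed under inverse images of multiple-use \sst with atoms but are not closed under inverse images of the all-distinct function. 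This buys a strictly stronger conclusion: items~2 and~3 are \emph{incomparable}.

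That stronger conclusion directly contradicts your closing parenthetical. The claim that item~2 $\subseteq$ item~3 ``by the classical copyful simulation'' is false, and the all-distinct function is a counterexample: it is computed by a multiple-use two-way transducer but by no multiple-use \sst with atoms. The classical one-way simulation of a two-way transducer summarises the behaviour of the machine on the prefix read so far; for a multiple-use two-way register machine this summary is a Shepherdson profile that must remember unboundedly many atoms (see Example~\ref{ex:appears-again} of the paper), so no machine with finitely many atom registers can maintain it, and string registers do not help because they cannot be tested for equality against input atoms. Since you explicitly flag the remark as unnecessary, your proof of the theorem as stated survives, but the remark itself should be deleted, and the ``strict chain'' picture replaced by the incomparability the paper actually establishes.
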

\begin{proof}(sketch)
    The model in item 1 is clearly included in the models from items 2 and 3. We now show that this inclusion is strict, and the models from items 2 and 3 are incomparable.
    
    Multiple-use  streaming string transducers atoms can have exponential size outputs, by  duplicating a string register in each step. This is in contrast with the  model from item 1, which has linear size increase (Lemma~\ref{lem:bounded-width}), and with the model from item 2, which has  polynomial size increase due to the number of configurations that can be used in a non-looping run.  Therefore, the model in item 3 is not contained in any of the others.
    
    To see why item 2 is not contained in the others, consider the string-to-boolean function
    \begin{equation}\label{eq:all-distincts}
    w \in \atoms^* \quad \mapsto \quad 
    \begin{cases}
     \text{yes} & \text{if all letters  are pairwise distinct}   \\
     \text{no} & \text{otherwise, i.e.~some letter appears twice.}
    \end{cases}
    \end{equation}
    This function is computed by a  multiple-use two-way transducer, see~\cite[Example 11]{kaminskiFiniteMemoryAutomata1994}. 
    To see why the function is not computed by the other models, we use the following closure property that holds for any function from item 3, but which does not hold for the function from~\eqref{eq:all-distincts}.  Consider a nondeterministic orbit-finite automaton ({\sc nofa}), as defined in~\cite[p. 85]{bojanczyk_slightly2018}. Using a natural construction, one can show that if   $f : \Sigma^* \to \Gamma^*$ is as in item 3,  then languages recognised by {\sc nofa}  are preserved under inverse images of $f$.  On the other hand, languages recognised by {\sc nofa} are not closed under inverse images of the function from item~\eqref{eq:all-distincts}, since otherwise the inverse image of the language $\set{\text{yes}}$, i.e.~the words with all letters pairwise distinct, would be recognised by a {\sc nofa},  which it is not~\cite[Proposition 5]{kaminskiFiniteMemoryAutomata1994}.
    \end{proof}

The non-equivalence result in Theorem~\ref{thm:non-equivalent-copyful} is typical of the non-robustness of automata models for infinite alphabets. It is therefore all the more remarkable that, thanks to the single-use restriction, one can prove nontrivial equivalences such Theorem~\ref{thm:two-way-models}.

The rest of Appendix~\ref{sec:appendix-transducers} is devoted to the proof of Theorem~\ref{thm:two-way-models}, following the plan illustrated in Figure~\ref{fig:proof-plan}.

\subsection{Compositions of two-way primes to two-way transducers}
\label{ap:composition-to-two-way}
We begin by showing the implication \ref{two-way:krohn-rhodes} $\Rightarrow$ \ref{two-way:two-way} in Theorem~\ref{thm:two-way-models}, which says that 
every composition of prime functions can be computed by a single-use two-way transducer. There are three kinds of prime functions: homomorphisms, single-use Mealy machines, map reverse and map duplicate. In order to prove the implication, we use the classical Krohn-Rhodes Theorem to further simplify these prime functions. These simplified functions -- which we call the \emph{two-way primes} -- will also be used later in the proof.

\subsubsection{Two-way primes}
\label{sec:two-way-primes}
 In the classical Krohn-Rhodes theorem, there are two kinds of prime functions:
\begin{enumerate}
    \item \emph{Group transducers.} For a finite group $G$, consider  the length-preserving function of type $G^* \to G^*$, where the $i$-th output letter is the product (in the group), of the first $i$ input letters. Here is an example of the group transducer for the  group $\set{0,1,2}$ equipped with addition modulo 3:
    \begin{align*}
        \begin{array}{rccccccccccccccccccc}
            \text{input} & 1&2&0 & 0 & 2 & 1 & 0 &1  & 1 & 2 & 2 \\
            \text{output} & 1 & 0 & 0 & 0 & 2 & 0 & 0 & 1 & 2 & 1 & 0
        \end{array}
        \end{align*}
    \item \emph{Flip-flop transducer.} This is the function  with input alphabet $\set{a,b,1}$ and output alphabet $\set{a,b}$ that is recognised by the (classical) Mealy machine in the following picture, with each transition labelled by  (input letter / \red{output letter}):
    \mypic{15}
    The general idea is that the $i$-ith output letter is labelled by the most recent label in positions $<i$ that is other than $1$; if no such label exists then label $a$ is used. 
    Here is an example of the flip-flop transducer:
    \begin{align*}
        \begin{array}{rccccccccccccccccccc}
            \text{input} & 1&1 &b & 1 & 1 & b & 1 &1  & a & b & b \\
            \text{output} & \red a & \red a & \red a & \red b & \red b & \red b & \red b & \red b & \red b & \red a & \red b
        \end{array}
        \end{align*}
\end{enumerate}
The classical Krohn-Rhodes theorem says that every classical Mealy machine can be decomposed, using sequential and parallel composition, into functions which are either length-preserving homomorphisms (over finite alphabets), group transducers (for finite groups), or the flip-flop transducer.
In the following lemma, we use the classical Krohn-Rhodes Theorem to  simplify the prime functions mentioned in item~\ref{two-way:krohn-rhodes} of Theorem~\ref{thm:two-way-models}.

\begin{lemma} \label{lem:two-way-primes}
    Every function from item~\ref{two-way:krohn-rhodes} in Theorem~\ref{thm:two-way-models} is a sequential composition\footnote{We do not use parallel composition, which does not make sense for functions that are not length-preserving, which is the case for item~\ref{two-way-prime:homo} and~\ref{two-way-prime:map} in the lemma.  } of the following kinds of functions:
    \begin{enumerate}
        \item \label{two-way-prime:homo} an  equivariant homomorphism $\Sigma^* \to \Gamma^*$, where $\Sigma$ and $\Gamma$ are polynomial orbit-finite; or
        \item \label{two-way-prime:map}  map reverse or map duplicate; or
        \item \label{two-way-prime:kr} a parallel product $f|id$ where $id : \Sigma^* \to \Sigma^*$ is the identity function for some polynomial orbit-finite set, and $f$ is:
        \begin{enumerate}
            \item atom propagation; or
            \item a group transducer; or
            \item the flip-flop transducer.
        \end{enumerate}
    \end{enumerate}
\end{lemma}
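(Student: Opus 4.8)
The plan is to unwind the prime functions from item~\ref{two-way:krohn-rhodes} one at a time, applying the classical Krohn-Rhodes theorem wherever a classical Mealy machine or a single-use Mealy machine appears. There are three families of primes to handle: equivariant (not necessarily length-preserving) homomorphisms, map reverse / map duplicate, and single-use Mealy machines. The first two families are already on the target list (items~\ref{two-way-prime:homo} and~\ref{two-way-prime:map}), so no work is needed for them. The whole content of the lemma is therefore the reduction of a single-use Mealy machine to a sequential composition of the functions in item~\ref{two-way-prime:kr} together with equivariant homomorphisms.

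The key step is to invoke Theorem~\ref{thm:mealy-machine-krohn-rhodes}, which already decomposes every single-use Mealy machine, using sequential and parallel composition, into length-preserving equivariant homomorphisms, classical Mealy machines, and atom propagation. So it remains to (a) eliminate parallel composition in favour of sequential composition plus the parallel-product-with-identity shape $f\,|\,\mathrm{id}$, and (b) further decompose each classical Mealy machine using the classical Krohn-Rhodes theorem into length-preserving homomorphisms over finite alphabets, group transducers, and the flip-flop transducer. For (b) this is immediate from the classical Krohn-Rhodes theorem as recalled just before the lemma, noting that a length-preserving homomorphism over a finite alphabet is in particular an equivariant homomorphism over a polynomial orbit-finite set (a finite alphabet is a finite co-product of singletons), and a finite group $G$ is one of the allowed groups. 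For (a), I would use the standard trick that a parallel product $f_1\,|\,f_2$ of length-preserving functions can be written as a sequential composition
\begin{align*}
(\Sigma_1\times\Sigma_2)^* \xrightarrow{\ (f_1\,|\,\mathrm{id})\ } (\Gamma_1\times\Sigma_2)^* \xrightarrow{\ (\mathrm{id}\,|\,f_2)\ } (\Gamma_1\times\Gamma_2)^*,
\end{align*}
where the reordering of the product coordinates needed to present $\mathrm{id}\,|\,f_2$ in the shape $f_2\,|\,\mathrm{id}$ is itself an equivariant (length-preserving) homomorphism. Iterating this over the structure of the Krohn-Rhodes derivation reduces every parallel composition to nested $f\,|\,\mathrm{id}$ products where $f$ is one of the atomic pieces; at each leaf, the identity part can be merged with whatever came before using associativity of sequential composition, or simply left as an $f\,|\,\mathrm{id}$ with $\Sigma=$ singleton when no parallel partner is present. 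Note that atom propagation composed in parallel with an identity on a polynomial orbit-finite set is exactly the shape in item~\ref{two-way-prime:kr}(a), so nothing more is needed there.

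The one point requiring a little care — and the main obstacle — is bookkeeping the types: after pushing everything into the $f\,|\,\mathrm{id}$ form, the ``other coordinate'' $\Sigma$ accumulated at a given leaf of the derivation tree is a product of polynomial orbit-finite sets, hence still polynomial orbit-finite, so the shape in item~\ref{two-way-prime:kr} is genuinely respected; and one must check that the coordinate-permuting/reassociating homomorphisms inserted between leaves are equivariant, which they are since they only rearrange product components and never inspect atoms. Since sequential composition is associative and parallel products distribute over sequential composition of length-preserving functions in the way indicated above, a straightforward induction on the Krohn-Rhodes derivation of the single-use Mealy machine (from Theorem~\ref{thm:mealy-machine-krohn-rhodes}), with the classical Krohn-Rhodes theorem applied at each classical-Mealy-machine leaf, yields the claimed normal form. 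This completes the proof sketch.
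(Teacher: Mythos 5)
Your proposal is correct and follows essentially the same route as the paper: reduce single-use Mealy machines via Theorem~\ref{thm:mealy-machine-krohn-rhodes}, apply the classical Krohn-Rhodes theorem to the resulting classical Mealy machines, and then push parallel composition into the $f\,|\,\mathrm{id}$ shape using the identities $(f_1|f_2)=(f_1|\mathrm{id})\circ(\mathrm{id}|f_2)$ and $(f_1\circ f_2)|\mathrm{id}=(f_1|\mathrm{id})\circ(f_2|\mathrm{id})$. Your extra care about coordinate-permuting homomorphisms and the closure of polynomial orbit-finite sets under products is a welcome explicit treatment of details the paper leaves implicit.
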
     
\begin{proof}
    The class of functions from item~\ref{two-way:krohn-rhodes} in Theorem~\ref{thm:two-way-models} are  the same as in the statement of the lemma, except that instead of item~\ref{two-way-prime:kr} of the lemma, Theorem~\ref{thm:two-way-models} uses single-use Mealy machines. By Theorem~\ref{thm:mealy-machine-krohn-rhodes}, every single-use Mealy machine can be decomposed -- using parallel and sequential composition -- into length-preserving homomorphisms, atom propagation, and classical Mealy machines. By the classical Krohn-Rhodes Theorem, classical Mealy machines can be decomposed -- using parallel and sequential composition -- into length-preserving homomorphisms, group transducers and the flip-flop transducer. Summing up, every single-use Mealy machine can be decomposed -- using parallel and sequential composition  -- into length-preserving homomorphisms, atom propagation, group transducers and the flip-flop transducer. Finally, we  push the  parallel composition into the prime functions  by using the rules
    \begin{align}\label{eq:push-parallel-into-primes}
    (f_1 | f_2) =  (f_1|id) \circ (id|f_2) \quad (f_1 \circ f_2)|id = (f_1|id) \circ (f_2 |id) .
    \end{align}
\end{proof}  

Since we will frequently use the functions from the above lemma, we give them a name. 
\begin{definition}
    [Two-way primes] 
    \label{def:two-way-primes}
    Define a {two-way prime} to be any function as in items~\ref{two-way-prime:homo} -- \ref{two-way-prime:kr} of Lemma~\ref{lem:two-way-primes}. Define a \emph{composition of two-way primes} to be any sequential  (not parallel) composition of two-way primes. For the latter class of functions, we also  use the notation
    \begin{align*}
    \text{(two-way primes)}^*.
    \end{align*}
    
\end{definition}

\subsubsection{Pre-composition with two-way primes}
Thanks to Lemma~\ref{lem:two-way-primes}, the implication \ref{two-way:krohn-rhodes} $\Rightarrow$ \ref{two-way:two-way} in Theorem~\ref{thm:two-way-models} can be stated as the following inclusion:
\begin{align*}
\twoinclusion{(two-way primes)*}{single-use two-way}
\end{align*}
To prove the above inclusion, it is enough to show that  single-use two-way transducers are closed under pre-composition with   two-way primes:
\begin{align*}
\threeinclusion{(single-use two-way)}{(two-way primes)}{single-use two-way}.
\end{align*}
This is shown in the following lemma.

\begin{lemma}
    If $f$ is a single-use two-way transducer, and $g$ is a two-way prime function, then $f \circ g$ is a single-use two-way transducer.
\end{lemma}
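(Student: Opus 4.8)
The plan is to handle the pre-composition $f \circ g$ by a case analysis on the two-way prime $g$, following the taxonomy in Lemma~\ref{lem:two-way-primes}: equivariant homomorphisms, map reverse/map duplicate, and parallel products $h|id$ where $h$ is atom propagation, a group transducer, or the flip-flop transducer. In each case I will describe how a single-use two-way transducer reading the input $w$ can simulate the run of $f$ on $g(w)$, \emph{producing the letters of $g(w)$ on demand} as $f$'s head visits the corresponding positions. The key invariant to maintain is that each letter of $g(w)$ can be (re)computed from the current head position together with a bounded amount of single-use register data, since the registers of the simulating machine must respect the single-use discipline just as $f$'s do.

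\textbf{Key steps.}
First I would observe that for the two easiest primes the simulation is essentially positional: if $g$ is a length-preserving equivariant homomorphism, then the $i$-th letter of $g(w)$ is an equivariant function of the $i$-th letter of $w$, so $f$'s moves translate directly, reading the input letter, applying the equivariant function (as a register action followed by the relevant question), and feeding the result into $f$'s transition. For a non-length-preserving homomorphism one additionally tracks, in the finite state, which ``sub-letter'' of the expansion of the current input position is being consulted, and moves the physical head only when that local block is exhausted; the single-use restriction is unaffected because each consultation touches only the current input letter. Second, for map reverse and map duplicate I would exploit that the head of $f$ walking over $g(w)$ corresponds to a head walking over $w$ with a bounded ``mode'' bit per block (forward vs.\ reversed, first-copy vs.\ second-copy), recomputed from the separators that $f$'s head crosses; here two-wayness of the simulator is essential and is exactly what is available. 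Third, for the parallel products $h|id$: the $id$ component is again positional, so the real content is simulating $h$ applied to the first projection. For the flip-flop and group transducers, the $i$-th output letter depends on a \emph{summary of the prefix} $1,\dots,i-1$ (the most recent non-$1$ letter, resp.\ the running group product); rather than precomputing this summary everywhere — which the single-use restriction forbids, cf.\ Example~\ref{ex:no-compute-state} — the simulator, whenever $f$ needs the letter at position $i$, walks its head left from $i$ to recover the needed summary on the fly (for flip-flop, scan left to the nearest non-$1$; for a group transducer, scan left to the start accumulating the product in a finite-state register since the group is finite). Each such excursion uses only finitely many states and at most a bounded number of atom registers, each read once per excursion, so single-use is preserved. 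For atom propagation the summary is ``the nearest preceding atom-labelled position, if the intervening labels are all $\propignore$'': again recovered by a bounded leftward scan storing one atom in one register, which is then emptied when handed to $f$.

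\textbf{Main obstacle.}
The delicate point is the bookkeeping that lets $f$'s single-use register operations be faithfully mirrored while the simulator \emph{also} performs its own single-use register operations during leftward re-scans. I expect the cleanest way to package this is to note that whenever $f$ requests the letter under its (virtual) head at position $i$ of $g(w)$, the simulator performs a self-contained sub-routine that (a) leaves the physical head back at $i$, (b) returns a single letter of the alphabet of $g(w)$, stored in a bounded set of fresh registers, and (c) uses each of its own registers at most once within the sub-routine — after which control returns to the main simulation of $f$. Since $f$ requests each position's letter only finitely often along its (non-looping) run, and each request triggers a fresh bounded sub-routine with its own fresh registers, no register is used twice across the whole run; the finite state of the product machine records which sub-routine (if any) is in progress and what $f$ was about to do with the answer. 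Verifying that this composition keeps the run total/non-looping when $f$ and $g$ are, and that the head-movement accounting is consistent across the prime types, is the part that requires care, but it is routine once the sub-routine interface is fixed. I would then conclude by remarking that applying this lemma repeatedly, together with Lemma~\ref{lem:two-way-primes}, yields the inclusion $\text{(two-way primes)}^* \subseteq \text{single-use two-way}$, i.e.\ the implication \ref{two-way:krohn-rhodes} $\Rightarrow$ \ref{two-way:two-way} of Theorem~\ref{thm:two-way-models}.
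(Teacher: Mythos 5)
Your overall strategy (case analysis over the primes, simulating $f$ on $g(w)$ by producing letters of $g(w)$ on demand) is in the right spirit, and your treatment of homomorphisms, map reverse/duplicate, and atom propagation is sound: for atom propagation the return trip of the excursion succeeds precisely because every position strictly between the source atom and the requesting $\propload$-position is labelled $\propignore$, so ``scan right to the first non-$\propignore$ position'' relocates the head correctly. But the excursion sub-routine as you specify it --- ``(a) leaves the physical head back at $i$'' --- is not implementable for the group transducer, and not as described for the flip-flop. For the group transducer you propose to walk from position $i$ to the left end accumulating the product; a finite-state two-way machine that has walked to the left endmarker has no way to find position $i$ again (the positions are not marked, and the machine cannot count). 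For the flip-flop, you scan left from $i$ to the nearest non-$1$ letter and then must return to $i$; but the only landmark available on the way back is ``first non-$1$ position'', and if position $i$ itself carries the letter $1$ the return scan overshoots past $i$. So the ``head-movement accounting'' you defer as routine is exactly where the argument breaks.

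The fixes are instructive and point to what the paper actually does. For the group transducer one should not make excursions at all: maintain the running prefix product in the finite state and update it \emph{incrementally}, multiplying by the letter when the head moves right and by its \emph{inverse} when the head moves left --- this is where the group structure (as opposed to a general Mealy machine) is essential. For the flip-flop one can maintain the summary incrementally too, performing an excursion only when the head moves left onto a position carrying a non-$1$ letter; in that case the requesting position is itself a non-$1$ landmark and the return scan terminates correctly. The paper sidesteps all of this bookkeeping by giving the explicit construction only for map reverse (swap the ``previous''/``next'' actions inside blocks) and then invoking a single unifying principle: every two-way prime is computed by a \emph{reversible} two-way transducer (every configuration has in- and out-degree at most one), and pre-composition with a reversible transducer admits a clean product construction, citing Dartois et al. Reversibility is precisely the property that makes your ``undo a step when moving left'' possible; identifying it would have both repaired your group/flip-flop cases and unified them with the others.
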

\begin{proof}    
Suppose that $g$ is 
\begin{align*}
\text{map reverse} : (\atoms+|)^* \to (\atoms+|)^*.
\end{align*}
The transducer for $f \circ g$, is the same as $f$ (in particular, it has the same input and output alphabets), with the following differences. When the head of  $f \circ g$ is over an endmarker, it behaves the same way as $f$. When the head of  $f \circ g$ is over a  letter from $\atoms$, it behaves the same way as $f$, except that the ``previous'' and ``next'' actions are swapped. When the head of $f \circ g$ is over  a  separator $|$ which separates two blocks of atoms, it behaves the same way as $f$, with the following change (which the reader can easily extend to the corner cases where the previous and next blocks of atoms are empty):
    \mypic{16}    
    
    Similar straightforward constructions work when $g$ is any of the other prime two-way functions, and are left to the reader. In fact, there is a general explanation of why these constructions are simple: namely the two-way primes are recognised by two-way single-use transducers which are reversible (i.e.~every node in the graph of configurations has indegree and outdegree at most one), see~\cite[p.~2]{DartoisFJL17}. As observed in~\cite[Theorem 1]{DartoisFJL17}, if $f,g$ are two-way transducers such that $g$ is reversible, then a natural product construction gives a two-way transducer for the composition $f \circ g$; this construction can be applied also in the context of this lemma.  
\end{proof}

We would like to remark that the approach taken in this section could be of interest already  for the classical models for  finite alphabets. One of the important transducer constructions is the tree-trimming construction of Hopcroft and Ullman~\cite[p.~144]{hopcroftUllman1967}, see also~\cite[Lemma 12.4]{bojanczyk_automata_2018}, which shows that two-way transducers are closed under pre-composition with Mealy machines. This construction is used, for example, to show that two-way transducers are closed under composition, see~\cite[p.~139]{chytilSerialComposition2Way1977}. 
 As shown in this section, thanks to the Krohn-Rhodes theorem, the tree-trimming construction is not needed, since one  only needs to show that two-way transducers are closed under pre-composition with the prime functions, which is a simpler undertaking.
\subsection{Compositions of two-way primes to regular list functions}
\label{ap:composition-to-regular-list-functions}

    In this section we show  implication \ref{two-way:krohn-rhodes} $\Rightarrow$ \ref{two-way:regular-list-function} in Theorem~\ref{thm:two-way-models}. This implication can be stated as:
\begin{align*}
    \twoinclusion
{(two-way primes)*} {regular list functions with atoms.}
\end{align*}
For the purposes of this section, we use the name \emph{derivable} as a synonym for regular list function with atoms. Since the derivable functions are closed under composition by definition, it is enough to show that every two-way prime is derivable. 

String-to-string homomorphisms will be handled in Section~\ref{sec:derivable-homomorphisms}. 
Map reverse and map duplicate were already shown to be derivable in Example~\ref{ex:map-reverse}.  The remaining derivable functions are of the form  $f|id$ where $f$ is either atom propagation, a group transducer, or the flip-flop transducer; these are discussed in  Section~\ref{sec:derive-atom-propagation}.

\subsubsection{String-to-string homomorphisms}
\label{sec:derivable-homomorphisms}
We begin with equivariant string-to-string homomorphisms. The main observation is the following lemma, which says that every equivariant function with a polynomial orbit-finite domain is necessarily derivable. In other words, if the function is equivariant and its domain does not use the list datatype constructor (because such domains are exactly the polynomial orbit-finite sets), then it is derivable.

\begin{lemma}\label{lem:derive-orbit-finite-domain}
    If  $f : \Sigma \to \Gamma$ is an equivariant function between two datatypes, and the domain  $\Sigma$  is a polynomial orbit-finite set, then $f$  is derivable. 
\end{lemma}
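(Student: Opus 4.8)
The plan is to first reduce the domain to a power of $\atoms$, and then to argue by induction on the structure of the codomain $\Gamma$. For the first reduction, note that every polynomial orbit-finite set is equivariantly isomorphic to a finite co-product $\atoms^{k_1}+\cdots+\atoms^{k_m}$, obtained by repeatedly distributing products over co-products. This isomorphism and its inverse are regular list functions with atoms: distributivity is the prime \texttt{distr}, the structural isomorphisms for $+$ (associativity and commutativity) come from \texttt{cases} and \texttt{coproject}, those for $\times$ (associativity, commutativity, unit laws) from \texttt{pair} and \texttt{project}, and throughout one uses the routine fact that the identity function on any datatype is a regular list function with atoms. Since $f$ factors as $(f\circ\iota^{-1})\circ\iota$ through such a normalised domain along a derivable isomorphism $\iota$, and since \texttt{cases} lets us treat each summand $\atoms^{k_i}$ separately, it suffices to prove: for every $k$ and every equivariant $f:\atoms^k\to\Gamma$, the function $f$ is derivable.

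The engine of the induction is an equality-type decomposition. For $\bar a\in\atoms^k$ let $\mathrm{et}(\bar a)$ be its equality type, i.e.\ the partition of $\{1,\dots,k\}$ recording which coordinates coincide; it ranges over a finite set $T_k$. The map $\mathrm{et}:\atoms^k\to T_k$ is derivable: apply the prime \texttt{eq} to all $\binom k2$ pairs of coordinates (extracting pairs by iterated \texttt{project}, collecting answers by iterated \texttt{pair}), then post-compose with a function between finite datatypes, which is always derivable by iterated \texttt{cases} into constants (clause~\ref{regfun:constant}). Two observations then drive everything: (i) the fibre $L_t$ of $\mathrm{et}$ over a partition $t$ is a single orbit of $\atoms^k$, so every equivariant map out of $L_t$ is determined by its value on one representative; (ii) any equivariant map from $\atoms^k$ into an atomless set (a finite set, or $\Nat$) is constant on each $L_t$, since equivariant subsets of $\atoms^k$ are unions of orbits. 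Pairing the input with $\mathrm{et}$ and distributing gives a derivable ``split by equality type'' $\atoms^k\to\sum_{t\in T_k}\atoms^k$; furthermore, for each $t$ the block-minima $p_1<\cdots<p_\ell$ of $t$ yield a derivable projection $c_t:\atoms^k\to\atoms^{\ell}$ and a derivable ``copying'' map $e_t:\atoms^\ell\to\atoms^k$ such that $e_t\circ c_t$ is the identity on $L_t$; hence every equivariant $g:L_t\to\Gamma'$ coincides on $L_t$ with $(g\circ e_t)\circ c_t$, where $g\circ e_t:\atoms^\ell\to\Gamma'$ is again equivariant.

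Now I would induct on the structure of $\Gamma$. If $\Gamma$ is a singleton, $f$ is an equivariant constant, covered by clause~\ref{regfun:constant}. If $\Gamma=\atoms$, then $f(\bar a)$ is always one of the atoms of $\bar a$, and on each $L_t$ it coincides with a fixed coordinate projection $\texttt{project}_{p}:\atoms^k\to\atoms$; assemble these over $t$ with the split and \texttt{cases}. If $\Gamma=\Gamma_0\times\Gamma_1$, write $f=\texttt{pair}(\texttt{project}_0\circ f,\ \texttt{project}_1\circ f)$ and apply the induction hypothesis to each factor. If $\Gamma=\Gamma_0+\Gamma_1$, the summand tag of $f(\bar a)$ lies in the atomless set $\{0,1\}$, hence is constant on each $L_t$, say equal to $i(t)$; so on $L_t$ we have $f=\texttt{coproject}_{i(t)}\circ g_t$ with $g_t:L_t\to\Gamma_{i(t)}$ equivariant, and $g_t$ agrees on $L_t$ with $\hat g_t\circ c_t$ for some $\hat g_t:\atoms^\ell\to\Gamma_{i(t)}$ derivable by the induction hypothesis ($\Gamma_{i(t)}$ being a proper subexpression of $\Gamma$); assemble over $t$. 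If $\Gamma=\Gamma_0^*$, the length of $f(\bar a)$ is atomless data, hence constant on each $L_t$, say $\ell_t$; the $j$-th entry of $f$ restricted to $L_t$ is equivariant into $\Gamma_0$, extends along $e_t$ to a map $\atoms^\ell\to\Gamma_0$ derivable by the induction hypothesis, and $f$ on $L_t$ is the $\ell_t$-fold nesting of \texttt{append} applied to these entries and the constant empty list; assemble over $t$.

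The hard part, and the one place requiring care, is the bookkeeping around the equality-type split: making precise that on each orbit $L_t$ the residual data (a coordinate index, a summand tag, a list length, an entry map) really are equivariant, that they extend to equivariant maps on all of $\atoms^\ell$ so the induction hypothesis applies, and that the pieces glue back correctly through the derivable co-product $\sum_t\atoms^k$. Everything else is routine manipulation of the combinators, together with the standing facts — used freely throughout — that identity functions and functions between finite datatypes are regular list functions with atoms.
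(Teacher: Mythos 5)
Your proposal follows essentially the same route as the paper's proof: normalise the domain to a co-product of powers of $\atoms$ via distributivity and \texttt{cases}, split $\atoms^k$ into orbits by computing equality types with \texttt{eq}, and induct on the structure of the output datatype using the facts that equivariant maps from an orbit into $\atoms$ are coordinate projections while summand tags and list lengths are atomless and hence constant on each orbit. The only substantive difference is in bookkeeping: where you extend the orbit-restricted function along the retraction $e_t \circ c_t$ (which leaves $g \circ e_t$ only partially defined on $\atoms^{\ell}$ and is exactly the delicate point you flag), the paper phrases the per-orbit claim as the existence of a derivable \emph{total} function on $\atoms^k$ that merely agrees with $f$ on the given orbit, which sidesteps the extension issue entirely.
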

Before proving the lemma, we use it to show that every equivariant string-to-string homomorphism $f : \Sigma^* \to \Gamma^*$ is derivable, assuming that the input and output alphabets $\Sigma$ and $\Gamma$ are polynomial orbit-finite sets. The  homomorphism is obtained by lifting some equivariant function of type $\Sigma \to \Gamma^*$  (which is derivable  thanks to the lemma) to strings via the {\tt map} combinator from Definition~\ref{def:reglist-fun}, and then applying the function {\tt concat} from Figure~\ref{fig:atomic-combinators-1} to the result.  Therefore, the homomorphism is derivable.
\begin{proof}
    In the proof it will be more convenient to work with the following representation of polynomial orbit-finite sets.

    \begin{claim}
        Every polynomial orbit-finite set admits a  bijection (derivable in both directions) with a set of the form 
        \begin{align*}
            \atoms^{k_1} + \cdots + \atoms^{k_n} \qquad\text{for some $n, k_1,\ldots,k_n \in \set{0,1,\ldots}$}.
            \end{align*}
    \end{claim}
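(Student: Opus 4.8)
The plan is to prove the claim by structural induction on the polynomial orbit-finite set $\Sigma$, strengthening the statement so that the induction hypothesis records, for each subexpression $\Sigma'$, a normal form $N(\Sigma') = \atoms^{k_1} + \cdots + \atoms^{k_n}$ \emph{together with} derivable maps $\Sigma' \to N(\Sigma')$ and $N(\Sigma') \to \Sigma'$ that are mutually inverse. Since the claim asks for a derivable bijection in both directions, carrying the backward map through the induction from the start is the only real bookkeeping point.

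First I would assemble a small ``algebra of datatype isomorphisms'', each member of which is derivable \emph{and} has a derivable inverse, using only {\tt comp}, {\tt pair}, {\tt cases} together with the primes $\mathtt{project}_i$, $\mathtt{coproject}_i$, $\mathtt{distr}$. Concretely: (i) \emph{functoriality} --- from derivable $g : \Sigma_1 \to \Sigma_1'$ and $h : \Sigma_2 \to \Sigma_2'$ one forms the derivable map $\Sigma_1 \times \Sigma_2 \to \Sigma_1' \times \Sigma_2'$, namely ${\tt pair}(g \circ \mathtt{project}_0,\, h \circ \mathtt{project}_1)$, and the derivable map $\Sigma_1 + \Sigma_2 \to \Sigma_1' + \Sigma_2'$, namely ${\tt cases}(\mathtt{coproject}_0 \circ g,\, \mathtt{coproject}_1 \circ h)$; (ii) the associativity and commutativity isomorphisms for $\times$ (built from $\mathtt{project}$ and {\tt pair}) and for $+$ (built from $\mathtt{coproject}$ and {\tt cases}), and the tuple-flattening isomorphism $\atoms^{k} \times \atoms^{l} \cong \atoms^{k+l}$; (iii) \emph{distributivity} $(\sigma_1 + \sigma_2) \times \tau \cong (\sigma_1 \times \tau) + (\sigma_2 \times \tau)$, whose forward direction is the prime $\mathtt{distr}$ and whose inverse is ${\tt cases}$ of the two maps $\sigma_i \times \tau \to (\sigma_1 + \sigma_2)\times\tau$ given by ${\tt pair}(\mathtt{coproject}_i \circ \mathtt{project}_0,\, \mathtt{project}_1)$; the mirror-image distributivity over the right factor then follows by pre- and post-composing with commutativity of $\times$. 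In each case the inverse is a map of the same shape, so derivability of both directions is clear.

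With this toolkit the induction is routine. If $\Sigma = \atoms$, take $N = \atoms^1$ and the identity. If $\Sigma$ is a singleton, take $N = \atoms^0$ and the constant maps in both directions, which are derivable by item~\ref{regfun:constant} of Definition~\ref{def:reglist-fun}. If $\Sigma = \Sigma_1 + \Sigma_2$, set $N(\Sigma) = N(\Sigma_1) + N(\Sigma_2)$ --- a flat sum of powers of $\atoms$ after a derivable reassociation of nested coproducts --- and transport the bijections using functoriality of $+$. If $\Sigma = \Sigma_1 \times \Sigma_2$ with $N(\Sigma_i) = \sum_{p} \atoms^{k^{(i)}_p}$, use functoriality of $\times$ to reduce to normalising $\big(\sum_p \atoms^{k^{(1)}_p}\big) \times \big(\sum_q \atoms^{k^{(2)}_q}\big)$; iterating distributivity on both factors turns this into $\sum_{p,q} \big(\atoms^{k^{(1)}_p} \times \atoms^{k^{(2)}_q}\big)$, and flattening each summand gives $\sum_{p,q} \atoms^{\,k^{(1)}_p + k^{(2)}_q}$. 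All steps, and all their inverses, are derivable, so the composites are too; and the index sets stay finite since $+$ adds and $\times$ multiplies the number of summands. I do not expect a genuine obstacle here: this is a purely structural fact about the polynomial datatypes, and the only thing that needs attention is the discipline of exhibiting every isomorphism used --- not just in the ``obvious'' direction --- as a regular list function, which is exactly what the datatype algebra above is set up to guarantee.
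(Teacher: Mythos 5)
Your proposal is correct and follows essentially the same route as the paper, which proves the claim in one sentence by invoking distributivity of product over co-product, the identification of singletons with $\atoms^0$, and derivability via the $\mathtt{distr}$ prime; you have simply spelled out the structural induction and the two-sided derivability bookkeeping that the paper leaves implicit.
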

    \begin{proof}
        Using distributivity of product across co-product, and the fact that  singleton sets are of the form $\atoms^0$ (up to derivable bijections with sets). Furthermore, this construction is derivable, since it can be formalised using the operations from Definition~\ref{def:reglist-fun}, in particular the distributivity function from Figure~\ref{fig:atomic-combinators-1}.
    \end{proof}
    
    In order to prove the lemma, it is enough to prove it for the case where  $\Sigma$ is  of the form given in the claim. Furthermore, the co-products in the  domain $\Sigma$ can be safely ignored, since every equivariant function 
        \begin{align*}
        f : \Sigma_1 + \Sigma_2 \to \Gamma
        \end{align*}
        can be obtained from two equivariant functions $\set{f_i : \Sigma_i \to \Gamma}_{i = 1,2}$  by combining them using the {\tt cases} combinator from item~\ref{regfun:combinators} in Definition~\ref{def:reglist-fun}. Therefore, it is enough to prove the lemma for the case when $\Sigma$ is $\atoms^k$ for some $k \in \set{0,1,2,\ldots}$.  Recall that by orbits of $\atoms^k$, we mean  orbits with respect to the action of  atom automorphisms. The following claim shows that the partition of $\atoms^k$ into orbits is derivable. 
        \begin{claim}\label{claim:derivable-partition-into-orbits}
            For every orbit  $\tau \subseteq \atoms^k$,
            its characteristic function $\tau : \atoms^k \to \boolset$ is a regular list function with atoms.
        \end{claim}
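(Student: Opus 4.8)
The plan is to reduce membership in $\tau$ to a bounded boolean computation on the outcomes of pairwise equality tests, so that the whole function is assembled from $\mathtt{eq}$, a few primes from Figure~\ref{fig:atomic-combinators-1}, and the combinators $\mathtt{comp}$, $\mathtt{pair}$ and $\mathtt{cases}$. Recall that an orbit $\tau \subseteq \atoms^k$ is determined by an equivalence relation $\sim_\tau$ on $\set{1,\dots,k}$, in the sense that $(a_1,\dots,a_k) \in \tau$ if and only if $a_i = a_j \iff i \sim_\tau j$ for all $i < j$. So the characteristic function of $\tau$ should (a) extract every coordinate, (b) run $\mathtt{eq}$ on every pair of coordinates, and (c) test whether the resulting vector of bits is the one that encodes $\sim_\tau$.

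First I would observe that each coordinate projection $\pi_i : \atoms^k \to \atoms$ is a regular list function with atoms: for whatever fixed bracketing of the product $\atoms^k$ we work with, $\pi_i$ is a composition of the primes $\mathtt{project}_0$ and $\mathtt{project}_1$ (and, if one wishes to pass between bracketings, of the evident bijections built from $\mathtt{project}_0,\mathtt{project}_1$ and $\mathtt{pair}$). Hence for each pair $i < j$ the function $e_{ij} := \mathtt{eq} \circ \mathtt{pair}(\pi_i,\pi_j) : \atoms^k \to \boolset$ is a regular list function, and by pairing these over all pairs (using $\mathtt{pair}$ again) I obtain a regular list function $E : \atoms^k \to \boolset^{\binom{k}{2}}$ whose codomain is a finite --- in particular atomless, polynomial orbit-finite --- datatype.

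Then $\tau = E^{-1}(\beta)$, where $\beta \in \boolset^{\binom{k}{2}}$ is the bit-vector whose $(i,j)$-entry is ``yes'' exactly when $i \sim_\tau j$; so it remains to show that for a finite datatype $D$ and a point $d \in D$ the characteristic function $\chi_d : D \to \boolset$ of $\set{d}$ is a regular list function, after which $\tau = \chi_\beta \circ E$. This I would prove by induction on $D$ as a datatype built from singletons by $+$ and $\times$: for a singleton, $\chi_d$ is an equivariant constant; for $D = D_0 + D_1$ with $d = (\ell,d')$ I would take $\chi_d = \mathtt{cases}(g_0,g_1)$ with $g_\ell = \chi_{d'}$ and $g_{1-\ell}$ the constant ``no''; for $D = D_0 \times D_1$ with $d = (d_0,d_1)$ I would take $\chi_d = \mathtt{and} \circ \mathtt{pair}(\chi_{d_0} \circ \mathtt{project}_0, \ \chi_{d_1} \circ \mathtt{project}_1)$, where the conjunction $\mathtt{and} : \boolset \times \boolset \to \boolset$ is itself obtained by applying $\mathtt{distr}$ to split off the first bit and then $\mathtt{cases}$ (returning the second bit on the ``yes'' branch and the constant ``no'' on the other branch).

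There is no real obstacle here; the proof is bookkeeping. The only points needing a little care are purely finitary: checking that coordinate extraction from a bracketed product, conjunction of bits, and ``equals a fixed element of a finite set'' can all be expressed using only $\mathtt{project}_i$, $\mathtt{distr}$, $\mathtt{pair}$, $\mathtt{cases}$ and equivariant constants --- which is exactly the content of the two inductions above.
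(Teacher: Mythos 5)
Your proposal is correct and follows essentially the same route as the paper: the paper's proof also reduces orbit membership to computing the equality pattern of the tuple via pairwise applications of $\mathtt{eq}$ and then comparing it against the fixed pattern determining $\tau$. You simply spell out the finitary bookkeeping (projections from a bracketed product, conjunction of bits, equality with a fixed element of a finite datatype) that the paper leaves implicit.
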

        \begin{proof}
            Tuples in $\atoms^k$ are in the same orbit if and only if they have the same equality types, see~\cite[Lemma 7.5]{bojanczyk_slightly2018}. Using the equality function on atoms, which is one of the prime derivable functions,  we can compute for each pair of coordinates in $\set{1,\ldots,k}$ whether or not the corresponding atoms are equal,
            obtaining the equality pattern.
            The Claim follows.
        \end{proof}
        Derivable functions admit a conditional construction~\cite[Example 2]{bojanczykRegularFirstOrderList2018}: if $f : \Sigma \to \boolset$ is derivable, and $\set{g_i : \Sigma \to \Gamma}_{i \in \boolset}$ are derivable, then the same is true for
        \begin{align*}
        a \in \Sigma \quad \mapsto \quad g_{f(a)}(a).
        \end{align*}
        Using this construction, and the derivable partition into orbits from Claim~\ref{claim:derivable-partition-into-orbits}, the lemma will follow once  we  prove the following claim.
    
        \begin{claim}
            Let $\Gamma$ be a datatype, and let  $f : \atoms^k \to \Gamma$ be equivariant. For every orbit $\tau \subseteq \atoms^k$, there is a derivable function of type $\atoms^k \to \Gamma$ which agrees with $f$ on arguments from $\tau$. 
        \end{claim}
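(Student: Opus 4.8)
The plan is to prove the statement by structural induction on the datatype $\Gamma$. It is convenient to strengthen the induction hypothesis slightly: for every datatype $\Gamma$, every $k \in \set{0,1,\ldots}$, every orbit $\tau \subseteq \atoms^k$, and every \emph{equivariant partial function} $g : \tau \to \Gamma$, there is a total derivable function $h : \atoms^k \to \Gamma$ that agrees with $g$ on $\tau$. The claim then follows by applying this to $g = f|_\tau$. The reason for allowing partial functions is that the inductive steps for co-products and for the list constructor naturally produce functions that are equivariant only on $\tau$, not on all of $\atoms^k$.

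I would first dispose of a degenerate case: if $k = 0$, then $\atoms^0$ is a singleton and $g$ is (if it exists at all) an equivariant constant function, hence derivable by item~\ref{regfun:constant} of Definition~\ref{def:reglist-fun}; so assume $k \geq 1$. For the base cases: if $\Gamma$ is a singleton, $g$ is again an equivariant constant and we are done. If $\Gamma = \atoms$, fix a representative $\bar a = (a_1,\ldots,a_k) \in \tau$; since $g$ is equivariant, $g(\bar a)$ is an atom supported by $\set{a_1,\ldots,a_k}$, hence equal to some coordinate $a_j$, and by equivariance $g$ coincides on all of $\tau$ with the $j$-th coordinate projection, which is a composition of the prime projection functions $\mathtt{project}_i$ and hence derivable and total.

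For the inductive step there are three cases. If $\Gamma = \Gamma_0 \times \Gamma_1$, apply the induction hypothesis to $\mathtt{project}_i \circ g$ for $i \in \set{0,1}$ to obtain derivable $h_0, h_1$, and take $h = \mathtt{pair}(h_0, h_1)$. If $\Gamma = \Gamma_0 + \Gamma_1$, composing $g$ with the equivariant map $(i,z) \mapsto i$ gives an equivariant function $\tau \to \set{0,1}$ which, since $\tau$ is a single orbit, is constant, say with value $i$; hence $g$ factors as $\mathtt{coproject}_i$ applied to an equivariant partial function $g' : \tau \to \Gamma_i$, and we take $h = \mathtt{coproject}_i \circ h'$ where $h'$ is obtained from the induction hypothesis applied to $g'$. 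The main new point is the list constructor $\Gamma = \Gamma_0^*$: here the key observation is that the length of $g(\bar b)$ is invariant under atom automorphisms and hence constant on the orbit $\tau$, equal to some fixed $\ell$. If $\ell = 0$, then $g$ is the constant empty-list function, which is derivable. If $\ell \geq 1$, then for each $j \in \set{1,\ldots,\ell}$ the map $g_j : \bar b \mapsto (g(\bar b))_j$ is an equivariant partial function $\tau \to \Gamma_0$; applying the induction hypothesis we get derivable $h_1,\ldots,h_\ell : \atoms^k \to \Gamma_0$, and we assemble
\begin{align*}
h(\bar b) = \mathtt{append}(h_1(\bar b), \mathtt{append}(h_2(\bar b), \ldots \mathtt{append}(h_\ell(\bar b), [\,]) \ldots))
\end{align*}
using the prime $\mathtt{append}$, the combinators $\mathtt{pair}$ and $\mathtt{comp}$, and the derivable constant $[\,]$. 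Since datatypes are built only from $\atoms$, singletons, products, co-products and lists, these cases are exhaustive, completing the induction.

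I expect the only subtleties to be the bookkeeping between partial and total functions, and the length-constancy observation in the list case; everything else is a routine assembly from the prime functions and combinators of Definition~\ref{def:reglist-fun} and Figure~\ref{fig:atomic-combinators-1}.
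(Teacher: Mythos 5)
Your proof is correct and follows essentially the same route as the paper: structural induction on the output datatype, with the same key observations that an equivariant map from an orbit into $\atoms$ must be a coordinate projection, that the co-product tag is constant on an orbit, and that the list length is constant on an orbit. The only organizational difference is in the list case: the paper packages the output as a total function into $\Delta^m + \bot$ (padding with $\bot$ outside $\tau$) and recurses on that tuple type, which forces a lexicographic induction measure counting list constructors first, whereas you strengthen the statement to partial equivariant functions on $\tau$ and recurse coordinate-wise into $\Gamma_0$, so plain structural induction suffices -- a slight simplification, but not a different argument.
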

        \begin{proof}
            Induction on the structure of the  output datatype $\Gamma$. More formally, the induction is on (a) the number of list datatype constructors; followed by (b) the number of other datatype constructors. These parameters are ordered lexicographically.
            \begin{enumerate}
                \item Suppose that  $\Gamma$ is a singleton set.  In this case, we can use item~\ref{regfun:constant} of Definition~\ref{def:reglist-fun}.
                \item Suppose that $\Gamma$ is the atoms $\atoms$.   Every  equivariant function of type $\tau \to \atoms$ must necessarily be a projection function. This is because if an equivariant function agrees with the $i$-th projection for some argument $a \in \tau$, then it must agree with the $i$-th projection for all other arguments in the  orbit $\tau$. Furthermore, projections are derivable,  see Figure~\ref{fig:atomic-combinators-1}.
                \item Suppose that $\Gamma$ is a co-product $\Gamma = \Gamma_1 + \Gamma_2$. If two inputs of $f$ are in the same orbit, then the corresponding outputs will be either both in $\Gamma_1$ or in $\Gamma_2$, and therefore  all outputs (assuming the inputs are in a fixed orbit $\tau$) will be from some fixed $\Gamma_i$. So we can use the induction assumption.
                \item Suppose that $\Gamma$ is a product $\Gamma = \Gamma_1 \times \Gamma_2$. Then $f$ can be obtained by applying the {\tt pair} combinator to its two projections onto $\Gamma_1$ and $\Gamma_2$.
                
                \item The most interesting case is when $\Gamma$ is a list datatype $\Gamma = \Delta^*$. The crucial observation is that, when  restricted to inputs from a fixed orbit $\tau \subseteq \atoms^k$, all outputs of $f$ have the same length (as lists). This is  because list length is invariant under atom automorphisms.       
                Let $m \in \set{0,1,\ldots}$ be the list length of all outputs from $f(\tau)$.
                Define 
                 \begin{align*}
                 g : \atoms^k \to  \Gamma^m+ \bot
                 \end{align*}
                 to be the function which gives $\bot$ for arguments outside $\tau$, and the same output as $f$ for arguments from $\tau$. In the latter case, the output is represented not as a list, but as  $m$-tuple of  $\Gamma$. The function $g$ is equivariant and hence derivable by induction assumption (there are fewer list datatype constructors in the co-domain type). To derive $f$, we combine $g$ with the natural embedding $\Gamma^m \to \Gamma^*$, 
                 which can be derived using the {\tt append} function. 
            \end{enumerate}
        \end{proof}
\end{proof}

\subsubsection{Atom propagation, group transducers, and flip-flop}
\label{sec:derive-atom-propagation}
It remains to show derivability for functions which are  a parallel composition  of the  form  $f|id$ where $f$ is either atom propagation, a group transducer, or the flip-flop transducer, and $id$ is the identity strings over some polynomial orbit-finite alphabet. 

When $f$ is a group transducer, then $f|id$ is simply the last function from Figure~\ref{fig:atomic-combinators-1}. Atom propagation and flip-flop are done in similar ways, so we only discuss atom propagation, which is the most interesting example.  Also, we only show how atom propagation is itself derivable, and  leave the reader to lift the construction via parallel composition with an identity homomorphism.

To see the construction, we recall the example of atom propagation from Example~\ref{ex:atom-propagation}:
\begin{align*}
    \begin{array}{rccccccccccccccccccc}
        \text{input} & 1&2&\propignore & \propignore & \propload & \propload & 3 & \propignore & \propignore & \propload & \propignore & \propload \\
        \text{output} & \bot & \bot & \bot & \bot & 2 &  \bot & \bot & \bot & \bot & 3 & \bot & \bot
    \end{array}
    \end{align*}
Consider an input to atom propagation, e.g.~the input from the above example
\begin{align*}
[1,2,\varepsilon,\varepsilon,\downarrow,\downarrow, 3,\varepsilon,\varepsilon,\downarrow,\varepsilon,\downarrow] \in (\atoms + \set{\varepsilon,\downarrow})^*
\end{align*}
Apply the {\tt block} operation, with the two kinds of letters being $\downarrow$ and the remaining letters. The result looks like this:
\begin{align}\label{eq:after-block}
    [\red{[1,2,\varepsilon,\varepsilon]},[\downarrow,\downarrow],[3,\varepsilon,\varepsilon],[\downarrow],[\varepsilon],[\downarrow]] \in ( (\atoms + \varepsilon)^* + \downarrow^* )^*
    \end{align}
(We have coloured the first block red, for reasons that will be explained below.)
Formally speaking, in order to apply the {\tt block} operation above, we need to refactor the alphabet via the obvious bijection
\begin{align}\label{eq:string-partitioned-downarrows}
(\atoms + \set{\varepsilon,\downarrow}) \to ((\atoms+\varepsilon) + \downarrow),
\end{align}
which is derivable thanks to Lemma~\ref{lem:derive-orbit-finite-domain}. To the list from~\eqref{eq:after-block}, apply the windows function
\begin{align*}
[x_1,\ldots,x_n] \mapsto [(x_1,x_2),(x_2,x_3),\ldots,(x_{n-1},x_n)],
\end{align*}
which is derivable thanks to~\cite[Example 3]{bojanczykRegularFirstOrderList2018}. The result looks like this:
\begin{align}\label{eq:list-with-windows}
    [(\red{[1,2,\varepsilon,\varepsilon]},[\downarrow,\downarrow]),([\downarrow,\downarrow],[3,\varepsilon,\varepsilon]),([3,\varepsilon,\varepsilon],[\downarrow]),([\downarrow], [\varepsilon]),([\varepsilon],[\downarrow])] \in ( ((\atoms + \varepsilon)^* + \downarrow^*)^2 )^*
\end{align}
Using the {\tt map} combinator, to each  element 
\begin{align*}
x \in ((\atoms + \varepsilon)^* + \downarrow^*)^2
\end{align*}
of the above list, apply the following operation (whose derivability is left to the reader):
\begin{itemize}
    \item If the second coordinate of $x$ is in  $(\atoms + \varepsilon)^*$, then project $x$ to the second coordinate, and then replace every letter by $\bot$, as in the following example:
    \begin{align*}
        ([\downarrow,\downarrow],[3,\varepsilon,\varepsilon]) \mapsto [\bot,\bot,\bot]
    \end{align*}
    \item Otherwise, if the second coordinate of $x$ is in $\downarrow^*$, then do the same thing as in the previous item, except that the first $\bot$ is replaced by the  last atom in the first coordinate of $x$,  which might be the undefined value $\bot$ if there is no such atom. Here is an example:
    \begin{align*}
        ([1,2,\varepsilon,\varepsilon],[\downarrow,\downarrow]) \mapsto [2,\bot]\\
        ([\varepsilon,\varepsilon],[\downarrow,\downarrow]) \mapsto [\bot,\bot].
    \end{align*}
    To take out the last atom in the first coordinate of $x$, we use a derivable filter operation~\cite[Example 1]{bojanczykRegularFirstOrderList2018}, followed by {\tt reverse} and {\tt coappend}. 
\end{itemize}
After applying the above operation to every $x$ in the list from~\eqref{eq:list-with-windows}, we get a result that looks like this:
\begin{align*}
    [[2,\bot],[\bot,\bot,\bot],[3],[\bot],[\bot]] \in ((\atoms + \bot)^* )^*.
\end{align*}
Applying {\tt concat}, yielding:
\begin{align*}
    [2,\bot,\bot,\bot,3,\bot,\bot] \in (\atoms + \bot)^* .
\end{align*}
This is almost the same as the output of atom propagation, but with one difference -- we have omitted the output for the first block of letters that was coloured red in~\eqref{eq:after-block}. This block will produce a sequence of $\bot$ values -- of length equal to the red block -- and this sequence can be prepended to the output in a derivable way. 
\subsection{Compositions of two-way primes to streaming string transducers}
\label{ap:composition-to-sst}
In this section we show the implication \ref{two-way:krohn-rhodes} $\Rightarrow$ \ref{two-way:sst} in Theorem~\ref{thm:two-way-models}, which can be stated as:
\begin{align*}
    \twoinclusion
{(two-way primes)*} {\sst with atoms.}
\end{align*}
To prove the above inclusion, we show that \sst with atoms are closed under post-composition with the two-way primes:
\begin{align*}
    \threeinclusion{ (two-way primes)}{(\sst with atoms)} {\sst with atoms}
\end{align*}
Note the difference with Section~\ref{ap:composition-to-two-way}, where pre-composition was used instead of post-composition. The  inclusion is proved in the following lemma.

\begin{lemma}\label{lem:post-compose} If $f$ is an \sst with atoms, and $g$ is a prime two-way function, then $g \circ f$ is  an \sst with atoms.
\end{lemma}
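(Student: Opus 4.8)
The plan is to argue by cases on the shape of the two-way prime $g$; by Lemma~\ref{lem:two-way-primes} it is either an equivariant (not necessarily length-preserving) homomorphism, or map reverse or map duplicate, or $f'|\mathrm{id}$ where $f'$ is atom propagation, a group transducer, or the flip-flop transducer. In every case I would build the \sst for $g\circ f$ so that it runs exactly the same one-way left-to-right pass over the input as $f$, keeping a bounded number of extra copies of $f$'s atom registers, and maintaining for each string register $A$ of $f$ (with current content $w_A$) a \emph{bounded summary} of $g(w_A)$, stored partly in the finite control and partly in new string and atom registers. The summary must allow three things: reconstructing $g(w)$ from the summary of $f$'s output register at the end; updating the summary of $C$ from those of $A$ and $B$ when $f$ executes $C:=A\cdot B$; and producing the summary of $A$ when $f$ executes a letter assignment $A:=\gamma$. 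For the last point, $\gamma=\varphi(r_1,\dots,r_m)$ for an equivariant $\varphi$ into a polynomial orbit-finite set, so one first determines the equality type of $(r_1,\dots,r_m)$ by a fixed number of equality tests on spare copies of those registers, and then performs a fixed number of letter/concatenation actions; I will not return to this routine point.

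For a homomorphism $g$ the summary of $w_A$ is just $g(w_A)$, kept in one string register, and the invariant survives $C:=A\cdot B$ because $g$ commutes with concatenation. The finite-state Krohn--Rhodes primes, $f'|\mathrm{id}$ with $f'$ a group transducer or the flip-flop transducer, are handled with a ``transition monoid'' style summary, but with a twist needed to stay copyless. For the group transducer over a group $G$, the summary of $w_A=(g_1,\sigma_1)\cdots(g_\ell,\sigma_\ell)$ is the total product $g_1\cdots g_\ell\in G$ (in the finite control) together with, for each $g\in G$, a string register holding the rewriting of $w_A$ in which the $i$-th letter is replaced by $(g\cdot g_1\cdots g_i,\ \sigma_i)$. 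On $C:=A\cdot B$ the $g$-register of $C$ is the $g$-register of $A$ followed by the $\bigl(g\cdot(\text{product of }w_A)\bigr)$-register of $B$; since $g\mapsto g\cdot(\text{product of }w_A)$ is a bijection of $G$, each $g$-register of $B$ is consumed exactly once, so the construction is copyless, and at the end one reads off the register indexed by the identity of $G$. The flip-flop transducer is a \emph{reset} machine: started in state $s\in\{a,b\}$ its output on a word $u$ is $s^{j}\cdot t$, where $j$ is the position of the first non-$1$ letter of $u$ and $t$ does not depend on $s$, and the state it reaches is $s$ if $u\in\{1\}^{*}$ and otherwise the last non-$1$ letter of $u$. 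So I would keep for each $A$ the two registers $a^{j}$ and $b^{j}$, the $s$-independent tail $t_A$, and the bits ``$w_A$ has a non-$1$ letter'' / ``which one was last''. Concatenation splits into the two cases above; in the case where $w_A$ already contains a non-$1$ letter the state entering $w_B$ is fixed, so no index collapses and copylessness is preserved.

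For map reverse and map duplicate the summary records the block structure of $w_A$ relative to the separator: the first and last partial blocks kept raw (and, for map reverse, also kept reversed, since an \sst cannot reverse a stored register, so the reversal must be maintained incrementally), the complete interior blocks kept already reversed (resp.\ duplicated) and re-separated in one register, and a bit recording whether $w_A$ contains a separator. On $C:=A\cdot B$ the last partial block of $w_A$ and the first partial block of $w_B$ merge into one block, which becomes complete exactly when a separator borders it on either side; its reversed/duplicated form is assembled from the raw and reversed fragments already stored, and the pieces are concatenated in order; at the end the outermost first and last partial blocks of the output register are themselves reversed (resp.\ duplicated). This is routine bookkeeping with boundedly many registers.

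The genuinely new case, which I expect to be the crux, is $g=\text{atom propagation}\,|\,\mathrm{id}$, because here the ``transition monoid'' is only orbit-finite, not finite, so its elements cannot sit in the finite control. The saving observation is that this monoid is as simple as possible: running atom propagation through any word either leaves its single register untouched (no $\downarrow$ occurred) or ends it undefined (a $\downarrow$ occurred), the register otherwise holding the word's last atom; hence the induced transition is always the identity or a constant function whose value is an atom occurring in the word, and the output of $g$ on $w_A$ from a state $q$ depends on $q$ through at most one character --- the one produced at the first $\downarrow$ of $w_A$, and then only when no atom precedes that $\downarrow$ inside $w_A$. The summary of $w_A$ is therefore: two bits (does $w_A$ contain a $\downarrow$; does it contain an atom); one atom register for $w_A$'s last atom when it has one; a bit for whether $w_A$ has a ``$q$-sensitive'' first $\downarrow$, together with the passive $\Sigma$-component at that position (an element of a polynomial orbit-finite set, hence stored in boundedly many atom registers); and two string registers holding the parts of $g(w_A)$ before and after that $q$-sensitive character (or a single register for all of $g(w_A)$ otherwise). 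Updating this under $C:=A\cdot B$ is a short case analysis on whether $w_A$ has a $\downarrow$, an atom, or neither --- which fixes the atom-propagation state entering $w_B$ --- after which the two summaries are glued, $C$'s $q$-sensitive position being inherited from $A$ when $w_A$ reaches the first $\downarrow$ and from $B$ otherwise; as with the reset machines, whenever the state entering $w_B$ is fixed no copies collapse, so the construction stays copyless, using a fixed number of copies of the last-atom register to cover its bounded reuse. Finally, instantiating $q$ with atom propagation's initial (undefined) state turns the $q$-sensitive character into a concrete letter, and concatenating the two stored pieces yields $g(w)=(g\circ f)(\text{input})$.
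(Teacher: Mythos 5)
Your proposal is correct and follows essentially the same route as the paper's proof: a case analysis over the two-way primes in which each string register of $f$ is replaced by a bounded family of registers summarising the image of its content under $g$ --- the $\{A_q\}_{q\in G}$ family with the permutation argument for the group case, the reset-structure registers for the flip-flop, the split at the first/last separator for map reverse and map duplicate, and the split at the first $\downarrow$ plus a last-atom register for atom propagation. Your write-up is in places more explicit about the corner cases (in particular for atom propagation and for copylessness of the last-atom register) than the paper's, but the constructions are the same.
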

\hideproof{
\begin{proof}
    There are five cases, depending on which  prime two-way function is used for $g$.
\begin{itemize}
    \item \emph{Map reverse.} Suppose that $g$ is map reverse.  In the transducer for $g \circ f$, each string  register of  $f$ is replaced by three string registers, which store the contents before the first separator, between the first and last separators, and after the last separator. This is illustrated the following picture: 
    \mypic{7}
    Furthermore, the  \sst with atoms for $g \circ f$  remembers in its state if the middle register $A_2$ is empty, i.e.~if there is a separator. (If there is no separator, then all of the letters are in $A_1$.)
    An action $A := BC$  is simulated by
    \begin{align*}
            A_1 := B_1 \quad A_2 := B_2 C_1 B_3 C_2 \quad A_3 := C_3
    \end{align*}
    if register $C_2$ is nonempty, and otherwise it is simulated by 
    \begin{align*}
    A_1 := B_1 \quad A_2 := B_2  \quad A_3 := C_1 B_3.
    \end{align*}
    \item \emph{Map duplicate.}  Same idea as above, except that five registers used:
    \mypic{6}

    \item \emph{String-to-string homomorphisms.} Instead of a writing a letter into a string registers, one writes its homomorphic image.
    \item We now consider the case when  $g$ is either atom propagation, a group transducer, or the flip-flop transducer. (Formally speaking, we should consider the case when $g$ is a parallel composition of one of the transducers mentioned above with an identity homomorphism, but the construction works the same way.) 
    \begin{enumerate}
        \item \emph{Atom propagation.} Assume that $g$ is atom propagation. In the transducer $g \circ f$,  each string register $A$ of $f$ is replaced by two string registers $A_1,A_2$ and one atom register $a$.  The two string registers store the output of atom propagation on $A$, split into the parts that are  before and after the first $\downarrow$, with neither part including the first $\downarrow$. The atom register stores  the last atom in $A$. Here is a picture:
    \mypic{8}
     An action $A := BC$ is simulated by 
    \begin{align*}
    A_1 := B_1 \quad A_2 := B_2 C_1 b C_2  \quad a:=c.
    \end{align*}
        \item \emph{A group transducer.} Assume that $g : G^* \to G^*$ computes the group product for each prefix, for some finite group $G$. For $q \in G$, define $g_q :G^* \to G^*$ to be the variant of $g$ obtained  by changing the initial state to $q$ (i.e.~the $i$-th letter of the output is $q$ times the product of the first $i$ letters of the input).
     
        In this case we can apply a natural construction that stores images of a string register under all functions $g_q$ (this natural construction will respect the single-use restriction because we are dealing with a group): In the transducer $g \circ f$, each string register $A$ of $f$ is replaced by a family of string registers $\set{A_q}_{q \in G}$.  The invariant is that  $A_q = g_q(A)$. Also, in its state, the transducer $g \circ f$ stores the group product of each string register. An action $A:=BC$ in the original transducer is replaced by  
        \begin{align*}
        A_q := B_q C_{q \cdot  \text{(group product of $B$)}} \qquad \text{for every $q \in G$}.
        \end{align*}
        The key observation is that the above updates are single-use, because multiplying by an element of a group induces a permutation of the group. 
        \item \emph{The flip-flop monoid.} The construction is similar to the one for atom propagation. A string register $A$ of $f$ is replaced by three string registers, defined as follows. For $\sigma \in \set{a,b}$, register $A_\sigma$ stores the output of the flip-flop on the prefix of $A$ up to and including the first non-identity letter, assuming that the initial state of the flip-flop is $\sigma$. String register $A'$ stores the output of the flip-flop on the part of $A$ after the first non-identity letter; this part of the output does not depend on the initial state. Here is a picture:
        \mypic{9}
         An action $A:=BC$ in the original transducer is simulated by 
        \begin{align*}
        A_a := B_a \qquad A_b := B_b \qquad A' := B' C_\sigma C'
        \end{align*}
        where $\sigma \in \set{a,b}$ is the last non-identity letter that was used in register $B$ (this letter is stored in the state of the transducer). If $\sigma$ is undefined, because $C$ used only identity letter, then $A:=BC$ is simulated by 
        \begin{align*}
            A_a := B_a C_a \qquad A_b := B_b C_b \qquad A' :=  C'.
            \end{align*}
             
    \end{enumerate}
    
\end{itemize}

\end{proof}
}

This completes the proof that compositions of two-way primes are contained in \sst with atoms. Before continuing, we observe a corollary of the above proof.

Recall that a   \emph{multiple-use} \sst with atoms is the variant of an \sst with atoms  where the single-use restriction is lifted, both for atom registers and for string registers (we discussed this model in Theorem~\ref{thm:non-equivalent-copyful}). 
The proof of Lemma~\ref{lem:post-compose} also works when $f$ is multiple-use  and therefore
we obtain the following inclusion
\begin{align}\label{eq:copyful-sst-with-atoms-primes}
    \threeinclusion{(two-way primes)*} {(multiple-use \sst with atoms)}
    {multiple-use \sst with atoms}
    \end{align}
Once we prove Theorem~\ref{thm:two-way-models}, we will establish that compositions of two-way primes are the same as (single-use) \sst with atoms, which gives the following result:
\begin{align}\label{eq:copyful-sst-with-atoms}
    \threeinclusion{(\sst with atoms)} {(multiple-use \sst with atoms)}
    {multiple-use \sst with atoms}
    \end{align}
In the above inclusion, it is important that use post-composition with  \sst with atoms, and not pre-composition. The  inclusion for pre-composition 
\begin{align}\label{eq:false-inclusion}
    \threeinclusion{(multiple-use \sst with atoms)}{(\sst with atoms)} 
    {multiple-use \sst with atoms}    
\end{align}
does not hold. A counterexample is the function $f$  that  gives all suffixes of the input word (ordered from shortest to longest), e.g.
    \begin{align*}
    12345 \mapsto 554543543254321.
    \end{align*}
The function belongs to left side of the inclusion~\eqref{eq:false-inclusion}, because it can be obtained by first reversing the input string, and then producing all prefixes of the result. On the other hand, $f$ does not belong to right side of the inclusion, since every function $f$ computed by a multiple-use \sst with atoms with $k$ string registers satisfies the following invariant:
\begin{itemize}
    \item[(*)] For every input $w \in \atoms^*$, in the output $f(w)$ there are at most $2k$ letters that appear in positions adjacent to some appearance of the last letter of  $w$. 
\end{itemize}
This invariant is violated by our function $f$, for every $k$, and hence $f$ is not computed by a multiple-use \sst with atoms.

The inclusion~\eqref{eq:copyful-sst-with-atoms} is also valid -- with the same proof but without the need for atom propagation -- in the case without atoms, where the term word \emph{copyful} is used instead of \emph{multiple-use}, see~\cite{filiotCopyfulStreamingString2017}:
\begin{align}\label{eq:copyful-sst-without-atoms}
\threeinclusion{(\sst without atoms)} {(copyful \sst without atoms)}
{copyful \sst without atoms}
\end{align}
    which may be a result of independent interest. In fact, we use this result in the proof of Theorem~\ref{thm:equivalence} about decidability of equivalence.

\subsection{Two-way transducers to compositions of two-way primes}
\label{ap:two-way-to-composition}
In this section we show the implication \ref{two-way:two-way} $\Rightarrow$ \ref{two-way:krohn-rhodes} in Theorem~\ref{thm:two-way-models}, i.e.~the inclusion
\begin{align*}
    \twoinclusion
{single-use two-way transducers} {(two-way primes)*}
\end{align*}
This part of the proof can be seen as a Krohn-Rhodes Theorem form two-way transducers. Fortunately, half  of the work for this part has already been done previously, when proving the Krohn-Rhodes Theorem for one-way transducers.

For the rest of this section fix a  single-use two-way transducer. 
For an input string, define its \emph{run graph} to be the directed graph where the vertices are extended states that appear in the accepting run, the edges connect consecutive extended states, and each vertex is labelled by the output that its extended state produces -- for extended states which execute an output action, the label is a single output letter, and for the remaining extended states the output is the empty string $\varepsilon$.  Here is a picture of a run graph (the input string is blue, the run graph is red) for the map duplicate  transducer from Example~\ref{ex:map-reverse-duplicate}:
\mypic{19}    
For transducers which input and output run graphs, 
we represent the  run graph as a string\footnote{This string is called the history of a run in~\cite[p.~137]{chytilSerialComposition2Way1977}}, where the letters are the columns of the picture, as explained in the following picture
 \mypic{18}

In order to have an orbit-finite alphabet for this representation, we need to show that the run graph has a bounded number of rows, i.e.~every input position is visited a bounded number of times. This is done in the following lemma.
\begin{lemma}\label{lem:bounded-width}
There is some $k \in \set{1,2,\ldots}$ such that every accepting run of the fixed two-way single-use  transducer  visits every input position  at most $k$ times.
\end{lemma}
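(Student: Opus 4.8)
The plan is to show that if some input position were visited more than $k$ times (for a suitable $k$ to be fixed), then the run would be non-accepting because it would loop forever, contradicting the assumption that the run is accepting. The key point is that the ``crossing sequence'' at a position — the sequence of extended states in which the head is at that position, together with the directions of entry/exit — cannot contain a repeated value, and the single-use restriction forces the number of ``distinct'' possible values in a crossing sequence to be bounded.

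First I would recall the notion of extended state $\bar Q = Q \times \text{(register valuations)} + \lightning$ from Section~\ref{sec:Mealy-transformation-monoid}, and define, for a fixed accepting run and a fixed input position $i$, the crossing sequence at $i$ to be the sequence of pairs (extended state, direction of motion) recorded each time the head sits at position $i$. As in the classical Shepherdson argument, if the same pair (extended state, direction) occurred twice in the crossing sequence at $i$, then the run between those two occurrences is a loop: the automaton is deterministic, the head returns to the same position from the same side in the same extended state, so the run repeats forever and never accepts. Hence in an accepting run every crossing sequence has pairwise distinct entries.

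Second, the main obstacle — and where the single-use restriction is essential — is bounding the number of distinct extended states that can legitimately appear in a single crossing sequence. This is exactly the phenomenon already isolated in Lemma~\ref{lem:small-support} and in Claim~\ref{claim:stay-in-place-bound}: once we fix the input word, a head position $i$, and an entry direction, the behaviour of the run at position $i$ (in particular which register values it ``carries'' when it leaves) is, by single-use, determined by the answers to at most $k$ (the number of registers) ``important'' register comparisons, plus the bounded data $Q$ and the bounded number of atoms that can appear in a single input letter. Concretely: each time the head is at position $i$, the register valuation is constrained — every register is either $\bot$, or holds an atom taken from the letter at position $i$ (there are boundedly many such atoms), or holds an atom ``imported'' from an earlier visit; but by single-use each imported atom can be used only once, so after $k$ visits there are no more usable imported atoms, and the configuration must repeat. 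I would phrase this as: the number of distinct (extended state, direction) pairs that can occur in a crossing sequence of an accepting run is at most
\[
N \ \eqdef\ 2\cdot |Q|\cdot \bigl(2 + (\text{max.\ number of atoms in an input letter})\bigr)^{|R|},
\]
the same counting bound as in Claim~\ref{claim:stay-in-place-bound}, adapted to two-way motion.

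Finally, combining the two observations: in an accepting run, the crossing sequence at any position has pairwise distinct entries and only $N$ possible entries, so its length is at most $N$; that is, every input position is visited at most $k \eqdef N$ times, which proves the lemma. I expect the delicate point to be the precise formulation of ``only $N$ possible entries can occur in an accepting crossing sequence'': one must be careful that it is the combination of determinism (forcing non-repetition) and single-use (forcing that the reachable-and-non-looping extended states at a position form a bounded set) that yields the bound — neither alone suffices, exactly as Example~\ref{ex:appears-again} shows for the multiple-use case, where crossing sequences are unbounded.
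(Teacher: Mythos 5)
Your first step --- determinism implies that no (extended state, direction) pair can repeat in the crossing sequence of an accepting run, so the length of the crossing sequence is at most the number of distinct entries it can contain --- is fine, and it is also how the paper concludes. The gap is in the second step, where you bound the number of possible entries. Your enumeration of what a register can hold when the head sits at position $i$ (``$\bot$, an atom of the letter at $i$, or an atom imported from an earlier visit, with a bounded budget of imports'') is the analysis of Claim~\ref{claim:stay-in-place-bound}, which is only valid for a machine that \emph{stays} at position $i$. In a two-way run, between two consecutive visits to $i$ the head makes an excursion into the prefix or the suffix, and during that excursion it may overwrite registers with atoms taken from \emph{other} positions of the word; nothing in the single-use restriction, read locally, prevents each excursion from returning with a fresh atom never seen at position $i$ before. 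So the set of register valuations that can occur at position $i$ is not a priori confined to the atoms of the letter at $i$ plus a bounded stock, and the bound $N = 2\cdot|Q|\cdot(2+\ldots)^{|R|}$ does not follow as you state it.

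What closes the gap --- and what the paper's proof actually does --- is to apply Lemma~\ref{lem:small-support} to the two halves of the word on either side of position $i$: the Shepherdson profiles of the ``before'' and ``after'' parts are each supported by a tuple $\bar a$ of boundedly many atoms, the bound depending only on the transducer. Each excursion away from position $i$ is an application of one of these two finitely supported functions, so the configuration with which an excursion returns is supported by $\bar a$ together with the support of the configuration that entered the excursion. By induction along the visits, every configuration $c_1,\ldots,c_n$ occurring at position $i$ is supported by the bounded set consisting of $\bar a$ (for both halves), the atoms of $c_1$, and the atoms of the letter at position $i$; and there are only boundedly many extended states supported by a fixed finite set of atoms. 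You cite Lemma~\ref{lem:small-support} as ``the phenomenon already isolated'', but your written argument never actually invokes it, and it is precisely the ingredient that rules out the unbounded import of atoms by excursions --- compare Example~\ref{ex:appears-again}, where the multiple-use prefix profile requires unboundedly many atoms and crossing sequences are indeed unbounded.
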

\begin{proof}
    Without atoms, the  lemma is obvious, since a non-looping run can visit each position  at most once in a given state. With atoms, the lemma crucially relies on the single-use restriction, since there exist multiple-use two-way automata which visit positions an unbounded number of times. An example  is the two-way automaton~\cite[Example 11]{kaminskiFiniteMemoryAutomata1994} that checks if some letter appears twice.
    
    Consider a distinguished position inside an input string, as in the following picture:
    \mypic{17}
    Consider the  Shepherdson functions of the ``before'' and ``after'' parts. By Lemma~\ref{lem:small-support}, there is a tuple of atoms $\bar a \in \atoms^*$  which supports both of these Shepherdson functions, and  which has  bounded size, i.e.~the length of the tuple $\bar a$ depends only on the fixed transducer. Consider the  configurations $c_1,\ldots,c_n$ in the run of the transducer where the head is over the distinguished position. The extended state $c_{i+1}$ can be obtained from the extended state $c_i$ by using the Shepherdson functions for the ``before'' and ``after'' parts as well as the letter in the distinguished position; it follows that all of the configurations $c_1,\ldots,c_n$ are supported by $\bar a$ plus the atoms that appear in the first extended state $c_1$ and in the label of the  distinguished position. In other words, every atom that appears in a extended state $c_i$ must appear either in $\bar a$, in the register valuation from $c_1$, or in the distinguished position. There is a bounded number of register valuations that can be constructed using a bounded number of atoms, hence the lemma  follows. 
\end{proof}
Thanks to Lemma~\ref{lem:bounded-width}, run graphs can be viewed as strings over a polynomial  orbit-finite alphabet, namely
\begin{align*}
\underbrace{\text{($\set \varepsilon\ +$ output alphabet)}}_{\text{label of vertex}} \qquad \times \qquad \big(  \underbrace{(\set{\text{-1,0,1}}}_{\substack{\text{column offset of}\\ \text{the next vertex}}} \times \underbrace{\set{1,\ldots,k})}_{\substack{\text{row of the}\\ \text{next vertex}}} \quad  + \quad  \underbrace{\bot}_{\substack{\text{no next}\\\text{vertex}}} \big).
\end{align*}
Using this representation, we below  show that the following transformations can be computed by compositions of prime two-way functions:
\begin{align*}
\underbrace{\text{input string} \mapsto \text{run graph}}_{\text{Section~\ref{sec:from-input-string-to-run-graph}}} \qquad \underbrace{\text{run graph} \mapsto \text{output string}}_{\text{Section~\ref{sec:from-run-graph-to-output-string}}}.
\end{align*}
This will complete the proof that every single-use two-way transducer is a composition of prime two-way functions.

\subsubsection{From a run graph to the output string}
\label{sec:from-run-graph-to-output-string}
We begin by showing that the transformation
\begin{align*}
    \text{run graph} \mapsto \text{output string}
\end{align*}is a composition of two-way primes. Define the \emph{width} of a run graph to be the maximal number of times a position is visited. We prove the result by induction on the width: i.e.~for every $k \in \set{1,2,\ldots}$, we will show that there is a composition of two-way primes which computes the output string for run graphs of width at most $k$.  This is enough, because by  Lemma~\ref{lem:bounded-width}, the width of run graphs is bounded.

All constructions in this section assume that the input run graph is  a single path, i.e.~the run graph is connected, and every node has in-degree and out-degree at most one. Here is a non-example:
\mypic{5}
Whether or not a run graph is a single path can be checked by a single-use automaton. 

\smallparagraph{Induction base} The induction base says that the output string of a run graph of width $k=1$ can be computed by a composition of two-way primes.  There are two kinds of run graphs of width $1$, shown  in the following picture:
\mypic{20}
In the left-to-right  case, the output word is obtained by applying a homomorphism. In the right-to-left case, we need to first reverse the input string (reversing is a special case of map reverse, when no separators are used). Both cases are therefore compositions of two-way primes, and the case disjunction is handled using the following lemma. 
\begin{lemma}\label{lem:if-then-else}
    Let $L \subseteq \Sigma^*$ be a language recognised by single-use two-way automaton. If $
    f_1, f_2 : \Sigma^* \to \Gamma^*$ are  compositions of  two-way primes, then the same is true for 
    \begin{align*}
    w \in \Sigma^* \qquad \mapsto \qquad \begin{cases}
        f_1(w) & \text{if $w \in L$}\\
        f_2(w) & \text{otherwise}.
    \end{cases}
    \end{align*}
\end{lemma}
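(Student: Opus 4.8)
The plan is to build the conditional as a single pipeline that (i) computes the bit $b=[w\in L]$ and appends it to $w$, (ii) broadcasts $b$ onto every position of $w$, and (iii) runs (the relevant one of) $f_1,f_2$ while carrying the bit along, finally erasing the auxiliary data.

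\emph{Computing and broadcasting the bit.} By Theorem~\ref{thm:one-way-two-way}, $L$ is recognised by an orbit-finite monoid, hence (Appendix~\ref{sec:moniod-to-automaton}) by a single-use one-way automaton. Turning that automaton into a transducer which copies its input and, upon reading the right endmarker $\dashv$, appends a fresh letter $\square_b$ with $b=[w\in L]$, shows that $w\mapsto w\,\square_b$ is computed by a single-use one-way transducer. Every one-way prime is a composition of two-way primes (length-preserving and non-length-preserving homomorphisms are two-way primes by item~\ref{two-way-prime:homo} of Lemma~\ref{lem:two-way-primes}; classical Mealy machines and atom propagation are, by the classical Krohn--Rhodes theorem together with item~\ref{two-way-prime:kr}; and the ``append endmarker'' function $w\mapsto w\dashv$ is itself a composition of two-way primes -- intersperse a $\dashv$ after every letter by a homomorphism, reverse, mark the first letter by a classical Mealy machine, delete all $\dashv$'s except the marked one by a homomorphism, and reverse back). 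So $w\mapsto w\,\square_b$ is a composition of two-way primes. Now reverse, apply a single-use Mealy machine that reads the leading $\square_b$, stores $b$ in its finite state, keeps $\square_b$, and relabels every other letter $a_i$ to $(a_i,b)$, and reverse again. This yields $\mathrm{annotate}(w)=(a_1,b)(a_2,b)\cdots(a_n,b)\,\square_b$ as a composition of two-way primes; the trailing $\square_b$ is kept so that even $\mathrm{annotate}(\varepsilon)=\square_b$ still carries the bit.

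\emph{Combining $f_1$ and $f_2$.} Write each $f_i$ as a composition of two-way primes $f_i=q^i_N\circ\cdots\circ q^i_1$, padding the shorter one with identities so the lengths agree. For a two-way prime $q$ and a bit $c$, let $\hat q^{\,c}$ denote the operation: ``on a word that is tagged $c$ everywhere, apply the tag-carrying version of $q$ (the tags riding along); on a word tagged $c'\neq c$ everywhere, act as the identity.'' This is always a composition of two-way primes: for a homomorphism it acts letterwise on the tag and is again a homomorphism; for a length-preserving prime ($f\parallel\mathrm{id}$ with $f$ among atom propagation, a group transducer, the flip-flop) it is a single-use Mealy machine branching on the tag; and for map reverse or map duplicate it is obtained from closure of compositions of primes under $\map_\Delta$ (Lemma~\ref{lem:prime-combinators}), after a homomorphic relabelling that turns all $c'$-tagged letters (and $\square_1,\square_2$) into separators, so that on a $c'$-tagged word every block is empty and the effect is the identity. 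Then $\Phi^\dagger:=\big(\hat q^{\,2}_N\circ\cdots\circ\hat q^{\,2}_1\big)\circ\big(\hat q^{\,1}_N\circ\cdots\circ\hat q^{\,1}_1\big)$ is a composition of two-way primes; on $\mathrm{annotate}(w)$, which is tagged $b$ throughout, it runs the $f_b$-block faithfully (carrying the $b$-tags) and the other block as the identity, so its output is $f_b(w)$ with the $b$-tags and the trailing $\square_b$ still attached. A final homomorphism that strips tags and maps $\square_1,\square_2\mapsto\varepsilon$ returns $f_b(w)$, and the whole composition is the desired conditional (the degenerate case $w=\varepsilon$ is consistent, since all two-way primes fix the empty word, so $f_b(\varepsilon)=\varepsilon$).

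\emph{Main obstacle.} The heart of the argument is the ``conditional prime'' step, and I expect the map reverse / map duplicate case of $\hat q^{\,c}$ to be the fiddliest point, since it is the one that genuinely needs the $\map_\Delta$ closure and the separator relabelling rather than just homomorphisms and Mealy machines. The other non-trivial bookkeeping is keeping the intermediate alphabets of the $f_1$-block and the $f_2$-block in sync inside $\Phi^\dagger$: the codomain of each $\hat q^{\,i}_j$ must be a disjoint union large enough to hold both ``$f_i$ has progressed $j$ steps'' and ``we are still in the inert other branch'', and in the map reverse / map duplicate cases the inert-branch letters must be treated as separators at every step. All of this is routine once the statement of $\hat q^{\,c}$ is fixed.
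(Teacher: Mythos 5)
Your proof is correct and follows essentially the same route as the paper: annotate every position with the outcome of the membership test (computed by appending a bit, reversing, propagating the bit with a Mealy machine, and reversing back), then run colour-conditioned versions of $f_1$ and $f_2$ in sequence, each acting as the identity on the wrong colour, and finally strip the annotation with a homomorphism. You additionally spell out the step the paper leaves to the reader (that the conditioned primes $\hat q^{\,c}$ are themselves compositions of two-way primes); the only nitpick is that the map-closure fact you invoke for the map reverse / map duplicate case is Lemma~\ref{lem:two-way-map} (closure of compositions of \emph{two-way} primes under $\map$), not Lemma~\ref{lem:prime-combinators}, which concerns the one-way primes.
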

\begin{proof}
    Consider two copies of $\Sigma$, a black copy $\Sigma$ and a red copy $\red \Sigma$. Consider the function
    \begin{align*}
    \Sigma^* \to (\Sigma + \red \Sigma)^*
    \end{align*}
    which colours all positions black if the input belongs to $L$, and red otherwise. This function is a composition of the following functions:
    \begin{enumerate}
        \item append a letter ``yes'' or ``no'' to the string, depending on whether the string belongs to $L$;
        \item reverse the string;
        \item colour all positions black if the first letter is ``yes'', red otherwise; 
        \item remove the ``yes''/``no'' letter;
        \item reverse the string again.
    \end{enumerate}
    Apart from reversal, all of the above operations are single-use one-way transducers, which can be decomposed into two-way primes thanks to the Krohn-Rhodes decomposition from  Theorem~\ref{thm:mealy-machine-krohn-rhodes}. To finish the job, we apply sequentially functions 
    \begin{align*}
    \xymatrix{
        (\Sigma + \red \Sigma)^* \ar[r]^{g_1} &
         (\Gamma + \red \Sigma)^* \ar[r]^{g_2}  &
         (\Gamma + \red \Gamma)^*
    },
    \end{align*}
    where $g_1$ applies $f_1$ if the input uses black letters and is the identity otherwise, while $g_2$ applies $f_2$ if the input uses red letters and is the identity otherwise. These can be easily shown to be compositions of two-way primes.  Finally, we use a homomorphism to ignore the distinction between black and red letters. 
\end{proof}

\smallparagraph{Induction step} Let $k \in \set{2,3,\ldots}$. Assume that we can compute -- using a composition of two-way primes -- the output string for every run graph of width $<k$. We will now show that the same is true  for width $k$. In Section~\ref{sec:loop-case}, we prove the induction step for a special kind of run graphs called loops, and then in Section~\ref{sec:sweeps-general-case} we prove the general case.

\subsubsection{Loops}
\label{sec:loop-case}
A run graph is called a \emph{loop} if its first and last configurations are in the same position.
 In this section we  show  how the output string can be computed for run graphs which are loops.  Because the set of loops is recognised by a single-use  automaton (even one-way), and because we have the conditional construction from Lemma~\ref{lem:if-then-else}, in the following construction we do not need to worry what happens when the input is not a loop.

To make the notation easier, consider first the special  case of a \emph{right loop}, which is a loop that only visits positions to the right of its first position. 
    The idea is to decompose a right loop into two parts as follows:
    \mypic{23}
    Both parts have smaller width. Therefore,  it is enough to show that a composition of two-way primes can compute the decomposition, which is done in the following lemma. 
    \begin{lemma}\label{lem:rational-loop}
        There is a composition of two-way primes, such that if the input is a right loop, then the  output is a concatenation of (the strings representing) the run graphs of the first and second parts. The output strings are separated by a separator symbol $|$.
    \end{lemma}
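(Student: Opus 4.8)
The plan is to build the decomposition map as a sequential composition of a few single-use two-way transducers (hence, by Theorem~\ref{thm:mealy-machine-krohn-rhodes} together with the closure results of Section~\ref{ap:composition-to-two-way}, a composition of two-way primes). The key structural observation is that a right loop, when drawn as in the picture, splits canonically at the unique \emph{highest} position it reaches: a right loop visits positions to the right of its start, first moving right through a sequence of increasing ``levels'' and eventually turning around and coming back; the first part is the portion of the run graph strictly above some horizontal cut, re-based so that it becomes a run graph on its own, and the second part is the portion on and below the cut. Both parts visit each position strictly fewer than $k$ times, so the induction hypothesis applies to them later; here we only need to produce the two strings, separated by $|$.

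Concretely, I would proceed as follows. First, a single-use two-way pass verifies that the input is indeed a right loop (and that, as assumed throughout this section, the run graph is a single path); if not, the output is irrelevant. Second, since the run graph is presented as a string whose letters are the columns of the picture (each column listing, for the $\le k$ rows, the vertex label and the offset/row of the successor), a two-way transducer can traverse the single path of the run graph and, at each column, determine which of the $\le k$ slots belong to the ``upper'' sub-path and which to the ``lower'' sub-path — this is a finite amount of information per column, computed by following successor pointers, so it can be recorded by relabelling each column letter. Third, from this relabelled string one extracts the two sub-run-graphs by two homomorphic projections (keeping only the upper slots, resp.\ only the lower slots, and renumbering rows into $\set{1,\dots,k-1}$ by a fixed bijection that depends only on the finite ``shape'' information already computed), and then concatenates them with a $|$ in between — appending a separator and doing a homomorphism are one-way single-use transducers, hence two-way primes. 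The case disjunction for which slot is upper vs.\ lower is handled by Lemma~\ref{lem:if-then-else}.

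The main obstacle I expect is the bookkeeping in the second step: one must argue that, for a right loop presented as a bounded-width string, a two-way single-use transducer can actually compute the cut — i.e.\ identify at each column the partition of its occupied rows into the two parts — using only the local successor information and a bounded number of sweeps, without violating the single-use restriction when chasing pointers. This is where the hypothesis that the run graph is a single path (so in-degree and out-degree are $\le 1$) is essential: following the path is deterministic, each relevant atom is read once, and the transducer can orient itself by sweeping left-to-right and right-to-left a bounded number of times. Once this relabelling is available, the remaining steps (projection, renumbering, concatenation with $|$) are routine compositions of two-way primes, and the general-loop case reduces to the right-loop case by first reversing (a special case of map reverse) or by symmetric reasoning, which is handled in Section~\ref{sec:loop-case} after this lemma.
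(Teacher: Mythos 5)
There is a genuine gap, and it is circularity. Your plan realises the decomposition by ``a few single-use two-way transducers'' and then concludes that the result is a composition of two-way primes. But Lemma~\ref{lem:rational-loop} sits inside Section~\ref{ap:two-way-to-composition}, whose whole purpose is to prove the implication (two-way single-use transducer) $\Rightarrow$ (composition of two-way primes); that implication is not yet available, and Theorem~\ref{thm:mealy-machine-krohn-rhodes} together with Section~\ref{ap:composition-to-two-way} only gives you the \emph{converse} direction plus the one-way case. So the key step of your argument --- ``a two-way transducer can traverse the single path and relabel each column with upper/lower membership, hence this relabelling is a composition of primes'' --- assumes exactly what the enclosing induction is trying to establish. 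The paper avoids this by never invoking a two-way machine for the colouring: it observes that the partition of the run graph into the two parts can be guessed by an \emph{unambiguous nondeterministic one-way} transducer with a finite state space and \emph{no registers}, and then applies the Elgot--Mezei two-pass theorem to turn this into a ``single-use rational function'' (a composition of left-to-right and right-to-left one-way deterministic passes), each pass being a composition of primes by the already-proved one-way Krohn--Rhodes decomposition. The absence of registers is what makes Elgot--Mezei applicable, and it is the step your pointer-chasing construction has no substitute for.

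A secondary issue: your final step ``extract the two sub-run-graphs by two homomorphic projections and then concatenate them with a $|$ in between'' is not something a homomorphism (or any length-preserving prime) can do --- a composition of primes is a single string-to-string pipeline, so to emit two different projections of the same string one after the other you must first \emph{duplicate} the string. The paper does this explicitly with the map duplicate prime, followed by a single-use Mealy machine that keeps the yellow configurations in the first copy and the blue ones in the second. Your outline omits duplication entirely, which would need to be added even if the colouring step were repaired.
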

    \begin{proof}
        A classical result in transducer theory (over finite alphabets) is a theorem of Elgot and Mezei which says  that every function computed by an unambiguous nondeterministic one-way transducer can be computed by a two-pass process: first a left-to-right one-way deterministic transducer, followed by a right-to-left one-way deterministic transducer, see~\cite[Proposition 7.4]{elgot_relations_1965}. If we consider the same models for polynomial orbit-finite alphabets (but finite state spaces and no registers), then the same result carries over, with the appropriate deterministic model being  the class
        \begin{align}\label{eq:single-use-rational}
            \small
        (\text{homomorphism} + \underbrace{\text{reverse} \circ \text{(single-use Mealy)} \circ \text{reverse}}_{\text{right-to-left single-use Mealy}} + \text{(single-use Mealy)})^*.
        \end{align} 
        We use the name  \emph{single-use rational function} for functions from the above class. Thanks to the Krohn-Rhodes decomposition for single-use Mealy machines from Theorem~\ref{thm:mealy-machine-krohn-rhodes}, every single-use rational function is a composition of two-way primes.
        
        We use  single-use rational functions to compute the function in the lemma. Consider first the function which inputs a run graph that is a right loop, and outputs the same run graph with every node coloured either yellow or blue, depending on whether it is in the first or second part. This function can be computed by an unambiguous nondeterministic device (without any registers), which guesses the partition into yellow and blue positions. Therefore,   by the Elgot and Mezei construction, it is a single-use rational function
        (this is because the transducer does not use any registers). Once we have the configurations coloured yellow or blue, we can easily complete the proof of the lemma:   use map duplicate to create two consecutive copies of the run graph, separated by a separator symbol, and then a  single-use Mealy machine to  keep only the yellow configurations in the first copy, and only the blue configurations in the second copy.
    \end{proof}
    As mentioned previously, the first and second parts of a right loop have smaller width, and therefore their outputs can be computed using the induction assumption. To apply the induction assumption on both sides of the separator in the output of the transducer from the above lemma,  we use the map combinator that was defined in Section~\ref{sec:closure-properties}.

    \begin{lemma}\label{lem:two-way-map}
        If $f$ is a composition of two-way primes, then the same is true for  $\map f$.
    \end{lemma}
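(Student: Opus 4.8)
The plan is to reduce to a single two-way prime and then argue case by case. The key structural fact is that the $\map$ combinator distributes over sequential composition: for $f_1:\Sigma_0^*\to\Sigma_1^*$ and $f_2:\Sigma_1^*\to\Sigma_2^*$ one has $\map(f_2\circ f_1)=(\map f_2)\circ(\map f_1)$, as long as the separator alphabet is chosen disjoint from $\Sigma_0,\Sigma_1,\Sigma_2$ (this is the observation already underlying Lemma~\ref{lem:prime-combinators}). Since compositions of two-way primes are closed under sequential composition by Definition~\ref{def:two-way-primes}, it is enough to prove the lemma when $f$ is one of the three kinds of two-way primes listed in Definition~\ref{def:two-way-primes}.

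For $f$ an equivariant homomorphism $\Sigma^*\to\Gamma^*$ between polynomial orbit-finite alphabets, $\map f$ is again an equivariant homomorphism — the one sending each $a\in\Sigma$ to its $f$-image and fixing the separator — hence a two-way prime. For $f$ of the form $g\mid\mathrm{id}$ with $g$ being atom propagation, a group transducer, or the flip-flop transducer, note that $f$ is length-preserving and is computed by a single-use Mealy machine; by Theorem~\ref{thm:mealy-machine-krohn-rhodes} it is a composition of one-way primes, and then by Lemma~\ref{lem:prime-combinators} so is $\map f$. Finally every one-way prime is a composition of two-way primes: a length-preserving homomorphism is a two-way prime, atom propagation is a two-way prime, and a classical Mealy machine decomposes, by the classical Krohn--Rhodes theorem, into length-preserving homomorphisms, group transducers and the flip-flop transducer — all two-way primes. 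This settles these cases.

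The remaining — and least routine — case is when $f$ is map reverse or map duplicate over some separator symbol. Then $\map f$ is the nested operation that applies map reverse (respectively map duplicate) inside each block of the outer separator; one checks directly that this equals map reverse (respectively map duplicate) over the enlarged separator alphabet obtained by throwing in the outer separator, since both operations reverse (respectively duplicate) exactly the maximal runs of non-separator letters and leave every separator symbol fixed. It then suffices to show that map reverse and map duplicate over a polynomial orbit-finite separator alphabet are compositions of two-way primes, and this reduces to the one-symbol versions by a routine relabeling: first, with a one-way pass, propagate each separator's type forward across the following run of non-separator letters (letter propagation, cf.\ the proof of Lemma~\ref{lem:delay}), so that all letters of a block carry the same tag and this tag survives when the block is reversed or duplicated; then collapse all separators to one fixed symbol and apply the one-symbol map reverse (respectively map duplicate); then restore each collapsed separator to the type recorded on the block immediately following it, and erase the tags with a homomorphism. (When the outer separator and $f$'s separator coincide the case is immediate, since after splitting the blocks contain no separator, on which map reverse is plain reversal and map duplicate is plain duplication, so $\map f=f$.) I expect this last case — specifically the bookkeeping that puts the separator values back into their positions after collapsing to a single symbol — to be the only genuine obstacle; everything else is type checking plus appeals to closure properties already in hand.
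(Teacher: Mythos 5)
Your proof is correct and follows the paper's argument essentially verbatim: reduce via the commutation of $\map$ with sequential composition to the individual two-way primes, note that $\map$ of a homomorphism or of a single-use Mealy machine stays in the class, and identify $\map(\text{map reverse})$ (resp.\ duplicate) with map reverse (resp.\ duplicate) over the enlarged separator alphabet -- the paper stops there, asserting only that the multi-separator variant is ``easily'' a composition of two-way primes. The one caveat concerns your bookkeeping for that last step: restoring a collapsed separator's type from ``the block immediately following it'' fails when that block is empty (consecutive separators) and in any case needs lookahead, so you should either handle empty blocks explicitly or use a cleaner reduction (e.g.\ isolate each separator as a marked ordinary letter forming its own singleton block between two plain separators, apply the one-symbol prime, then erase and unmark by a homomorphism -- taking a little extra care for map duplicate, which doubles those singleton blocks).
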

    \begin{proof}
        We use the name \emph{map $f$} for the operation defined in the lemma. 
        Since map commutes with sequential composition, it is enough to show the lemma for the two-way primes. Clearly, for every single-use Mealy machine its map is also a single-use Mealy machine, and therefore a composition of two-way primes. 
        This covers all prime two-way functions except for map duplicate and map reverse.  Suppose want to apply 
        \begin{align*}
        \overbrace{\text{map} (\underbrace{\text{map reverse}}_{\text{\red{red separator $|$}}})}^{\text{black separator $|$}}
        \end{align*}
        This is the same as applying map reverse with both kinds of separators being treated as the separators, which can be seen in the following example: 
        \begin{align*}
        0|12\red|34\red|56|7\red|89 \quad \mapsto \quad  0|21\red|43\red|65|7\red|98
        \end{align*}
        This operation can be easily be done by composition of two-way primes. The argument for map duplicate is the same. 
    \end{proof}
    The general case of loops -- instead of just right loops -- is treated in a similar way, except that we decompose the loop into a constant number of loops, each one of which is a right loop or a left loop.

    \subsubsection{The general case}
    \label{sec:sweeps-general-case}
    In this section  we show how a composition of two-way primes can  compute the output string for any  run graph of width $k$ (not necessarily a loop). The idea is to decompose the run into loops and parts which connect them.

    Assume that the last position in the run is to the right of the first position. The other case is treated symmetrically, and the distinction between the cases is handled using  the conditional construction from Lemma~\ref{lem:if-then-else}.  

    For $i \in \set{1,2,\ldots}$, define the  \emph{$i$-th station} (which is a position in the input string) and the \emph{$i$-th loop and sweep} (which are parts of the run) as follows. The first station is the starting position of the run. Suppose that we have already defined the $i$-th station.  Define  \emph{$i$-th loop}  to be the  part of the run that begins with the first visit in the $i$-th station, and ends in the last visit there. Define  \emph{$(i+1)$-st station} to be the
    the first position to the right of the $i$-th station, that was not visited by the $i$-th loop. Finally, define the \emph{$i$-th sweep} to be the part of the run that connects stations $i$ and $i+1$ (for the first time). These definitions are illustrated below:
 \mypic{24}
   
 Using the decomposition into loops and sweeps, the output string of the run graph is computed in three steps, as described below.
    \begin{enumerate}
        \item Colour each node in the run either yellow or blue, 
         depending on whether this node is part of a loop or part of sweep. This colouring is a single-use rational function, using the Elgot and Mezei construction discussed in the proof of  Lemma~\ref{lem:rational-loop}: the partition into loops and sweeps can be computed by an unambiguous nondeterministic automaton, that
         does not use registers. 
         \item Define $i$-th window to be the maximal interval in the run which contains the $i$-th station but no other stations. The windows are overlapping. Here is a picture:\mypic{25} 
         The important property of windows is  that the $i$-th sweep and the $i$-th loop are both contained entirely in the $i$-th window. Transform the output from the previous step into the concatenation of windows, as illustrated in the following picture:
         \mypic{26}
        This transformation can be done by composition of two-way primes by placing separators in the stations, and using map duplicate and a similar idea to the windows construction in~\cite[Example 3]{bojanczykRegularFirstOrderList2018}.
        \item For each window (the iteration over windows is possible thank to the map construction from Lemma~\ref{lem:two-way-map}), compute the output of the unique loop  that is entirely contained in that window and the sweep that follows it. Both the loop and the sweep can be isolated, using a single-use Mealy machine, thanks to the yellow and blue colours from the first step. The output string for the loops can be produced using the construction from Section~\ref{sec:loop-case}. The output string for the sweeps can be produced by induction assumption on smaller width, since every position visited by the $i$-th sweep is also visited by the $i$-th loop, and therefore the $i$-th sweep has smaller width than the entire run.
    \end{enumerate}
    
This completes the proof that the output string of a run graph can be computed by a composition of two-way primes.

\subsubsection{From an input string to a run graph}
\label{sec:from-input-string-to-run-graph}

In this section we show that the function
\begin{align*}
 \textrm{input string} \mapsto \textrm{run graph}
\end{align*}
is a composition of two-way primes. We do it by showing that it is a single-use rational function, 
which is a stronger statement.
We start the construction by applying the Split Lemma
for the input word
for the monoid morphism $h$ that maps a word $w$ to its
Shepherdson profile as defined in Section \ref{sec:automata-to-semigroups}. Recall that the Shepherdson profile is a function of the type:
\begin{align*}
     \overbrace{Q \times (\atoms+\bot)^k}^{\substack{\text{state and register}\\ \text{valuation at the}\\ \text{start of the run}}}
     \times \overbrace{\set{\leftarrow, \rightarrow}}^{\substack{\text{does the run}\\ \text{enter from the}\\ \text{left or right}}}  
     \qquad \to  \qquad \set{\text{accept, loop}} +  (\overbrace{Q \times (\atoms+\bot)^k}^{\substack{\text{state and register}\\ \text{valuation at the}\\ \text{end of the run}}} \times \overbrace{\set{\leftarrow, \rightarrow}}^{\substack{\text{does the run}\\ \text{exit from the}\\ \text{left or right}}})
\end{align*}
As mentioned in Section~\ref{sec:automata-to-semigroups} Shepherdson profiles form an orbit-finite monoid, call it $M$. 
Similarly as in Section~\ref{sec:Mealy-transformation-monoid} we define the set of extended states
\begin{align*}
\bar Q = Q \times (\atoms+\bot)^k \times \set{\leftarrow, \rightarrow} \;\;+\;\; \set{\lightning, \checkmark, \xmark}
\end{align*}
where $\lightning$ denotes an error in computation
due to accessing undefined register value,
$\xmark$ denotes looping or rejecting,
and $\checkmark$ denotes a computation that has ended successfully. Define set $\overrightarrow Q \subseteq Q$
of all the extended states that carry the $\rightarrow$ value (similarly define the set $\overleftarrow Q$). 
There are natural left and right actions of $M$ on $\bar Q$.
For every $\bar q \in \bar Q$ and a $m \in M$
\begin{itemize}
    \item $\bar q m = m(\bar q)$ if $\bar q \in \overrightarrow Q$, or $\bar q m = \bar q$ otherwise;
    \item $m \bar q = m(\bar q)$ if $\bar q \in \overleftarrow Q$, or $m \bar q = \bar q$ otherwise.
\end{itemize}
Define left- and right- compatibility just like in Section \ref{sec:Mealy-transformation-monoid}:
e.g. say that $\bar q \in \bar Q$ is left-compatible with $m \in M$ if $m \bar q \neq \lightning$.
We now prove a couple of lemmas that show, how the two-way transducer behaves on smooth sequences:
\begin{lemma}
    \label{lem:smooth-same-type}
    Define {\em the type} of an extended state to be the information 
    whether the state belongs to $\overleftarrow Q$,
    belongs to $\overrightarrow Q$, is equal to $\checkmark$, is equal to $\xmark$, or is equal to $\lightning$.
    If a sequence $m_1 \ldots m_k$ is smooth, 
    then  $\bar q m_1$ has the same type as $\bar q m_1 \cdots m_k$,
    for every $\bar q \in \bar Q$
\end{lemma}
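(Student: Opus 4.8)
The plan is to extract a single algebraic consequence of smoothness via the Eggbox Lemma, and then finish by a short two-case analysis that uses only the two defining clauses of the right action of $M$ on $\bar Q$. Write $m = m_1 m_2 \cdots m_k$ and $m' = m_2 \cdots m_k$, so that $m = m_1 m'$, and set $p = \bar q m_1$ and $r = \bar q m$. Since the right action is a genuine monoid action we have $r = (\bar q m_1) m' = p m'$, so the task reduces to showing that $p$ and $r$ have the same type.

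The only place smoothness enters is the following. Because $m_1,\ldots,m_k$ is smooth, the elements $m_1$ and $m_1 m' = m$ lie in the same $\Jj$-class; by the Eggbox Lemma (Lemma~\ref{lem:green-eggbox}) they therefore lie in the same $\Rr$-class. Since $m_1$ is a prefix of $m$ (namely $m = m_1 m'$), being $\Rr$-equivalent to $m$ yields some $z \in M$ with $m_1 = m z$. Acting with this identity gives $p = \bar q m_1 = \bar q(m_1 m' z) = (\bar q m_1) m' z = (p m') z = r z$. So the only fact I carry forward is the single identity $p = r z$ for some $z \in M$.

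Now I would split on the type of $p$. If $p \notin \overrightarrow Q$ — that is, $p \in \overleftarrow Q$ or $p \in \set{\lightning,\checkmark,\xmark}$ — then by definition of the right action every element of $M$ fixes $p$, so in particular $r = p m' = p$; thus $p$ and $r$ are literally the same element, hence of the same type. If $p \in \overrightarrow Q$, suppose toward a contradiction that $r \notin \overrightarrow Q$; then $z$ fixes $r$, so $p = r z = r \notin \overrightarrow Q$, contradicting $p \in \overrightarrow Q$. Hence $r \in \overrightarrow Q$, again of the same type as $p$. This exhausts the cases and proves the lemma.

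There is no serious obstacle; the points needing care are (a) that the Eggbox Lemma is available in the stated form for orbit-finite monoids, which is exactly Lemma~\ref{lem:green-eggbox}, and (b) that only two properties of the action are used — it is a right monoid action, and states outside $\overrightarrow Q$ are fixed by every monoid element. The symmetric statement for the left action (with $\overleftarrow Q$ in place of $\overrightarrow Q$, suffixes in place of prefixes, and $\Ll$ in place of $\Rr$) is proved identically.
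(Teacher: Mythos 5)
Your proof is correct and takes essentially the same route as the paper's: the trivial case where $\bar q m_1 \notin \overrightarrow Q$ is handled by the fact that such states are fixed by the action, and the case $\bar q m_1 \in \overrightarrow Q$ uses the Eggbox Lemma to produce $z$ with $(m_1\cdots m_k)z = m_1$ and derives a contradiction from $\bar q m_1\cdots m_k \notin \overrightarrow Q$ exactly as in the paper. The only difference is cosmetic: you spell out the $\Jj$-to-$\Rr$ step and the appeal to associativity of the right action, both of which the paper's proof uses implicitly.
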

\begin{proof}
    If $\bar q m_1 \not \in \overrightarrow Q$, then the lemma follows from the definition of the
    monoid action -- $(\bar q m_1) m_2 \cdots m_k = \bar q m_1$.
    If $\bar q m_1 \in \overrightarrow Q$,
    apply Green's Eggbox Lemma (Lemma \ref{lem:green-eggbox}) to obtain such $m' \in M$ that
    \begin{align*}
        m_1 \cdots m_k m' = m_1
    \end{align*}
    The existence of such $m'$ proves that $\bar q m_1 \ldots m_k \in \overrightarrow Q$, because
    otherwise, from the definition of the right action:
    \begin{align*}
    \bar q m_1 = (\bar q m_1 \cdots m_k) m' = \bar q m_1 \ldots m_k \not \in \overrightarrow Q
    \end{align*}
\end{proof}
The following lemma shows that, when running on smooth sequences, the two-way single-use transducer,
behaves almost like a one-way transducer:
\begin{lemma}
    \label{lem:smooth-run-monotonic}
    Take any word over the input alphabet $w \in \Sigma ^ *$. If $w = a_1 a_2 \ldots a_k$, such that
    $h(a_1), h(a_2), \ldots ,h(a_k)$ is a smooth sequence then for every $\bar q \in \bar Q$, the
    run $\bar q w$ is {\em monotonic} -- once it crosses the border between $a_i$ and $a_{i+1}$,
    it will never cross the border between $a_{i-1}$ and $a_{i}$ (it might however, cross each of the
    borders multiple times).
    \mypic{31}
\end{lemma}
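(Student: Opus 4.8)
The plan is to argue by contradiction, exploiting the structure of smooth sequences together with Lemma~\ref{lem:smooth-same-type} and the Eggbox Lemma. Suppose the run $\bar q w$ is not monotonic: then there is an index $i$ and a moment in the run at which the head is positioned at $a_{i+1}$ having just crossed the border between $a_i$ and $a_{i+1}$, and later the head crosses back over the border between $a_{i-1}$ and $a_i$. Split the word as $w = u \cdot v$, where $u = a_1 \cdots a_i$ and $v = a_{i+1} \cdots a_k$; since the sequence $h(a_1),\ldots,h(a_k)$ is smooth, both $h(u)$ and $h(v)$ lie in the common $\Jj$-class $J$ of all the $h(a_j)$ and of $h(w) = h(u)h(v)$, and in particular $h(u)$ is a prefix of $h(u)h(v)=h(w)$, which by the Eggbox Lemma (Lemma~\ref{lem:green-eggbox}) puts $h(u)$ and $h(w)$ in the same $\Rr$-class and $h(v)$ and $h(w)$ in the same $\Ll$-class.

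First I would make precise what ``crossing a border repeatedly'' means in terms of the monoid action. The crucial point is that the behaviour of the two-way transducer on the factor $v$ (with the head entering from the left) is completely captured by the right action of $h(v)$ on $\bar Q$, and symmetrically for $u$ and the left action of $h(u)$. The run is monotonic unless, after the head enters $v$ in some extended state $\bar p = \bar q\, h(u) \in \overrightarrow Q$, the action of $h(v)$ produces a state in $\overleftarrow Q$ (i.e.\ the run comes back out of $v$ on the left). So it suffices to show: if $\bar p = \bar q\, h(u) \in \overrightarrow Q$, then $\bar p\, h(v) \notin \overleftarrow Q$. Here is where smoothness enters. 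Because $h(a_1),\ldots,h(a_i)$ is smooth, Lemma~\ref{lem:smooth-same-type} tells us $\bar q\, h(a_1)$ has the same type as $\bar q\, h(a_1)\cdots h(a_i) = \bar p$, so the entry state into $v$ has a type determined already by the first letter; and because $h(a_{i+1}),\ldots,h(a_k)$ is smooth, applying Lemma~\ref{lem:smooth-same-type} again to $\bar p$ and the factor $v = a_{i+1}\cdots a_k$ shows that $\bar p\, h(a_{i+1})$ has the same type as $\bar p\, h(v)$. So if the run were to come back out on the left, it would already do so after reading a single letter $a_{i+1}$, i.e.\ the transition function of the transducer, on letter $a_{i+1}$ starting from a rightward-moving extended state, immediately sends the head left. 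But then iterating: the head is back at $a_i$, in a leftward-moving state, and by the mirror image of the same argument (using smoothness of $h(a_1),\ldots,h(a_i)$ and the left action) it would immediately be sent right again — so the run loops between positions $i$ and $i+1$, never producing an accepting run, contradicting the assumption that $\bar q w$ is a genuine (finite, accepting) run. More carefully, one phrases this as: a non-monotonic crossing forces a configuration to repeat, hence an infinite loop, hence $\bar q w$ is $\xmark$ (looping), not a run that crosses further borders.

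I expect the main obstacle to be bookkeeping the relationship between ``the transducer's head trajectory'' and ``the algebra of the monoid actions'': one has to be careful that the left/right actions of $h(u)$ and $h(v)$ on $\bar Q$ really do encode the head re-entering a factor, including the side effects of the single-use registers, and that the types ($\overleftarrow Q$, $\overrightarrow Q$, $\checkmark$, $\xmark$, $\lightning$) are the right invariant to track — in particular, that a rightward state entering a smooth block cannot exit leftward without first exiting leftward after a single letter. The cleanest route is probably to isolate a small sublemma: \emph{for a smooth factor $x_1\cdots x_m$ and an extended state $\bar r$, the head started in $\bar r$ at the left of the factor exits on the left if and only if it exits on the left after reading $x_1$ alone}, and similarly a leftward state exits on the right of $x_1\cdots x_m$ iff it does after $x_m$ alone. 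This sublemma follows directly from Lemma~\ref{lem:smooth-same-type} applied to prefixes/suffixes of the smooth factor, and monotonicity of the full run on $w = a_1\cdots a_k$ then drops out by induction on $k$ combined with this sublemma applied on both sides of each border, since a violation of monotonicity at border $i$ would, by the sublemma, already be visible at the single letter $a_{i+1}$ (resp.\ $a_i$) and hence force a two-state loop.
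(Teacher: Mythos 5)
There is a genuine gap. You pivot at the border between $a_i$ and $a_{i+1}$, set $\bar p = \bar q\,h(u)$ with $u=a_1\cdots a_i$ and $v=a_{i+1}\cdots a_k$, and declare that it suffices to prove $\bar p\,h(v)\notin\overleftarrow Q$. That reduction is unsound: $\bar p\,h(v)\in\overleftarrow Q$ only says that the run re-enters $a_i$ after visiting $a_{i+1}$, which a monotonic run is perfectly allowed to do (the lemma explicitly permits each border to be crossed many times). So the statement you set out to prove is false in general, and your own argument immediately lands in the case where it fails. Your fallback for that case --- the run bounces between $a_i$ and $a_{i+1}$, hence loops, ``contradicting the assumption that $\bar q w$ is a genuine (finite, accepting) run'' --- appeals to an assumption the lemma does not make: it quantifies over \emph{all} $\bar q\in\bar Q$ and all runs $\bar q w$, including looping and rejecting ones (the Shepherdson profile has an explicit loop outcome), so a forced loop is not a contradiction. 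Nor does the bouncing obviously repeat a configuration after one round trip, since the states at successive returns to $a_i$ may all differ.

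The paper's proof avoids all of this by placing the pivot at the \emph{other} border: let $\bar p$ be the state at the \emph{last} crossing of the border between $a_{i-1}$ and $a_i$ that precedes the \emph{first} crossing of the border between $a_i$ and $a_{i+1}$. By this choice the run started at $\bar p$ and confined to $a_i$ exits to the right, so $\bar p\,h(a_i)\in\overrightarrow Q$; while non-monotonicity says the run from $\bar p$ eventually leaves the factor $a_i\cdots a_k$ on the left, so $\bar p\,h(a_i\cdots a_k)\in\overleftarrow Q$. Since $a_i,\ldots,a_k$ is smooth, this contradicts Lemma~\ref{lem:smooth-same-type} outright --- no case analysis on re-entries and no appeal to termination is needed. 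Your closing sublemma (``exits left iff it exits left after the first letter alone'') is essentially Lemma~\ref{lem:smooth-same-type} and shows you have the right tool; what is missing is applying it to the factor $a_i\cdots a_k$ from the correctly chosen entry state, rather than to $v$ from $\bar q\,h(u)$. The opening material on the Eggbox Lemma and the $\Rr$-/$\Ll$-classes of $h(u)$ and $h(v)$ is correct but plays no role in closing the argument.
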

\begin{proof}
    Take any $\bar q$. We assume that $\bar q \in \overrightarrow Q$,
    or otherwise the run would never enter $w$, making the lemma vacuously true.
    Suppose that the run starting in $\bar q$ is not monotonic.
    Then there exists an $i \in \set{1, \ldots, k}$ such that, the run crosses the border between $a_i$ and $a_{i+1}$ and then crosses the border between $a_{i-1}$ and $a_i$.
    \mypic{32}
    In the picture above this happens for $i=3$. Define $\bar p$ to be the state in which the transducer
    crosses the border between $a_{i-1}$ and $a_{i}$  border for the last time before crossing the border between $a_i$ and $a_{i+1}$ for the first time.
    It follows that $\bar p a_i \in  \overrightarrow Q$,
    but $\bar p a_i \cdots a_k \in \overleftarrow Q$,
    because the run on $w$ will cross the border between $a_{i-1}$ and  $a_i$ again,
    leaving the word $a_i \ldots a_k$ from the left. Since $a_i \cdots a_k$ is smooth,
    this contradicts Lemma \ref{lem:smooth-same-type}.
\end{proof}
Note that Lemmas \ref{lem:smooth-same-type} and \ref{lem:smooth-run-monotonic} are also true
in their versions for the 
left action of $M$ on $\bar Q$.

Thanks to Lemma~\ref{lem:extend-to-star},  we assume that $h$ satisfies condition (*) in the Split Lemma. 
We now state the main lemma of this section:
\begin{lemma}
    \label{lem:two-way-intermediate-steps}
    For each $k \in \set{1, 2, \ldots}$, the following is a single-use rational function:
    \begin{itemize}
        \item {\bf Input:} A word over the alphabet $\bar Q^{\leq k} + \Sigma$:
        \begin{align*}
        \bar Q_0 a_1 a_2 a_3 \ldots a_n \bar Q_n
        \end{align*}
        such that first and last letters -- denoted by $\bar Q_0$ and $\bar Q_n$ -- belong to $\bar Q ^{\leq k}$ and all the other letters belong to $\Sigma$.
        Extended states in $\bar Q_0$ may be masked (as in Definition \ref{def:mased-states}), but
        they have to be right-compatible with the product $a_1 \cdots a_n$. Similarly, extended
        states in $\bar Q_n$ may be masked, but have to be left-compatible with $a_1 \cdots a_n$.
        \item {\bf Output:} A word over the alphabet $(\bar Q \times (\{1, \ldots k\} \times \{-1, 0, 1\} + \bot))^{\leq k} + \Sigma$:
        \begin{align*}
            \bar Q_0 a_1 \bar Q_1 a_2 \bar Q_2 a_3 \bar Q_3 \ldots \bar Q_{n-1} a_n \bar Q_n
        \end{align*}
        such that every $\bar Q_i$ contains all the extended states in which a run of the automaton
        that starts in one of the extended states from $\bar Q_0$ or from $\bar Q_n$ and
        that runs on the sequence $a_1, \ldots, a_n$,
        will cross the border between $a_i$ and $a_{i+1}$. 
        If there is more than $k$ of such extended states,
        the behaviour of the function is undefined.
        Moreover, each of
        those intermediate extended states should be equipped with the information about
        its successor in the same format as in the alphabet for the run graphs:
        \begin{align*}
        (\textrm{column offset of the successor}, \textrm{row of the successor}) \textrm{ or $\bot$, if there is no successor}
        \end{align*}
        Extended states in $\bar Q_i$ may be masked (even more masked than the states in $\bar Q_0$ and $\bar Q_n$), but they have to be
        right-compatible with $a_{i+1}$ and left-compatible with $a_{i}$. 
    \end{itemize}
\end{lemma}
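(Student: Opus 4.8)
The goal is to compute, for a fixed region $a_1 \cdots a_n$ of the input together with the sets of entering extended states on the left ($\bar Q_0$) and right ($\bar Q_n$), the full family of extended states that cross each internal border, annotated with run-graph successor information — and to do this with a single-use rational function (a composition of left-to-right and right-to-left single-use Mealy machines and homomorphisms, as in~\eqref{eq:single-use-rational}). The overall strategy is to reduce to the smooth case by first applying the \SplitLemma to the homomorphism $h$ sending a word to its Shepherdson profile, and then to handle each smooth block with a one-way (actually, bidirectional-sweep) argument enabled by Lemma~\ref{lem:smooth-run-monotonic}: on a smooth block the two-way run is \emph{monotonic}, so after it crosses a border it never comes back past the previous one, which means the set of border-crossing states can be tracked by a bounded number of passes.

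\textbf{Key steps, in order.} First I would apply the \SplitLemma (plus Remark~\ref{rk:split-exclusive-values} for the strict-descendant values) to $h$, obtaining a composition of primes — hence a single-use rational function after unfolding via Theorem~\ref{thm:mealy-machine-krohn-rhodes} — that decorates the region with a bounded-height smooth split and the letter representations of block products. Second, I would prove the lemma under the extra assumption that $a_1 \cdots a_n$ is smooth (the analogue of Claim~\ref{claim:reconstruct-run-aux-smooth}): here, by Lemma~\ref{lem:smooth-run-monotonic} each run entering from $\bar Q_0$ moves monotonically rightwards (and each run from $\bar Q_n$ leftwards), so one left-to-right single-use Mealy pass can propagate the set of ``rightward'' crossing states (there are at most $k$ of them, so this fits in $\bar Q^{\leq k}$), updating it border by border by applying $h(a_{i+1})$; compatibility is preserved by Corollary~\ref{cor:smooth-same-registers}, and the masking discipline from Section on masked states keeps registers single-use. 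A symmetric right-to-left pass handles the states entering from $\bar Q_n$. The successor annotations are computed locally: for a crossing state at border $i$, a bounded look at the transition structure of the fixed two-way transducer, together with the already-computed crossing states at borders $i-1$ and $i+1$, determines the column offset ($-1,0,1$) and the row (its index in the appropriate $\bar Q_{i\pm1}$ or $\bar Q_i$). Third, I would do the induction step on split height exactly as in the proof of Lemma~\ref{lem:reconstuct-run}: propagate $\bar Q_0$ to the first maximal-height position, apply the smooth case to the (smooth) subsequence of maximal-height split values to get crossing states at all maximal-height borders, then split the region at those positions into lower-height blocks and recurse with the $\map'$ combinator (Lemma~\ref{lem:prime-combinators}), threading both a left-incoming set and a right-incoming set through each block. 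A symmetric treatment propagates $\bar Q_n$ leftwards; the two families of crossing states are then merged position-wise (taking unions, which stays within the $\bar Q^{\leq k}$ bound by hypothesis), and successor pointers are filled in as a final local pass.

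\textbf{Main obstacle.} The hard part is the interaction between the single-use restriction and the need to \emph{collect a set} of up to $k$ extended states at each border while still being able to ``see'' neighbouring borders for the successor annotations — unlike the one-way run reconstruction of Lemma~\ref{lem:reconstuct-run}, here a run may genuinely cross a border several times, and the monotonicity of Lemma~\ref{lem:smooth-run-monotonic} only controls the \emph{direction of net progress}, not the multiplicity. I expect the careful bookkeeping to be: (i) showing that on a smooth block the crossing sets can be computed by a bounded number of single-use sweeps rather than an unbounded back-and-forth (this is where Lemmas~\ref{lem:smooth-same-type} and~\ref{lem:smooth-run-monotonic}, in both their left- and right-action versions, do the real work), and (ii) verifying that the masking of states in $\bar Q_i$ can always be kept consistent — strictly weaker than the masking of $\bar Q_0,\bar Q_n$ — so that the compatibility requirements in the output specification are met; this follows the same pattern as the compatibility-preservation arguments around Corollary~\ref{cor:smooth-same-registers} and Lemma~\ref{lem:prefix-equicompatible}, applied now to both one-sided actions. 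Everything else — the split, the recursion structure, the final assembly into a single-use rational function — is a routine adaptation of the machinery already built for the one-way Krohn--Rhodes proof.
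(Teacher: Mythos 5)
Your proposal matches the paper's proof in all essentials: reduce to the smooth case via the \SplitLemma, exploit the monotonicity of runs on smooth blocks (Lemma~\ref{lem:smooth-run-monotonic}) to compute the crossing states in one left-to-right pass for $\bar Q_0$ and one right-to-left pass for $\bar Q_n$ using buffers of at most $k$ extended states, handle the single-use constraint by masking and duplicating only the registers actually consumed, and close the general case by induction on split height. You also correctly single out the same technical crux the paper addresses (multiple crossings of a border and keeping the buffer bookkeeping single-use), so this is the same argument.
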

First, we note that this lemma implies that the translation of a word to its run graph is a
single-use rational function. To compute the run graph for an input string $w=a_1 \cdots a_n$, do the following:
\begin{enumerate}
    \item take $k$ equal to the bound on the width of the run graph from Lemma~\ref{lem:bounded-width};
    \item  compute the Shepherdson function for the input word as the
           homomorphic image of the entire word (using the same technique
           as in Section \ref{sec:moniod-to-automaton} of the appendix);
    \item based on the Shepherdson function for the entire word, calculate the tuples of states
          in which the automaton will cross the border between $\vdash$ and $a_1$
          (call the tuple $\bar Q_0$), and the border between $a_n$ and $\dashv$ (call the tuple $\bar Q_n$);
    \item write $\bar Q_0$ at the beginning and $\bar Q_n$ at the end of the word;
    \item apply Lemma \ref{lem:two-way-intermediate-steps};
    \item based on each $Q_i$, $a_i$, $a_{i+1}$ calculate the local part of the run graph --
          this is possible because Lemma \ref{lem:two-way-intermediate-steps} guarantees that the
          states in $\bar Q_i$ are compatible with both $h(a_i)$ and $h(a_{i+1})$;
\end{enumerate}
The rest of this section is dedicated to proving Lemma~\ref{lem:two-way-intermediate-steps}:
Induction on the height of the smooth split produced by Split Lemma almost immediately 
reduces the general case to the case
where the sequence represented by $a_1 \ldots a_n$ is smooth. Thanks to the monotonicity of the runs
(Lemma \ref{lem:smooth-run-monotonic}), we can compute the intermediate states
of smooth runs in two passes -- one from left to right which computes all the successors of $Q_0$
and one from right to left which we computes all the successors of $Q_n$.
Since the passes are symmetrical, we only describe the single-use one-way transducer
responsible for the left-to-right pass (call it $\mathcal{T}$)

The left-to-right transducer $\mathcal{T}$ has two buffers
called \texttt{current} and \texttt{next}. Each of them can keep up to $k$
extended states. This requires a limited classical memory (number of states) and a limited
number of registers. To satisfy the single-use condition the buffers need to keep
each extended state in two copies -- one copy will be used to
simulate the run of the original two-way automaton, and the other copy
will be used for the output. Moreover, in order to keep track
of the successors, the buffers should be ordered --
every extended state pushed to the buffer gets a fixed position.
Buffer \texttt{current} also keeps track of which
extended states that it carries have already been processed
in the current step.
For the ones that have been processed, the buffer also keeps information about
their successors in the (offset, row) style. The transducer $\mathcal{T}$ works
in the following way (in the end of this section we discuss why this
construction is single use):

\begin{enumerate}
    \item Add each extended state from $\bar Q_0$ to \texttt{current} (preserving the order)
          and go to position with $a_1$.
    \item For each extended state $\bar q$ in \texttt{current}, check
          if $\bar q a_1 \in \overrightarrow Q$. If so, add $\bar q a_1$ to
          the \texttt{next} and note that this is the successor of $\bar q$,
          otherwise note that $\bar q$ has no successor.
    \item Output all the states from \texttt{current} together with
          information about their successors.
    \item Output $a_1$.
    \item Move all the extended states from \texttt{next} to \texttt{current} (this leaves \texttt{next} empty).
    \item For each $a_i$ for $i \in \set{2, 3, \ldots}$:
    \begin{enumerate}
        \item For each $\bar q \in \mathtt{current}$:
        \begin{enumerate}
            \item If $\bar q \in \overrightarrow Q$, compute $\bar q' = \bar q a_i$. If $\bar q' \in \overrightarrow Q$
                  add $\bar q'$ to \texttt{next} and keep the information about its successor (offset=$1$). Otherwise, if $\bar q' \in
                  \overleftarrow Q$ add $q'$ to \texttt{current} and keep the information about its successor
                  (offset=$0$).
            \item If $\bar q \in \overleftarrow Q$, compute
                 $\bar q' = a_{i-1} \bar q$. \footnote{Here, we assume that when in position $i$ the transducer has access to both $a_i$ and $a_{i-1}$.
                 This is not a problem because we can compose any transducer with the delay function
                 from Lemma \ref{lem:delay}} From
                  Lemma {\ref{lem:smooth-run-monotonic}} we know that $\bar q' \in \overrightarrow Q$.
                  Add $\bar q'$ to \texttt{current} and keep the information about its successor (offset=$0$).
            \item Otherwise, note that $\bar q$ has no successor.
        \end{enumerate}
        \item Output all the states from \texttt{current} together with information about their successors.
        \item Output $a_i$.
        \item Move all the extended states from \texttt{next} to \texttt{current}.
    \end{enumerate}
\end{enumerate}
In order for this construction to be single use, the transducer needs to be careful when computing values $\bar q'$,
so that it only uses those register values of $\bar q_j$ that are necessary to compute $\bar q_j a_i$
in the sense of the following lemma.
\begin{lemma}
    For every two words over the input alphabet of the two-way transducer $w_1, w_2 \in \Sigma^*$,
    such that $h(w_1) = h(w_2)$ and for every extended state $\bar q \in \bar Q$, the run $\bar q w_1$,
    uses exactly the same set of registers as the run $\bar q w_2$.  
\end{lemma}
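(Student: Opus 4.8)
The plan is to prove a slightly stronger statement: for a word $w$ for which the run $\bar q w$ actually \emph{leaves} $w$ (rather than accepting or looping inside it), the set of registers \emph{touched} during $\bar q w$ — i.e.\ mentioned by some question or action of the run, counting both reads and writes — is determined by the pair $(\bar q, h(w))$ alone. This suffices, because inside an accepting run of the whole transducer every maximal sub-run on a block of the input either exits that block or is the unique terminal step of the run (looping sub-runs never occur in accepting runs); applying the stronger statement to $w_1$ and $w_2$ then yields the lemma.

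First I would isolate the two single-use facts that drive the argument. (i) If a run never touches a register $r$, then $r$ keeps its initial value for the whole run, and the run is moreover completely independent of that value, since the transition function inspects $r$ only through operations mentioning $r$. (ii) Once a run touches $r$ for the first time, $r$ can never again hold an atom that does \emph{not} occur in $w$: the only ways to change a register are to copy the atom under the head (an atom of $w$) or to set it to $\bot$, and a read always empties the register.

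Next, fixing $\bar q=(q,\eta,d)$ and a register $r$ with $\eta(r)\neq\bot$, I would pick two distinct atoms $a^{*},a^{**}$ occurring neither in $w$ nor in $\eta$, and compare the runs of $\bar q^{\,x}:=(q,\eta[r\mapsto x],d)$ for $x\in\{a^{*},a^{**}\}$. By fact (i), if $\bar q w$ does not touch $r$, then $h(w)(\bar q^{\,x})$ is, for every $x$, the same outcome with only the $r$-coordinate of the exit valuation changed to $x$; call this property ``$h(w)$ is transparent at $r$ over $(q,d)$ and $\eta|_{R\setminus\{r\}}$''. Conversely, if $\bar q w$ touches $r$, then $\bar q^{\,a^{*}}w$ still reaches the first moment $r$ is touched, because everything before it is $r$-value-independent by (i); and from that moment on, by (ii), $r$ never again equals $a^{*}$, so if $\bar q^{\,a^{*}}w$ exits then its exit $r$-value is not $a^{*}$ (and if it accepts or loops, transparency already fails). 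Running the same with $a^{**}$, the two exit $r$-values cannot both coincide with their inputs, so $h(w)$ is not transparent at $r$. Hence ``$r$ is touched by $\bar q w$'' is equivalent to ``$h(w)$ is not transparent at $r$ over $(q,d)$ and $\eta|_{R\setminus\{r\}}$'', a condition depending only on $\bar q$ and $h(w)$; since $h(w_1)=h(w_2)$, the two runs touch the same registers.

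The routine loose ends to tidy up are: the coordinates with $\eta(r)=\bot$, for which the claim is vacuous; a careful statement of why the prefix of a run up to the first access of a register is independent of that register's value; and the bookkeeping that lets one discard accepting or looping sub-runs. The real obstacle is fact (ii): one must be certain that a register, once touched, cannot be quietly restored to its original content before the run ends, and this is exactly where the single-use restriction is essential — reads empty the register and writes only copy the current letter — so that the fresh atoms $a^{*},a^{**}$ cannot reappear. Without the single-use restriction the statement is simply false, for the same reasons that wreck the other equivalences in the paper.
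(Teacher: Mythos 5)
Your argument is sound in its core mechanism but takes a genuinely different, and considerably longer, route than the paper. The paper's proof is two lines: if $\bar q\, w_1$ uses a register $r$ that $\bar q\, w_2$ does not, replace the value of $r$ by $\bot$ to obtain a masking $\bar q' \in \bar q \downarrow$; then $\bar q'$ is compatible with $h(w_2)$ (that run never consults $r$) but not with $h(w_1)$ (that run reads the now-undefined $r$ and crashes to $\lightning$), contradicting $h(w_1)=h(w_2)$. In other words, the undefined value $\bot$ already plays exactly the role you assign to your fresh atoms $a^{*},a^{**}$, and -- crucially -- the resulting outcome $\lightning$ is a distinguished element of $\bar Q$, so ``the run crashes when $r$ is masked'' is directly readable off $h(w)$ applied to masked states, with no transparency analysis and no case split on how the run terminates. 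Your fresh-atom detour avoids relying on the convention that reading an undefined register is a recorded error, but it pays for this with the whole transparency apparatus, plus some care you elide about what ``fresh'' means (fresh for $w$ is not determined by $h(w)$; one must pass to atoms outside the support of $h(w)$ and of $\eta$ and invoke finite support).

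There are two concrete weaknesses. The more substantive one: your test detects whether $r$ is \emph{touched} (read or written), not whether its initial value is \emph{read}. If the run's first access to $r$ is an overwrite, the fresh atom is destroyed just as surely as by a read, so transparency fails in both cases and cannot separate them; the model places no restriction on overwriting a defined register. The notion the lemma is actually put to work for -- deciding which register values of $\bar q$ the simulating transducer must consume -- is ``read before being overwritten'', and that is precisely what the $\bot$-masking test isolates (read-first crashes on $\bot$, write-first does not) and what your test cannot. Second, you restrict to runs that exit $w$ and discharge the accepting/looping/rejecting outcomes by an appeal to the intended application; the masking argument needs no such restriction, since incompatibility is visible in the Shepherdson profile for every outcome. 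Neither issue is beyond repair, but the repair in both cases is to run the paper's masking argument, at which point the fresh-atom construction is doing no work.
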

\begin{proof}
    Suppose that for some $\bar q \in \bar Q$, the run $\bar q w_1$ uses some register
    that the run $\bar q w_2$ doesn't use. By setting the value of this register to $\bot$, we obtain
    $\bar q' \in \bar q \downarrow$, such that $\bar q'$ and $h(w_2)$ are compatible, but $\bar q'$ and $h(w_1)$
    are not. This contradicts the assumption that $h(w_1) = h(w_2)$. 
\end{proof}
$\mathcal{T}$ is able to compute $\bar q = \bar q m$
so that it only uses those register values from $\bar q$ that are used by $m$.
\footnote{One way to do it, is to use Lemma \ref{lem:straight-choice} to choose one of the shortest $w$, such
          that $f(w) = m$ and simulate the two-way automaton on $w$ starting in $\bar q$.
          In order to make the uniformisation equivariant, we require that all atoms in $w$
          that do not appear in $m$ are replaced with placeholders. This simulation can be done in bounded memory,
          because the length of such $w$ depends only on the orbit of $m$.} Thanks to that, all the
registers of $\bar q$ that were not used by $m$, can be moved to $\bar q'$.
Those registers are masked (replaced with $\bot$) in the version of $\bar q$
that is outputted by $\mathcal T$. All the registers that have been used by $m$
are stored in a second copy -- this copy is used to output their values. The same reasoning
can be used when computing the value $m \bar q$.

\subsection{Regular list functions to two-way transducers}
\label{ap:regular-list-functions-to-two-way}
In this section we show the implication \ref{two-way:regular-list-function} $\Rightarrow$ \ref{two-way:two-way} in Theorem~\ref{thm:two-way-models}, i.e.~the inclusion
\begin{align*}
    \twoinclusion
{regular list functions with atoms} {single-use two-way transducers.}
\end{align*}

The proof is a straightforward induction on the derivation of a regular list function. An element of a datatype, e.g.~
    \begin{align*}
    ([1,2,3], (0,[\bot]))  \in \atoms^* \times (\atoms \times \set \bot^*)
    \end{align*}
    can be viewed as a string over a polynomial  orbit-finite  alphabet that  consists of parentheses, commas and elements of the underlying polynomial orbit-finite sets. We refer to this description as the  \emph{string representation} of a datatype.  By a straightforward induction over the derivation, one  shows that for every  regular list function with atoms, there is a two-way single-use transducer which transforms string representations of inputs into string representations of outputs.
    The only non-trivial part of the proof is the combinator for function composition. Functions computed by two-way single-use transducers are closed under composition thanks to the equivalence 
    \begin{align*}
        \twoequality{two-way single-use transducers} {(two-way primes)*}
    \end{align*}
    that was already proved in Sections~\ref{ap:composition-to-two-way} and~\ref{ap:two-way-to-composition}.
\subsection{Streaming string transducers to two-way transducers}
\label{ap:sst-to-two-way}
In this section we show the implication \ref{two-way:sst} $\Rightarrow$ \ref{two-way:two-way} in Theorem~\ref{thm:two-way-models}, i.e.~the inclusion
\begin{align*}
    \twoinclusion
{\sst with atoms} {single-use two-way transducers.}
\end{align*}
This inclusion  is proved in the same way as in the case without atoms, see~\cite[Lemma 14.4]{bojanczyk_automata_2018}.
Fix an \sst with atoms. For an input string, define its \emph{register forest} as follows: Its nodes are the string register actions of the accepting run; they are either of the form $A := a$ for some letter $a$ in the output alphabet, or of the form $A := BC$.  An action of the first kind is a leaf of the register tree, and an action of the second kind has two children -- the most recently performed actions on the registers $B$ and $C$ respectively. Here is a picture of the register tree for the  transducer from Example~\ref{ex:iterated-reverse-sst}:
\mypic{3}
A register forest can be represented as a string over a polynomial orbit-finite alphabet (roughly speaking, the columns in the above picture),  and the transformation 
\begin{align}\label{eq:compute-register-tree}
    \text{input string} \ \mapsto \  \text{register tree}
\end{align}
can be computed by a single-use two-way (in fact, one-way)  transducer  which performs the same actions as the \sst, except that the operations on the string registers are ``printed out'' to the output. Two-way orbit-finite transducers  are closed under composition -- this is a corollary of already proved equality
\begin{align*}
    \twoequality{two-way single-use transducers} {(two-way primes)*},
\end{align*}
So it remains to show that the transformation
\begin{align*}
    \text{register tree} \ \mapsto \  \text{output string}
\end{align*}
is computed by a two-way orbit-finite transducer. This is done in the same way as without atoms: by doing a depth-first search traversal of the register forest, see~\cite[Lemma 14.4]{bojanczyk_automata_2018}.

\section{Equivalence}
\label{ap:equivalence}
In this part of the appendix, we prove Theorem~\ref{thm:equivalence}, which says that the equivalence problem is decidable for  \sst with atoms.
Our algorithm uses a reduction to the case without atoms. However, we 
do not reduce to the equivalence problem for \sst without atoms, whose equivalence problem is in {\sc pspace}~\cite[Theorem 12]{alurCerny11}. Instead, our reduction needs a stronger result, namely  equivalence for \emph{multiple-use} \sst without atoms~\cite[Theorem 3]{filiotCopyfulStreamingString2017}. The latter problem is decidable,   but its complexity is the same as for the HTD0L problem, which is unknown. 

\newcommand{\morsealph}{\set{\diamond,\circ}}
\newcommand{\morseset}{\diamond^*\circ}
\newcommand{\morse}{\diamond}
\smallparagraph{Deatomisation}
Let $\morsealph$ be a fresh alphabet, which does not intersect any other alphabet considered. We will represent atoms using strings of the form $\diamond^* \circ$ --
thanks to the  $\circ$-separator, every concatenation of such representations can be parsed unambiguously.   Define a \emph{deatomisation} to be any function $\alpha : \atoms \to \morse^* \circ$, not necessarily injective. Typically such a function will not be not finitely-supported, since otherwise it would have finite range.  Using a  deatomisation $\alpha$, we can represent letters from a polynomial orbit-finite alphabet as strings over a finite alphabet, as illustrated in the following example
\begin{align*}
    \underbrace{(3,1,1,3)}_{\in \atoms^4}\qquad \mapsto \qquad \morse \morse \morse \circ \, \morse  \circ \,   \morse \circ \  \morse \morse \morse \circ \, \qquad \text{assuming $1 \stackrel \alpha \mapsto \morse \, \circ$ and $3 \stackrel \alpha \mapsto \morse \morse \morse \, \circ$.}
    \end{align*}
More formally, for every orbit-finite alphabet $\Sigma$ we define a finite alphabet $\red \Sigma$, and lift the deatomisation $\alpha$ to a function $\alpha_\Sigma : \Sigma \to \red \Sigma^*$. This representation is defined in the natural way by induction on the structure of $\Sigma$, with the case of products explained in the example above.
For a string $w \in \Sigma^*$, we define its {deatomisation} to be the concatenation of deatomisations of the corresponding letters.

The following lemma is where multiple-use \sst appear.

\begin{lemma}\label{lem:deatomise-image}
    Let $g : \Sigma^* \to \Gamma^*$ be a single-use one-way transducer. The language of all
    possible deatomisations of the
    range of $g$:
    \begin{align*}
    \set{ \alpha_\Gamma(g(w)) : \textrm{for all }w \in \Sigma^* \text{and for all deatomisations $\alpha$}} \subseteq \red \Gamma^*
    \end{align*}
    is equal to the range of some multiple-use \sst $\red g : \red \Sigma^* \to \red \Gamma^*$ without atoms.
\end{lemma}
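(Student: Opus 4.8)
The strategy is to build the multiple-use \sst $\red g$ by simulating $g$ symbolically, deferring all choices of a deatomisation $\alpha$ until the very end. First I would fix a finite set $R$ of register names of $g$, and observe that at every point in a run of $g$ the contents of each atom register is a single atom (or $\bot$). The key idea is to replace each register holding an atom by a \emph{string register} of the new multiple-use \sst that holds the $\morse^*\circ$-encoding of that atom under \emph{some} $\alpha$ that the machine is free to guess on the fly. Concretely, whenever $g$ executes an action that loads an input letter into an atom register, the multiple-use \sst $\red g$ nondeterministically writes a block $\morse^k\circ$ for a freshly chosen $k$ into the corresponding string register --- but here we must be careful: the \sst must be consistent, so equal input atoms must get equal encodings. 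This is where multiple-use is essential: the \sst cannot re-read the input to check equality, so instead it never ``commits'' to concrete codes for distinct atoms but only needs to guarantee that \emph{some} valid $\alpha$ realizes its guesses, which is automatic as long as every block it emits has the form $\morse^*\circ$ and the emitted output is $\alpha_\Gamma(g(w))$ for the guessed $\alpha$ restricted to the atoms actually appearing in $w$. Since $\alpha$ need not be injective or finitely supported, \emph{any} assignment of nonempty $\morse^*\circ$-blocks to the distinct input atoms is realizable, so consistency is free.

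Next I would handle the question mechanism of $g$. An equivariant question $f:\Sigma+\{\vdash,\dashv\}\to\{\text{yes},\text{no}\}$ on the letter under the head depends only on the orbit of that letter, hence $\red g$ can answer it by reading the $\red\Sigma$-encoding of the current letter --- but this requires $\red g$ to know where it is in the input, which is fine since $\red g$ reads $\red\Sigma^*$ one block at a time, using its finite control to parse the $\morse^*\circ$-segments. The equality test on two atom registers $r_1,r_2$ is the delicate case: $\red g$ cannot compare the two string registers. The resolution is that $\red g$ \emph{guesses} the answer (yes/no) and proceeds; we then argue in the correctness proof that for every run of $g$ on every $w$, and for a suitable choice of $\alpha$, there is a matching run of $\red g$ --- namely, pick $\alpha$ injective on the atoms of $w$ and let $\red g$'s guesses be the true answers. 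Conversely, a run of $\red g$ that guesses ``yes'' (resp. ``no'') corresponds to a genuine equality (resp. inequality) pattern, which is realized by choosing $\alpha$ to identify (resp. separate) the relevant atoms. Since the statement only asks for equality of \emph{ranges} (``for all $w$, for all $\alpha$''), it suffices that the set of $(w,\alpha)$-behaviors of $g$ matches the set of accepting runs of $\red g$, not that there is a bijection.

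Then I would assemble $\red g$: its string registers are the output-building register of $g$ (which $g$, being a one-way single-use transducer, manipulates by appending the image of an equivariant output function applied to some registers) plus one string register per atom register of $g$; each output action of $g$, which applies $f:\atoms^k\to\Gamma$ to distinct registers $r_1,\dots,r_k$ and appends the result, is simulated by computing the orbit of the $k$-tuple from the guessed equality pattern among the $r_i$, then appending the corresponding $\red\Gamma$-template with the $\morse^*\circ$-codes of the $r_i$ spliced in. This requires knowing which coordinates of the tuple are equal, which $\red g$ tracks in its finite state as it performs the simulated equality tests. Finally, I would verify that $\red g$ is indeed a multiple-use \sst without atoms over input alphabet $\red\Sigma$ and output alphabet $\red\Gamma$: its registers are string registers, its actions are of the permitted copyful kinds, and it reads $\red\Sigma^*$ left-to-right.

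\textbf{Main obstacle.} The hard part will be proving the two inclusions of ranges cleanly, in particular the direction that every range element of $\red g$ arises as $\alpha_\Gamma(g(w))$ for some honest $w$ and $\alpha$. The subtlety is that an accepting run of $\red g$ makes an arbitrary sequence of equality guesses, and one must exhibit a single word $w$ and a single deatomisation $\alpha$ that are simultaneously consistent with \emph{all} of these guesses across the whole run. This amounts to showing that the guessed equality pattern on the multiset of loaded atoms is always realizable --- which it is, by taking the atoms of $w$ to be a set of variables quotiented by the equivalence relation generated by the ``yes'' guesses, provided the ``no'' guesses never contradict a forced equality. Handling the case where $\red g$'s guesses are internally inconsistent (a ``no'' between two registers forced equal by earlier ``yes'' guesses) requires either pruning such runs via the finite control (tracking the equivalence classes of active registers, which are boundedly many) or arguing that such runs can be ignored because they do not affect the range. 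I would take the former route: augment $\red g$'s state with the partition of its (boundedly many) active atom registers into equality classes, reject inconsistent guesses, and thereby ensure every accepting run is realizable. With that in place, the range equality follows by a routine simulation argument in both directions.
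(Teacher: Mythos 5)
Your overall strategy coincides with the paper's: simulate $g$ by a nondeterministic multiple-use \sst whose string registers hold $\morse^*\circ$-encodings of the atoms, and argue that the set of achievable outputs is exactly the set of deatomisations of the range. But there is a genuine gap in \emph{where} you place the nondeterministic guessing. You commit a freshly guessed encoding to a string register at every load action, and only later, at an equality test, guess whether the two registers are equal. This breaks the inclusion ``range of $\red g$ $\subseteq$ deatomisations of the range of $g$''. Concretely: the same input atom can be loaded into several registers (the load action reads the letter under the head, which is not single-use), so $g$ may load an atom $a$ from position $i$ into $r_1,r_2$ and an atom from position $j$ into $r_3,r_4$, output $r_1$ with your guessed encoding $e_1$, compare $r_2$ with $r_3$ and guess ``yes'', and output $r_4$ with your independently guessed encoding $e_4$. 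Your realizing word $w$ must put the \emph{same} atom at positions $i$ and $j$ (the ``yes'' guess forces it), so $\alpha_\Gamma(g(w))$ repeats the single block $\alpha(a)$, whereas your machine has emitted $e_1\neq e_4$. No choice of $\alpha$ repairs this, and your consistency bookkeeping (the partition of active registers) only filters logically inconsistent guess sequences, not encoding-inconsistent ones. Relatedly, your remark that a ``yes'' guess is ``realized by choosing $\alpha$ to identify the relevant atoms'' is not available: $g$ runs on $w$ before $\alpha$ is applied, so equality of atoms is a property of $w$, not of $\alpha$; non-injectivity of $\alpha$ only excuses \emph{spurious} coincidences of encodings, never manufactures equalities that $g$'s run depends on.

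The paper's proof avoids this by moving the guess to load time: when a new atom is read, the machine guesses which (if any) of the \emph{currently stored} atoms it equals, records this in an equality type kept in the finite state, and, in the ``equal'' case, \emph{copies} the encoding from the matching string register rather than inventing a fresh one. Equality tests are then answered deterministically from the stored equality type, and all registers holding the same (realized) atom provably carry the same block, so every output is a genuine deatomisation. This copying step is also the actual reason the target model must be multiple-use, a point your justification of multiple-use-ness leaves vague. Finally, you leave your machine nondeterministic; since transducers in this paper are deterministic by default, you would still need the paper's additional claim that a nondeterministic multiple-use \sst can be replaced by a deterministic one with the same range (by driving the guesses from the input letters, using that $\red\Sigma$ has at least two symbols).
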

It is important that we also include deatomisations $\alpha$'s that are not injective -- this means that different atoms might get represented the same way after applying $\alpha$.
\begin{proof}
    We prove a stronger version of the Lemma, which works for any
    one-way transducer with atoms $g$ -- not necessarily single-use. We start with the following claim, which says that it suffices to show a nondeterministic multiple-use SST $\red g$ with the appropriate range:
    \begin{claim}
    For every nondeterministic multiple-use
    SST $\red g : \red \Sigma^* \to \red \Gamma^*$, there is a deterministic multiple-use SST $\red g' : \red \Sigma^* \to \red \Gamma^*$, such that the ranges of the two transducers are equal.
    \end{claim}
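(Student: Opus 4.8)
The plan is to push the nondeterminism of $\red g$ into its input, which is the standard device for determinising a transducer when only its range has to be preserved. Since $\red g$ is an SST over finite alphabets, at each step it reads one input letter and applies one of finitely many transitions; fix a finite set $C$ enumerating these transitions, padded so that after reading any letter there are exactly $|C|$ nominal choices, some of which may be ``dead''. A run of $\red g$ on a word $a_1\cdots a_n$ is then completely described by the annotated word $(a_1,c_1)\cdots(a_n,c_n)$ together with a final choice (again from a finite set) recording how the run ends. Conversely, every annotated word either encodes a genuine accepting run of $\red g$ or it does not, and this is a regular property of the annotation.

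First I would build a deterministic multiple-use SST $\red g''$ with input alphabet $\red\Sigma\times C$: on reading $(a,c)$ it performs exactly the register update that $\red g$ performs on letter $a$ via transition $c$, and it keeps in its finite control a flag recording whether the annotation read so far is consistent with a real run of $\red g$ (invalid or ill-typed choices switch the flag to ``inconsistent'' permanently). At the end of the input, $\red g''$ outputs its designated output string register if the run was consistent and ended in an accepting configuration, and otherwise it rejects. Because $\red g$ is multiple-use, executing one prescribed transition is nothing more than ``apply the corresponding register update'', so $\red g''$ is again a multiple-use SST, and by construction its range equals the set of outputs of $\red g$: every output of $\red g$ is obtained by feeding the corresponding annotated input, and $\red g''$ never emits anything that is not a genuine output of $\red g$.

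It remains to recode the input alphabet back to $\red\Sigma$. Here I would use that the deatomisation alphabet $\red\Sigma$ has at least two letters, so there is a prefix-free block code $\kappa:\red\Sigma\times C\to\red\Sigma^{+}$. A deterministic SST can parse a word of $\red\Sigma^*$ as a concatenation of $\kappa$-blocks (a purely regular, hence deterministic, task), and on a well-formed input it feeds the decoded sequence of pairs to the simulation of $\red g''$, obtaining the required $\red g'$. On inputs of $\red\Sigma^*$ not of this block shape, $\red g'$ simply rejects; the only degenerate case is when the range of $\red g$ is empty, and then we let $\red g'$ reject every input. In all cases the range of $\red g'$ coincides with the range of $\red g$.

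I do not expect a deep obstacle here; the point that needs care is purely bookkeeping, namely guaranteeing that the range does not accidentally grow. Thus $\red g'$ must emit output only when the annotation it was handed really does encode an accepting run of $\red g$, which forces the ``consistency of the annotation'' predicate and the ``accepting configuration'' test to be maintained faithfully in the finite control. The copyful (multiple-use) setting actually helps rather than hinders, since there is no single-use discipline to preserve when replaying transitions, and the $\kappa$-block re-encoding is routine given $|\red\Sigma|\ge 2$.
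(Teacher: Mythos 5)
Your proposal is correct and follows essentially the same route as the paper: both determinise by reading the nondeterministic choices off the input word, exploiting that $|\red\Sigma|\ge 2$ and that only the range (over the full domain $\red\Sigma^*$) must be preserved. Your write-up just makes explicit the block encoding and the consistency bookkeeping that the paper's one-paragraph sketch leaves implicit.
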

    \begin{proof}
        We assume that $\Sigma$ contains atoms -- if not then $g$
        already is atomless, and Lemma \ref{lem:deatomise-image} is
        immediate. This means that $\red \Sigma$ contains at least
        two letters -- $\circ$ and $\diamond$. Thanks to that $\red g'$ can simulate all the non-deterministic choices of $g$ by looking at the input letters -- this is going to make it ``ignore''
        parts of the input, but since the domain is
        the entire set $\red \Sigma^*$,
        it will in the end simulate every run of $\red g$ on every
        input word. 
    \end{proof}
    
    We are left with constructing appropriate nondeterministic multiple-use SST without atoms $\red g: \red \Sigma ^* \to \red \Gamma ^*$. To simplify the proof, we restrict ourselves to the special case where the input alphabet is
    atoms -- other cases are analogous. The idea is that $\red g$ simulates a run of $g$. The atom registers of $g$ are simulated by string registers of $\red g$, which store deatomisations of the original atoms. The transducer $\red g$ also remembers in its state an equality type
    of the contents of its registers.  
    Suppose that the original transducer $g$ is about to read some input atom $a$. Instead of looking at the deatomisation given in the input, the simulating transducer $\red g$ guesses which of the current registers will be equal to the newly read atom. If there is some current register that matches it, a deatomisation of the newly read atom is simply copied from the matching register. (Since there is no bound on the number of such copies used in the run, we need to have a multiple-use \sst). If the newly read atom is guessed to be fresh with respect to the current registers, then  $\red g$ uses nondeterminism to generate a fresh string that will represent the new atom (it could be the case that the nondeterminically guessed string happens to be equal to one of the registers -- we  call this a ``spurious equality'' and deal with it below). When comparing two registers, the automaton uses the equality-type information stored in its state. The other actions and questions are simulated in the natural way. 

The only problematic part of the construction described above is the ``spurious equality''. It could be the case that the configuration of $g$ has register valuation $(1,2)\in \atoms^2$ , the simulating transducer stores in its state that all atom registers are  distinct, but nevertheless the register contents of $\red g$ are not distinct, e.g.~$(\diamond \diamond\circ, \diamond \diamond\circ)$, because when guessing a new string the transducer accidentally guessed the same string twice. This is not a problem, because we can view $(\diamond \diamond \circ, \diamond \diamond\circ)$ as a non-injective deatomisation of $(1,2)$. 
\end{proof}

\smallparagraph{Equivalence}
We now proceed to describe the algorithm for checking equivalence of \sst with atoms. Consider an input to the equivalence problem, which consists of two functions 
\begin{align*}
    f_1,f_2 : \Sigma^* \to \Gamma^*
\end{align*}
given by  \sst with atoms. We want to check if the functions are equal. Recall the register forests  discussed in Section~\ref{ap:sst-to-two-way}. The register forest for each $f_i$ can be computed by a single-use one-way transducer, and these two register  forests can be represented as a single word over the product alphabet, as
shown in the following picture:
\mypic{27}
Let $g : \Sigma^* \to \Delta^*$ be a single-use one-way transducer which computes the combined register forests of both transducers. For $i \in \set{1,2}$, let  $g_i : \Delta^* \to \Gamma^*$ be an \sst with atoms which transforms the combined register forest produced by $g$ into the output of $f_i$. These transformations can be described in the following diagram, where the upper and lower (but not necessarily middle) faces commute:
\begin{align*}
    \xymatrix{
        \\
        \Sigma^*
        \ar@/^2.0pc/[rr]^{f_1}
        \ar@/_2.0pc/[rr]_{f_2} 
        \ar[r]^g 
        & \Delta^*
        \ar@/^1.0pc/[r]_{g_1}
        \ar@/_1.0pc/[r]^{g_2}
         & \Gamma^* 
    }
    \end{align*} 
 The equivalence $f_1=f_2$ is the same as the equivalence $g_1=g_2$ restricted the range of $g$.  Our algorithm will check the latter condition.

The following lemma shows that $g_1$ and $g_2$ can be represented, along deatomisation, using \sst without atoms. 
The  lemma crucially depends on the fact that
    computing the output of a register forest does not require any equality tests on the input -- it only moves letters around. A similar construction does not work for an arbitrary \sst with atoms (or we would have applied it directly to $f_1$ and $f_2$). 

\begin{lemma}\label{lem:equality-oblivious}
    Let $i \in \set{1,2}$. There is an \sst without atoms  $\red{g_i} : \red \Delta^* \to \red \Gamma^*$ such that the following diagram commutes for every deatomisation $\alpha$:
    \begin{align*}
        \xymatrix{
            \Delta^* \ar[r]^{g_i} \ar[d]_{\alpha} &
            \Gamma^* \ar[d]^{\alpha} \\
            \red \Delta^* \ar[r]_{\red{g_i}} &
            \red \Gamma^*
        }
    \end{align*}
\end{lemma}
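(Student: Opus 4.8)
The plan is to build $\red{g_i}$ by \emph{deatomising the transducer $g_i$ letter by letter}, and the whole argument hinges on one structural observation: the transformation $g_i$ (which maps a combined register forest to the output of $f_i$, as described in Section~\ref{ap:sst-to-two-way}) can be realised by an \sst with atoms that is \emph{equality-oblivious}, meaning that (a) it never uses the equality question on atom registers from Definition~\ref{def:the-transducer-model}, and (b) the only equivariant functions it ever applies to input letters or register contents are \emph{structural} ones, i.e.~they rearrange, project, or re-tag the atom-components of a letter without inspecting whether any two atoms are equal. This is plausible because $g_i$ performs a depth-first traversal of the register forest and merely copies already-formed output letters (which occur as whole sub-letters of the alphabet $\Delta$) into the appropriate places; at no point does it need to know the identity of an atom. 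I would first make the notion of ``equality-oblivious \sst'' precise and then verify that the depth-first-search transducer of Section~\ref{ap:sst-to-two-way} is of this form.

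The construction of $\red{g_i}$ is then mechanical. Replace the alphabets $\Delta,\Gamma$ by their deatomisations $\red\Delta,\red\Gamma$ (so every atom-component of a letter is spelled out as a $\diamond^*\circ$ block); replace every atom register of $g_i$ by a string register holding the $\diamond^*\circ$-representation of the atom $g_i$ would have stored; keep the string registers of $g_i$, now holding strings over $\red\Gamma$; and translate each transition of $g_i$ into the transition of $\red{g_i}$ that performs the analogous string surgery on these representations. Because $g_i$ is equality-oblivious, each such transition is definable without knowing which atoms are which, hence is a legal transition of an \sst \emph{without} atoms; and because $g_i$ is single-use on both kinds of registers and the translation only moves $\diamond^*\circ$-blocks around, $\red{g_i}$ is single-use as well.

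Correctness is proved by an induction on the length of the run of $g_i$ on an input $w \in \Delta^*$, carried out simultaneously for all deatomisations $\alpha$: at every step the configuration of $\red{g_i}$ on $\alpha(w)$ is exactly the $\alpha$-image of the configuration of $g_i$ on $w$ (same state and head position; each atom register $r$ with value $a$ matched by a string register holding $\alpha(a)$; each string register matched by its $\alpha$-image). Equality-obliviousness is precisely what makes this invariant survive a transition even when $\alpha$ is not injective, since any step that depended on an atom equality could change its behaviour once two distinct atoms receive the same representation. Taking $w$ to the accepting configuration gives $\red{g_i}(\alpha(w)) = \alpha(g_i(w))$, which is the commuting square of the lemma.

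I expect the main obstacle to be item (b) of the equality-obliviousness claim: one must check that the traversal transducer really can be set up so that it never applies an equivariant function that secretly tests atom equality (for instance, it must not, when writing a letter of $\Gamma$ into a string register, apply a function of type $\atoms^k \to \Gamma$ that is sensitive to collisions among its arguments). The cleanest way around this is to observe that the output letters needed by $g_i$ already appear, fully formed, as components of the letters of $\Delta$, so that $g_i$ only ever copies $\Gamma$-letters rather than manufacturing them from loose atoms; this makes the deatomisation of each transition a pure string rearrangement and keeps the whole construction atom-agnostic, as required for a single $\red{g_i}$ to work for every $\alpha$ at once.
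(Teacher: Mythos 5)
Your proposal is correct and follows essentially the same route as the paper: the paper's proof is a one-line ``straightforward simulation that keeps the deatomisations of atoms in the string registers,'' resting on exactly the observation you isolate, namely that $g_i$ performs no equality tests and only moves fully-formed letters around during the depth-first traversal of the register forest. Your elaboration of the equality-obliviousness condition and the run-length induction (with the invariant holding for non-injective $\alpha$) is a faithful, more detailed spelling-out of the same argument.
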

\begin{proof} A straightforward simulation, that keeps the deatomisations of atoms in the string registers. (This is why it is important that there are no equality tests, since otherwise one would need to check equality for string registers.)
\end{proof}

Apply  Lemma~\ref{lem:deatomise-image} to $g$, yielding  a multiple-use \sst without atoms -- $\red g$. Its range is the same as the deatomisations of the range of $g$.
\begin{lemma}
    $f_1=f_2$ if and only if $\red {g_1} \circ \red g = \red{g_2} \circ \red g$.
\end{lemma}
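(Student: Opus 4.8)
The plan is a diagram chase through the deatomisation maps, using exactly two things established just above: that the outer faces of the diagram commute, i.e.\ $f_i = g_i \circ g$ for $i\in\set{1,2}$; that the range of $\red g$ is precisely $\set{\alpha_\Delta(g(w)) : w \in \Sigma^*,\ \alpha \text{ a deatomisation}}$ (Lemma~\ref{lem:deatomise-image}); and the commuting square of Lemma~\ref{lem:equality-oblivious}, which says $\red{g_i}(\alpha_\Delta(d)) = \alpha_\Gamma(g_i(d))$ for every $d \in \Delta^*$ and every deatomisation $\alpha$. I would prove the two implications separately.

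For the left-to-right implication, assume $f_1 = f_2$ and take any $u \in \red\Sigma^*$ on which $\red g$ is defined. By the range characterisation there are $w \in \Sigma^*$ and a deatomisation $\alpha$ with $\red g(u) = \alpha_\Delta(g(w))$. Applying Lemma~\ref{lem:equality-oblivious} for this particular $\alpha$ and then $g_i \circ g = f_i$,
\[
\red{g_i}(\red g(u)) = \red{g_i}\bigl(\alpha_\Delta(g(w))\bigr) = \alpha_\Gamma\bigl(g_i(g(w))\bigr) = \alpha_\Gamma(f_i(w)) \qquad (i=1,2),
\]
and since $f_1(w) = f_2(w)$ these coincide. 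As $g_1,g_2$ are total, Lemma~\ref{lem:equality-oblivious} makes $\red{g_1},\red{g_2}$ defined on the whole image of every $\alpha_\Delta$, hence on all of $\mathrm{range}(\red g)$, so $\red{g_1}\circ\red g$ and $\red{g_2}\circ\red g$ have the same domain (namely that of $\red g$) and agree there.

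For the converse, assume $\red{g_1}\circ\red g = \red{g_2}\circ\red g$ and fix $w\in\Sigma^*$; I must show $f_1(w)=f_2(w)$. Put $d = g(w)$, let $S\subseteq\atoms$ be the finite set of atoms occurring in $f_1(w)$ or $f_2(w)$, and choose a deatomisation $\alpha$ that is \emph{injective on} $S$ (possible since $S$ is finite; and because the $\circ$-terminated codes from $\morse^*\circ$ are uniquely parseable, injectivity of $\alpha$ on $S$ makes $\alpha_\Gamma$ injective on strings over $\Gamma$ all of whose atoms lie in $S$). Since $d$ lies in the range of $g$, the word $\alpha_\Delta(d)$ lies in $\mathrm{range}(\red g)$ by Lemma~\ref{lem:deatomise-image}, so pick $u$ with $\red g(u) = \alpha_\Delta(d)$. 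Then
\[
\alpha_\Gamma(f_1(w)) = \alpha_\Gamma(g_1(d)) = \red{g_1}(\red g(u)) = \red{g_2}(\red g(u)) = \alpha_\Gamma(g_2(d)) = \alpha_\Gamma(f_2(w)),
\]
and the injectivity of $\alpha_\Gamma$ on strings with atoms in $S$ gives $f_1(w) = f_2(w)$.

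This lemma is essentially a routine diagram chase, so I do not expect a real obstacle; the one point I would be careful about is the choice of deatomisation in the converse. The forward direction gets along with whatever (possibly non-injective) $\alpha$ the range of $\red g$ happens to provide, but running the chase backwards requires \emph{recovering} an atom string from its deatomisation, which is why one must feed an $\alpha$ injective on the relevant finite atom set and rely on Lemma~\ref{lem:deatomise-image} guaranteeing $\alpha_\Delta(g(w)) \in \mathrm{range}(\red g)$ for \emph{every} deatomisation, injective ones included.
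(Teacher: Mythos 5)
Your proposal is correct and follows essentially the same route as the paper: the same chain of equalities $\red{g_i}(\red g(u)) = \red{g_i}(\alpha_\Delta(g(w))) = \alpha_\Gamma(g_i(g(w))) = \alpha_\Gamma(f_i(w))$ via Lemmas~\ref{lem:deatomise-image} and~\ref{lem:equality-oblivious}, with an injective deatomisation doing the work in the hard direction. The only cosmetic differences are that you argue the converse directly where the paper uses the contrapositive, and you spell out the forward direction (which the paper dismisses as immediate) together with the slightly sharper observation that injectivity is only needed on the finitely many atoms occurring in $f_1(w)$ and $f_2(w)$.
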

\begin{proof}
    The left-to-right implication is immediate. For the converse implication, we prove 
    \begin{align*}
        f_1 \neq f_2 \qquad \text{implies} \qquad \red {g_1} \circ \red g \neq \red{g_2} \circ \red g.
    \end{align*}
    Assume $f_1 \neq f_2$, and choose some input $w \in \Sigma^*$ witnessing the inequality. Choose some \emph{injective} deatomisation $\alpha$.
    This gives us
    \begin{align}\label{eq:unequal-deatomisation}
    \alpha_\Gamma(f_1(w)) \neq \alpha_\Gamma(f_2(w)).
    \end{align}
    By Lemma~\ref{lem:deatomise-image}, there is some input $u$ to the function $\red g$ such that 
    \begin{align}\label{eq:in-the-image}
    \red g(u) = \alpha_\Delta (g(w)).
    \end{align}
    For $i \in \set{1,2}$ we have the following equality:
    \begin{align*}
     \red{g_i}(\red g (u)) \stackrel{\text{\eqref{eq:in-the-image}}} =  \red{g_i}(\alpha_\Delta (g(w))) \stackrel {\text{Lemma~\ref{lem:equality-oblivious}}} =  \alpha_\Gamma(g_i(g(w)))  = \alpha_\Gamma(f_i(w))
    \end{align*}
    Combining this with~\eqref{eq:unequal-deatomisation}, we see that $u$ is an argument for which the functions  $\red {g_1} \circ \red g$ and $\red {g_2} \circ \red g$  give different outputs. 
\end{proof}

Thanks to this lemma, the equivalence problem $f_1=f_2$  for  infinite alphabets reduces to the equivalence problem 
\begin{align}\label{eq:reduced-equivalence}
    \red {g_1} \circ \red g = \red{g_2} \circ \red g,
\end{align}
which uses finite alphabets. Each side  of the above equivalence is a function in the class 
\begin{align*}
    \text{(\sst without atoms)} \circ \text{(multiple-use \sst without atoms)}.
    \end{align*}
At the end of  Section~\ref{ap:composition-to-sst}, we proved that this class is (effectively) contained  in multiple-use \sst without atoms (the proof relied on the Krohn-Rhodes decomposition for \sst). Therefore~\eqref{eq:reduced-equivalence} is an instance of the equivalence problem for multiple-use \sst without atoms, which is decidable.

\end{document}